\documentclass[11pt,english]{article}
\usepackage{lmodern}

\usepackage[T1]{fontenc}
\usepackage[latin9]{inputenc}
\usepackage{geometry}
\geometry{verbose,tmargin=2cm,bmargin=2.2cm,lmargin=2.2cm,rmargin=2.2cm}
\usepackage{float}
\usepackage{amsmath}
\usepackage{amsthm}
\usepackage{amssymb}
\usepackage{stmaryrd}
\usepackage{graphicx}
\usepackage{setspace}
\usepackage[authoryear]{natbib}
\doublespacing

\makeatletter

\providecommand{\tabularnewline}{\\}

\theoremstyle{plain}
\newtheorem{assumption}{\protect\assumptionname}
\theoremstyle{plain}
\newtheorem{thm}{\protect\theoremname}
\theoremstyle{plain}
\newtheorem{prop}{\protect\propositionname}
\theoremstyle{plain}
\newtheorem{cor}{\protect\corollaryname}
\theoremstyle{plain}
\newtheorem{lem}{\protect\lemmaname}

\setcitestyle{round}
\usepackage{lscape}
\usepackage{longtable}

\usepackage{graphicx}
\usepackage{tabularx}
\usepackage{siunitx}
\newcolumntype{P}[1]{>{\centering\arraybackslash}p{#1}}
\newcolumntype{Y}{>{\centering\arraybackslash}X}

\usepackage{array,booktabs,ragged2e}
\newcolumntype{C}[1]{>{\centering\arraybackslash}p{#1}}
\newcolumntype{J}[1]{>{\justify\arraybackslash}p{#1}}
\newcolumntype{R}[1]{>{\RaggedLeft\arraybackslash}p{#1}}
\newcolumntype{Q}[1]{>{\columncolor{Gray}\RaggedLeft\arraybackslash}p{#1}}
\newcolumntype{L}[1]{>{\RaggedRight\arraybackslash}p{#1}}
\newcolumntype{G}{@{\extracolsep{0.5cm}}l@{\extracolsep{0pt}}}%
\usepackage{multirow}

\@ifundefined{showcaptionsetup}{}{%
 \PassOptionsToPackage{caption=false}{subfig}}
\usepackage{subfig}
\AtBeginDocument{

}

\makeatother

\usepackage{babel}
\providecommand{\assumptionname}{Assumption}
\providecommand{\corollaryname}{Corollary}
\providecommand{\lemmaname}{Lemma}
\providecommand{\propositionname}{Proposition}
\providecommand{\theoremname}{Theorem}

\begin{document}
\title{Robust Estimation of Realized Correlation:\linebreak{}
New Insight about Intraday Fluctuations in Market Betas\emph{\normalsize{}\medskip{}
}}
\author{\textbf{Peter Reinhard Hansen}{\normalsize{}$^{a}$}\textbf{ and Yiyao
Luo$^{b}$}\bigskip{}
\\
{\normalsize{}$^{a}$}\emph{\normalsize{}University of North Carolina
at Chapel Hill}\textbf{ }\\
{\normalsize{}$^{b}$}\emph{\normalsize{}University of Mississippi}\textbf{}\thanks{Corresponding author: Yiyao Luo. Email: yluo3@olemiss.edu Mailing
Address: University of Mississippi, Department of Economics, Oxford,
MS, 38677, USA. We are grateful for helpful comments from Ron Gallant
and seminar participants at UNC, Duke University, and the 2023 Annual
Meeting of SoFiE in Seoul, Korea.}}
\date{\emph{\normalsize{}\today}}
\maketitle
\begin{abstract}
Time-varying volatility is an inherent feature of most economic time-series,
which causes standard correlation estimators to be inconsistent. The
quadrant correlation estimator is consistent but very inefficient.
We propose a novel subsampled quadrant estimator that improves efficiency
while preserving consistency and robustness. This estimator is particularly
well-suited for high-frequency financial data and we apply it to a
large panel of US stocks. Our empirical analysis sheds new light on
intra-day fluctuations in market betas by decomposing them into time-varying
correlations and relative volatility changes. Our results show that
intraday variation in betas is primarily driven by intraday variation
in correlations.

\bigskip{}
\end{abstract}
\textit{\small{}Keywords:}{\small{} Correlation, Pearson, Kendall,
Subsampling, Robustness, Consistency, Epps effect, High-frequency
data, Microstructure, Jump}{\small\par}

\noindent {\small{}\newpage}{\small\par}

\section{Introduction}

The correlation is a measure of association between two variables
that plays a central role in many empirical methods. The correlation
is most commonly estimated by the sample correlation, which is known
as Pearson's $r$. Other classical correlation estimators include
the Quadrant estimator, the Kendall's tau, Spearman's rank correlation,
and the Gaussian rank correlation estimator, see \citet{Kruskal:1958}
for the relationships between these measures and an historical account
of their developments. The choice of estimator involves a tradeoffs
between robustness and efficiency. This tradeoff is influenced by
many features of the underlying distribution, including heteroskedasticity
that is particularly important for many economic applications.

In this paper, we propose a new robust correlation estimator that
is well-suited for heteroskedastic time-series, such as high-frequency
financial data. Time-varying volatility and market microstructure
noise are innate features of high-frequency financial data, and both
features undermine the reliability of standard correlation estimators.
We compare the sensitivity of correlation estimators to departures
from homoskedasticity and show that the Quadrant estimator is the
only estimator that is robust to heteroskedasticity, among the classical
estimators. The other estimators are inconsistent, except in very
special cases. Unfortunately, the Quadrant estimator is rather inefficient.
We recover much efficiency by combining the Quadrant estimator with
subsampling and this makes it possible to improve efficiency while
retaining consistency. We derive theoretical properties of the new
estimator and study it using simulation designs that mimic empirical
high-frequency financial data. We show that the realized correlation
can be very biased as documented in our empirical analysis. We apply
the new estimator to high-frequency data for 22 assets and an exchange-traded
fund that tracks the S\&P 500 index. The empirical results suggest
that the new estimator is more accurate than other estimators, with
the improvements likely resulting from better bias properties. We
combine intraday correlation estimates with estimates of relative
volatility to form an estimate of intraday market beta, as analyzed
in \citet{AndersenThyrsgaardTodorov:2021}. We find substantial variation
in market betas within the trading day, with some stocks having increasing
betas over the trading hours, while others tend to have decreasing
betas. Our empirical results corroborate the finding in \citet{AndersenThyrsgaardTodorov:2021},
even though we use different estimation methods and a different (narrower)
estimation window. Our estimation approach enables us to decompose
the time variation in betas into time variation in correlation and
time-variation in relative volatility. Interestingly, we find that
the variation in betas is mainly driven by time-variation in correlations.
Relative to the market, all assets in our analysis have increasing
correlations and decreasing relative volatilities over the trading
hour. The declines in relative volatilities are very similar across
assets. The relative volatility during the last hour of active trading
is typically between 50\%-75\% of relative volatility during the first
hour of trading. There is far more variation across assets in terms
of their correlations with the market. For many assets their market
correlation is 2-5 times larger during the last hour than during the
first hour. These assets have nearly linearly increasing market betas
during the trading hours. Another set of assets, which are characterized
by high market correlations, have their correlations increase by much
less than 100\% during the day. These assets have, on average, decreasing
market betas during the trading hours. Thus, we document that intraday
variation in both correlation and relative volatility contribute to
the variation in market betas, but the variation across assets is
primarily driven their time-variation in correlations with the market.

Time-varying volatility in high-frequency financial data is well documented,
see e.g. \citet{AndersenBollserslev:1998b}. Similarly, it is well
documented that market microstructure noise can harm realized measures
of volatility, see \citet[1998]{Zhou:1996}\nocite{Zhou:1998}, \citet{zhang-mykland-aitsahalia:05},
\citet{BandiRussell:2006}, and \citet{HansenLunde:JBES2006}. Market
microstructure noise is defined as the difference between the observed
prices and true prices. The latter are characterized by having certain
martingale properties, whereas the former typically entails some degree
of predictability. Market microstructure noise arises from many intricate
aspects of high-frequency data. For instance, noise can arise as artifacts
of imputation methods and recording and rounding errors. These issues
are all important for correlation estimation, see e.g. \citet{Reno2003},
\citet{PrecupIori:2007}, and \citet{MunnixSchaeferGuhr:2011}, and
\citet[2009]{TothKertesz:2007}.\nocite{TothKertesz:2009} The lack
of synchronicity in observation times induces a type of noise that
is particularly important for covariance and correlation estimation.
This will often manifest as the Epps effect, where the sample correlation
decreases as the sampling frequency increases, see \citet{Epps:1979}.
\citet{HayashiYoshida:2005} proposed an estimator that adjusts for
asynchronicity and \citet{VoevLunde:2007} and \citet{GriffinOomen:2011}
proposed related estimators that are robust to additional forms of
noise. Jumps in prices and adversely effect empirical measures, including
realized variances, covariances, and correlations. However, these
effects can be alleviated by truncation methods, see \citet{Mancini:2009}
and \citet{RaymaekersRousseeuw:2021}.

A standard remedy for market microstructure noise in high-frequency
data is sparse sampling. \citet{AndersenBollerslev:1998a} estimated
realized variances using $5$-minute intraday returns and this sampling
frequency appears to offer a reasonably good compromise between bias
and variance in many applications, see e.g. \citet{HansenLunde:JBES2006},
\citet{BandiRussell:2008}, and \citet{LiuPattonSheppard:2015}. Realized
measures that utilize more information include the subsampled realized
variance by \citet{zhang-mykland-aitsahalia:05}, the realized kernel
estimator by \citet{BNHLS:2008}, and the pre-averaging estimators
by \citet{JacodLiMyklandPodolskijVetterPreaverage}. These three approaches,
subsampling, realized kernels, and pre-averaging, lead to the same
class of estimators, aside from minor differences caused by end-effects,
see \citet{BNHLS-SubRK:2011}.

Realized correlations are often computed from multivariate estimators,
such as those proposed in \citet{MalliavinMancino:2002}, \citet{BNS:2004},
\citet{ChristensenKinnebrockPodolskij:2010}, \citet{AitSahaliaFanXiu:2010},
\citet{BNHLS:2011}, and \citet{ChristensenPodolskijVetter:2013},
among others. If volatility varies over the period for which estimators
are computed, then the resulting estimator will be inconsistent, aside
from special cases, as we detail in Section 3.

\subsection{Organization of Paper}

This paper is organized as follows. Section 2 reviews the benchmark
correlation estimators and introduces the subsampled Quadrant estimator.
In Section 3, we present the properties of the estimators, including
efficiency, consistency, and robustness. Section 4 reports the results
of a series of simulation studies based on the Levy and Heston model
adding prevailing microstructure issues and jumps. The empirical illustrations
are presented in Section 5. We extend the correlation estimation of
bivariate variables to the higher dimensional correlation matrices
in section 6. Section 7 concludes.

\section{Population and Empirical Measures of Correlation}

We begin by reviewing classical correlation measures, starting with
the Pearson correlation.

\subsection{Population Measures}

For two random variables, $X$ and $Y$, with finite variances, the
correlation coefficient is defined by
\[
\rho=\frac{\sigma_{XY}}{\sqrt{\sigma_{X}^{2}\sigma_{Y}^{2}}},\qquad\text{where}\quad\sigma_{XY}=\mathrm{cov}(X,Y)=\mathbb{E}[(X-\mu_{X})(Y-\mu_{Y})],
\]
$\mu_{X}=\mathbb{E}(X)$, $\mu_{Y}=\mathbb{E}(Y)$, $\sigma_{X}^{2}=\mathrm{var}(X)$,
and $\sigma_{Y}^{2}=\mathrm{var}(Y)$. 

Nowadays, the ``correlation'' is commonly understood to mean $\rho=\sigma_{XY}/\sqrt{\sigma_{X}^{2}\sigma_{Y}^{2}}$,
but $\rho$ is just one of several classical population measures of
the correlation. Another measure is defined from sign-concordances,
\[
\tau=\mathbb{E}[\mathrm{sgn}\{(X-\xi_{X})(Y-\xi_{Y})\}],
\]
where $\mathrm{sgn}(x)$ denotes the sign of $x$, and $\xi_{X}$
and $\xi_{Y}$ are the medians of $X$ and $Y$, respectively. The
parameter $\tau$ is given from the quadrant probabilities of the
recentered variables, $\tilde{X}=X-\xi_{X}$ and $\tilde{Y}=Y-\xi_{Y}$,
since $\tau=\Pr[Z>0]-\Pr[Z<0]$, where $Z=\tilde{X}\tilde{Y}$, and
$\tau$ is the population quantity that is estimated by the quadrant
estimators we use below. For spherical and continuously distributed
variables, we have $q\equiv\Pr[Z>0]=1-\Pr[Z<0]=(\tau+1)/2$, such
that $\tau=2q-1$. A closely related population measure is Kendall's
tau, which is given by
\[
\tau_{K}=\mathbb{E}[\mathrm{sgn}\{(X_{1}-X_{2})(Y_{1}-Y_{2})\}],
\]
where $(X_{1},Y_{1})$ and $(X_{2},Y_{2})$ are independent and distributed
as $(X,Y)$. For a continuous bivariate distribution with cdf, $F$,
it can be shown that $\tau_{K}=\mathbb{E}[4\{F(X,Y)-\tfrac{1}{4}\}]$
whereas $\tau=4[F(\xi_{x},\xi_{y})-\tfrac{1}{4}]$. The two quantities,
$\tau$ and $\tau_{K}$, are identical for elliptical distributions. 

Other classical correlation measures include the Gaussian rank correlation
and Spearman's rank correlation, where the latter estimates $\eta=12\{\mathbb{E}[F(X)G(Y)]-\tfrac{1}{4}\},$
where $F$ and $G$ are the cumulative distribution functions for
$X$ and $Y$, respectively. We do not include these estimators in
our comparison, because they are not competitive for various reasons
discussed later in the paper.

The population measures, $\rho$, $\tau$, $\tau_{K}$, and $\eta$
are closely related and all have values ranging between $-1$ and
$1$. The exact relation between these quantities depends on the bivariate
distribution of $(X,Y)$. For elliptical distributions we have $\tau(\rho)=\tfrac{2}{\pi}\arcsin\rho$,
such that the inverse mapping is:
\begin{equation}
\rho=\sin(\tfrac{\pi}{2}\tau).\label{eq:Greiner}
\end{equation}
This link function was derived in \citet[p.236]{Greiner:1909}, albeit
the identity is implicit from results in \citet{Sheppard:1899}, who
first related quadrant probabilities to the correlation. Greiner derived
the result under the assumption that $(X,Y)$ are normally distributed,
but (\ref{eq:Greiner}) is valid for a broader class of distributions
that includes all symmetric elliptical distributions for which the
correlation is well defined, such as the multivariate $t$-distribution
with degrees of freedom greater than two, see Proposition \ref{prop:Greiner}.
The link function is also unaffected by skewness as defined by the
moments and cumulants of odd order, see \citet{Kendall:1949}. The
link function in (\ref{eq:Greiner}) makes it possible to translate
an estimate of $\tau$ into an estimate of $\rho$. In general, the
link function between $\tau$ and $\rho$ depends on actual bivariate
distribution and we have shown three examples of the link function
in Figure \ref{fig:Greiner's-link-function}. 
\begin{figure}[!h]
\centering{}\includegraphics[width=0.54\textwidth]{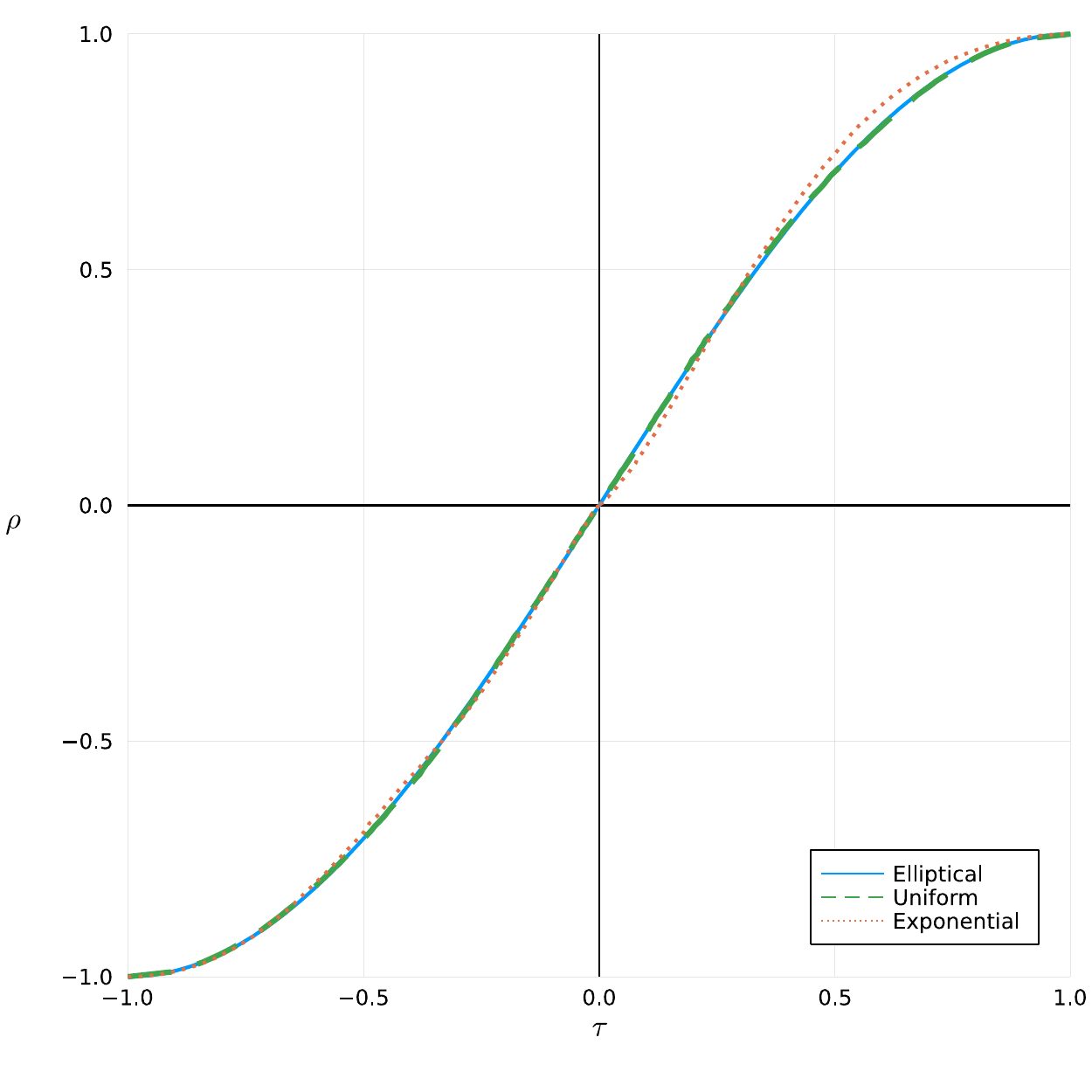}\includegraphics[width=0.18\textwidth]{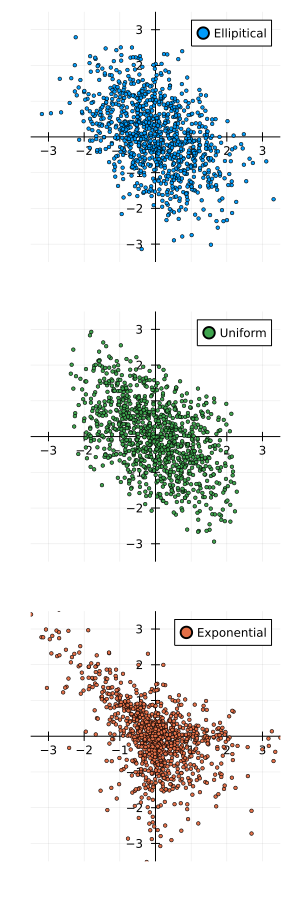}\caption{The mapping from $\tau=\mathbb{E}[\mathrm{sgn}(X-\xi_{x})(Y-\xi_{y})]$
to $\rho=\mathrm{corr}(X,Y)$ for three bivariate distributions.\label{fig:Greiner's-link-function}}
\end{figure}

The blue line represents Greiner's link function, (\ref{eq:Greiner}),
while the green dashed and the red dotted lines represent link functions
for two non-elliptical distributions. The green dashed link function,
labelled ``Uniform'' is based on the following bivariate distribution
\[
X=U_{1}-U_{2}\quad Y=\rho X+\sqrt{1-\rho^{2}}(U_{3}-U_{4}),\qquad\text{\ensuremath{\rho\in[-1,1]}}
\]
where $U_{1}$,..., $U_{4}$ are independent and uniformly distributed
on $[0,1]$, and the red dotted link function, labelled ``Exponential''
is based on 
\[
X=aZ_{0}-(1-|a|)Z_{1}\quad Y=|a|Z_{0}+(1-|a|)Z_{2},\qquad a\in[-1,1],
\]
where $Z_{0},Z_{1},Z_{2}$ are independent and standard exponentially
distributed. The correlation for this distribution is $\rho(a)=a|a|/[a^{2}+(1-|a|)^{2}]$.
Examples of these distributions are shown in the right panels using
scatterplots with 1,000 observations. The upper right panel is the
bivariate normal distribution with correlation $-0.5$. The middle
right panel is that based on four uniformly distributed random variables
with $\rho=-0.5$, and the lower right panel is based on the exponential
random variables with $a=-0.5$, which happens to translate to the
same correlation, $\rho(-0.5)=-0.5$. Thus Greiner's link function
is a good approximation to the two other link functions in Figure
\ref{fig:Greiner's-link-function}, and may offers a good approximation
to a broader class of distributions than the elliptical distributions.
However, it is also possible to construct a bivariate distribution
whose link function differs to a greater degree from that in (\ref{eq:Greiner}).\footnote{For instance, pathological examples can be created by assigning small
probabilities to extreme events.  carefully shifting probability mass
near zero to shift the binary distribution over signs, which will
have negligible without having much impact on $\rho$. }

\subsection{Classical Correlation Estimators: Pearson, Quadrant, and Kendall}

Next, we introduce classical correlations estimators. To simplify
the exposition we use $(x_{i},y_{i})$, $i=1,\ldots,n$, to denote
recentered variables, such that their sample means (or sample medians)
are zero.\footnote{The sample mean is subtracted before applying the Pearson estimator
and the sample median is subtracted if the Quadrant or Kendall estimators
are used. }

The Pearson correlation estimator is the well-know sample correlation,
which takes the form 

\[
P=\frac{\sum_{i=1}^{n}x_{i}y_{i}}{\sqrt{\sum_{i=1}^{n}x_{i}^{2}\sum_{i=1}^{n}y_{i}^{2}}}.
\]
This estimator is asymptotically efficient if the data are iid Gaussian.
A drawback of the Pearson estimator is that it is sensitive to outliers,
as we discuss below. More robust estimators of $\rho$ can be constructed
from estimators of $\tau$, such as the quadrant estimator

\[
\hat{\tau}_{Q}=\frac{1}{n}\sum_{i=1}^{n}\mathrm{sgn}(x_{i}y_{i}),
\]
and Kendall's tau coefficient
\[
\hat{\tau}_{K}=\frac{2}{n(n-1)}\sum_{i<j}\mathrm{sgn}([x_{i}-x_{j}][y_{i}-y_{j}]).
\]
Quadrant-based estimation of the correlation was introduced in \citet{Sheppard:1899},
with the relation between $\rho$ and $\tau$ spelled out in \citet{Greiner:1909}.
The asymptotic properties of the quadrant estimator were derived in
\citet{Blomqvist:1950}. \citet{Esscher:1924} introduced the $\hat{\tau}_{K}$
estimator and cited \citet{Greiner:1909} for the link function. This
estimator was rediscovered in \citet{Kendall:1938} and is commonly
known as Kendall's tau coefficient and Kendall rank correlation coefficient.
Note that $\hat{\tau}_{K}$ is the quadrant estimator applied to $\{(X_{i}-X_{j},Y_{i}-Y_{j})\}_{i<j}$,
and it is easy to verify that $\rho=\mathrm{corr}(X_{1},Y_{1})=\mathrm{corr}(X_{1}-X_{2},Y_{1}-Y_{2})$,
if $(X_{1},Y_{1})$ and $(X_{2},Y_{2})$ are independent and identically
distribution.

In this paper, we employ Greiner's link function to map the estimators
of $\tau$ to estimators of $\rho$. The estimators of $\rho$ are
therefore defined by
\[
Q=\sin(\tfrac{\pi}{2}\hat{\tau}_{Q})\qquad\text{and}\qquad K=\sin(\tfrac{\pi}{2}\hat{\tau}_{K}),
\]
respectively. A convenient feature of these two estimators, is that
they bypass the need for estimating the variances of $X$ and $Y$.
In fact, $Q$ and $K$ do not rely on $X$ and $Y$ having finite
moments. For non-elliptical distributions, (\ref{eq:Greiner}) may
not be the appropriate link function, and this type of misspecification
can therefore induce a bias in these estimators. Fortunately, Greiner's
link function does appear to offer a good approximation beyond the
class of elliptical distributions, as illustrated in Figure \ref{fig:Greiner's-link-function}.
In our empirical application we use sparsely sampled financial returns,
which is an application where a Gaussian assumption has some theoretical
justification. 

\subsection{A New Correlation Estimator}

Our new estimator is motivated by the empirical situation one encounters
with high-frequency financial data, where market microstructure noise,
jumps, and time-varying volatility pose challenges to the validity
of correlation estimators. While Pearson is the ideal estimator when
the variables are distributed as a bivariate Gaussian distribution,
it is inconsistent under more realistic and commonly accepted assumptions
for intraday returns. The $K$ estimator is more robust, but also
inconsistent under time-varying volatility, while $Q$ is is very
inefficient. This motivates the estimator introduced below. 

\subsubsection{Notation with High-Frequency Data}

Let $X(t)$ and $Y(t)$ denote the the observed logarithmically transformed
price processes over some period, such as a trading day. We denote
the intraday returns over a time-interval with length $\delta$ by
\[
\Delta_{\delta}X_{t}=X(t)-X(t-\delta),
\]
and similarly for $\Delta_{\delta}Y_{t}$. In the context of high
frequency data it is common to sample sparsely to mitigate the effects
of market microstructure noise, and a popular choice is to set $\delta$
equal to five minutes. If we normalize the interval of time to be
$[0,1]$ and set $\delta=\frac{1}{n}$, then the correlation estimators
given above, may be applied to $(x_{i},y_{i})=(\Delta_{\frac{1}{n}}X_{\frac{i}{n}},\Delta_{\frac{1}{n}}Y_{\frac{i}{n}})$
for $i=1,\ldots,n$. 

Let $N$ denote the number of intraday returns at the highest possible
sampling frequency and suppose, for simplicity, that $N$ is divisible
by $n$, such that $S=N/n\in\mathbb{N}$. Then we can create $S$
distinct grids by shifting the initial observation time to be $t_{s}=s/(Sn)$
for $s=0,\ldots,S-1$. Each grid will have sparely and non-overlapping
returns, and combined we have $N-S+1$ pairs of sparsely sampled returns,
$(\Delta_{\delta}X_{\frac{j}{N}},\Delta_{\delta}Y_{\frac{j}{N}})$,
for $j=S,\ldots,N$.\footnote{For instance, a 6.5 hour long trading day has $n=78$ intraday returns
when partitioned into $5$-minute intervals. By shifting the starting
time we obtain partitions with distinct 5-minute returns, each having
just 77 returns. By shifting the starting time in one-minute increments
we obtain $S=5$ different partitions and a total of $386$ 5-minute
returns.}

\subsubsection{Subsampled Quadrant Estimator}

We are now ready to introduce the subsampled variant of the Quadrant
estimator, defined by

\[
Q_{S}=\sin(\tfrac{\pi}{2}\hat{\tau}_{S})\qquad\text{with}\qquad\hat{\tau}_{S}=\frac{1}{N-S+1}\sum_{j=S}^{N}\mathrm{sgn}(\Delta_{\frac{S}{N}}X_{\frac{j}{N}}\Delta_{\frac{S}{N}}Y_{\frac{j}{N}}).
\]
The estimator does not require $N$ to be divisible by $S$, but if
$N$ is divisible by $S$, then $\hat{\tau}_{S}$ can be expressed
as a simple average of $S$ $\tau$-estimators based on different
grids. This construction is similar to many robust estimators of the
long-run variance. \citet{PolitisRomanoWolf99} noted that the subsampled
sample variance is identical to the moving-blocks estimator and the
jackknife variance estimator, and it is almost identical to the Bartlett
estimator, \nocite{Bartlett:1950}\citet[1950]{Bartlett:1946}, which
is often referred to as the Newey-West estimator in the econometrics
literature.\footnote{\citet[p.98]{PolitisRomanoWolf99}: ``{[}...{]} the variance estimator
$\hat{\sigma}_{\text{\textsc{sub}}}^{2}$ is actually asymptotically
equivalent to the Bartlett kernel estimator{[}...{]}'' and \citet[p.98]{PolitisRomanoWolf99}:
``In addition, $\hat{\sigma}_{\text{\textsc{sub}}}^{2}$ is identical
to the moving blocks bootstrap and/or jackknife variance estimator
of the variance of the sample mean proposed by \citet{Kunsch:89}
and \citet{LiuSingh:92} {[}...{]}''.} In the context of volatility estimation with high-frequency data,
the subsampling idea was first used in \citet{Zhou:1996}. The theoretical
foundation for subsampled realized variances was established in \citet{zhang-mykland-aitsahalia:05}
and \citet{Zhang:2006}, and the close connection between subsampled
estimators and kernel estimators is detailed in \citet{BNHLS-SubRK:2011}.

The subsampled quadrant correlation estimator has several appealing
properties. First, it inherits the robustness of the quadrant estimator
while being more precise than $Q$. The robustness is characterized
by the influence function, which is discussed below. Second, $Q_{S}$
is consistent under to time-varying volatilities. This is important
because time-varying volatility is common in economic time series,
especially in high-frequency financial data. Third, another computationally
attractive feature of the new $Q_{S}$ estimator, is that it relies
on binary variables. This makes it easier to scale this estimator
to large data sets. 

\subsubsection{Implementation at Ultra High Frequencies}

One challenge with ultra-high-frequency data is that price increments
can be zero over short time intervals, resulting in $\Delta_{\delta}X\Delta_{\delta}Y=0$.
This issue may be caused by stale prices and rounding to a grid defined
by the minimum tick size. This issue abates quickly with sparse sampling,
and zeros are infrequent in our empirical analysis once we sample
a frequencies below one minute. Most of our emprical results are based
on $\delta=3$-minutes. Still, we will explore the properties of the
estimators at at higher sampling frequencies to gain insight about
them and market microstructure noise. For this reason we need to account
for zero returns, and we do so by redefining the estimator,
\[
\hat{\tau}_{S}=\frac{1}{N_{1}-S+1}\sum_{j=S}^{N}\mathrm{sgn}(\Delta_{\delta}X_{\frac{j}{N}}\Delta_{\delta}Y_{\frac{j}{N}}),\qquad\text{with}\quad N_{1}=\sum_{j}1_{\{\Delta_{\delta}X_{\frac{j}{N}}\Delta_{\delta}Y_{\frac{j}{N}}\neq0\}},
\]
such that we only count non-zero product-pairs.

\section{Properties of the Estimators}

In this section, we establish several properties of the estimators,
and we highlight some of the key advantages that are unique to $Q_{S}$.
We first consider the simple case with iid and normally distributed
variables. This is the situation that arises when the price process
are given from Brownian motions with constant volatility and the observed
prices are measured without error. We then proceed with more realistic
models with time-varying volatility and discuss robustness by means
of the influence function of the estimators. The impact of general
types of market microstructure noise will be analyzed in Section 4.

\subsection{Limit Distributions under Ideal Circumstances}

We begin with the simplest possible situation, where logarithmic price
processes follow Brownian motions with constant volatilities and constant
correlation.
\begin{assumption}
\label{assu:Brownian}Suppose that $(X,Y)$ is given by a bivariate
Brownian motion, such that $(X_{t},Y_{t})\sim N_{2}(0,t\Sigma)$.
\end{assumption}
In this situation, intraday returns, $(\Delta_{\delta}X_{i\delta},\Delta_{\delta}Y_{i\delta})$,
$i=1,\dots,n$ are iid and normally distributed. This is the ideal
situation for the Pearson estimator, because $P$ is the maximum likelihood
estimator of $\rho$ for the sample with the $n$ pairs of observations.
We should therefore expect $P$ to compare favorable to $Q$ and $K$.
It is less obvious how $P$ will compare with $Q_{S}$, because the
latter utilizes the shifted grids of sparsely sampled returns, $(\Delta_{\delta}X_{j\delta/S},\Delta_{\delta}Y_{j\delta/S}),j=S,\dots,N$,
and is thus computed from a larger data set. The asymptotic distribution
of the new estimator is given next.
\begin{thm}
\label{thm:efficiency}Suppose that Assumption \ref{assu:Brownian}
holds and let $S\in\mathbb{N}$ be fixed. Then the subsampled quadrant
correlation estimator is asymptotically normally distributed 
\[
\sqrt{n}(Q_{S}-\rho)\overset{d}{\rightarrow}N(0,V_{S}(\rho)),
\]
as $n\rightarrow\infty$ , where

\[
V_{S}(\rho)=(1-\rho^{2})\frac{1}{S}\sum_{s=-S}^{S}\left[\mathrm{asin}^{2}(w_{s})-\mathrm{asin}^{2}(w_{s}\rho)\right],\qquad w_{s}=\tfrac{S-|s|}{S},
\]
which is decreasing in $S$ and bounded from below by 
\[
\lim_{S\rightarrow\infty}V_{S}(\rho)=(1-\rho^{2})2\left[\mathrm{asin}^{2}(1)-2\tfrac{\sqrt{1-\rho^{2}}}{\rho}\mathrm{asin}(\rho)-\mathrm{asin}^{2}(\rho)\right].
\]
\end{thm}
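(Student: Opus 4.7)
The plan is a three-step reduction: (i) translate the question about $Q_S - \rho$ into one about $\hat\tau_S - \tau$ via the delta method, (ii) obtain a CLT for $\hat\tau_S$ from the fact that its summands are $(S-1)$-dependent under Assumption~\ref{assu:Brownian}, and (iii) compute the limiting variance by calculating the autocovariances of the sign sequence using the Kronecker structure of the underlying Gaussian covariance matrix. Since $\tau = \tfrac{2}{\pi}\arcsin\rho$, the map $\tau \mapsto \sin(\tfrac{\pi}{2}\tau)$ has derivative $\tfrac{\pi}{2}\cos(\tfrac{\pi}{2}\tau) = \tfrac{\pi}{2}\sqrt{1-\rho^2}$, so the delta method reduces the problem to showing $\sqrt{n}(\hat\tau_S - \tau) \xrightarrow{d} N(0, v_S(\rho))$, after which $V_S(\rho) = \tfrac{\pi^2}{4}(1-\rho^2)\,v_S(\rho)$.

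Under Assumption~\ref{assu:Brownian}, the finest-scale increments of length $1/N$ are iid bivariate normal, so each overlapping $S$-block increment is a sum of $S$ consecutive iid Gaussian pairs. Consequently $Z_j := \mathrm{sgn}(\Delta_{S/N}X_{j/N}\,\Delta_{S/N}Y_{j/N})$ is a strictly stationary, bounded, $(S-1)$-dependent sequence. The Hoeffding--Robbins CLT for $m$-dependent sequences then gives $\sqrt{N}(\hat\tau_S - \tau) \xrightarrow{d} N\bigl(0,\sum_{|s|<S}\gamma_s\bigr)$ with $\gamma_s = \mathrm{Cov}(Z_0, Z_s)$. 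Rescaling by $\sqrt{n/N} = 1/\sqrt{S}$ yields $v_S(\rho) = S^{-1}\sum_{|s|<S}\gamma_s$, and the sum can be harmlessly extended to $|s|\le S$ because $w_{\pm S} = 0$.

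The crux is computing $\gamma_s$ for $0 < |s| < S$. After standardization, $(A_0, B_0, A_s, B_s)$ is centered Gaussian with correlation matrix equal to the Kronecker product of $\bigl(\begin{smallmatrix}1 & w_s\\ w_s & 1\end{smallmatrix}\bigr)$ and $\bigl(\begin{smallmatrix}1 & \rho\\ \rho & 1\end{smallmatrix}\bigr)$, where $w_s = 1 - |s|/S$. Writing $Z_0 Z_s = \mathrm{sgn}(A_0 A_s)\,\mathrm{sgn}(B_0 B_s)$, the key observation is that the orthogonal rotation $P = (A_0+A_s)/\sqrt{2}$, $Q = (A_0-A_s)/\sqrt{2}$ (and analogously $R, T$ for $B$) makes $(P,R)$ independent of $(Q,T)$, with each pair bivariate Gaussian of correlation $\rho$ and marginal variance $1 \pm w_s$. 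Since $A_0 A_s = (P^2 - Q^2)/2$, the target becomes $E[\mathrm{sgn}(|P|-|Q|)\,\mathrm{sgn}(|R|-|T|)]$. I would evaluate this by Price's theorem: differentiate $F(w) := E[Z_0 Z_s]$ in $w$, using the fact that conditioning on any pair of the four variables preserves the Kronecker form so the relevant conditional correlations reduce to either $w$ or $\rho$, and then integrate from $w=0$ (where $F(0) = \tau^2$) to obtain
\[
E[Z_0 Z_s] = \tfrac{4}{\pi^2}\bigl[\arcsin^2(w_s)+\arcsin^2(\rho)-\arcsin^2(w_s\rho)\bigr].
\]
Sanity checks: at $s=0$ this equals $1$; at $|s|\ge S$ (so $w_s=0$) it equals $\tau^2$; and at $\rho=0$ it equals $(\tfrac{2}{\pi}\arcsin w_s)^2$. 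Subtracting $\tau^2 = \tfrac{4}{\pi^2}\arcsin^2\rho$ yields $\gamma_s = \tfrac{4}{\pi^2}[\arcsin^2 w_s - \arcsin^2(w_s\rho)]$, and assembling with the delta-method scaling produces $V_S(\rho)$ as stated.

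For $\lim_{S\to\infty} V_S(\rho)$, I would recognize $V_S(\rho)/(1-\rho^2) = S^{-1}\sum_{s=-S}^S g(w_s)$ with $g(w) = \arcsin^2 w - \arcsin^2(w\rho)$ as a Riemann sum converging to $2\int_0^1 g(v)\,dv$. Elementary calculus (integration by parts gives $\int_0^1 \arcsin^2 v\,dv = \pi^2/4 - 2$, and the substitution $t = \rho v$ gives $\int_0^1 \arcsin^2(\rho v)\,dv = \arcsin^2\rho + \tfrac{2\sqrt{1-\rho^2}}{\rho}\arcsin\rho - 2$) yields the stated limit. Monotonicity in $S$ follows from a direct algebraic comparison of $V_S$ and $V_{S+1}$ using the explicit formula, and is not needed for the asymptotic normality itself. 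The main obstacle is step (iii): the Kronecker structure is essential because generic $4$-dimensional Gaussian orthant probabilities do not reduce to arcsines, and verifying that the conditional correlations preserve the Kronecker form requires some care with the block structure.
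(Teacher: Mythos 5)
Your proposal is correct and follows the same overall architecture as the paper's proof: a delta-method reduction (the paper works with $\hat q_S=(\hat\tau_S+1)/2$ and derivative $\pi\sqrt{1-\rho^{2}}$, you work with $\hat\tau_S$ and derivative $\tfrac{\pi}{2}\sqrt{1-\rho^{2}}$, which is the same thing), a variance computation exploiting the $(S-1)$-dependence of the sign sequence, and a Riemann-sum/antiderivative argument for the $S\to\infty$ limit that matches the paper's $I(1)-I(\rho)$ calculation exactly. The one place you genuinely diverge is the key covariance $\gamma_s=\mathrm{cov}(Z_0,Z_s)=\tfrac{4}{\pi^{2}}[\arcsin^{2}w_s-\arcsin^{2}(w_s\rho)]$: the paper isolates this as Lemma \ref{lem:SignOf4} and obtains it by plugging the Kronecker-structured correlation matrix $C(\rho,\omega)$ into Cheng's (1969) closed-form quadrivariate orthant probability, whereas you propose a self-contained derivation via the orthogonal rotation $(A_0,A_s)\mapsto(P,Q)$ followed by Price's theorem. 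Your rotation step is verifiably correct ($(P,R)\perp(Q,T)$, each with correlation $\rho$ and variances $1\pm w_s$), your target formula for $E[Z_0Z_s]$ agrees with the paper's lemma, and your sanity checks at $s=0$, $w_s=0$, and $\rho=0$ all pass; the Price's-theorem integration itself is only sketched and is the one step you would still need to execute in full, though it is a standard technique and citing Cheng's formula would let you skip it entirely. Two further points of comparison: you explicitly invoke the Hoeffding--Robbins CLT for $m$-dependent sequences to justify asymptotic normality, which the paper's proof leaves implicit (it only computes the asymptotic variance), so your treatment is actually more complete on that step; and like the paper, you do not actually prove the claimed monotonicity of $V_S$ in $S$, merely asserting it follows from the explicit formula.
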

The corresponding asymptotic distributions for the estimators, $Q$,
$K$, and $P$, are well known, see e.g. \citet{CrouxDehon:2010}.
For the sake of comparison, these are included below.
\begin{prop}
Suppose that Assumption \ref{assu:Brownian} holds, then as $n\rightarrow\infty$
we have 
\begin{eqnarray*}
\sqrt{n}(P-\rho) & \stackrel{d}{\rightarrow} & N(0,V_{P}),\qquad\text{with}\quad V_{P}=(1-\rho^{2})^{2},\\
\sqrt{n}(K-\rho) & \stackrel{d}{\rightarrow} & N(0,V_{K}),\qquad\text{with}\quad V_{K}=(1-\rho^{2})[(\tfrac{\pi}{3})^{2}-4\mathrm{asin}^{2}(\tfrac{\rho}{2})],\\
\sqrt{n}(Q-\rho) & \stackrel{d}{\rightarrow} & N(0,V_{Q}),\qquad\text{with}\quad V_{Q}=(1-\rho^{2})[(\tfrac{\pi}{2})^{2}-\mathrm{asin}^{2}(\rho)].
\end{eqnarray*}
\end{prop}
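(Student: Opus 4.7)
Under Assumption \ref{assu:Brownian}, the pairs $(x_{i},y_{i})=(\Delta_{\delta}X_{i\delta},\Delta_{\delta}Y_{i\delta})$ are iid $N_{2}(0,\delta\Sigma)$, and since all three estimators are scale-invariant we may assume unit variances. The proof then amounts to three standard CLT$+$delta-method arguments, one for each estimator.

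For $P$, the plan is to rewrite $P$ as a smooth function of the sample moments $(\tfrac{1}{n}\sum x_{i}^{2},\tfrac{1}{n}\sum y_{i}^{2},\tfrac{1}{n}\sum x_{i}y_{i})$, apply the multivariate CLT using the joint fourth-moments of a bivariate normal (recall $\mathbb{E}[x^{2}y^{2}]=1+2\rho^{2}$, $\mathbb{E}[xy^{3}]=\mathbb{E}[x^{3}y]=3\rho$, $\mathbb{E}[x^{4}]=\mathbb{E}[y^{4}]=3$), and then apply the delta method to the map $(a,b,c)\mapsto c/\sqrt{ab}$. After routine cancellation, this recovers Fisher's classical expression $V_{P}=(1-\rho^{2})^{2}$.

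For $Q=\sin(\tfrac{\pi}{2}\hat{\tau}_{Q})$, the sample statistic $\hat{\tau}_{Q}=\tfrac{1}{n}\sum\mathrm{sgn}(x_{i}y_{i})$ is a simple average of iid Bernoulli-type variables with mean $\tau=\tfrac{2}{\pi}\arcsin\rho$ (by Sheppard's identity, which is the basis of Greiner's link function cited above). Since $\mathrm{sgn}(xy)\in\{-1,+1\}$ almost surely under Assumption \ref{assu:Brownian}, its variance is $1-\tau^{2}$, and the ordinary CLT gives $\sqrt{n}(\hat{\tau}_{Q}-\tau)\to N(0,1-\tau^{2})$. The delta method with $g(t)=\sin(\tfrac{\pi}{2}t)$ gives derivative $g'(\tau)=\tfrac{\pi}{2}\cos(\arcsin\rho)=\tfrac{\pi}{2}\sqrt{1-\rho^{2}}$, yielding
\[
V_{Q}=\bigl(\tfrac{\pi}{2}\bigr)^{2}(1-\rho^{2})\bigl[1-\bigl(\tfrac{2}{\pi}\arcsin\rho\bigr)^{2}\bigr]=(1-\rho^{2})\bigl[(\tfrac{\pi}{2})^{2}-\arcsin^{2}\rho\bigr].
\]

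For $K=\sin(\tfrac{\pi}{2}\hat{\tau}_{K})$, the estimator $\hat{\tau}_{K}$ is a bounded $U$-statistic of order two with kernel $h((x_{1},y_{1}),(x_{2},y_{2}))=\mathrm{sgn}[(x_{1}-x_{2})(y_{1}-y_{2})]$. By Hoeffding's theorem, $\sqrt{n}(\hat{\tau}_{K}-\tau_{K})\to N(0,4\zeta_{1})$, where $\zeta_{1}=\mathrm{Var}(h_{1}(X,Y))$ and $h_{1}(x,y)=\mathbb{E}[h((x,y),(X',Y'))]$. Under bivariate normality $\tau_{K}=\tau=\tfrac{2}{\pi}\arcsin\rho$, and the main computational step is evaluating $\zeta_{1}$: writing $h_{1}(x,y)=1-2\Phi_{2}(-x,y;-\rho)-2\Phi_{2}(x,-y;-\rho)$ (or an equivalent form in orthant probabilities), the integral $\mathbb{E}[h_{1}(X,Y)^{2}]$ reduces, via the standard Sheppard/bivariate-normal orthant identity $\Pr(X>0,Y>0)=\tfrac{1}{4}+\tfrac{1}{2\pi}\arcsin\rho$ applied to an auxiliary four-dimensional normal vector, to the closed form $4\zeta_{1}=(\tfrac{2}{\pi})^{2}[(\tfrac{\pi}{3})^{2}-4\arcsin^{2}(\rho/2)]$. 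Applying the delta method with the same $g'(\tau_{K})=\tfrac{\pi}{2}\sqrt{1-\rho^{2}}$ then yields $V_{K}=(1-\rho^{2})[(\tfrac{\pi}{3})^{2}-4\arcsin^{2}(\rho/2)]$.

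The only non-mechanical step is the evaluation of the Kendall projection variance $\zeta_{1}$, which is the ingredient that produces the $\arcsin^{2}(\rho/2)$ term; everything else is CLT plus delta method. Since these limiting variances are well-established in the literature (in particular \citet{CrouxDehon:2010}), the cleanest write-up would simply cite that reference for the $\zeta_{1}$ computation and assemble the three statements in parallel.
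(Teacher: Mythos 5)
Your proposal is correct, but note that the paper does not actually prove this proposition: it states explicitly that these limit distributions are well known and simply cites \citet{CrouxDehon:2010}, which is also the fallback you suggest in your final paragraph. Your sketch is an accurate reconstruction of the standard arguments behind that citation: Fisher's delta-method computation from the bivariate-normal fourth moments for $P$; the elementary CLT for the $\pm1$-valued summands of $\hat{\tau}_{Q}$ combined with Sheppard's identity $\tau=\tfrac{2}{\pi}\arcsin\rho$ and the derivative $\tfrac{\pi}{2}\sqrt{1-\rho^{2}}$ of the link function; and Hoeffding's projection for the $U$-statistic $\hat{\tau}_{K}$, where the value $4\zeta_{1}=\tfrac{4}{\pi^{2}}[(\tfrac{\pi}{3})^{2}-4\arcsin^{2}(\rho/2)]$ (consistent with the familiar $4/(9n)$ variance at $\rho=0$) delivers $V_{K}$. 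The only step you gloss over is the recentering: the paper's estimators subtract the sample mean (for $P$) or sample median (for $Q$ and $K$), and one should note that at the symmetric Gaussian model the contribution of the estimated location to the first-order asymptotics vanishes, so the known-center calculation you perform does give the stated variances; this is precisely what the influence functions quoted later in the paper (which contain no location-derivative terms) encode. With that remark added, your argument is a complete and correct proof of the proposition, slightly more self-contained than the paper's citation-only treatment.
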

We can compare the asymptotic variance of $Q_{S}$ to those of the
other estimators. The asymptotic variances depend on $\rho$ and those
for $P$, $K$, and $Q_{S}$ are shown in Figure \ref{fig:avar_asym}.
With $S=1$ we obviously have $V_{Q_{S}}=V_{Q}$.\footnote{With $S=1$ we have $\sum_{s=-S}^{S}\arcsin^{2}(\tfrac{S-|s|}{S})-\arcsin^{2}(\tfrac{S-|s|}{S}\rho)=\arcsin^{2}(1)-\arcsin^{2}(\rho)=\tfrac{\pi^{2}}{4}-\arcsin^{2}\rho$
such that $V_{Q_{S}}=V_{Q}$ as expected.} For a sufficiently large $S$, the subsampled Quadrant estimator
is more accurate than the Kendall estimator. For small values of $\rho$,
$Q_{S}$ is more accurate than $K$ when $S\geq5$, whereas a larger
value of $S$ in required for larger values of $\rho$. The $Q_{S}$
estimator is similar to $P$ for large values of $S$, with $Q_{S}$
having the edge for small values of $\rho$, whereas $P$ has the
edge for large values of $\rho$. 

Realized measures are commonly computed from sparsely sampled returns,
such as 5-minute returns, to minimize the impact of market microstructure
noise. There will typically be a large number of observations within
each 5-minute interval, and this makes it possible to use a relatively
large value for $S$.

\begin{figure}[H]
\begin{centering}
\includegraphics[width=0.9\textwidth]{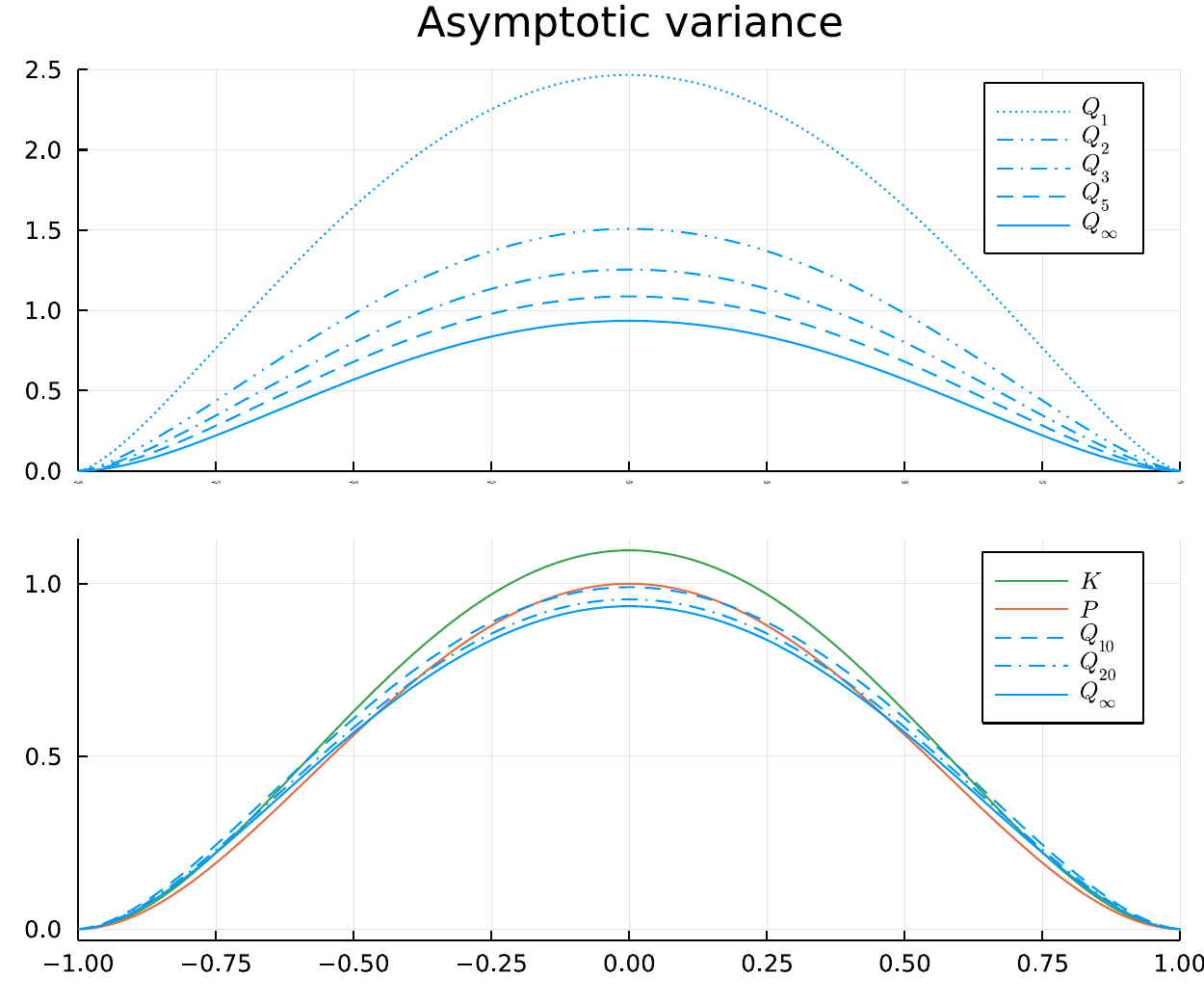}
\par\end{centering}
\caption{Asymptotic variances as a function of the true correlation, $\rho$,
for Quadrant and subsampled Quadrant estimators in the upper panel
and subsampled Quadrant, Pearson, and Kendall estimators in the lower
panel.}

\label{fig:avar_asym}
\end{figure}

\subsection{Properties with Time-Varying Volatility}

Time-varying volatility is an intrinsic feature of financial time-series.
For instance, volatility if found to vary substantially in high-frequency
financial data, even within a trading day. Next, we relax the assumption
that volatility is constant and evaluate the effect this has on the
correlation estimators. The asymptotic properties of correlation estimators
stated above need not apply in this context, because they were derived
under constant volatility. 

We can illustrate the issues that arise from time-varying volatility
with a simple bivariate Brownian semimartingale. 
\begin{assumption}
\label{assu:BSM}Suppose that
\begin{equation}
\left(\begin{array}{c}
X_{t}\\
Y_{t}
\end{array}\right)=\left(\begin{array}{c}
X_{0}\\
Y_{0}
\end{array}\right)+\int_{0}^{t}\sigma(u)\mathrm{d}W(u),\label{eq:ito}
\end{equation}
where $W(u)$ is a bivariate Wiener process with $\mathrm{cor}(\mathrm{d}W_{x},\mathrm{d}W_{y})=\rho$
and 
\[
\sigma(u)=\begin{pmatrix}\sigma_{x}(u) & 0\\
0 & \sigma_{y}(u)
\end{pmatrix},
\]
is a squared integrable CADLAG process.
\end{assumption}
The assumption can be generalized in many ways, such as having a random
drift term,\footnote{Formally, we can let the logarithmic price process be defined on the
filtered probability space $(\Omega,\mathcal{F},(\mathcal{F})_{t\in[0,1]},\mathbb{P})$
with a locally bounded predictable drift function, $a(u)$, where
$a$, $\sigma$, and $W$ are adapted to a $\mathcal{F}_{t}$.} but the simple setup presented here suffices to show that traditional
correlation estimators are biased in the presence of time-varying
volatility, and establish that quadrant-based estimators are robust
to time-varying volatility. We will, initially, take the correlation
coefficient to be constant over time. The case with time varying correlation
is discussed below in Section 3.3.
\begin{thm}
\label{theo:Consistency}If Assumption \ref{assu:BSM} holds, then
$Q_{S}\overset{p}{\rightarrow}\rho$ as $\delta\rightarrow0$, whereas
the probability limits for $P$ and $K$ are given by 
\begin{eqnarray*}
P & \overset{p}{\rightarrow} & \lambda\rho,\qquad\text{where}\quad\lambda=\tfrac{\int_{0}^{1}\sigma_{x}(u)\sigma_{y}(u)\mathrm{d}u}{\sqrt{\int_{0}^{1}\sigma_{x}^{2}(u)\mathrm{d}u\int_{0}^{1}\sigma_{y}^{2}(u)\mathrm{d}u}},\\
K & \overset{p}{\rightarrow} & \sin\left(\int_{0}^{1}\int_{0}^{1}\arcsin\left(h(u,v)\rho\right)\mathrm{d}u\mathrm{d}v\right),
\end{eqnarray*}
where $h(u,v)=\tfrac{\sigma_{x}(u)\sigma_{y}(u)+\sigma_{x}(v)\sigma_{y}(v)}{\sqrt{\sigma_{x}^{2}(u)+\sigma_{x}^{2}(v)}\sqrt{\sigma_{y}^{2}(u)+\sigma_{y}^{2}(v)}}$.
\end{thm}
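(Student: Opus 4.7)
The plan is to handle the three limits separately, exploiting a common structural fact: over a short interval $[t,t+\delta]$, the increments $(\Delta_\delta X_{t+\delta},\Delta_\delta Y_{t+\delta})$ are, to leading order, bivariate normal with covariance matrix $\delta\,\sigma(t)\sigma(t)^{\top}$, whose correlation is $\rho$ because $\sigma(u)$ is diagonal with positive entries. This local bivariate normality will drive the two quadrant-type limits, while $P$ will follow from classical semimartingale asymptotics.

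For $P$, I would invoke the standard convergence of realized (co)variation for Itô semimartingales: as $\delta\to 0$, $\sum_i \Delta_\delta X_{i\delta}\Delta_\delta Y_{i\delta}\overset{p}{\to}\int_0^1 \rho\,\sigma_x(u)\sigma_y(u)\,\mathrm{d}u$, $\sum_i(\Delta_\delta X_{i\delta})^2\overset{p}{\to}\int_0^1\sigma_x^2(u)\,\mathrm{d}u$, and similarly for $Y$. The difference between $P$ and this ratio after recentering by the sample mean is $O_p(\delta)$ because the increments are martingale-like. Continuous mapping then yields $P\overset{p}{\to}\lambda\rho$.

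For $Q_S$, I would first show that, conditional on the volatility path, each sign term has expectation close to $\tfrac{2}{\pi}\arcsin\rho$. Writing $\sigma$ as locally constant on $[t,t+S/N]$ and using Itô's isometry plus a standard discretization bound, the leading-order bivariate normal approximation gives
\[
\mathbb{E}\bigl[\mathrm{sgn}(\Delta_{S/N}X_{j/N}\Delta_{S/N}Y_{j/N})\mid\sigma\bigr]=\tfrac{2}{\pi}\arcsin\rho+o(1),
\]
uniformly in $j$ as $N\to\infty$. Summing and averaging, $\hat\tau_S\overset{p}{\to}\tfrac{2}{\pi}\arcsin\rho$; convergence in probability is enough since the subsampled overlaps only inflate the variance by a factor depending on $S$, which is fixed. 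Continuity of $\sin(\tfrac{\pi}{2}\cdot)$ gives $Q_S\overset{p}{\to}\sin(\arcsin\rho)=\rho$.

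For $K$, the argument is a U-statistic version of the $Q_S$ argument. Fix a pair of indices with times $u=i\delta,\,v=j\delta$; the four-dimensional vector $(\Delta X_i,\Delta Y_i,\Delta X_j,\Delta Y_j)$ is, to leading order, Gaussian with independent pairs in $i$ and $j$, so $(\Delta X_i-\Delta X_j,\Delta Y_i-\Delta Y_j)$ is bivariate normal with correlation $h(u,v)\rho$, where $h(u,v)$ is as defined in the theorem. The Sheppard/Greiner identity applied locally yields
\[
\mathbb{E}\bigl[\mathrm{sgn}((\Delta X_i-\Delta X_j)(\Delta Y_i-\Delta Y_j))\mid\sigma\bigr]=\tfrac{2}{\pi}\arcsin(h(u,v)\rho)+o(1).
\]
Treating $\hat\tau_K$ as a Riemann sum of this expected value, and verifying that the remainder after subtracting its conditional mean vanishes by a standard U-statistic Hoeffding decomposition, one obtains $\hat\tau_K\overset{p}{\to}\tfrac{2}{\pi}\int_0^1\!\!\int_0^1\arcsin(h(u,v)\rho)\,\mathrm{d}u\,\mathrm{d}v$. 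Applying $\sin(\tfrac{\pi}{2}\cdot)$ delivers the stated probability limit.

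The main obstacle is making the local Gaussian approximation quantitative enough to drive the discretization error to zero uniformly while $\sigma$ is only assumed càdlàg and square-integrable. For $Q_S$ this amounts to bounding the departure of the conditional distribution from the bivariate normal by a term of order $\sqrt{\delta}$ (or a càdlàg oscillation modulus), then applying a dominated convergence argument along the path. For $K$ the extra difficulty is the U-statistic structure, which requires verifying that the off-diagonal contribution to the variance vanishes despite the discontinuities of $\sigma$; a truncation/localization argument restricting to bounded $\sigma$ and then removing the truncation via a standard stopping-time argument should suffice.
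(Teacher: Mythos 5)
Your proposal follows essentially the same route as the paper's proof: local bivariate normality of the increments, the Sheppard/Greiner arcsine identity for the sign-concordance probability (yielding $\tfrac{2}{\pi}\arcsin\rho$ for $Q_S$ and $\tfrac{2}{\pi}\arcsin(h(u,v)\rho)$ for the differenced pairs in $K$), a law of large numbers over the subsampled grids, a Riemann-sum limit for the $U$-statistic, and the standard realized-covariation limit for $P$. The one step you flag as the main obstacle --- making the local Gaussian approximation rigorous for merely c\`adl\`ag $\sigma$ --- is exactly where the paper avoids a direct quantitative bound by invoking the local-constancy approximation of \citet{MyklandZhang:2009}, under which the increments are exactly $(\sigma_{x,j}Z_{x,j},\sigma_{y,j}Z_{y,j})$ with Gaussian $(Z_{x,j},Z_{y,j})$ and consistency is preserved under the change of measure; substituting that citation for your sketched discretization/localization argument closes the gap.
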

The important result from Theorem \ref{theo:Consistency} is that
$Q_{S}$ emerges as the only consistent estimator when volatility
is time-varying. Both $P$ and $K$ are generally inconsistent, except
in the following special case where the two volatility processes are
perfectly collinear.
\begin{figure}[tbh]
\begin{centering}
\includegraphics[scale=0.8]{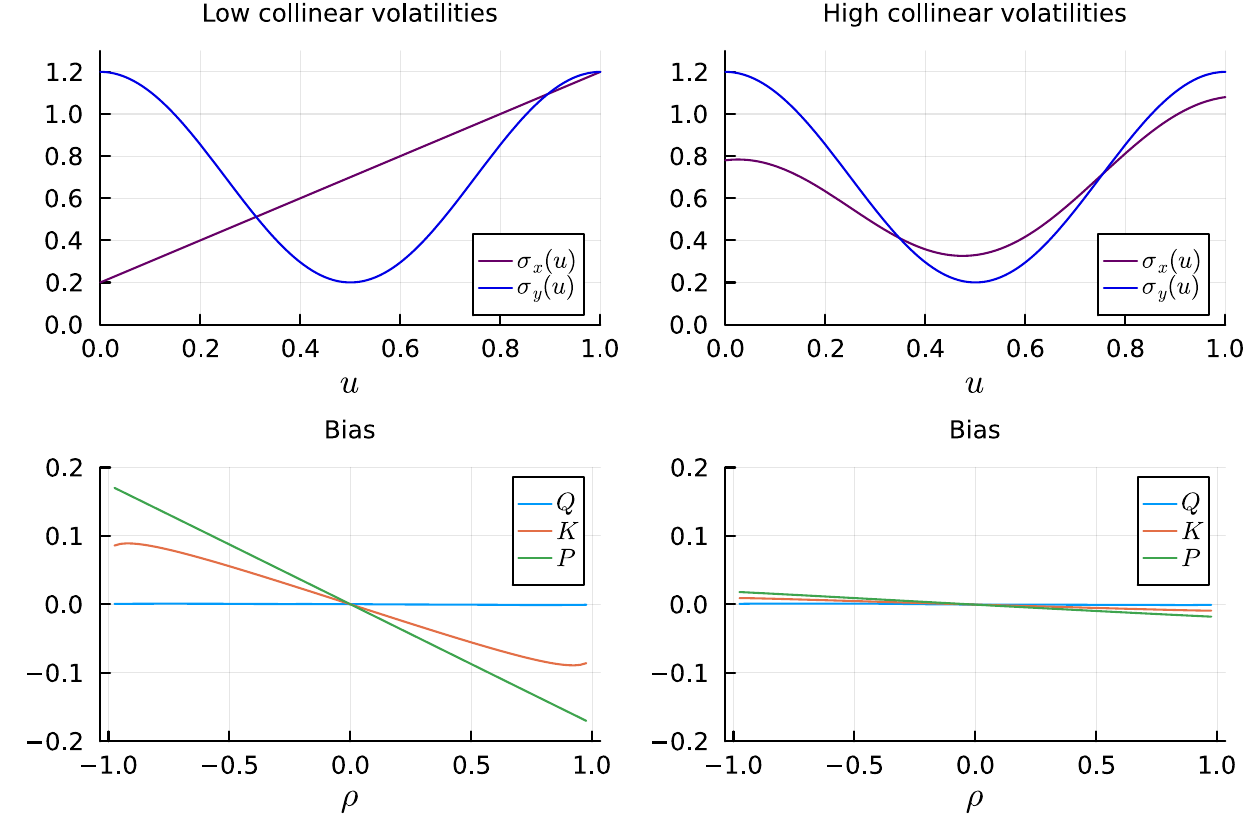}
\par\end{centering}
\caption{Bias in correlation estimators, $P$, $K$, and $Q_{S}$, with two
different degrees of collinearity between the volatilities.\label{fig:BiasTVvol}}
\end{figure}

\begin{cor}
Suppose that Assumption \ref{assu:BSM} holds and that $\sigma_{y}(u)=c\sigma_{x}(u)$
for some $c>0$. Then $P\overset{p}{\rightarrow}\rho$ and $K\overset{p}{\rightarrow}\rho$.
\end{cor}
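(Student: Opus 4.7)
The plan is to derive both consistency claims as direct algebraic corollaries of the probability limits already computed in Theorem \ref{theo:Consistency}. Under the collinearity assumption $\sigma_{y}(u)=c\sigma_{x}(u)$, all cross products of volatilities reduce to scalar multiples of $\sigma_{x}^{2}(u)$, and the Cauchy--Schwarz inequality that underlies $\lambda\leq1$ attains equality. So the work amounts to substituting and simplifying.

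First I would handle $P$. Plugging $\sigma_{y}(u)=c\sigma_{x}(u)$ into the expression for $\lambda$ in Theorem \ref{theo:Consistency} gives
\[
\lambda=\frac{c\int_{0}^{1}\sigma_{x}^{2}(u)\mathrm{d}u}{\sqrt{\int_{0}^{1}\sigma_{x}^{2}(u)\mathrm{d}u\cdot c^{2}\int_{0}^{1}\sigma_{x}^{2}(u)\mathrm{d}u}}=1,
\]
provided $\int_{0}^{1}\sigma_{x}^{2}(u)\mathrm{d}u>0$ (which holds almost surely under Assumption \ref{assu:BSM}, else the limit is trivial). Hence $P\overset{p}{\rightarrow}\rho$.

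Next I would handle $K$. Substituting $\sigma_{y}(u)=c\sigma_{x}(u)$ into $h(u,v)$, the numerator becomes $c[\sigma_{x}^{2}(u)+\sigma_{x}^{2}(v)]$ while the denominator becomes $\sqrt{\sigma_{x}^{2}(u)+\sigma_{x}^{2}(v)}\cdot c\sqrt{\sigma_{x}^{2}(u)+\sigma_{x}^{2}(v)}=c[\sigma_{x}^{2}(u)+\sigma_{x}^{2}(v)]$, so $h(u,v)\equiv1$ on $\{(u,v):\sigma_{x}^{2}(u)+\sigma_{x}^{2}(v)>0\}$. Therefore the double integral in Theorem \ref{theo:Consistency} collapses to $\int_{0}^{1}\int_{0}^{1}\arcsin(\rho)\mathrm{d}u\mathrm{d}v=\arcsin(\rho)$, and applying $\sin(\cdot)$ yields $K\overset{p}{\rightarrow}\sin(\arcsin(\rho))=\rho$.

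The only conceptual point to check is that the simplification of $h$ is legitimate on a set of full measure; this is immediate since the CADLAG assumption rules out pathological null sets contributing to the integral, and where $\sigma_{x}(u)=0$ both numerator and denominator of $h$ vanish and may be defined by continuity without affecting the integral. No genuine obstacle arises: the corollary is essentially a one-line computation showing that perfect collinearity of $\sigma_{x}$ and $\sigma_{y}$ is precisely the equality case of Cauchy--Schwarz that eliminates the attenuation factor $\lambda$ for $P$ and the smoothing kernel $h$ for $K$.
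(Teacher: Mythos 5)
Your proposal is correct and matches the paper's own argument: substitute $\sigma_{y}(u)=c\sigma_{x}(u)$ into the limits from Theorem \ref{theo:Consistency}, observe that $\lambda=1$ and $h(u,v)\equiv1$, and conclude that both limits reduce to $\rho$. The extra remarks about the equality case of Cauchy--Schwarz and the null-set where $\sigma_{x}$ vanishes are fine but not needed beyond what the paper states.
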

The results for $P$ and $K$ in this special case are easy to verify,
because perfectly collinearity implies $\lambda=1$ and that

\[
h(u,v)=\frac{c\sigma_{x}^{2}(u)+c\sigma_{x}^{2}(v)}{\sqrt{\sigma_{x}^{2}(u)+\sigma_{x}^{2}(v)}\sqrt{c^{2}(\sigma_{x}^{2}(u)+\sigma_{x}^{2}(v))}}=1,
\]
for all $u,v$.

We illustrate the inconsistency with a simple example. Consider the
functions, $g(u)=\tfrac{1}{5}+\tfrac{4}{5}u$ and $h(u)=\frac{1}{2}(\tfrac{6}{5}+\cos(2\pi u))$,
which we will use to construct volatility paths with varying degrees
of collinearity. The upper left panel in Figure \ref{fig:BiasTVvol}
represents a case with low collinearity, where $\sigma_{x}(u)=g(u)$
and $\sigma_{y}(u)=h(u)$, and the upper right panel corresponds to
a case with high collinearity, where $\sigma_{x}(u)=\frac{3}{10}g(u)+\frac{6}{10}h(u)$
and $\sigma_{y}(u)=h(u)$. The lower panels show the resulting bias
of the correlation coefficients, $P$, $K$, and $Q$, as a function
of the true correlation coefficient, $\rho$. With low collinearity,
the sample correlation, $P$, has a large bias unless $\rho$ is near
zero, and the bias in $K$ is about half that in $P$. These estimators,
$P$ and $K$, are also biased in the example with high collinearity,
but the bias is substantially smaller. The bias of these estimators
are also pronounced in standard simulation design with the Heston
model, as we document in Section \ref{sec:Simulation-Study}.

An important implication of the results in this subsection is that
conventional estimates of correlations between assets are systematically
influenced by the degree of collinearity in their volatilities.

\subsection{Time varying correlations}

The correlation may be time varying, as is the case for volatility.
To accommodate this situation we could modify Assumption \ref{assu:BSM}
and let $\rho(u)$ be a CADLAG process. In this situation, the integrated
correlation, $\rho_{\bullet}=\int_{0}^{1}\rho(u)\mathrm{d}u$, is
a natural object of interest. Unfortunately, none of the correlation
estimators are consistent for $\rho_{\bullet}$. For instance, $Q_{S}$
will estimate $\tilde{\rho}_{\bullet}=\sin[\int_{0}^{1}\mathrm{asin}\{\rho(u)\}\mathrm{d}u]$,
and since $\mathrm{asin}(t)$ is strictly convex for $t>0$ and concave
for $t<0$, it follows that $\tilde{\rho}_{\bullet}\geq\rho_{\bullet}$
if the $\rho(u)\geq0$ and $\tilde{\rho}_{\bullet}\leq\rho_{\bullet}$
if the $\rho(u)\leq0$.

One way to partially account for time-varying correlations is to apply
the correlation estimators over relatively short intervals of time
and aggregate these local estimates to an estimate of $\int_{0}^{1}\rho(u)\mathrm{d}u$.
This approach was used in jump-robust estimation of the integrated
covariance in \citet{BoudtCornelissenCroux:2012jump}. In our empirical
analysis we will also use local estimates of $\rho$ to assess time-variation
in $\rho(u)$.

\subsection{Estimating Integrated Covariance}

Interestingly, it is not advisable to combine the robust correlation
estimator with volatility estimators for the purpose of estimating
the integrated covariance, $\mathrm{IC}=\int\sigma_{xy}(u)\mathrm{d}u$,
which simplifies to $\rho\int\sigma_{x}(u)\sigma_{y}(u)\mathrm{d}u$
when $\rho(u)=\rho$ for all $u$. Now, if we multiply $Q_{S}$ by
consistent estimates of $\sqrt{\int\sigma_{x}^{2}(u)\mathrm{d}u}$
and $\sqrt{\int\sigma_{y}^{2}(u)\mathrm{d}u}$, we will be estimating
$\frac{1}{\lambda}\mathrm{IC}$, instead of $\mathrm{IC}$. Using
$K$ is not advisable either, because it leads to another incorrect
limit. For this problem, localized estimators of spot volatility and
spot correlation can be use, as proposed in \citet{BoudtCornelissenCroux:2012jump}.
In order to be robust to jumps, they combine the MedRV estimator by
\citet{AndersenDobrevSchaumburg2012} and the Gaussian rank correlation.
This is further explored in \citet{VanderVeredas:2016} who employ
additional robust correlation estimators, as a component to estimate
$\mathrm{IC}$. They also combine non-localized estimates of $\sqrt{\int\sigma_{x}^{2}(u)\mathrm{d}u}$
and $\sqrt{\int\sigma_{y}^{2}(u)\mathrm{d}u}$ with a range of correlation
estimators. Some of these combinations will be inconsistent for the
reason stated earlier. This may explain that \citet{VanderVeredas:2016}
find the bivariate realized kernel estimator by \citet{BNHLS:2011}
to be the most accurate estimator of $\mathrm{IC}$ in the absence
of jumps.

\subsection{Influence Function}

The \emph{influence function} can be used to measure an estimator's
sensitivity to data contamination. It measures the sensitivity of
a statistical functional, $R$, to data contamination in a baseline
distribution, $F$, and is defined by
\[
\mathrm{IF}((x_{0},y_{0}),R,F)=\lim_{\eta\searrow0}\frac{R((1-\eta)F+\eta\mathbf{\Delta}_{(x_{0},y_{0})})-R(F)}{\eta},
\]
where $\mathbf{\Delta}_{(x_{0},y_{0})}$ is the Dirac measure at $(x_{0},y_{0})$.
We have $R_{P}(F)=R_{Q}(F)=R_{K}(F)=\rho$ for $F=\Phi_{\rho}$, which
denotes the standard bivariate normal distribution with correlation
equal to $\rho$. From \citet{DevlinGnanadesikanKettenring:1975}
and \citet{CrouxDehon:2010} we have their influence functions.
\begin{prop}
The influence functions of correlation estimators at $\Phi_{\rho}$
are given by
\[
\begin{aligned}\mathrm{IF}((x_{0},y_{0}),R_{Q},\Phi_{\rho}) & =\tfrac{\pi}{2}\sqrt{1-\rho^{2}}(\mathrm{sgn}(x_{0}y_{0})-\tau)\\
\mathrm{IF}((x_{0},y_{0}),R_{K},\Phi_{\rho}) & =2\pi\sqrt{1-\rho^{2}}(2\Phi(x_{0},y_{0})+1-\Phi(x_{0})-\Phi(y_{0})-\tfrac{1}{2}(\tau+1))\\
\mathrm{IF}((x_{0},y_{0}),R_{P},\Phi_{\rho}) & =x_{0}y_{0}-(\frac{x_{0}^{2}+y_{0}^{2}}{2})\rho
\end{aligned}
\]
where $\Phi(\bullet,\bullet)$ denotes the joint CDF for $\Phi_{\rho}$
and and $\Phi(\bullet)$ denote the marginal CDF for a standard normal
distribution. 
\end{prop}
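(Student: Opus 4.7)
The plan is to compute $\mathrm{IF}((x_0,y_0),R,\Phi_\rho)=\frac{d}{d\eta}R(F_\eta)\big|_{\eta=0^+}$ for each functional, where $F_\eta=(1-\eta)\Phi_\rho+\eta\mathbf{\Delta}_{(x_0,y_0)}$ and the baseline $\Phi_\rho$ has zero means, unit variances, and covariance $\rho$. For $R_P$ I directly compute the first two moments under $F_\eta$: $E_\eta(X)=\eta x_0$, $E_\eta(Y)=\eta y_0$, $E_\eta(X^2)=1-\eta+\eta x_0^2$, $E_\eta(Y^2)=1-\eta+\eta y_0^2$, and $E_\eta(XY)=(1-\eta)\rho+\eta x_0 y_0$. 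These give $\mathrm{cov}_\eta(X,Y)=\rho+\eta(x_0 y_0-\rho)+O(\eta^2)$ and $\mathrm{var}_\eta(X)\mathrm{var}_\eta(Y)=1+\eta(x_0^2+y_0^2-2)+O(\eta^2)$, and a first-order Taylor expansion of $R_P=\mathrm{cov}/\sqrt{\mathrm{var}(X)\mathrm{var}(Y)}$ in $\eta$ yields $\mathrm{IF}_P=x_0 y_0-\tfrac{\rho}{2}(x_0^2+y_0^2)$.

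For $R_Q=\sin(\tfrac{\pi}{2}\tau)$ with $\tau(F)=E_F[\mathrm{sgn}((X-\xi_X(F))(Y-\xi_Y(F)))]$, I apply the chain rule using Greiner's identity at $\Phi_\rho$, which yields $\mathrm{IF}_Q=\tfrac{\pi}{2}\sqrt{1-\rho^2}\,\mathrm{IF}_\tau$. The dependence of $\tau$ on $F$ enters both through the linear expectation and through the implicit median functionals. The linear part contributes $\mathrm{sgn}(x_0 y_0)-\tau$ directly from $E_{F_\eta}[\mathrm{sgn}(XY)]=(1-\eta)\tau+\eta\,\mathrm{sgn}(x_0 y_0)$. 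For the median part, the implicit equation $F_\eta(\xi_X(\eta))=\tfrac{1}{2}$ gives $\xi_X(\eta),\xi_Y(\eta)=O(\eta)$, but the resulting contribution vanishes because $\partial_\xi E_{\Phi_\rho}[\mathrm{sgn}((X-\xi)(Y-\eta))]$ equals zero at $(\xi,\eta)=(0,0)$. Writing $E[\mathrm{sgn}(\cdot)]=2[P(X>\xi,Y>\eta)+P(X<\xi,Y<\eta)]-1$ and conditioning, the two partial derivatives cancel since $P(Y>0\mid X=0)=\tfrac{1}{2}$ by bivariate-normal symmetry. Combining yields $\mathrm{IF}_Q=\tfrac{\pi}{2}\sqrt{1-\rho^2}(\mathrm{sgn}(x_0 y_0)-\tau)$.

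For $R_K=\sin(\tfrac{\pi}{2}\tau_K)$, I exploit that $\tau_K(F)=\iint\mathrm{sgn}((x_1-x_2)(y_1-y_2))\,dF(x_1,y_1)dF(x_2,y_2)$ is a quadratic $V$-functional in $F$. Expanding the product measure $F_\eta\otimes F_\eta$ into pure, cross, and Dirac-squared components gives $\tau_K(F_\eta)=(1-\eta)^2\tau_K(\Phi_\rho)+2\eta(1-\eta)h(x_0,y_0)+\eta^2\cdot 0$, where the last term vanishes because $\mathrm{sgn}(0)=0$, and $h(x_0,y_0)=E_{\Phi_\rho}[\mathrm{sgn}((X-x_0)(Y-y_0))]$. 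Rewriting $h$ through quadrant probabilities with $P(X>x_0,Y>y_0)=1-\Phi(x_0)-\Phi(y_0)+\Phi(x_0,y_0)$ yields $h(x_0,y_0)=4\Phi(x_0,y_0)+1-2\Phi(x_0)-2\Phi(y_0)$. Differentiating at $\eta=0$ gives $2(h-\tau_K)$; multiplying by $\tfrac{\pi}{2}\sqrt{1-\rho^2}$, using $\tau_K=\tau$ for elliptical laws, and rearranging with $\tau=(\tau+1)-1$ recovers the stated form $\mathrm{IF}_K=2\pi\sqrt{1-\rho^2}(2\Phi(x_0,y_0)+1-\Phi(x_0)-\Phi(y_0)-\tfrac{1}{2}(\tau+1))$.

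The main obstacle is the argument for $R_Q$, since the medians $\xi_X(F)$ and $\xi_Y(F)$ are themselves nontrivial implicit functionals of $F$, and one must verify that their $O(\eta)$ shifts under contamination do not contribute to the first-order influence. The symmetry $\Phi_\rho(x,y)=\Phi_\rho(-x,-y)$ around the origin is essential for this cancellation, and without it the computation would not reduce to the simple linear-functional form. The derivations for $R_P$ and $R_K$ are by comparison routine once their polynomial-in-$F$ and quadratic-in-$F$ structures are exposed.
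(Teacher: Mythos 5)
Your derivations are correct: all three influence functions check out against the stated formulas (I verified the Taylor expansion for $R_P$, the chain-rule factor $\tfrac{\pi}{2}\cos(\tfrac{\pi}{2}\tau)=\tfrac{\pi}{2}\sqrt{1-\rho^{2}}$ for the sine link, the vanishing of the median-shift terms for $R_Q$, and the V-functional expansion $2(h-\tau_K)$ with $h=4\Phi(x_0,y_0)+1-2\Phi(x_0)-2\Phi(y_0)$ for $R_K$, which rearranges exactly to the displayed form). Your route is, however, genuinely different from the paper's: the paper does not prove this proposition at all, but imports it from Devlin--Gnanadesikan--Kettenring (1975) and Croux--Dehon (2010); the only influence-function proof the paper supplies is for the generalized sparse-sampled/subsampled variants (Proposition S.1 in the Supplementary Material), which proceeds by a combinatorial binomial expansion over the number of contaminated observations falling in each block of length $S$, and recovers the classical case only as a degenerate instance. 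Your direct Gateaux-derivative computation is self-contained and more transparent for the $S=1$ statement; in particular you correctly isolate the one genuinely delicate point, namely that the $O(\eta)$ perturbation of the median functionals contributes nothing at first order because $\partial_{\xi}\,\mathbb{E}_{\Phi_{\rho}}[\mathrm{sgn}\{(X-\xi)(Y-\zeta)\}]$ vanishes at the origin by the conditional symmetry $P(Y>0\mid X=0)=\tfrac12$ — a step the paper never makes explicit since it relies on the cited literature. What the paper's block-expansion approach buys, by contrast, is the dependence of the influence functions on the subsampling parameter $S$, which your argument would need the binomial machinery to reproduce.
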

The important message from the influence functions is that $Q$ and
$K$ have bounded influence functions whereas $P$ has an unbounded
influence function. This difference motivate their labeling as robust
and non-robust estimators, respectively. The unbounded influence function
of $P$ makes it sensitive to outliers. It is intuitive that $Q$
and $K$ are less sensitive to outliers, since they are computed from
signed variable alone. This limits the harm an outlier can cause to
merely flipping the sign. The analogous results for $Q_{S}$ are qualitative
very similarly, and are presented in the Supplementary Material. One
way to alleviate the sensitivity that $P$ has to outliers is to use
truncation estimators, which is commonly used for estimating realized
variances, see e.g. \citet{Mancini:2009}.

The influence function for the Spearman estimator is also bounded
but can be shown to have a larger bound than $Q$ and $K$, whereas
the Gaussian rank estimator has an unbounded influence function, see
\citet{Rousseeuw:1984}, \citet{BoudtCornelissenCroux:2012gauss},
and \citet{RaymaekersRousseeuw:2021} for details and additional results
on influence functions. 

\section{Simulation Study\label{sec:Simulation-Study}}

We compare the estimators in simulation studies that are designed
to emulate the situation we encounter in our empirical analyses with
high-frequency data. We generate the two logarithmic price processes,
$X_{t}^{\ast}$ and $Y_{t}^{\ast}$, using the Heston model:
\begin{equation}
\begin{aligned}\mathrm{d}X_{t}^{\ast} & =\mu_{j}\mathrm{d}t+\sigma_{xt}\mathrm{d}W_{xt},\\
\mathrm{d}\sigma_{x,t}^{2} & =\kappa_{x}(\bar{\sigma}_{x}^{2}-\sigma_{x,t}^{2})\mathrm{d}t+s_{x}\sigma_{x,t}dB_{x,t},
\end{aligned}
\label{eq:heston}
\end{equation}
where $W_{x,t}$ and $B_{x,t}$ are standard Brownian motions with
$\mathrm{cov}(\mathrm{d}W_{x,t},\mathrm{d}B_{xt})=\varrho_{x}dt$,
and $Y_{t}^{\ast}$ is generated similarly with $\mathrm{cov}(\mathrm{d}W_{y,t},\mathrm{d}W_{y,t})=\rho\mathrm{d}t$.
The model is calibrated using the simulation design in Table \ref{tab:Parameters-calibration},
which was previously used in \citet{AitSahaliaFanXiu:2010}. The initial
values for volatility $\sigma_{x,0}^{2}$ and $\sigma_{y,0}^{2}$
are drawn from Gamma distributions, $\Gamma(2\kappa_{x}\bar{\sigma}_{x}^{2}/s_{x}^{2},s_{x}^{2}/2\kappa_{x})$
and $\Gamma(2\kappa_{y}\bar{\sigma}_{y}^{2}/s_{y}^{2},s_{y}^{2}/2\kappa_{y})$,
and the price processes are initialized with $X_{0}^{\ast}=\log(100)$
and $Y_{0}^{\ast}=\log(40)$. The simulated model is a discretized
version with $23,400$ increments, which translates to 1 second observations
over a 6.5 hours period -- the length of a typical trading day.
\begin{table}[H]
\centering{}\caption{\label{tab:Parameters-calibration}Parameters calibration for the
Heston model}
\begin{tabular}{cccccc}
\hline 
 & $\mu$ & $\bar{\sigma}^{2}$ & $\kappa$ & $s$ & $\varrho$\tabularnewline
\hline 
$X$ & 0.05 & 0.16 & 3 & 0.8 & -0.60\tabularnewline
$Y$ & 0.03 & 0.09 & 2 & 0.5 & -0.75\tabularnewline
\hline 
\end{tabular}
\end{table}

We present results for two values of the true correlation, $\rho=1/4$
and $\rho=2/3$, which are typical levels of the correlation in our
empirical analysis. In the Supplementary Material we present the corresponding
results for $\rho=1/2$ and $\rho=3/4$.

\subsection{Case without Noise}

We first consider the case where prices are observed without measurement
error. This defines the limit to which we can apply subsampling. For
instance, for sparsely sampled 1-minute returns we can set $S=60$.
In the absence of noise, there is no need to sample sparsely, but
we gain valuable insight about the the estimators by studying their
properties at lower sampling frequencies. 

The Heston model generates prices process with time-varying volatilities.
For this reason, we should not expect $P$ and $K$ to be consistent.
While $Q$ and $Q_{S}$ are consistent, they may have a bias in finite
samples, because sampling error in $\hat{\tau}$ and the non-linear
transformation, $\rho=\sin(\frac{\pi}{2}\tau)$, will induce a finite-sample
bias in $Q$ and $Q_{S}$.
\begin{figure}[tbh]
\begin{centering}
\includegraphics[width=0.45\textwidth]{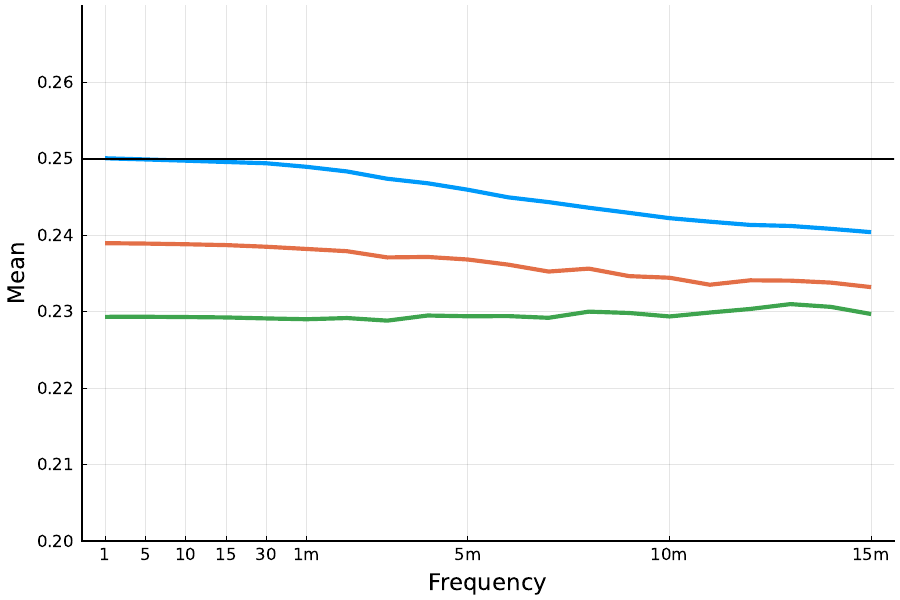}\includegraphics[width=0.45\textwidth]{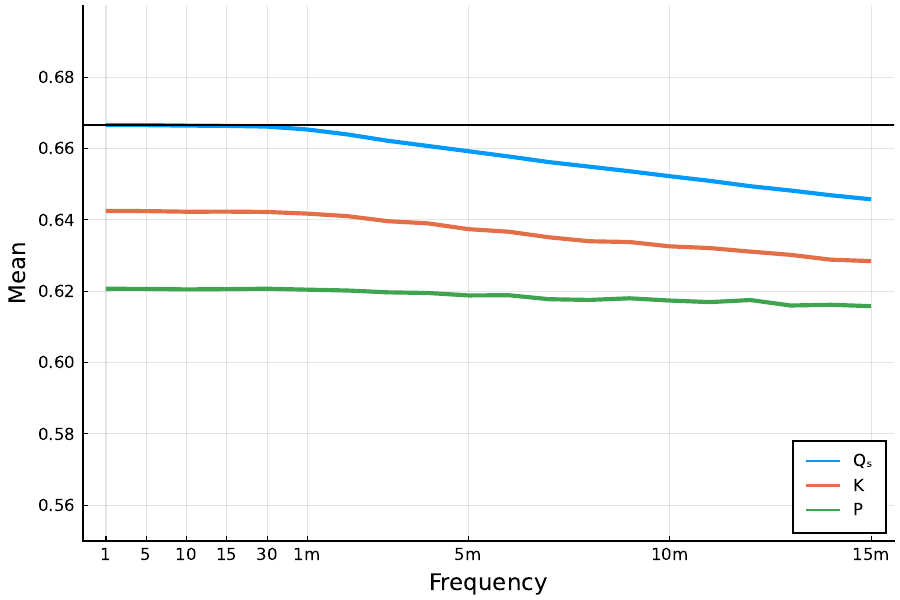}
\par\end{centering}
\centering{}\includegraphics[width=0.45\textwidth]{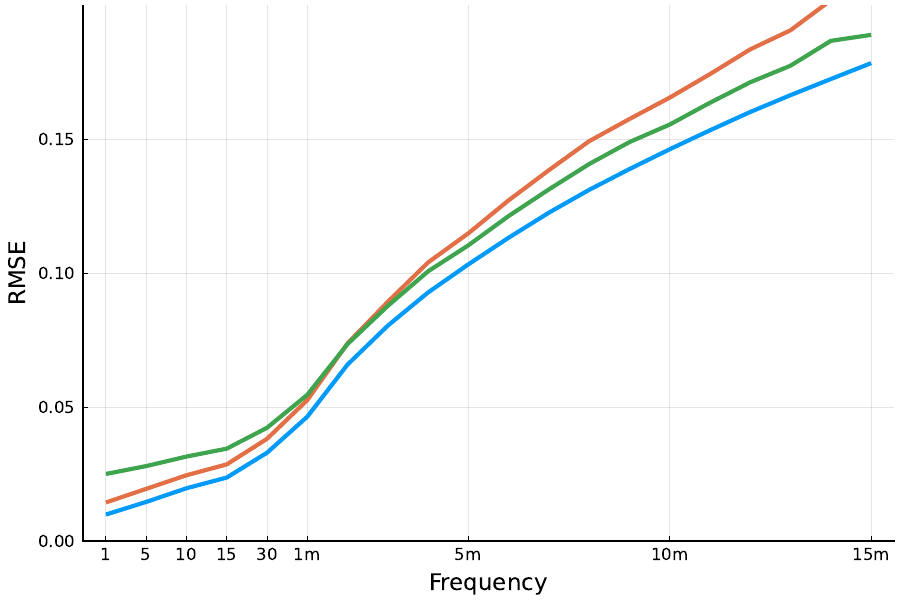}\includegraphics[width=0.45\textwidth]{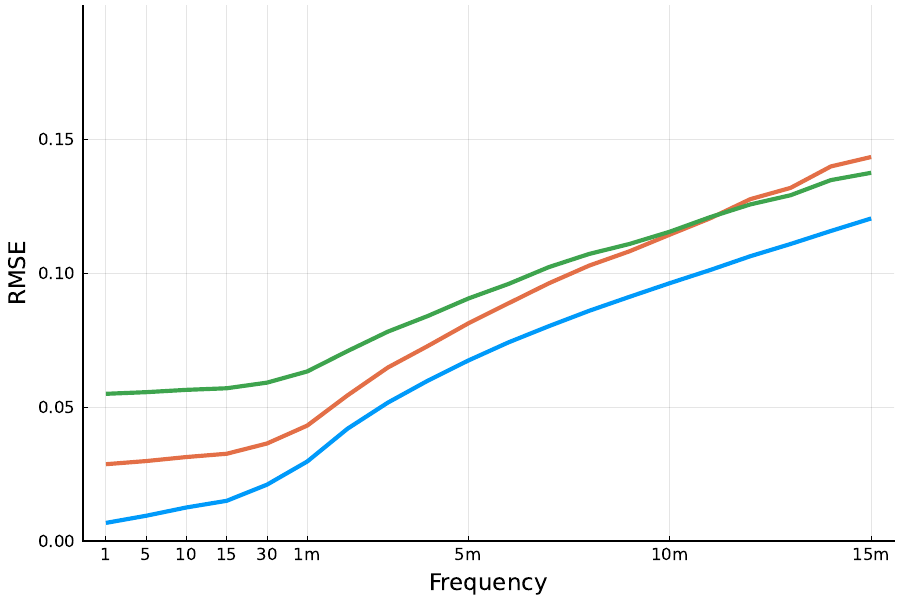}\caption{Means and RMSEs for the estimators as a function of sampling frequencies.
Prices are generated by the Heston model with the true correlation
being $\rho=0.25$ (left panels) and $\rho=0.6\overline{6}$ (right
panels).\label{fig:Plot-the-averages1}}
\end{figure}

The average values of the estimators are shown in the upper panels
of Figure \ref{fig:Plot-the-averages1} for the case were $\rho=0.25$
and $\rho=0.6\overline{6}$. As expected, $P$ and $K$ are biased
as expected, since volatility is time-varying in the Heston model,
which $P$ being substantially more biased than $K$. At the highest
sampling frequency, $P$ and $K$ become very accurate estimates of
incorrect quantities, as defined in Theorem \ref{theo:Consistency}.
The $Q_{S}$ estimator is largely unbiased when returns are sampled
more frequently that every minute. At slower sampling frequencies
a bias begin to emerge in $Q_{S}$, which is a consequence of Jensen's
inequality. The variance of $\hat{\tau}$ increases with the sampling
frequency and the concavity of $\tau\mapsto\sin(\frac{\pi}{2}\tau)$
for $\tau>0$ explains the downwards bias that becomes evident at
slow sampling frequencies. However the bias of $Q_{S}$ is substantially
smaller than those of $K$ and $P$.

The corresponding root mean squared errors (RMSEs) are shown in the
two lower panels. The new estimator has the smallest RMSE, which is
driven by its ability to reduce the bias.

\subsection{Microstructure Issues}

Next, we amend the simulation to mimic features commonly seen in empirical
data. We do so, by adding different forms of market microstructure
noise. Noise will influence estimators in different ways. Noise that
only alters the sign of a small fraction of returns will have minute
impact on the robust estimators, but could have a large impact on
$P$. Rare outliers provide an example of this scenario, and can be
inferred directly from the influence functions for the different estimators. 

\subsubsection{Independent Noise}

Independent noise in the price processes can induce the Epps effect.
The independent noise reduces the correlation in returns and this
downwards bias is increasing in the sampling frequency. We simulate
independent noise as follows:
\[
X_{t}=X_{t}^{\ast}+\epsilon_{xt}
\]
where $\epsilon_{xt}\sim iidN(0,\omega_{x}^{2})$ and similar for
$Y_{t}$ with $\epsilon_{xt}$ independent of $\epsilon_{yt}$. Following
similar simulation designs in this literature, see e.g. \citet{BandiRussell:2006}
and \citet{BNHLS:2008}, we set $\omega_{x}^{2}=\xi^{2}\sqrt{T^{-1}\sum_{i=1}^{T}\sigma_{x,i/T}^{4}}$
with $\xi^{2}=0.001$, such that variance of the noise is proportional
to square root of the integrated quarticity.

\subsubsection{Prices with tick-size increments}

In practice, high-frequency financial prices are restricted to a grid
defined by their tick-size. This induces a particular type of market
microstructure noise, as analyzed in \citet{DelattreJacod:1997},
\citet{Horel2007}, \citet{Rosenbaum:2009}, \citet{ManciniGobbi:2012},
\citet{Hansen:MC2015}, \citet{LiMykland:2015}, \citet{HansenHorelLundeArchakov},
and \citet{LiZhangLi:2018}. We will study this phenomenon by letting
observed prices be given by 
\[
X_{t}=\alpha\lfloor X_{t}^{\ast}/\alpha\rfloor\qquad\text{and}\qquad Y_{t}=\alpha\lfloor Y_{t}^{\ast}/\alpha\rfloor,
\]
where $\alpha$ defines the coarseness of the grid.\footnote{In reality it is nominal prices, $\exp X_{t}$ and $\exp Y_{t}$ that
are confined to a grid, but it makes no practical difference over
trading day.} In our simulations we let the coarseness be proportional to the level
of volatility, $\alpha=c\sigma$ in order to control the average number
of price changes within a given period of time. The true price processes
are, as before, define by (\ref{eq:heston}).

\subsubsection{Tick size with Noise}

Next we add noise to the grid of observed prices. Specifically we
now observe 
\[
X_{t}=\alpha\lfloor X_{t}^{\ast}/\alpha\rfloor+\epsilon_{xt},\qquad\text{with}\quad\epsilon_{xt}=\begin{cases}
0 & \text{with probability }1-p,\\
\pm\alpha & \text{with probability }p/2,
\end{cases}
\]
and similarly for $Y_{t}$ with $\epsilon_{xt}$ and $\epsilon_{yt}$
independent. 

\subsubsection{Stale Prices}

We introduce stale pricing using

\begin{equation}
X_{\frac{j}{N}}=\begin{cases}
\alpha\lfloor X_{\frac{j}{N}}^{\ast}/\alpha\rfloor & \text{with probability }1-q,\\
X_{\frac{j-1}{N}} & \text{with probability }q.
\end{cases}\label{eq:StalePrices}
\end{equation}
This will generate ``flat pricing'' and the expected duration between
price updates will be $(1-q)^{-1}/N$.

\subsubsection{Jumps}

Jumps are prevalent in high-frequency prices, and we could generate
such with

\[
X_{t}=X_{t}^{\ast}+\sum_{s\leq t}J_{s}^{x}\qquad Y_{t}=Y_{t}^{\ast}+\sum_{s\leq t}J_{s}^{y},
\]
where $J_{t}^{x}$ and $J_{t}^{y}$ denote jump processes. The impact
that jumps have on the estimators is characterized by their influence
functions. The robust estimators, $Q_{S}$ and $K$, are essentially
unaffected by jumps, whereas $P$ is highly sensitive. Independent
jumps will cause $P$ to be biased towards zero, whereas a co-jump
(a simultaneous jump in both series) will bias $P$ towards $-1$
or $1$, depending on the sign of $J_{s}^{x}J_{s}^{y}$. Co-jumps
in the same direction will cause $P$ to be biased towards one, whereas
co-jumps in the opposite direction will cause $P$ to be biased towards
$-1$. Jumps can be alleviated by truncation methods, see \citet[2009]{Mancini:2001}
and \citet{AndersenDobrevSchaumburg2012}.\nocite{Mancini:2009} Simulation
results with jumps are presented in the Supplementary Material.

\subsection{Simulation Results}

The bias that different types of noise induce on the correlation estimators
are show in Figure \ref{fig:Correlation-signature-plots}. The true
correlation is $\rho=1/4$ in the left panels and $\rho=2/3$ in the
right panels. Results for additional levels of correlation and types
of noise are presented in the supplementary material. Panel (a) in
Figure \ref{fig:Correlation-signature-plots} presents the results
when the efficient prices are contaminated with independent Gaussian
noise with a variance that is about $10^{-3}$ times the square root
of integrated quarticity of the two series. Independent noise is one
(of several ways) to bring about the Epps effect. The independent
noise reduces the correlation between returns, which induces a downwards
bias that increases with sampling frequency, to an extend that all
estimators essentially becomes noisy estimates of zero when computed
with 1-second intraday returns. Independent noise is a good stating
point for studying estimators, but there is overwhelming empirical
evidence that contradicts the independent noise assumption in high
frequencies data, see \citet{HansenLunde:JBES2006}, which is also
the case in our empirical analysis. 

The correlation signature plots in our empirical analysis resemble
those in Panel (b) of Figure \ref{fig:Correlation-signature-plots},
where the noise is defined by a rounding error ($\alpha=10^{-4}$),
to resemble the tick size in prices. Interestingly, the rounding error
causes $Q_{S}$ to be upwards bias a higher sampling frequencies.
This is also true for $K$, but to a much lesser extend, whereas $P$
is largely unaffected, but maintains the downwards bias caused by
time-varying volatilities. In Figure \ref{fig:Correlation-signature-plots}
(c) we consider the same level of rounding error ($\alpha=10^{-4}$)
and add additional noise by shifting the price up or down by one tick
size with equal probability, $p/2$ with $p=0.75$. This induces a
downwards bias, which is most pronounced a high sampling frequencies.
Finally, in Figure \ref{fig:Correlation-signature-plots} (d) we add
additional staleness to prices on the grid, as defined by (\ref{eq:StalePrices}),
where one price series remains stale with probability $q_{x}=0.5$
and the other series remains stale with probability $q_{y}=0.8$.
The combined impact of rounding and staleness is a sizable downwards
bias.
\begin{figure}[H]
\begin{centering}
\subfloat[Heston with independent noise]{
\centering{}\includegraphics[width=0.4\textwidth]{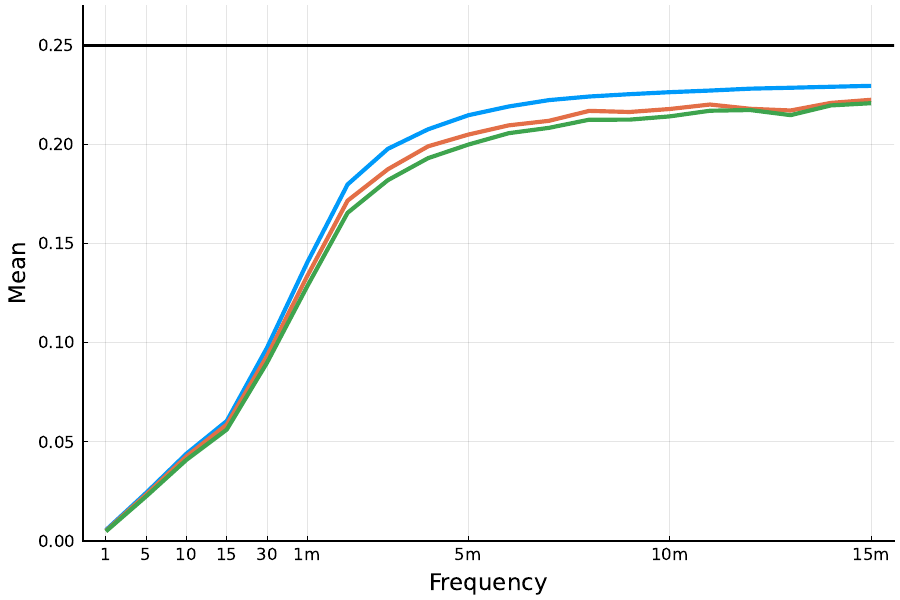}\includegraphics[width=0.4\textwidth]{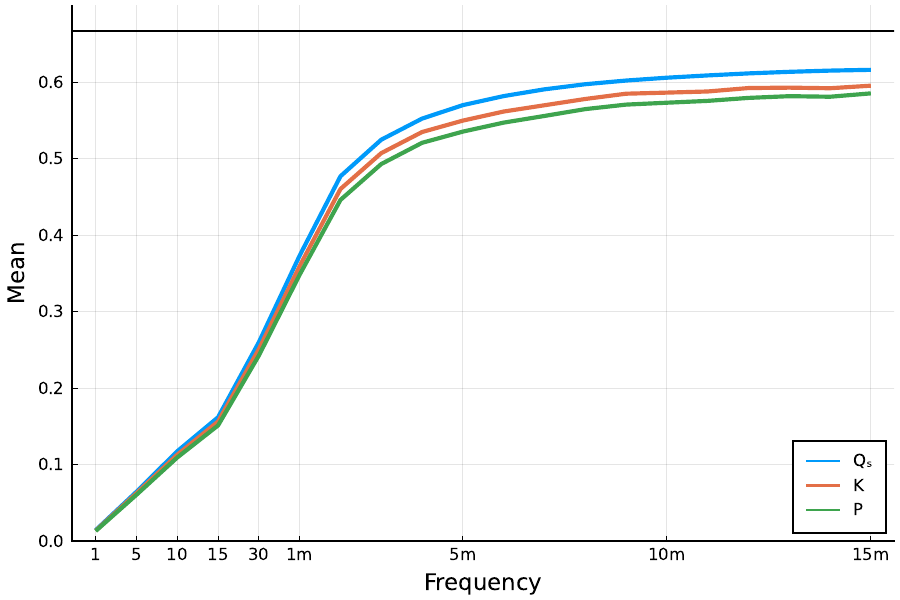}}
\par\end{centering}
\begin{centering}
\subfloat[Heston with rounding to a grid]{\begin{centering}
\includegraphics[width=0.4\textwidth]{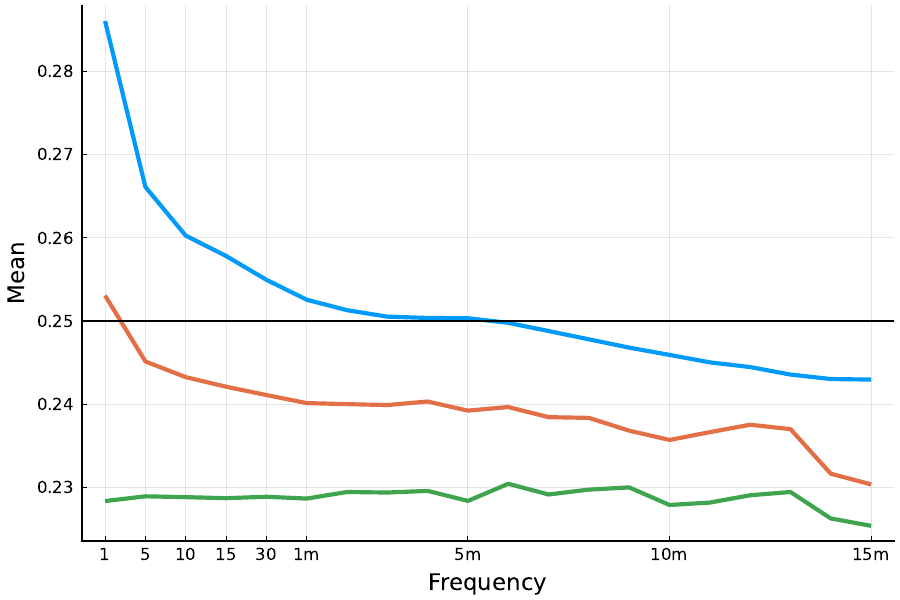}\includegraphics[width=0.4\textwidth]{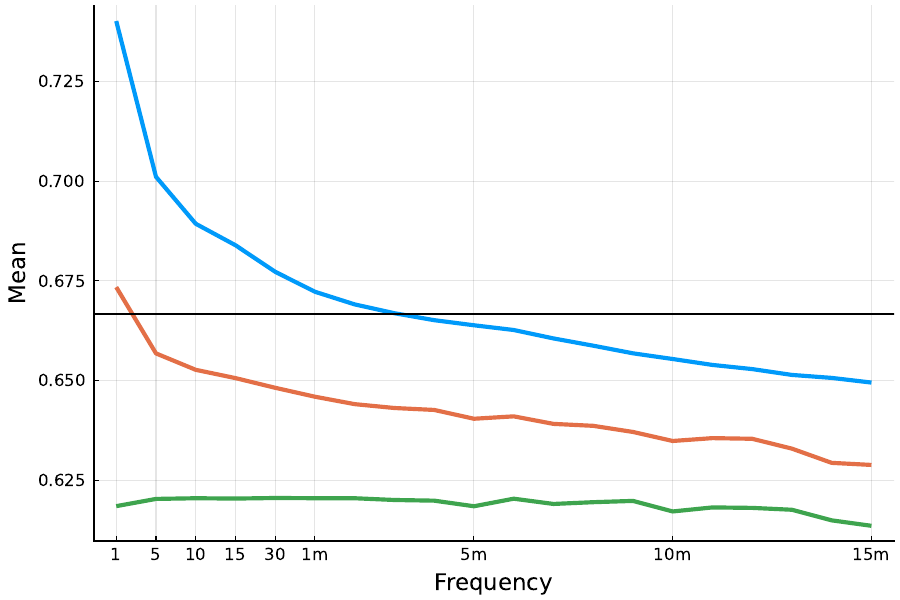}
\par\end{centering}
}
\par\end{centering}
\begin{centering}
\subfloat[Heston with rounding and grid-noise]{\begin{centering}
\includegraphics[width=0.4\textwidth]{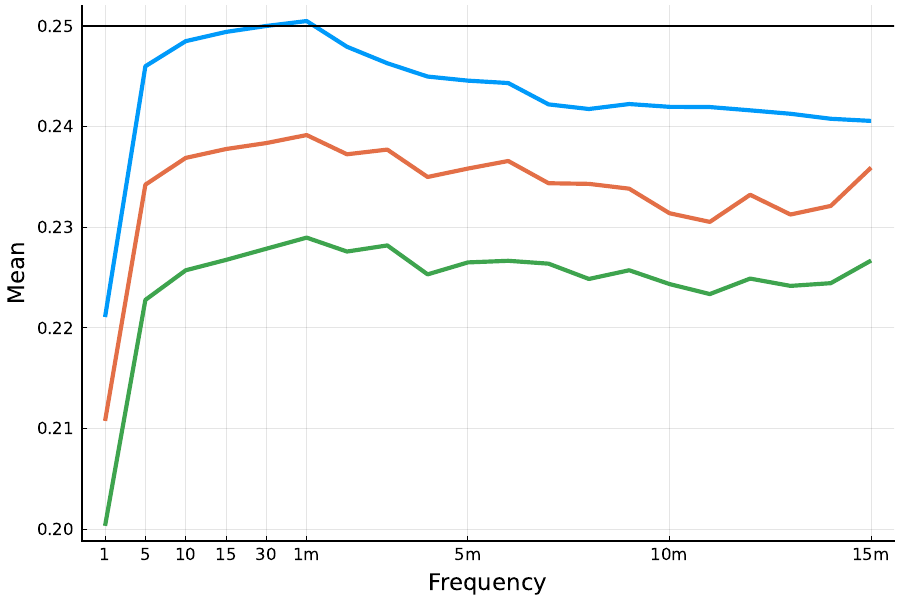}\includegraphics[width=0.4\textwidth]{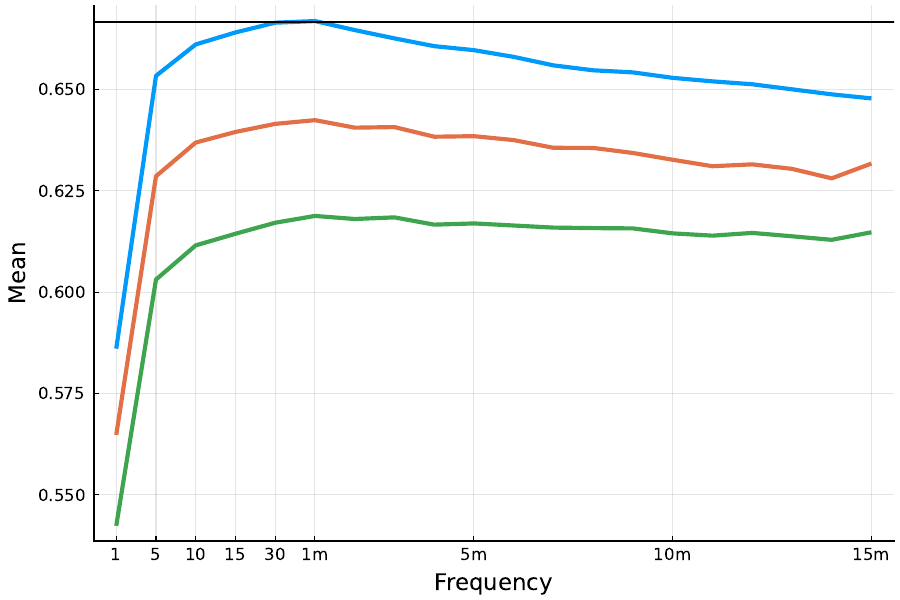}
\par\end{centering}
}
\par\end{centering}
\begin{centering}
\subfloat[Heston with rounding and staleness]{\begin{centering}
\includegraphics[width=0.4\textwidth]{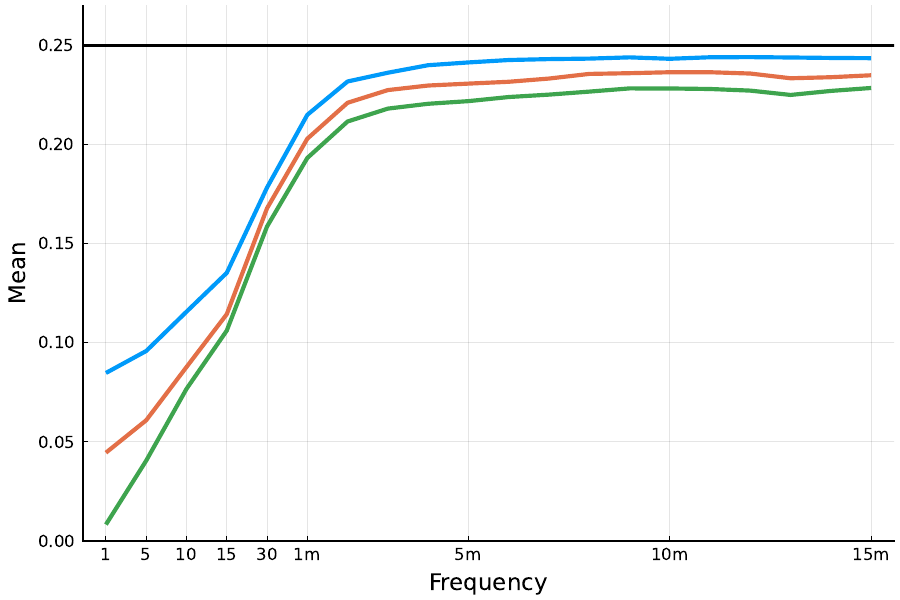}\includegraphics[width=0.4\textwidth]{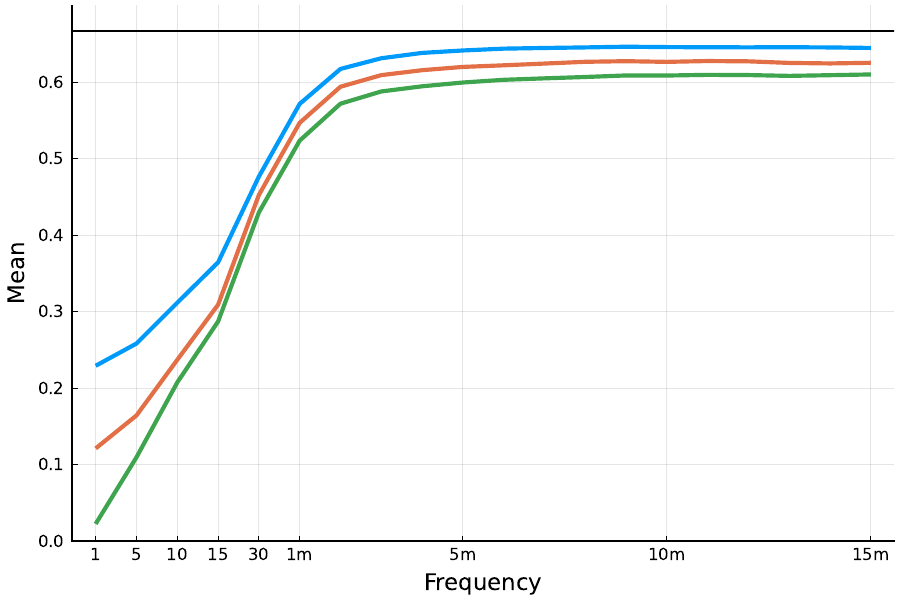}
\par\end{centering}
}
\par\end{centering}
\caption{Correlation signature plots, with sampling frequency ranging from
1 second to 15 minutes. Observed prices are generated with Heston
models with a layer of noise added. The true correlation is $\rho=1/4$
in left panels and $\rho=2/3$ in right panels. The four types of
noise are: (a) Independent noise ($\xi^{2}=10^{-3}$); (b) rounding
to a grid ($\alpha=10^{-4}$); (c) rounding with grid-noise ($\alpha=10^{-4}$
and $p=0.75$); and (d) rounding with stale prices ($\alpha=10^{-4}$
and $q_{x}=0.50$ and $q_{y}=0.80$).\label{fig:Correlation-signature-plots}}
\end{figure}
\begin{figure}[H]
\begin{centering}
\subfloat[Heston with independent noise]{
\centering{}\includegraphics[width=0.4\textwidth]{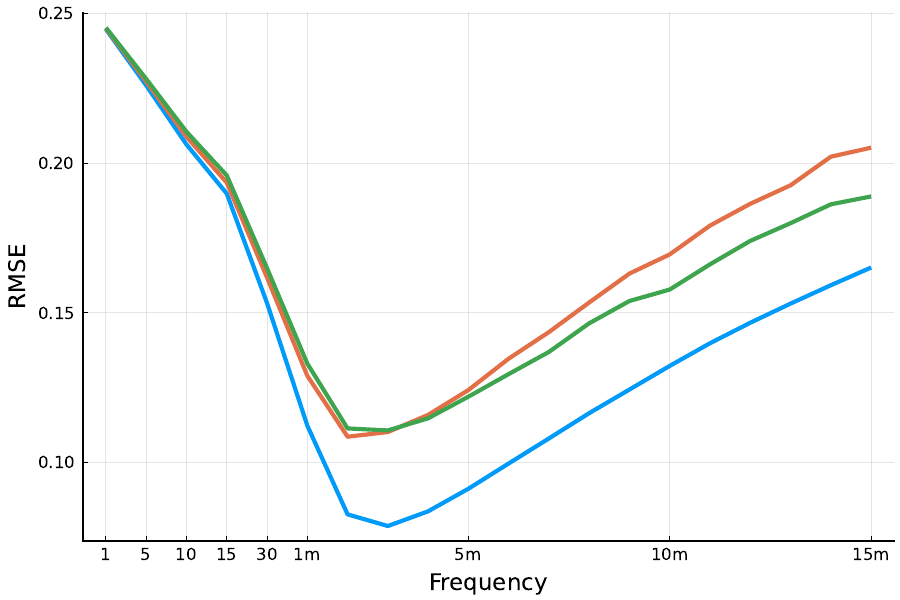}\includegraphics[width=0.4\textwidth]{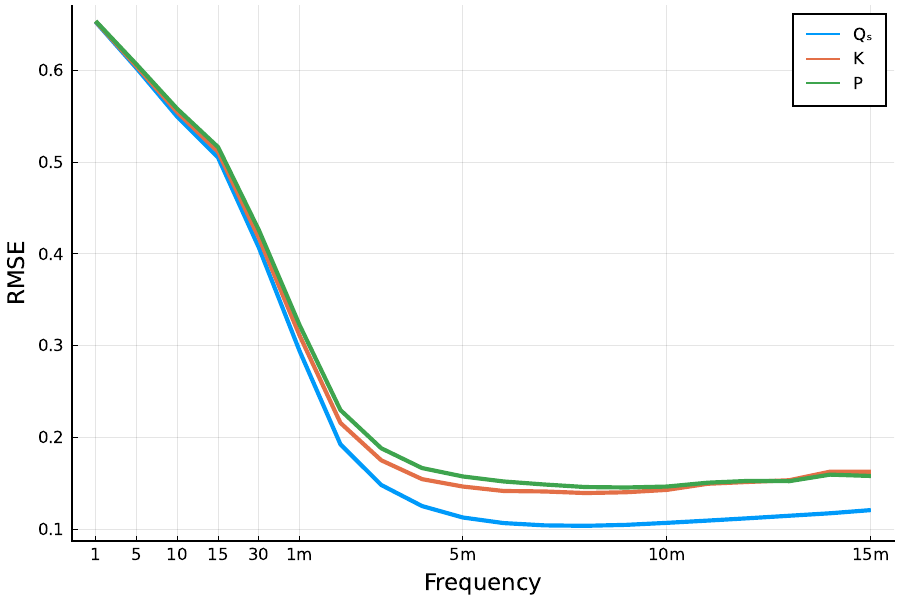}}
\par\end{centering}
\begin{centering}
\subfloat[Heston with rounding to a grid]{\begin{centering}
\includegraphics[width=0.4\textwidth]{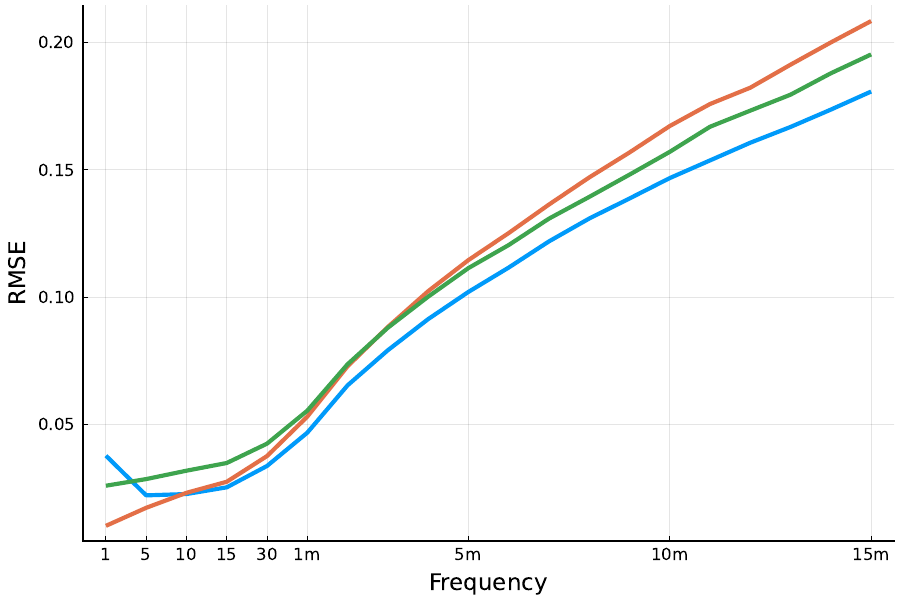}\includegraphics[width=0.4\textwidth]{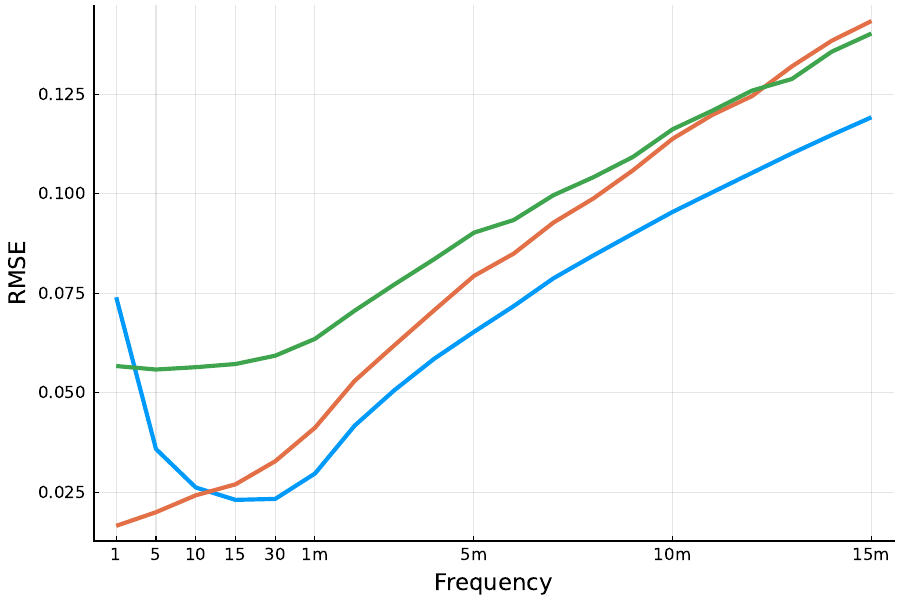}
\par\end{centering}
}
\par\end{centering}
\begin{centering}
\subfloat[Heston with rounding and grid-noise]{\begin{centering}
\includegraphics[width=0.4\textwidth]{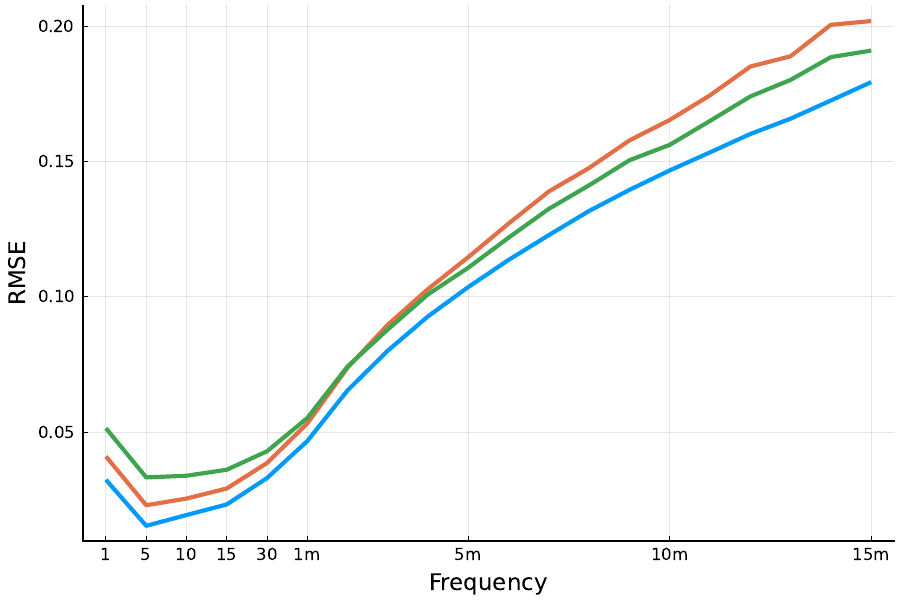}\includegraphics[width=0.4\textwidth]{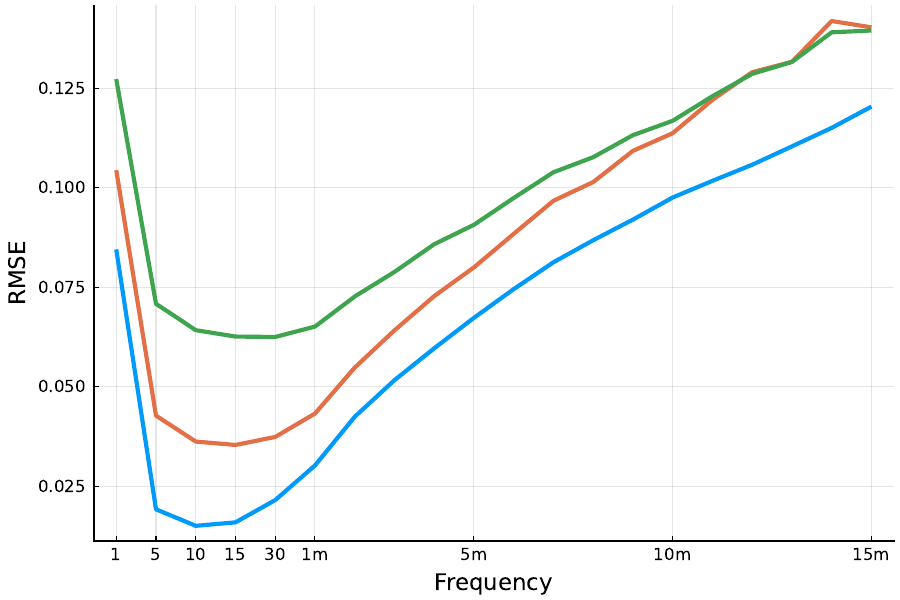}
\par\end{centering}
}
\par\end{centering}
\begin{centering}
\subfloat[Heston with rounding and staleness]{\begin{centering}
\includegraphics[width=0.4\textwidth]{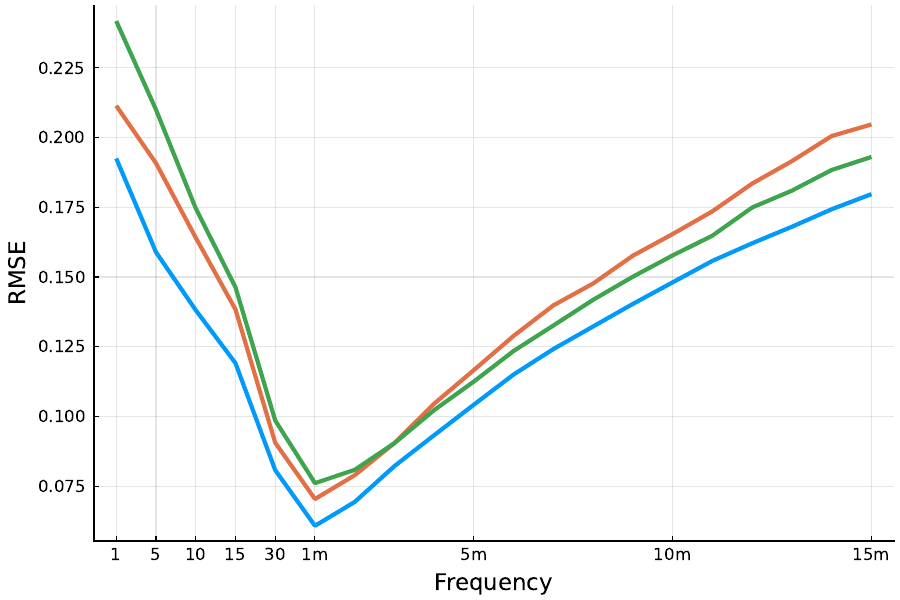}\includegraphics[width=0.4\textwidth]{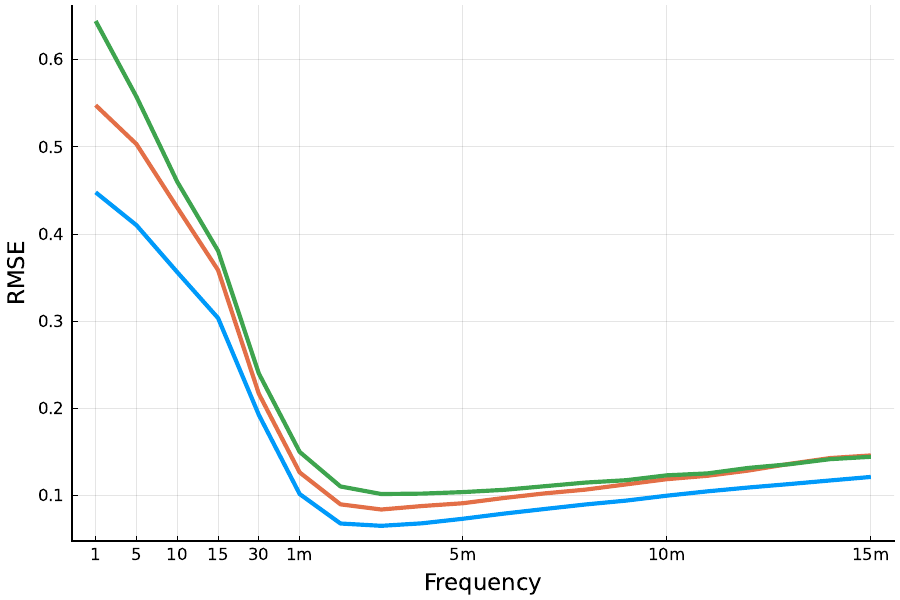}
\par\end{centering}
}
\par\end{centering}
\caption{RMSE signature plots. Observed prices are generated with Heston models
with a layer of noise added. The true correlation is $\rho=1/4$ in
left panels and $\rho=2/3$ in right panels. The four types of noise
are: (a) Independent noise, (b) rounding to a grid ($\alpha=10^{-4}$)
(c) rounding with grid-noise ($\alpha=10^{-4}$ and $p=0.75$); and
(d) rounding with stale prices ($\alpha=10^{-4}$ and $q_{x}=0.50$
and $q_{y}=0.80$).\label{fig:RMSE-signature-plots}}
\end{figure}

The corresponding root mean squares errors (RMSEs) are reported in
Figure \ref{fig:RMSE-signature-plots}. The new correlation estimator,
$Q_{S}$, tends to have the smallest RMSE, which is also true for
the additional simulation experiments presented in the Supplementary
Material.

\section{Empirical Application}

We apply the correlation estimators to high-frequency data for about
100 assets. We begin by analyzing and comparing their daily correlation
estimates. For instance, we use correlation signature plots to study
market microstructure noise, and explore how sensitive the estimators
are to the choice of sampling frequency, as defined by $\delta$.
Then we turn to estimation of intraday correlations, which we find
to vary substantially over the hours with active trading. Correlations
between stocks and the market are, on average, increasing for all
assets in our sample. We obtain estimates of intraday betas, by combining
the correlation estimates with estimates of relative volatility. We
then proceed to related intraday variation in correlations and betas
with asset characteristics, such as low frequency based market beta,
market capitalization, and book-to-market valuations. This part of
the analysis is done with an expanded set of assets detailed below.

Our sample period covers the period from January 1, 2015 to December
31, 2021 and includes 1,763 trading days. We use NYSE and NASDAQ transaction
prices from the TAQ database that were accessed through the Wharton
Research Data Services (WRDS) system. The data were cleaned following
the guidelines in \citet{BNHLS:2011}, and prices (when unavailable)
were interpolated by the previous-tick methods. We will analyzed 22
stocks and SPY, an exchange traded fund that tracks the S\&P 500 index,
in great details. We label this data set ``Small Universe''. The
22 stocks were selected to be the two largest stocks (by market capitalization)
within each of the eleven GICS\footnote{Global Industry Classification Standard.}
sectors. A larger set of asset of assets, ``Large Universe'' is
used to identify asset-characteristics associated with different patterns
in intraday market betas. The Large universe includes the assets in
the S\&P 100 index, as of {[}date{]}, we excluded two of these assets
from the Large Universe. PYPL (PayPal) was excluded because it only
started trading in 2015 after being spun off eBay, and RTX (formerly
Raytheon Tech) was excluded because it merged with United Technologies,
which was completed in April 2020.
\begin{table}[H]
\caption{Summary Statistics Small Universe\label{tab:StatisticsSmallUniverse}}

\begin{centering}
\begin{tabular*}{1\textwidth}{@{\extracolsep{\fill}}>{\raggedright}m{4.5cm}lS[table-format=4.2]S[table-format=2.2]S[table-format=2.2]S[table-format=2.2]}
\hline
\ \\[-6pt]
Sector (GICS code) & Ticker & \multicolumn{1}{c}{Average} & \multicolumn{1}{c}{Duration} & \multicolumn{2}{c}{Zero-returns (\%)} \\
 &  & \multicolumn{1}{c}{Price} & \multicolumn{1}{c}{(seconds)} & \multicolumn{1}{c}{$\delta=1$s} & \multicolumn{1}{c}{$\delta=3$m}\\[5pt]
\hline
\ \\[-6pt]
                        & SPY  & 281.89 & 2.28 & 63.39 & 3.73\\[5pt]
Energy                  & HAL  & 33.38  & 5.56 & 89.25 & 8.37\\
(10)                    & XOM  & 72.44  & 3.83 & 82.73 & 6.71\\[3pt]
Materials               & LYB  & 89.55  & 9.23 & 90.85 & 4.94\\
(15)                    & NEM  & 40.10  & 6.67 & 89.76 & 8.73\\[3pt]
Industrials             & AAL  & 34.32  & 6.83 & 89.76 & 8.54\\
(20)                    & UNP  & 144.23 & 6.86 & 87.82 & 4.29\\[3pt]
Consumer Discretionary  & TSLA & 415.90 & 5.12 & 70.38 & 1.14\\
(25)                    & AMZN & 1658.92& 5.03 & 76.60 & 0.75\\[3pt]
Consumer Staples        & PG   & 101.81 & 4.86 & 85.85 & 6.92\\
(30)                    & WMT  & 99.02  & 4.46 & 84.87 & 6.75\\[3pt]
Health Care             & JNJ  & 131.77 & 4.57 & 83.74 & 5.89\\
(35)                    & MRK  & 68.76  & 5.06 & 87.05 & 7.50\\[3pt]
Financials              & JPM  & 101.17 & 3.22 & 78.24 & 5.53\\
(40)                    & WFC  & 48.11  & 4.50 & 86.83 & 7.84\\[3pt]
Information Technology  & AAPL & 167.52 & 2.27 & 63.05 & 3.57\\
(45)                    & AMD  & 33.45  & 14.87& 82.87 & 19.17\\[3pt]
Communication Services  & DIS  & 121.95 & 4.19 & 80.95 & 5.18\\
(50)                    & FB   & 181.59 & 2.80 & 68.63 & 3.18\\[3pt]
Utilities               & D    & 75.08  & 8.04 & 91.35 & 7.92\\
(55)                    & DUK  & 84.94  & 7.47 & 90.61 & 7.26\\[3pt]
Real estate             & AMT  & 168.88 & 10.18& 91.43 & 4.15\\
(60)                    & PLD  & 72.88  & 9.77 & 92.62 & 8.70\\[5pt]
\hline 
\end{tabular*}
\par\end{centering}
Note: Summary statistics for SPY and 22 assets (two from each of the
11 sectors) for the sampling period from January 1, 2015 to December
31, 2021. Average price, average duration between two consecutive
transactions are listed along with the percentage of zero returns
when returns are sampled at 1 second and 3 minutes, respectively.
\end{table}

Table \ref{tab:StatisticsSmallUniverse} presents the summary statistics
for the Small Universe with 22 assets. The exchange traded fund, SPY,
is the most frequently traded asset, followed by AAPL and FB. On average,
these securities have just over 2 seconds between transaction prices.
The price range is an interesting statistic, because the tick-size
is more likely to induce rounding errors and price staleness for assets
trading at low prices. This appears to be relevant for AMD that traded
for less than \$3 in all of 2015 and below \$10 during most of the
first three years in our sample period. This likely explains the many
zero increments. More than 19\% of all 3-minute returns are zero in
this sample period.

The assets in the Large Universe are listed and organized by sectors
in Table \ref{tab:StatisticsLargeUniverse}. 
\begin{table}[H]
\caption{Large Universe\label{tab:StatisticsLargeUniverse}}

\begin{centering}
\vspace{0.2cm}
\begin{small}
\begin{tabularx}{\textwidth}{YYYYYYYYYYYYY}
\toprule 
    Energy & Materials & Industrials & \multicolumn{1}{c}{Consumer} & \multicolumn{1}{c}{Consumer }& Healthcare  \\
          &        &       & \multicolumn{1}{c}{Discretionary} & \multicolumn{1}{c}{Staples} &\\
    \midrule
         &        &       &       &      &         \\[-3pt]
    COP  & DOW    & BA    & AMZN  & CL   & ABBV    \\
    CVX  & LIN    & CAT   & BKNG  & COST & ABT     \\
    XOM  &        & EMR   & F     & KHC  & AMGN    \\
         &        & FDX   & GM    & KO   & BMY     \\
         &        & GD    & HD    & MDLZ & CVS     \\
         &        & GE    & LOW   & MO   & DHR     \\
         &        & HON   & MCD   & PEP  & GILD    \\
         &        & LMT   & NKE   & PG   & JNJ     \\
         &        & MMM   & SBUX  & PM   & LLY     \\
         &        & UNP   & TGT   & WBA  & MDT     \\
         &        & UPS   & TSLA  & WMT  & MRK     \\
         &        &       &       &      & PFE     \\
         &        &       &       &      & TMO     \\
         &        &       &       &      & UNH     \\
         &        &       &       &      &         \\
    \midrule
     Financials & Information & Telecom. & Utilities & Real Estate &\\
                & Technology  & Services &           &             & \\
    \midrule
           &       &       &     &     & \\[-3pt]
     AIG   & AAPL  & CHTR  & DUK & AMT & \\
     AXP   & ACN   & CMCSA & EXC & SPG & \\
     BAC   & ADBE  & DIS   & NEE &     & \\
     BK    & AMD   & GOOGL & SO  &     & \\
     BLK   & AVGO  & FB    &     &     & \\
     BRKB  & CRM   & NFLX  &     &     & \\
     C     & CSCO  & T     &     &     & \\
     COF   & IBM   & TMUS  &     &     & \\
     GS    & INTC  & VZ    &     &     & \\
     JPM   & MA    &       &     &     & \\
     MET   & MSFT  &       &     &     & \\
     MS    & NVDA  &       &     &     & \\
     SCHW  & ORCL  &       &     &     & \\
     USB   & QCOM  &       &     &     & \\
     WFC   & TXN   &       &     &     & \\
           & V     &       &     &     & \\
\bottomrule
\end{tabularx}
\end{small}
\par\end{centering}
Note: List of assets in ``Large Universe'', organized by sectors.
\end{table}

\subsection{Estimates of Daily Correlations}

We apply the correlation estimators to daily high frequency data using
calendar-time sampling with frequencies ranging from 1 second to 15
minutes. The resulting correlation signature plots are shown in Figure
\ref{fig:CorrSignaturePlots} for a subset of the assets. These are
the two most actively traded securities, SPY and AAPL, the stock with
most zero returns, AMD, and the two stocks from the Material sector,
LYB and NEM, whose liquidity and percentage of zero returns is more
typical for assets in the Small Universe. Signature plots were introduced
in \citet{andersen-bollerslev-diebold-labys:00a} who plotted the
average realized variance against the sampling frequency used to compute
the underlying intraday returns. Signature plots help identify bias
in the estimators, which tend to be most pronounced at high sampling
frequencies. If the estimator is unbiased over a range of sampling
frequencies, then the signature plot will be roughly flat over that
those sampling frequencies.
\begin{figure}[H]
\begin{centering}
\subfloat[$\mathrm{corr}$(SPY,AAPL)]{\begin{centering}
\includegraphics[width=0.32\textwidth]{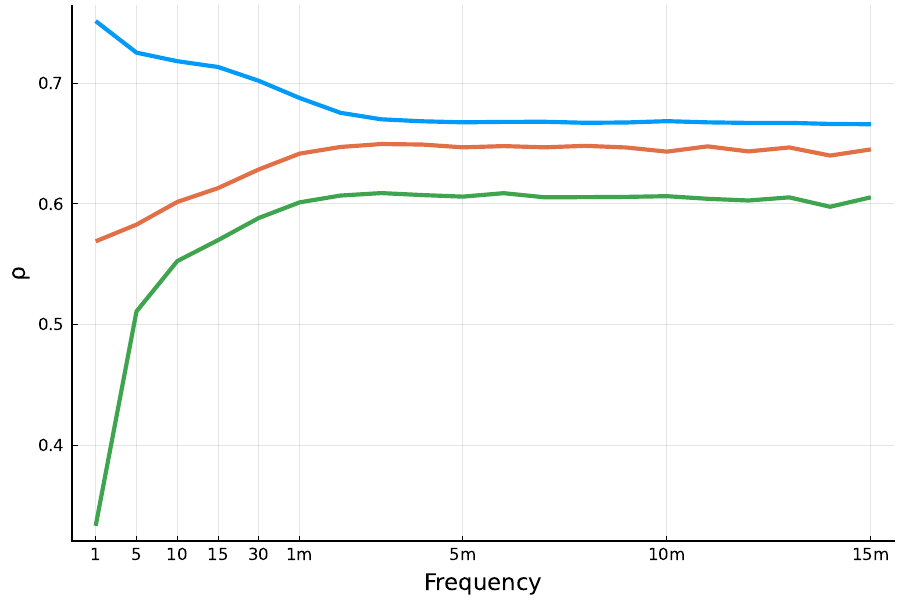}
\par\end{centering}
}\subfloat[$\mathrm{corr}$(SPY,AMD)]{\centering{}\includegraphics[width=0.32\textwidth]{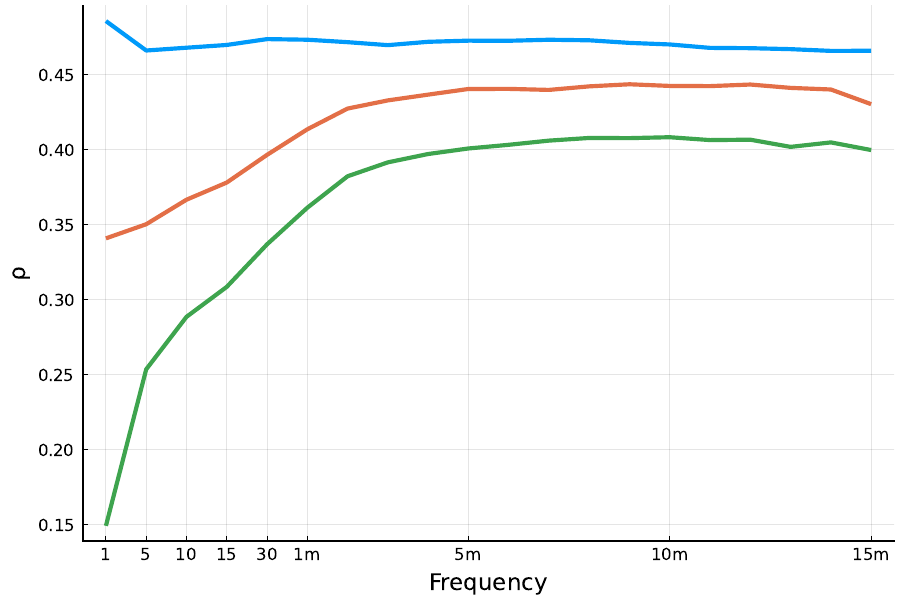}}\subfloat[$\mathrm{corr}$(AAPL,AMD)]{\begin{centering}
\includegraphics[width=0.32\textwidth]{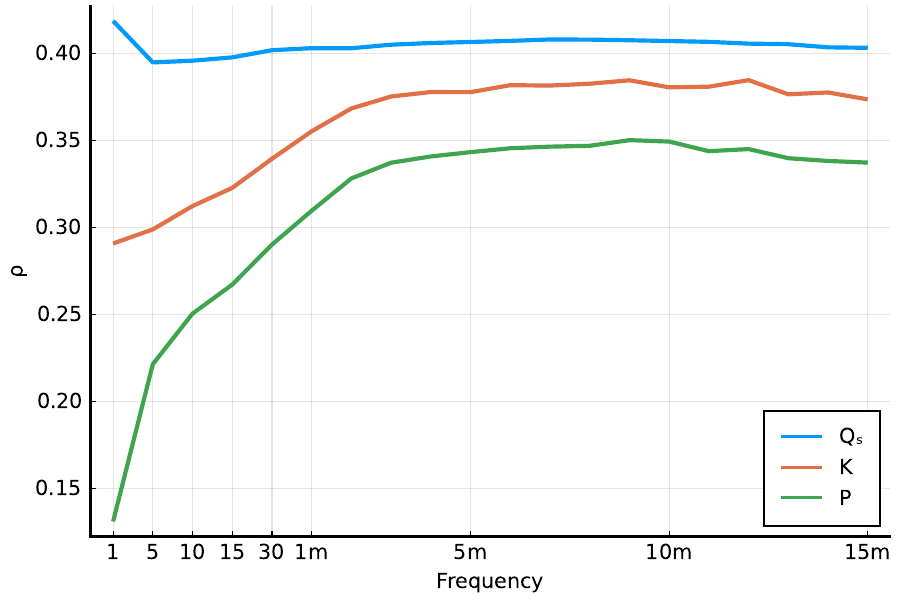}
\par\end{centering}
}
\par\end{centering}
\begin{centering}
\subfloat[$\mathrm{corr}$(SPY,LYB)]{\begin{centering}
\includegraphics[width=0.32\textwidth]{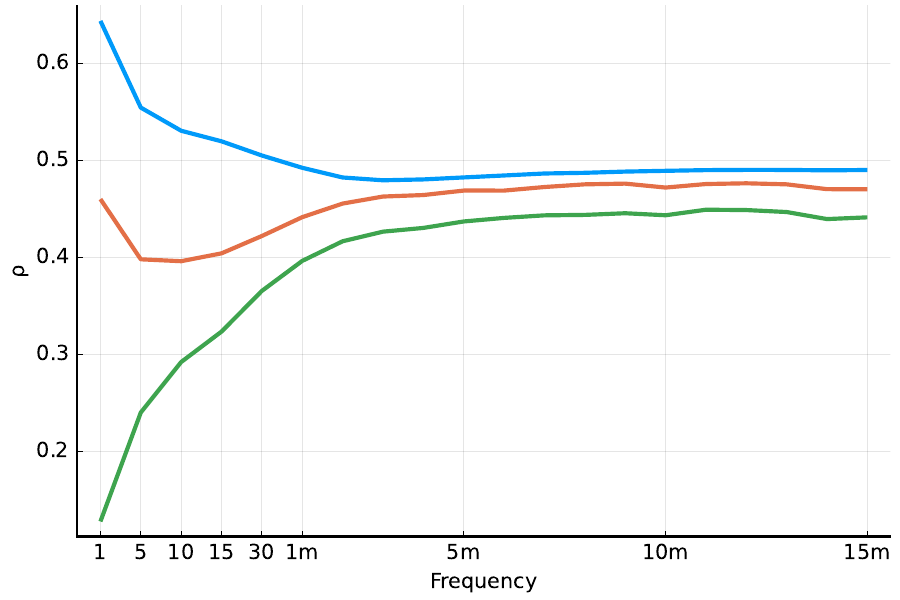}
\par\end{centering}
}\subfloat[$\mathrm{corr}$(SPY,NEM)]{\centering{}\includegraphics[width=0.32\textwidth]{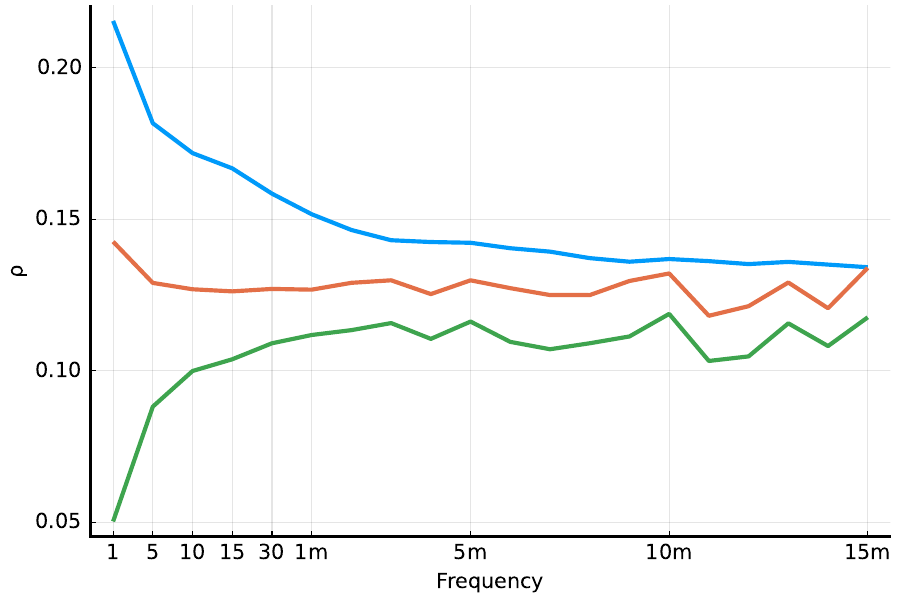}}\subfloat[$\mathrm{corr}$(LYB,NEM)]{\begin{centering}
\includegraphics[width=0.32\textwidth]{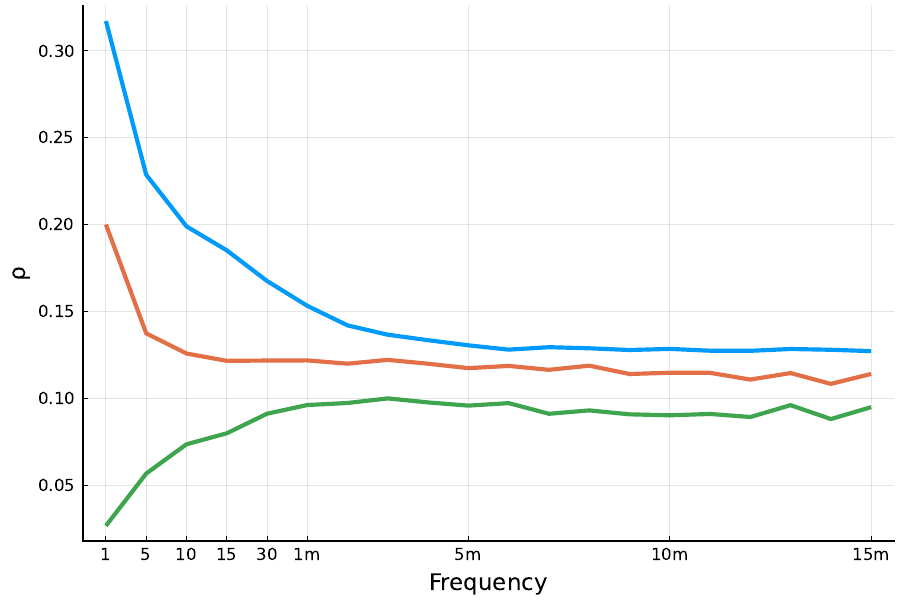}
\par\end{centering}
}
\par\end{centering}
\caption{Correlation signature plots, where the average correlation estimate
is plotted against sampling frequency for the three estimators, $Q_{S}$,
$P$, and $K$.\label{fig:CorrSignaturePlots}}
\end{figure}

The signature plots in Figure \ref{fig:CorrSignaturePlots} are signature
plot for correlations, which can be used to visualize biases, such
as the Epps effect. Here we observe that many of the plots have patterns
that resemble the effect for rounding to a grid, because $Q_{S}$
often has an upwards bias at high sampling frequencies, while $P$
has a downwards biased. Additional signature plots are presented in
the Supplementary Material, see Figure \ref{fig:SignaturePlotsSmallUniverse}.
We adopt 3-minutes as a common sampling frequency for all estimators.
This is in part motivated by the signature plots tend to be flat for
$\delta\geq$3 minutes, and in part because it makes our results more
comparable to those in ATT. 
\begin{table}[tbh]
\caption{Correlation estimates.\label{tab:SectorCorr}}

\begin{centering}
\setlength{\tabcolsep}{5.3pt}
\begin{tabular}{l ccc c ccc c ccc c ccc}
\hline
\\[-5pt]
 \multicolumn{1}{c}{Sector code}  & \multicolumn{3}{c}{Average} &  & \multicolumn{3}{c}{25th quantile} &  & \multicolumn{3}{c}{Median} &  & \multicolumn{3}{c}{75th quantile}\\
 \multicolumn{1}{c}{(Asset pair)} & $Q_S$ & $K$   & $P$   &  & $Q_S$ & $K$   & $P$    &  & $Q_S$ & $K$   & $P$   &  & $Q_S$ & $K$   & $P$ \\[4pt]
 \hline
 & & & \\[-5pt]
10 {\footnotesize{(HAL,XOM)}}  &  0.59 &  0.55 &  0.53 &  &  0.49 &  0.45 &  0.42  &  &  0.60 &  0.56 &  0.54 &  &  0.70 &  0.66 &  0.65\\[2pt]
15 {\footnotesize{(LYB,NEM)}}  &  0.14 &  0.11 &  0.09 &  &  0.02 &  0.00 &  -0.03 &  &  0.14 &  0.12 &  0.10 &  &  0.25 &  0.23 &  0.22\\[2pt]
20 {\footnotesize{(AAL,UNP)}}  &  0.31 &  0.28 &  0.25 &  &  0.19 &  0.17 &  0.13  &  &  0.29 &  0.27 &  0.24 &  &  0.41 &  0.38 &  0.37\\[2pt]
25 {\footnotesize{(AMZN,TSLA)}}&  0.37 &  0.35 &  0.33 &  &  0.24 &  0.23 &  0.21  &  &  0.35 &  0.34 &  0.33 &  &  0.48 &  0.48 &  0.45\\[2pt]
30 {\footnotesize{(PG,WMT)}}   &  0.41 &  0.37 &  0.33 &  &  0.29 &  0.25 &  0.20  &  &  0.40 &  0.36 &  0.33 &  &  0.52 &  0.49 &  0.46\\[2pt]
35 {\footnotesize{(JNJ,MRK)}}  &  0.56 &  0.52 &  0.49 &  &  0.46 &  0.41 &  0.38  &  &  0.57 &  0.52 &  0.50 &  &  0.66 &  0.63 &  0.61\\[2pt]
40 {\footnotesize{(JPM,WFC)}}  &  0.74 &  0.72 &  0.68 &  &  0.67 &  0.64 &  0.61  &  &  0.76 &  0.72 &  0.70 &  &  0.83 &  0.79 &  0.78\\[2pt]
45 {\footnotesize{(AAPL,AMD)}} &  0.40 &  0.37 &  0.33 &  &  0.26 &  0.21 &  0.16  &  &  0.40 &  0.36 &  0.33 &  &  0.55 &  0.52 &  0.49\\[2pt]
50 {\footnotesize{(DIS,FB)}}   &  0.34 &  0.31 &  0.27 &  &  0.20 &  0.17 &  0.14  &  &  0.33 &  0.30 &  0.26 &  &  0.48 &  0.45 &  0.42\\[2pt]
55 {\footnotesize{(D,DUK)}}    &  0.76 &  0.72 &  0.70 &  &  0.70 &  0.66 &  0.63  &  &  0.78 &  0.73 &  0.71 &  &  0.83 &  0.79 &  0.78\\[2pt]
60 {\footnotesize{(AMT,PLD)}}  &  0.52 &  0.48 &  0.44 &  &  0.43 &  0.38 &  0.32  &  &  0.53 &  0.48 &  0.46 &  &  0.63 &  0.59 &  0.57\\[4pt]
\hline 
\end{tabular}\medskip{}
\par\end{centering}
Note: Summary statistics for pairs of assets, within each of the 11
sectors: Energy (10), Materials (15), Industrials (20), Consumer Discretionary
(25), Consumer Staples (30), Health Care (35), Financials (40), Information
Technology (45), Communication Services (50), Utilities (55), Real
Estate (60).
\end{table}
 Ideally, one would determine an empirical way to select an optimal
sampling frequency, because the optimal sampling frequency likely
varies over time and across assets. We leave this for future research.

Table \ref{tab:SectorCorr} presents summary statistics for correlations
between stocks in the same sector. For each of the three estimators,
we compute the average, median, and interquartile range across the
1,763 daily estimates. For all pairs, these quantities are similar
for the three estimators. The interquartile range is across days in
the sample that predominately is driven by time-variation in the daily
correlation. So, a similar width for the interquartile range should
not be interpreted as the estimators having similar precision. In
the next subsection, we present results that strongly indicate that
$Q_{S}$ is more precise than $K$ an $P$. While the measurements
are similar for the three estimators, we always have $Q_{S}>K>P$.
This ordering is in line with our theoretical results, that time-varying
volatility induces a bias in $P$ than in $K$, and that $P$ is more
biased than $K$.

Next, we estimate daily correlations between each of the 22 stocks
and SPY. The average, median, and interquartile range (over the 1,763)
estimates are shown in Figure \ref{fig:Daily-Correlations-with-SPY}.
Once again we see that the quantities are similar for the three estimators,
as was the case in Table \ref{tab:SectorCorr}, and once again do
we have $Q_{S}>K>P$ uniformly across all assets and across all measurements.
\begin{figure}[H]
\begin{centering}
\includegraphics[width=1\textwidth]{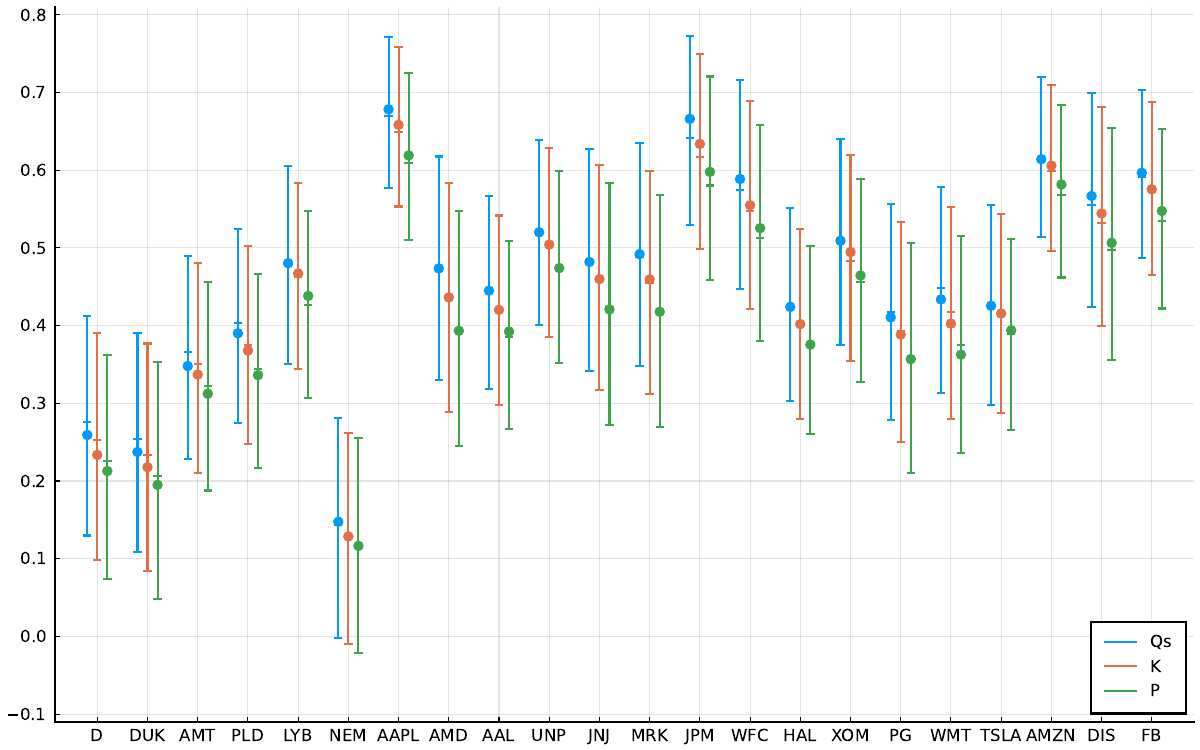}
\par\end{centering}
\caption{Daily correlations between assets and market returns. The three estimators
were applied to 1,763 trading days. The average estimate (dash) and
median estimate (bullet) are shown. The vertical lines present the
interquartile range over the 1,763 daily estimates.\label{fig:Daily-Correlations-with-SPY}}

\end{figure}

\subsection{Intra-day Correlations and Market Beta Estimation}

Estimating betas from high frequency data is an active research area,
see e.g. \citet[2006]{AndersenBollerslevDieboldWu:2005}\nocite{ABDW:2006},
\citet{TodorovBollerslev:2010}, \citet{DovononGoncalvesMeddahi:2013},
\citet{HansenLundeVoev:2014}, and \citet{ReiBTodorovTauchen:2015}.
This literature as also documented substantial time variation in the
betas over time. Recently, \citet{AndersenThyrsgaardTodorov:2021}
(ATT) documented systematic time-variation in betas within the trading
day. Specifically, they estimated betas for rolling windows (spanning
two hours) using 3-minute intraday returns. Their local estimate of
beta is simply a local estimator of the covariance between asset and
market returns divided by a local estimates of the quadratic variation
of market returns. The local (time-of-the day) estimates are averaged
over the 2,243 days in their sample period (2010-2018). Interestingly,
ATT found a great deal of variation in the betas within the day. Some
stocks have increasing betas over the trading day while other other
assets had decreasing betas over the day. 

We can use correlation estimators to cast new light on the patterns
in intraday market betas, by decomposing the market beta into correlation
multiplied by relative volatility,
\[
\beta_{i,t}=\rho_{i,t}\times\lambda_{i,t},\qquad\lambda_{i,t}=\frac{\sigma_{i,t}}{\sigma_{0,t}},
\]
where $\rho_{i,t}$ is the correlation between the $i$-th asset and
the market and $\sigma_{i,t}$ and $\sigma_{0,t}$ are the volatilities
for the $i$-th asset and the market, respectively. We will estimated
local market betas using local correlation estimators combined with
estimators of relative volatility. Specifically, we compute $P$,
$K$, and $Q_{S}$ and $\lambda$ using a rolling window with 60 minutes
of high frequency data. We will the investigate how much of the intraday
variation in betas is explained by into intraday variation in correlations
and how much can be ascribed to intraday variation in relative volatility. 

We estimate $\rho_{i,t}$ with each of the correlation estimators,
$P,$ $K$, and $Q_{S}$, using a rolling window that spans 60 minutes.
Similarly, we estimate the relative volatility, $\lambda_{i,t}$,
with subsampled range-based estimators with truncations, as defined
by 
\[
\hat{\lambda}_{i,t}=\frac{\sum_{j\in I_{t}}\left\llbracket \Delta_{\frac{S}{N}}Y_{\frac{j}{N}}\right\rrbracket _{\nu_{y}}}{\sum_{j\in I_{t}}\left\llbracket \Delta_{\frac{S}{N}}X_{\frac{j}{N}}\right\rrbracket _{\nu_{x}}},\qquad\left\llbracket x\right\rrbracket _{\nu}=\begin{cases}
x & \text{if }|x|<\nu\\
0 & \text{otherwise}
\end{cases}\qquad I_{t}=[\lfloor tN\rfloor-W+S,\lfloor tN\rfloor],
\]
where $\nu_{x}$ and $\nu_{y}$ are adaptive thresholds for jump truncation.
These are defined by $\nu_{x}=4\sqrt{\mathrm{BV}_{x}}/n^{0.49}$ and
$\nu_{y}=4\sqrt{\mathrm{BV}_{y}}/n^{0.49}$, where $\mathrm{BV}_{x}=\frac{\pi}{2}\sum_{j=2}^{n}|\Delta_{\delta}X_{j\delta}||\Delta_{\delta}X_{(j-1)\delta}|$
is the jump robust bipower variation estimator of daily integrated
volatility, see \citet{barndorff-shephard:2004BiPower}, and $\mathrm{BV}_{y}$
is defined analogously. We have explored estimation of relative volatility
using the bipower variation measures, with the same thresholds. These
estimated we virtually identical to those of $\hat{\lambda}_{i,t}$.

In our implementation, we have $N=23,400$, $W=3,600$, and $S=180$.
This results in 3,421 overlapping 3-minute returns within each hour
we use to compute the subsampled quantities.
\begin{figure}[H]
\begin{centering}
\subfloat[Correlation (AAPL,SPY)]{\begin{centering}
\includegraphics[width=0.32\textwidth]{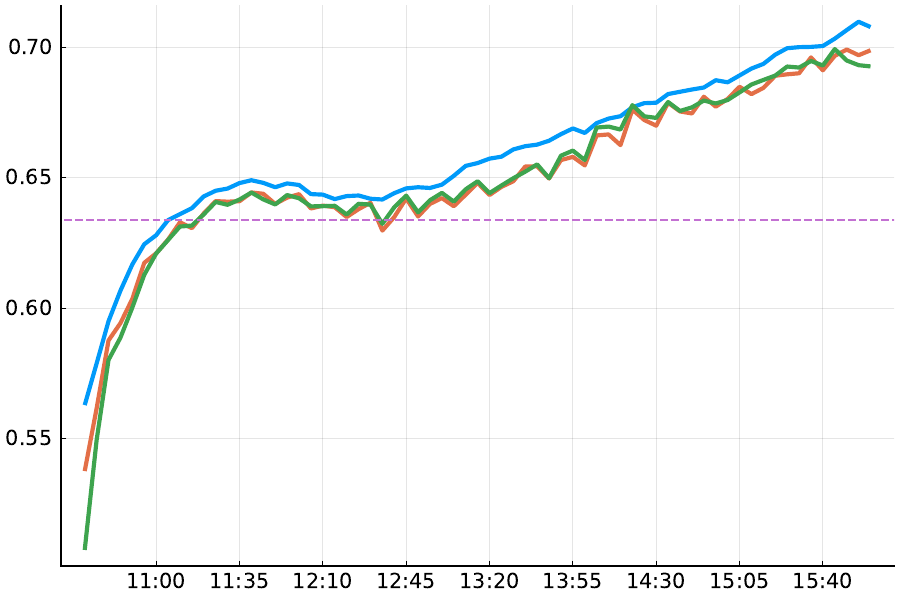}
\par\end{centering}
}\negthinspace{}\subfloat[Volatility ratio (AAPL,SPY)]{\begin{centering}
\includegraphics[width=0.32\textwidth]{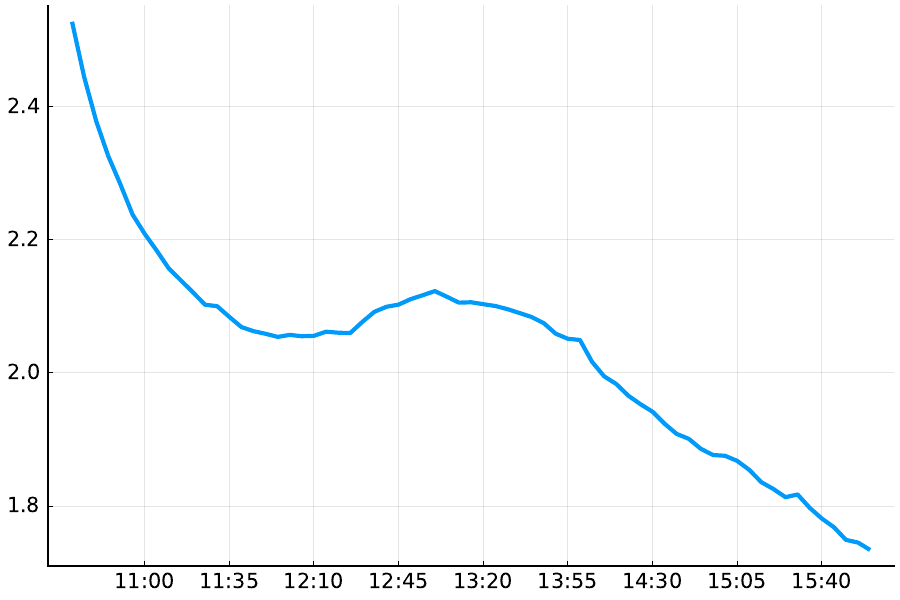}
\par\end{centering}
}\negthinspace{}\subfloat[Market $\beta$ (AAPL,SPY)]{\begin{centering}
\includegraphics[width=0.32\textwidth]{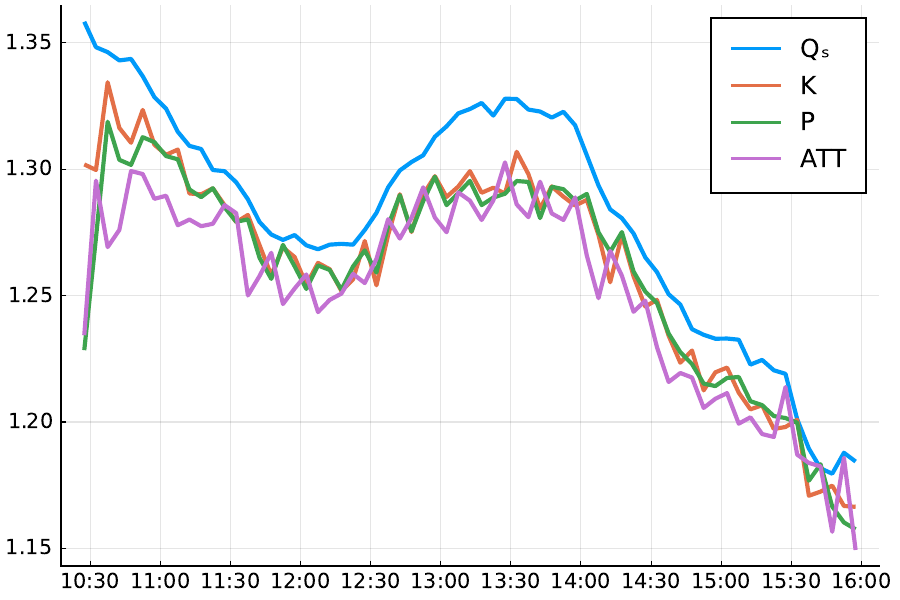}
\par\end{centering}
}
\par\end{centering}
\begin{centering}
\subfloat[Correlation (AMD,SPY)]{\begin{centering}
\includegraphics[width=0.32\textwidth]{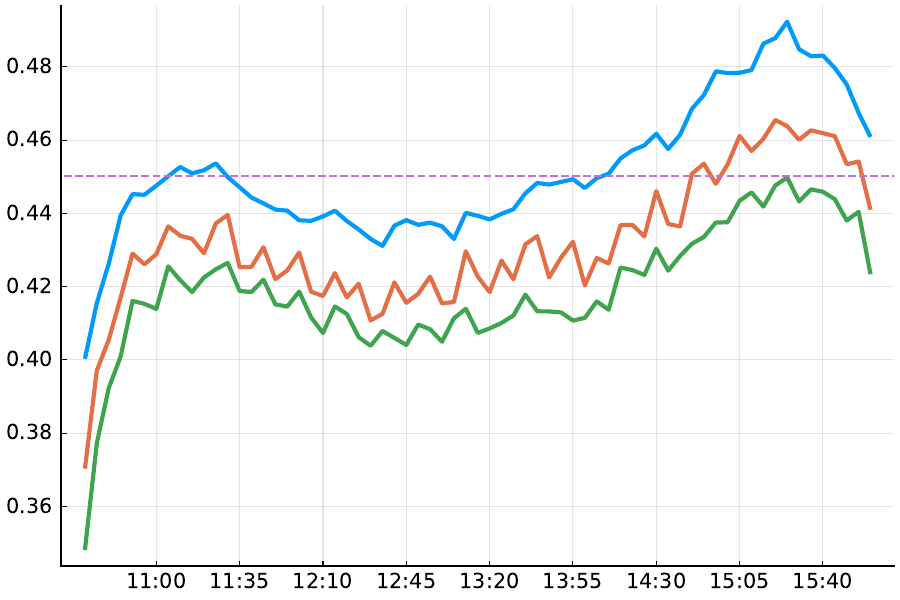}
\par\end{centering}
}\negthinspace{}\subfloat[Volatility ratio (AMD,SPY)]{\begin{centering}
\includegraphics[width=0.32\textwidth]{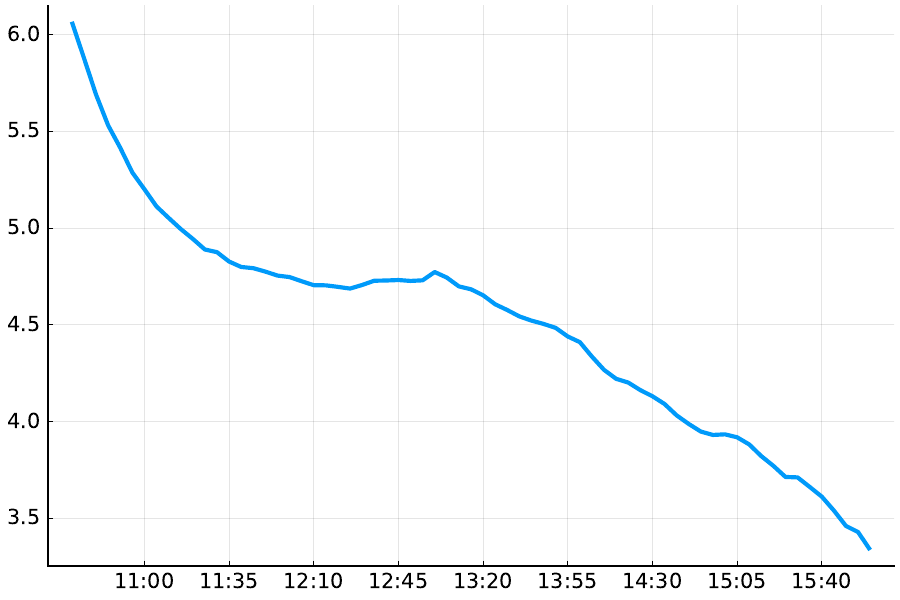}
\par\end{centering}
}\negthinspace{}\subfloat[Market $\beta$ (AMD,SPY)]{\begin{centering}
\includegraphics[width=0.32\textwidth]{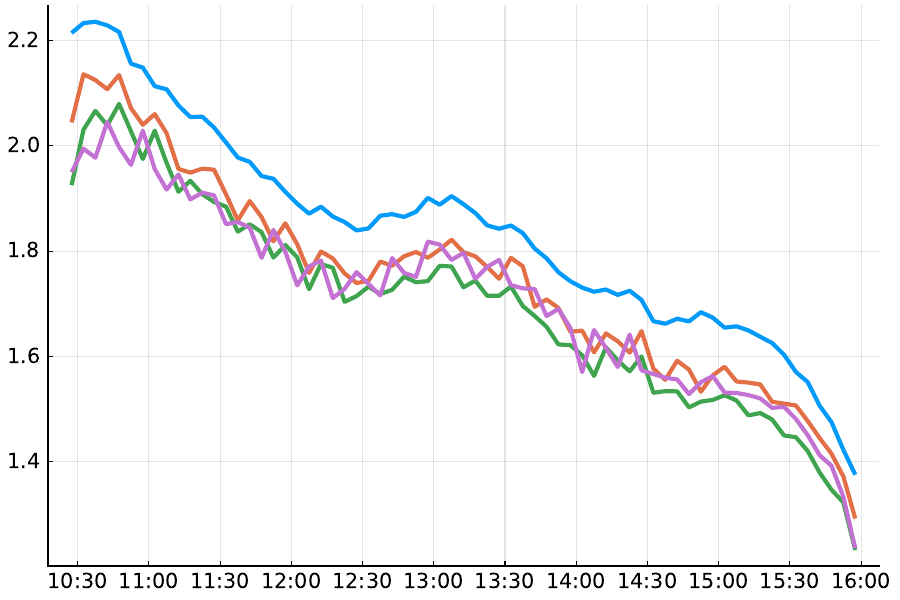}
\par\end{centering}
}
\par\end{centering}
\begin{centering}
\subfloat[Correlation (LYB,SPY)]{\begin{centering}
\includegraphics[width=0.33\textwidth]{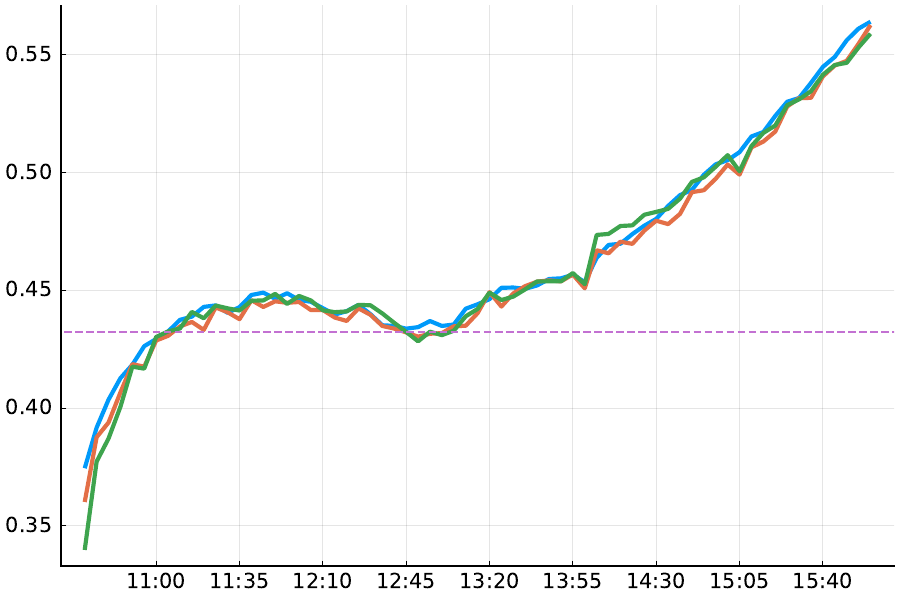}
\par\end{centering}
}\negthinspace{}\subfloat[Volatility ratio (LYB,SPY)]{\begin{centering}
\includegraphics[width=0.32\textwidth]{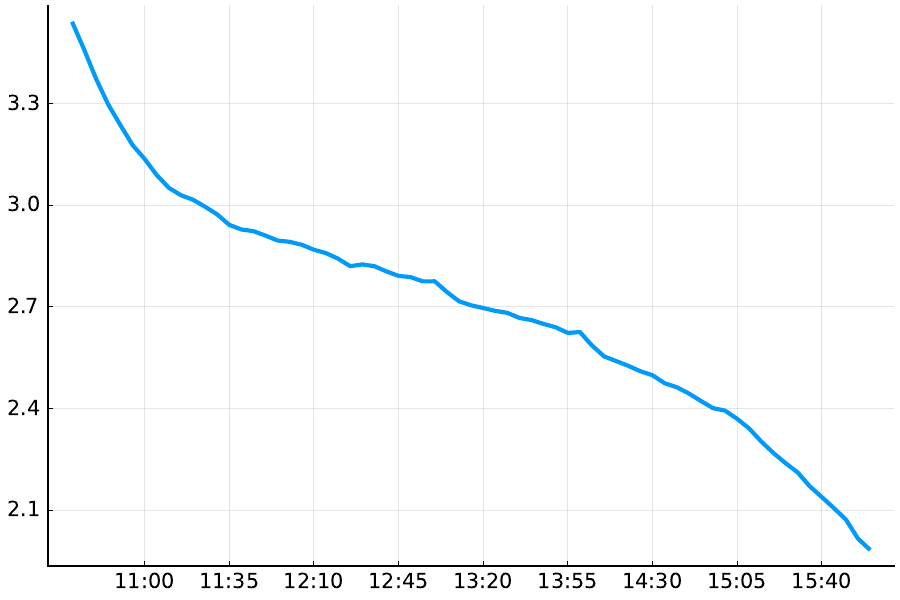}
\par\end{centering}
}\negthinspace{}\subfloat[Market $\beta$ (LYB,SPY)]{\begin{centering}
\includegraphics[width=0.32\textwidth]{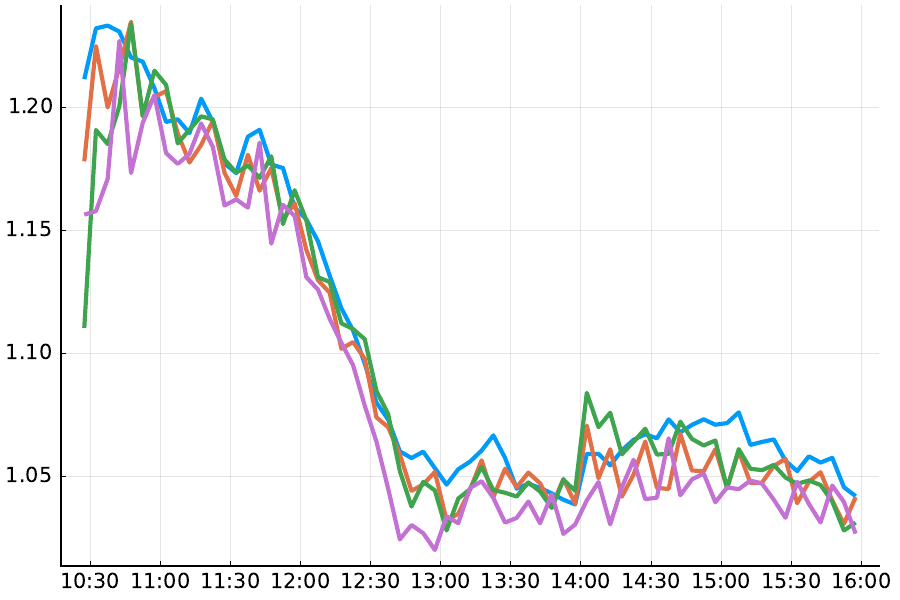}
\par\end{centering}
}
\par\end{centering}
\begin{centering}
\subfloat[Correlation (NEM,SPY)]{\begin{centering}
\includegraphics[width=0.32\textwidth]{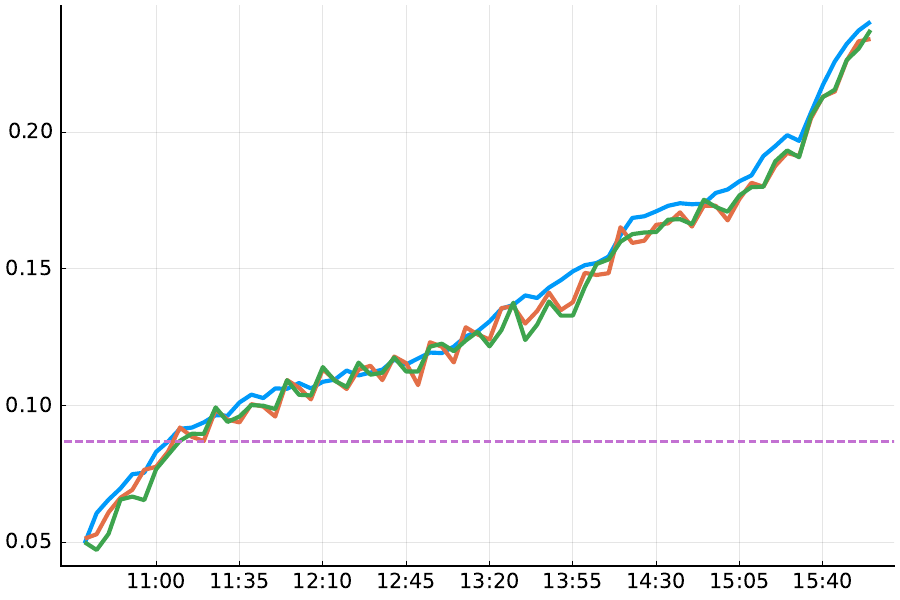}
\par\end{centering}
}\negthinspace{}\subfloat[Volatility ratio (NEM,SPY)]{\begin{centering}
\includegraphics[width=0.32\textwidth]{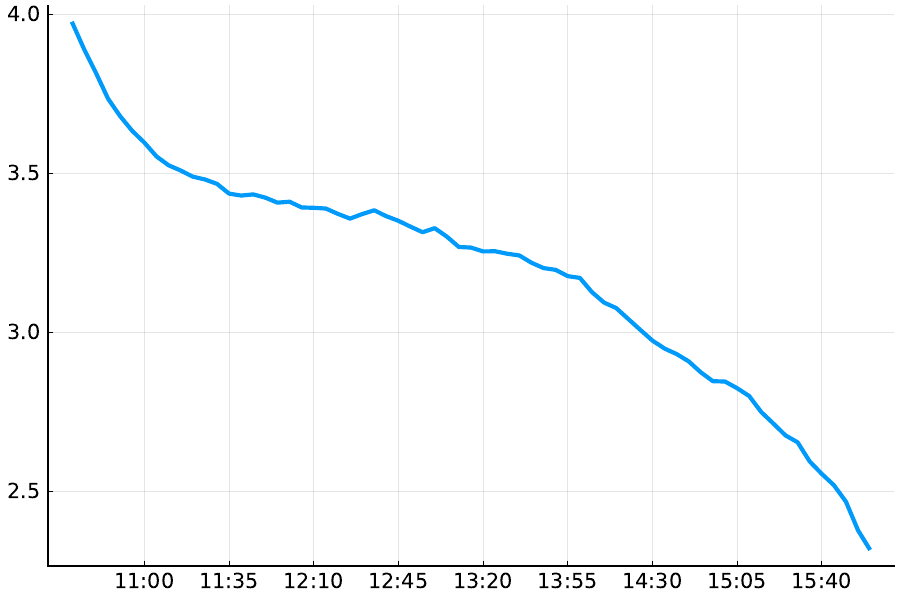}
\par\end{centering}
}\negthinspace{}\subfloat[Market $\beta$ (NEM,SPY)]{\begin{centering}
\includegraphics[width=0.32\textwidth]{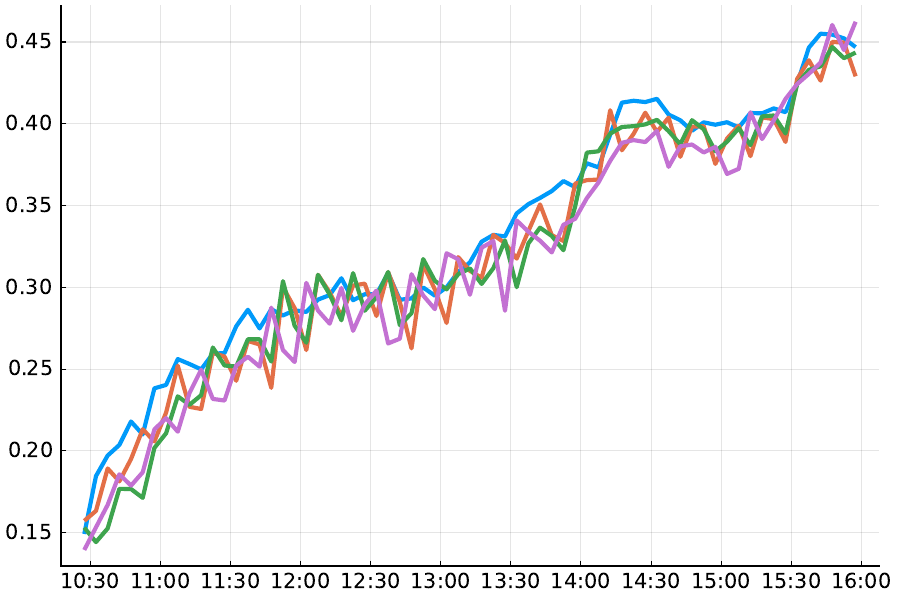}
\par\end{centering}
}
\par\end{centering}
\caption{Intraday correlations (left panels), relative volatilities (middle
panels), and market $\beta$ (right panels) for four assets, AAPL,
AMD, LYB, and NEM, relative to SPY that tracks the S\&P 500 index.
Quantities are estimated with a rolling window of data that spans
the 60 minutes leading up to the indicated timestamp. Estimates are
averaged over the days in the sample period. The estimates of market
betas using the methodology in ATT are included in the right panels.\label{fig:RhoLambdaBetaIntraday}}
\end{figure}

Intraday estimates of correlations, relative volatilities, and betas
are show for four assets in Figure \ref{fig:RhoLambdaBetaIntraday}.
All quantities are estimated using a rolling window that spans 60
minutes return. The time-stamp used along the x-axis refers to the
end of the 60 minute period. The estimates are averaged over the 1,763
trading days in the sample. The left panels report the intraday correlation
estimates for $P$, $K$, and $Q_{S}$. A horizontal dashed line indicate
the average correlations over the trading hours. The middle panels
report the relative volatility as defined by $\hat{\lambda}_{i,t}$
above, which is multiplied by the three correlation estimates to obtain
estimates of market betas. These three intraday market betas are shown
in the right panels along with the regression based estimate, based
on the same methodology as ATT. The corresponding results for all
assets in the Small Universe is presented in the Supplementary Material,
Figure \ref{fig:RhoLambdaBetasSmallUniverse}. 

We note that the time variation in the estimates of $Q_{S}$ tends
to be smoother than those of $K$ and $P$. This strongly suggests
that $Q_{S}$ is a more accurate than $K$ and $P$. We also note
that $Q_{S}$ tends to be slightly larger than $K$ and $P$, which
may be related to them having a larger variance causing another source
of bias in $K$ and $P$. These smoother lines for $Q_{S}$ and slightly
larger values carries over to the intraday estimates of market betas.
The lines for the regression based estimates of intraday betas are
also less smooth that those based on $Q_{S}$. 

We find correlations to be generally increasing over the day, while
relative volatilities are decreasing. Whether their product, the market
beta, is increasing or decreasing will depend on which of the terms
changes the most. Unlike correlations and relative volatility, the
paths for intraday market betas take many different shapes. Some assets
have clearly increasing market beta over the day (e.g. NEM), others
have decreasing market betas (e.g. AMD), and a third group of assets
have market betas that goes both up and down (e.g. AAPL), or stay
relatively flat for a large part of the day (e.g. LYB). It is interesting
to compare the market betas for LYB and NEM, which are both Materials
sector stocks, with trading intensity below the average for stocks
in the Small Universe. Despite these commonalities the intraday beta
patterns for LYB and NEM are very different. The reason can be found
in their intraday correlations. For LYB the correlation only increases
by about 50\% over the trading hours (from about 0.37 to 0.55), whereas
the correlation for NEM increases by nearly 500\% (from about 0.05
to 0.25). A great variety of shapes for time-varying betas are shown
in Figure \ref{fig:RhoLambdaBetasSmallUniverse} in the Supplementary
Material.

\subsection{Decomposing Intraday Variation in Market Betas }

All estimated correlations are positive, we can therefore factorize
the logarithm of intraday market betas, as
\[
\log\beta_{i,t}=\log\rho_{i,t}+\log\lambda_{i,t}.
\]
We use this decomposition to investigate who much of the intraday
variation in market betas can be ascribed to changes in correlations
and changes in relative volatilities. For this purpose, we expand
this part of our analysis to include all assets in the Large Universe.
\begin{figure}[H]
\begin{centering}
\subfloat[]{\begin{centering}
\includegraphics[width=0.49\textwidth]{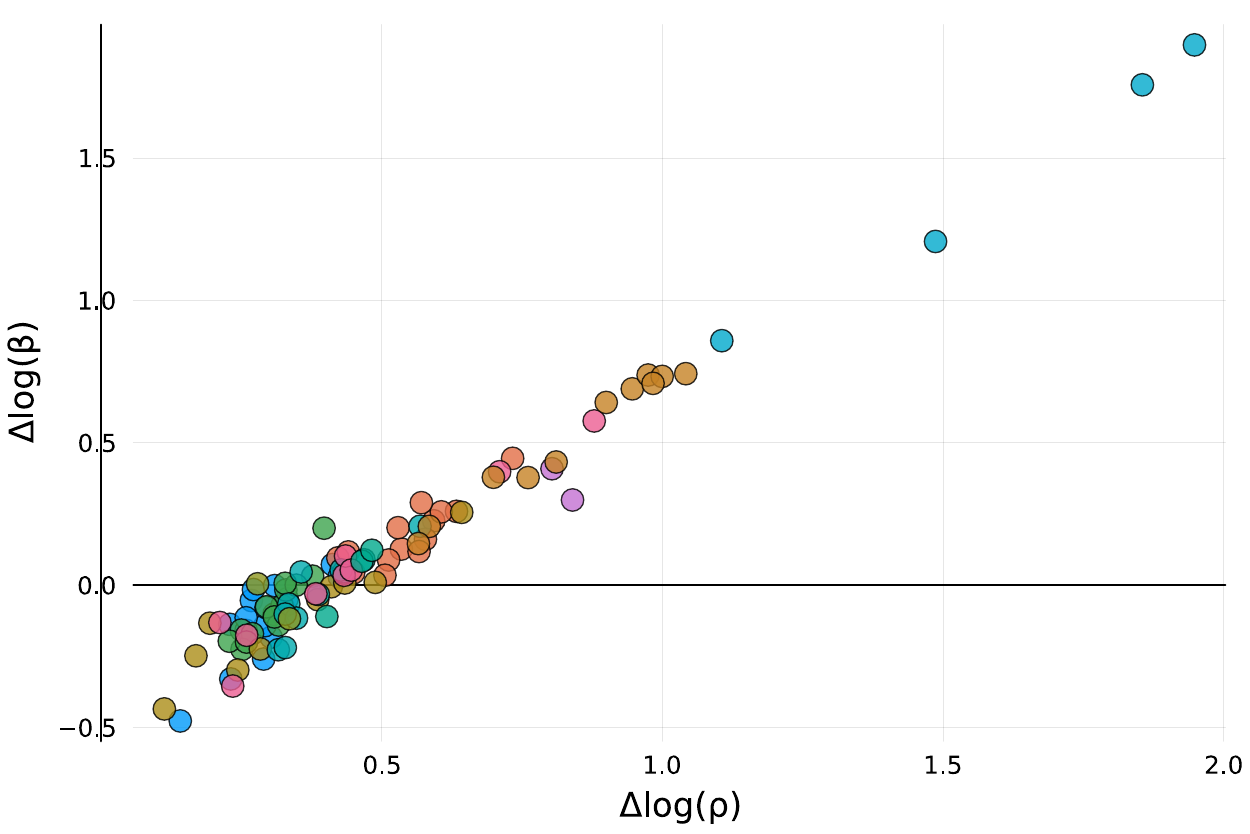}
\par\end{centering}
}\negthinspace{}\subfloat[]{\begin{centering}
\includegraphics[width=0.49\textwidth]{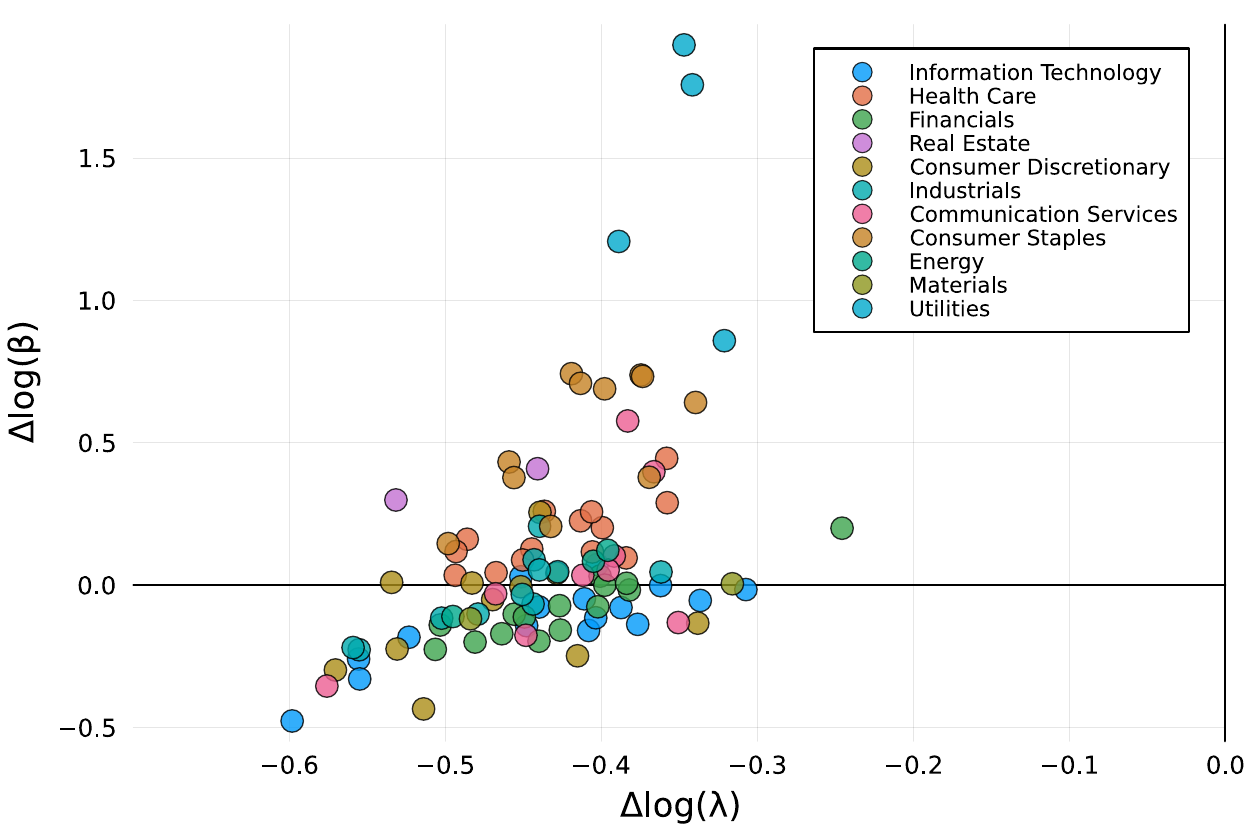}
\par\end{centering}
}
\par\end{centering}
\caption{Scatterplot of changes in intraday betas plotted against changes in
correlations (a) and changes in relative volatility (b). Changes are
defined as the difference between the last hourly estimate of the
trading day and the first hourly estimate of the trading day.\label{fig:DeltaRhoDeltaLambda-vs-DeltaBeta}}
\end{figure}

In Figure \ref{fig:DeltaRhoDeltaLambda-vs-DeltaBeta} we have plotted
changes in intraday market betas against intraday changes in intraday
correlations and against changes in relative volatility, for all assets
in the Large Universe. We use color codes to indicate the GICS industry
sectors for each of the assets. The increments (changes) are defined
by the logarithmic difference between the estimate from the first
hour of trading and the analogous quantity from the last hour of trading. 

When changes in intraday betas are plotted against intraday changes
in correlations, it reveals a strong linear relationship between the
two, see left panel of Figure \ref{fig:DeltaRhoDeltaLambda-vs-DeltaBeta}.
In contrast there is only a weak relationship between changes in market
betas and changes in relatively volatilities. In fact, as can be seen
from the range of the x-axis in the right panel of Figure \ref{fig:DeltaRhoDeltaLambda-vs-DeltaBeta},
there is far less cross-sectional variation in the changes in relative
volatility. Overall, we can see that most of the cross sectional variation
in market betas can be ascribed to variation in intraday correlations. 

Interestingly, there is a great deal of clustering by sectors in Figure
\ref{fig:DeltaRhoDeltaLambda-vs-DeltaBeta}. In terms of changes in
intraday correlations, there is a large degree of sector-specific
separation, whereas in terms of changes in relative volatilities there
are notable variation within all sectors, as evident by asset dispersion
along the x-axis. 

In the Supplementary material, Figure \ref{fig:DeltaRhoLambdaBeta-vs-BetaSizeBTM},
we have explored the intraday variation in greater details. For instance,
we plot changes in intraday market correlations, relative volatility,
and market betas against Fama-French type variables. We do not detect
a strong association with key characteristics such as market capitalization
and book-to-market ratios. 

In Figure \ref{fig:MarketBetaScatterPlots}, we present scatterplots
of the intraday market betas against conventional market betas, which
are computed from daily returns. The left panel has the market beta
for the first hour of trading and the right panel has the market beta
for the last trading hour plotted against the low frequency market
beta. Not surprisingly, do we find a strong relationship between the
low-frequency market betas and the intraday market betas. The scatterplots
in Figure \ref{fig:MarketBetaScatterPlots} corroborates findings
in ATT, who found market betas to be less disperse at the end of the
day, than the beginning of the day.
\begin{figure}[H]
\begin{centering}
\subfloat[First hour]{\begin{centering}
\includegraphics[width=0.49\textwidth]{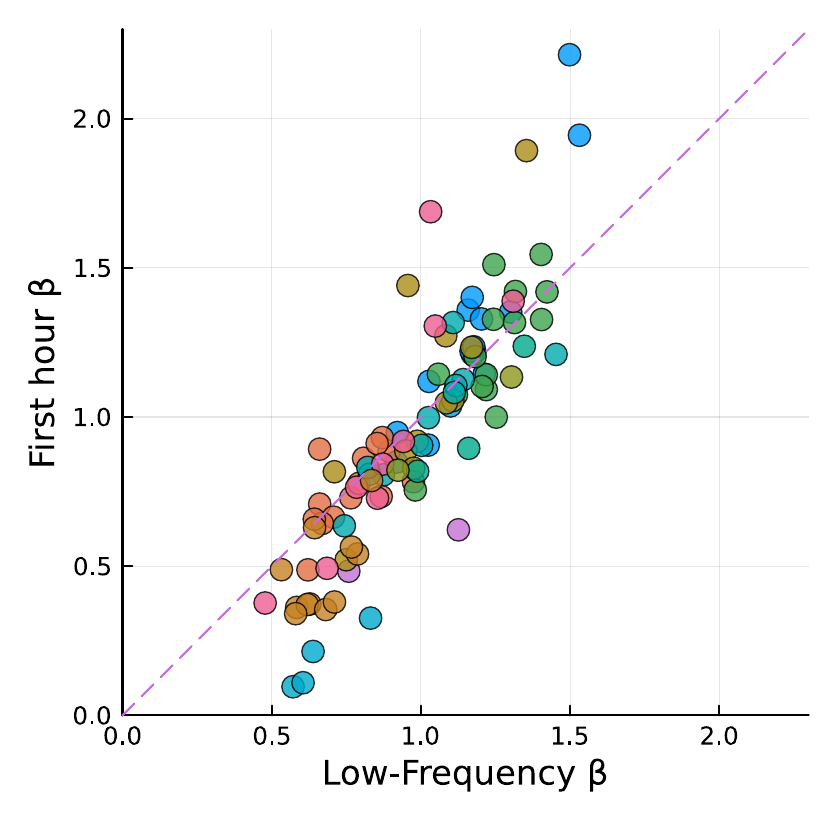}
\par\end{centering}
}\negthinspace{}\subfloat[Last hour]{\begin{centering}
\includegraphics[width=0.49\textwidth]{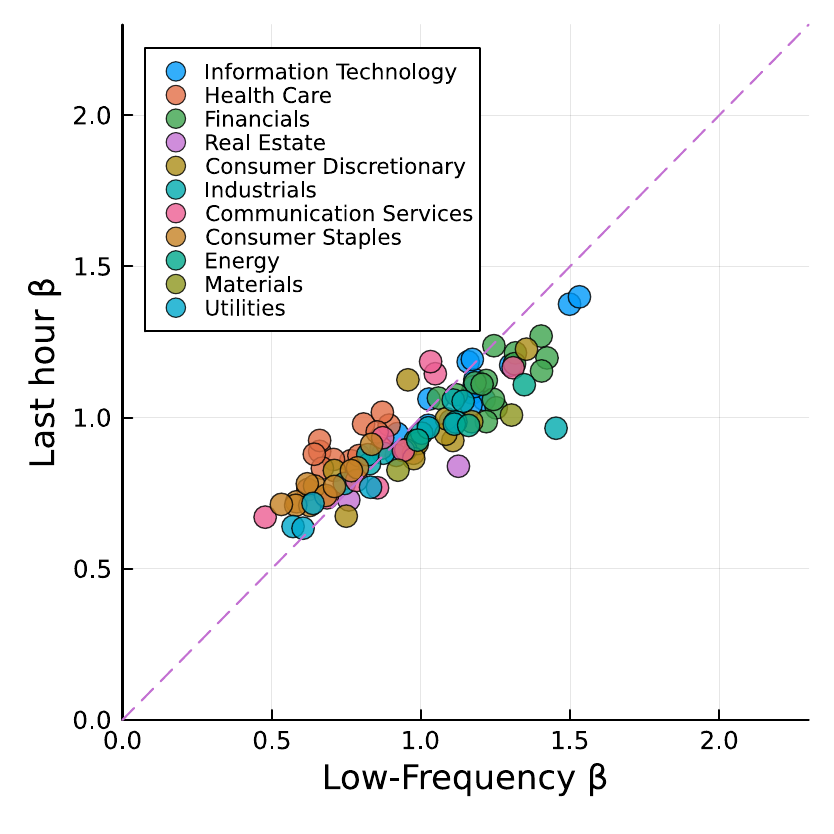}
\par\end{centering}
}
\par\end{centering}
\caption{Market beta for the first and last hour plotted against the conventional
market beta, which is computed from daily returns.\label{fig:MarketBetaScatterPlots}}
\end{figure}

\section{Concluding Remarks}

The correlation coefficient is a fundamental measure of linear dependence
with broad applications across various fields of empirical analysis.
For instance, in modern finance, it has a central role in risk management,
portfolio selection, and the pricing of derivatives.

In this paper, we have introduced a novel robust correlation estimator
that is particularly well-suited for high-frequency financial data
analysis. We have shown that the sample correlation, $P$, and Kendall's
tau, $K$, are inconsistent under time-varying volatility, while the
quadrant estimator is robust to time-varying volatilities. The subsampled
quadrant estimator, $Q_{S}$, inherits the consistency of the  quadrant
and is far more efficient, because it leverages additional high-frequency
data. The theoretical properties we established for the estimators
are supported by simulation-based evidence and an extensive empirical
analysis spanning seven years of high-frequency return data for about
100 securities.

The empirical analysis also offers valuable insights into the time-varying
nature of market betas within a trading day. Market betas can be expressed
as the product of the correlation (with market returns) and relative
volatility. We have documented that the time-variation in market betas
within the day is mainly driven by time-variation in intra-day correlations.

While the estimator, $Q_{S}$, is particularly well-suited for high-frequency
financial data analysis, it may also be useful for other time-series
with time-varying volatility, or time series that are prone to outliers
and noise. The $Q_{S}$ estimator might also be useful for nonparametric
estimation of the leverage leverage effect, as analyzed in \citet{KalninaXiu:2017}.
There are several ways the $Q_{S}$ estimator could be extended and
possibly improved. For instance, there might be more efficient ways
to handle zero returns, such as distinguishing between cases were
both returns are zero and cases were just one of the returns is zero.
A multivariate version of the $Q_{S}$ estimator would be interesting
to explore. Constructing a correlation matrix from univariate correlation
estimates, need not result in a positive definite matrix. So, a subsequent
matrix projection to the set of positive definite correlation matrices
might be needed, such as those proposed by \citet{Higham:2002} and
\citet{QiSun:2006}.

\appendix
\setcounter{equation}{0}\renewcommand{\theequation}{A.\arabic{equation}}
\setcounter{lem}{0}\renewcommand{\thelem}{A.\arabic{lem}}
\setcounter{prop}{0}\renewcommand{\theprop}{A.\arabic{prop}}

\section{Appendix of Proofs}
\begin{prop}[Greiner]
\label{prop:Greiner}Suppose that $(X,Y)$ is elliptically distributed
with location parameter $\mu$ and dispersion matrix $\Sigma$. Then
Greiner's identity (\ref{eq:Greiner}) holds with $\rho=\Sigma_{12}/\sqrt{\Sigma_{11}\Sigma_{22}}$.
\end{prop}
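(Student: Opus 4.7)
The plan is to reduce the elliptical case to the bivariate normal case, for which the identity is Sheppard's classical quadrant-probability result.

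First, I would exploit elliptical symmetry about $\mu$, which forces the marginal medians to satisfy $\xi_{X}=\mu_{X}$ and $\xi_{Y}=\mu_{Y}$. Centering, I may take $\mu=0$ without loss of generality, so that $\tau=\mathbb{E}[\mathrm{sgn}(XY)]$.

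Next, I would invoke the standard stochastic representation of an elliptical law, $(X,Y)^{\top}\stackrel{d}{=}R\, A^{\top}U$, where $A$ is any square root of the dispersion matrix ($A^{\top}A=\Sigma$), $U$ is uniformly distributed on the unit circle $S^{1}$, and $R\geq0$ is a scalar radial variable independent of $U$ with $\Pr(R=0)=0$ under the non-degeneracy implicit in the statement. Since $R^{2}>0$ almost surely, $\mathrm{sgn}(XY)=\mathrm{sgn}\{(A^{\top}U)_{1}(A^{\top}U)_{2}\}$, and therefore
\[
\tau=\mathbb{E}[\mathrm{sgn}\{(A^{\top}U)_{1}(A^{\top}U)_{2}\}],
\]
which depends on the elliptical family only through $\Sigma$. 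It follows that $\tau$ is identical for \emph{any} elliptical law with dispersion $\Sigma$, including the bivariate normal $N(0,\Sigma)$.

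Finally, I would apply Sheppard's identity to the Gaussian reference distribution $N(0,\Sigma)$, which gives $\tau=\tfrac{2}{\pi}\arcsin\rho$ with $\rho=\Sigma_{12}/\sqrt{\Sigma_{11}\Sigma_{22}}$; inverting yields (\ref{eq:Greiner}). I do not anticipate a serious obstacle. The one conceptual subtlety worth flagging is that Greiner's map is indexed by the dispersion-based quantity $\rho=\Sigma_{12}/\sqrt{\Sigma_{11}\Sigma_{22}}$, which remains well defined even when $(X,Y)$ lacks finite second moments (e.g.\ multivariate $t$ with $\nu\leq 2$); the representation-based argument above is agnostic to the distribution of $R$ and therefore delivers the identity under this minimal assumption. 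If a fully self-contained proof of Sheppard's step is desired, it follows by whitening $N(0,\Sigma)$ to unit variances and evaluating the quadrant probability $\Pr(Z_{1}>0,Z_{2}>0)=\tfrac{1}{4}+\tfrac{1}{2\pi}\arcsin\rho$ in polar coordinates.
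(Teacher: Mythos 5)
Your proposal is correct and follows essentially the same route as the paper: both rest on the stochastic representation $(X,Y)^{\top}=\mu+VA U$ with $U$ uniform on the unit circle and $V>0$ independent of $U$, so that the radial part is irrelevant to the signs and everything reduces to a quadrant probability for the uniform angle. The paper simply carries out the angular computation directly (obtaining $\Pr[X>0,Y>0]=\tfrac14+\tfrac{\arcsin\rho}{2\pi}$ from the arc-overlap of the two half-circle events), which is exactly the self-contained version of Sheppard's step you sketch at the end; your explicit remarks that the marginal medians sit at $\mu$ and that the argument needs no moment conditions on $R$ are correct and in fact slightly more careful than the paper's write-up.
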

\begin{proof}
An elliptical distribution has the stochastic representation,
\[
\left(\begin{array}{c}
X\\
Y
\end{array}\right)=\mu+VAU,\qquad\text{where}\quad A=\left(\begin{array}{cc}
a & b\\
c & d
\end{array}\right),
\]
with $A^{\prime}A=\Sigma$, and where $V>0$ is independent of $U$,
and $U=(\cos\theta,\sin\theta)^{\prime}$, with $\theta\sim\mathrm{Uniform}[-\pi,\pi]$,
see \citet{CambanisHuangSimons:1981}. From $A^{\prime}A$ we have
\[
\rho=\frac{ac+bd}{\sqrt{a^{2}+b^{2}}\sqrt{c^{2}+d^{2}}},
\]
which is well-defined if $\mathbb{E}[V^{2}]<\infty$. Since $V>0$
it follows that
\[
X>0\Leftrightarrow(a\cos\theta+b\sin\theta)>0\Leftrightarrow\sin(\theta+\phi_{1})>0\Leftrightarrow\theta\in(\phi_{1},\pi-\phi_{1}),
\]
where $\sin\phi_{1}=a/\sqrt{a^{2}+b^{2}}$. Similarly, $Y>0\Leftrightarrow\theta\in(\phi_{2},\pi-\phi_{2})$
where $\sin\phi_{2}=\frac{c}{\sqrt{c^{2}+d^{2}}}$. Thus, the quadrant
probability is
\[
\Pr[X>0,Y>0]=\Pr[\theta\in(\phi_{1},\pi-\phi_{1})\cap(\phi_{2},\pi-\phi_{2})]=\frac{\pi-|\phi_{2}-\phi_{1}|}{2\pi}.
\]
Next, observe that $\rho$ can be expressed as:
\[
\rho=\sin\phi_{1}\sin\phi_{2}+\cos\phi_{1}\cos\phi_{2}=\cos(\phi_{c}-\phi_{a})=\sin(\pi/2-|\phi_{c}-\phi_{a}|),
\]
such that 
\[
\Pr[X>0,Y>0]=\frac{\frac{\pi}{2}+\arcsin\rho}{2\pi}=\frac{1}{4}+\frac{\arcsin\rho}{2\pi}.
\]
\end{proof}
\begin{lem}
\label{lem:SignOf4}Let $(X,Y,\tilde{X},\tilde{Y})$ be normally distributed
with mean zero and correlation matrix
\[
C(\rho,\omega)=\left[\begin{array}{cccc}
1 & \rho & \omega & \omega\rho\\
\rho & 1 & \omega\rho & \omega\\
\omega & \omega\rho & 1 & \rho\\
\omega\rho & \omega & \rho & 1
\end{array}\right],
\]
and define $Z=1_{\{XY>0\}}$ and $\tilde{Z}=1_{\{\tilde{X}\tilde{Y}>0\}}$.
Then
\[
\mathrm{cov}(Z,\tilde{Z})=\Gamma(\rho,\omega)=\frac{\arcsin^{2}(\omega)-\arcsin^{2}(\omega\rho)}{\pi^{2}}.
\]
 
\end{lem}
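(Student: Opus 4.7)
Write $Z = \tfrac{1}{2}(1+\mathrm{sgn}(XY))$ and analogously for $\tilde Z$. Since $\mathrm{sgn}(XY) = \mathrm{sgn}(X)\mathrm{sgn}(Y)$, Sheppard's bivariate identity (Proposition~\ref{prop:Greiner}) gives $\mathbb{E}[\mathrm{sgn}(XY)] = \mathbb{E}[\mathrm{sgn}(\tilde X\tilde Y)] = 2\arcsin(\rho)/\pi$, so the computation reduces to
\[
I(\rho,\omega) := \mathbb{E}[\mathrm{sgn}(X)\mathrm{sgn}(Y)\mathrm{sgn}(\tilde X)\mathrm{sgn}(\tilde Y)],
\]
via $\mathrm{cov}(Z,\tilde Z) = \tfrac{1}{4}\bigl[I(\rho,\omega) - (2\arcsin\rho/\pi)^2\bigr]$. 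At $\omega=0$ the matrix $C(\rho,0)$ is block-diagonal, so $(X,Y)\perp(\tilde X,\tilde Y)$ and $I(\rho,0) = (2\arcsin\rho/\pi)^2$; this furnishes the base case.

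The main step differentiates $I$ in $\omega$ via Price's theorem, $\partial\mathbb{E}[f]/\partial R_{ij} = \mathbb{E}[\partial^2 f/\partial U_i\partial U_j]$. Applied to $f = \prod_i\mathrm{sgn}(U_i)$ with the distributional identity $\partial\mathrm{sgn}(u)/\partial u = 2\delta(u)$, this yields
\[
\frac{\partial I}{\partial R_{ij}} = 4\phi_{ij}(0,0)\,\mathbb{E}\!\left[\prod_{k\notin\{i,j\}}\mathrm{sgn}(U_k)\,\Big|\,U_i=U_j=0\right].
\]
In $C(\rho,\omega)$, the scalar $\omega$ enters the entries $(X,\tilde X),(Y,\tilde Y)$ with coefficient $1$ and the entries $(X,\tilde Y),(Y,\tilde X)$ with coefficient $\rho$; since $C$ is invariant under the swap $(X,\tilde X)\leftrightarrow(Y,\tilde Y)$, the chain rule gives
\[
\frac{\partial I}{\partial\omega} = 2\,\frac{\partial I}{\partial C_{X\tilde X}} + 2\rho\,\frac{\partial I}{\partial C_{X\tilde Y}}.
\]
A short multivariate-regression calculation then identifies the two conditional bivariate normal laws that appear: given $X=\tilde X=0$, $(Y,\tilde Y)$ is centred bivariate Gaussian with correlation $\omega$ (Sheppard gives $2\arcsin\omega/\pi$), while given $X=\tilde Y=0$, $(Y,\tilde X)$ is centred bivariate Gaussian with correlation $-\omega\rho$ (Sheppard gives $-2\arcsin(\omega\rho)/\pi$). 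Combined with $\phi_{X\tilde X}(0,0) = 1/(2\pi\sqrt{1-\omega^2})$ and $\phi_{X\tilde Y}(0,0) = 1/(2\pi\sqrt{1-\omega^2\rho^2})$, this produces
\[
\frac{\partial I}{\partial\omega} = \frac{8\arcsin\omega}{\pi^2\sqrt{1-\omega^2}} - \frac{8\rho\arcsin(\omega\rho)}{\pi^2\sqrt{1-\omega^2\rho^2}},
\]
which matches $\partial_\omega\!\bigl[4(\arcsin^2\omega + \arcsin^2\rho - \arcsin^2(\omega\rho))/\pi^2\bigr]$ term by term.

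Integrating from $0$ to $\omega$ and combining with the base case yields $I(\rho,\omega) = 4[\arcsin^2\omega + \arcsin^2\rho - \arcsin^2(\omega\rho)]/\pi^2$, and substituting back gives $\mathrm{cov}(Z,\tilde Z) = [\arcsin^2\omega - \arcsin^2(\omega\rho)]/\pi^2 = \Gamma(\rho,\omega)$. The hardest step is verifying that the conditional correlation of $(Y,\tilde X)$ given $X=\tilde Y=0$ is exactly $-\omega\rho$: the $2\times 2$ block-inverse calculation delivers a conditional covariance of $-\omega\rho(1-\omega^2)(1-\rho^2)/(1-\omega^2\rho^2)$ alongside equal conditional variances $(1-\omega^2)(1-\rho^2)/(1-\omega^2\rho^2)$, hinging on the cancellation $(1-\omega^2\rho^2) - (2-\omega^2-\rho^2) = -(1-\omega^2)(1-\rho^2)$. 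The rest is elementary calculus and chain-rule bookkeeping, and the distributional use of $\partial\mathrm{sgn}/\partial u = 2\delta(u)$ inside Price's theorem is justified by smoothing $\mathrm{sgn}$ and passing to the limit.
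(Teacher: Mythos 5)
Your proof is correct, and it takes a genuinely different route from the paper's. The paper computes $\mathbb{E}[Z\tilde Z]$ directly as a sum of four orthant probabilities and then imports the closed-form quadrivariate orthant formula of Cheng (1969) for precisely this correlation structure $C(\rho,\omega)$; the covariance then falls out by algebra after setting $\omega=\pm1$ for the marginal term. You instead reduce everything to $I(\rho,\omega)=\mathbb{E}[\mathrm{sgn}(X)\mathrm{sgn}(Y)\mathrm{sgn}(\tilde X)\mathrm{sgn}(\tilde Y)]$ and derive it from scratch by Price's theorem: differentiating in $\omega$ through the four affected entries of $C$, evaluating the resulting conditional bivariate laws (conditional correlations $\omega$ and $-\omega\rho$, which I verified via the block-inverse computation, including the sign), applying the bivariate Sheppard identity from Proposition \ref{prop:Greiner}, and integrating from the block-diagonal base case $\omega=0$. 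Both the derivative
\[
\frac{\partial I}{\partial\omega}=\frac{8\arcsin\omega}{\pi^{2}\sqrt{1-\omega^{2}}}-\frac{8\rho\arcsin(\omega\rho)}{\pi^{2}\sqrt{1-\omega^{2}\rho^{2}}}
\]
and the reduction $\mathrm{cov}(Z,\tilde Z)=\tfrac14[I-(2\arcsin\rho/\pi)^{2}]$ check out, and they recover $\Gamma(\rho,\omega)$ exactly. What your approach buys is self-containedness: it needs only the bivariate Sheppard identity plus the classical Price/Plackett differentiation device, and in effect re-derives the relevant piece of Cheng's formula rather than citing it; the cost is the block-inverse bookkeeping and the need to justify the distributional step $\mathrm{sgn}'=2\delta$, which you correctly flag as handled by smoothing. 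The paper's route is shorter conditional on accepting the external orthant-probability reference.
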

\begin{proof}
Let $(\oplus\oplus\ominus\ominus)$ denote the orthant event, $(X>0,Y>0,-\tilde{X}>0,-\tilde{Y}>0)$,
and define events with other combinations of the signs similarly.
For a quadrivariate variable with covariance $C(\rho,\omega)$, we
define the orthant probability 
\[
G(\rho,\omega)=\Pr(\oplus\oplus\oplus\oplus),
\]
and note that $G(\rho,\omega)=\Pr(\ominus\ominus\ominus\ominus)$,
and since $\mathrm{corr}((X,Y,-\tilde{X},-\tilde{Y})^{\prime})=C(\rho,-\omega)$
we have $\Pr(\oplus\oplus\ominus\ominus)=\Pr(\ominus\ominus\oplus\oplus)=G(\rho,-\omega)$. 

Consider the two terms of $\mathrm{cov}(Z,Z^{\prime})=\mathbb{E}(Z\tilde{Z})-\mathbb{E}(Z)\mathbb{E}(\tilde{Z})$.
For the first term we have
\begin{eqnarray*}
\mathbb{E}(Z\tilde{Z})=\Pr(XY>0,\tilde{X}\tilde{Y}>0) & = & \Pr((\oplus\oplus\oplus\oplus)\cup(\oplus\oplus\ominus\ominus)\cup(\ominus\ominus\oplus\oplus)\cup(\ominus\ominus\ominus\ominus))\\
 & = & 2\left[G(\rho,\omega)+G(\rho,-\omega)\right],
\end{eqnarray*}
and for the second term we use that $\mathbb{E}(Z)=\mathbb{E}(\tilde{Z})$
can be obtained by setting $\omega=1$, such that
\[
\mathrm{cov}(Z,\tilde{Z})=2\left[G(\rho,\omega)+G(\rho,-\omega)\right]-4\left[G(\rho,1)+G(\rho,-1)\right]^{2}.
\]
From \citet{Cheng:1969}, we have the expression
\[
G(\rho,\omega)=\tfrac{1}{16}+\tfrac{\arcsin\rho+\arcsin\omega+\arcsin(\omega\rho)}{4\pi}+\tfrac{\arcsin^{2}\rho+\arcsin^{2}\omega-\arcsin^{2}(\omega\rho)}{4\pi^{2}},
\]
such that 
\[
2\left[G(\rho,\omega)+G(\rho,-\omega)\right]=\tfrac{1}{4}+\tfrac{\arcsin\rho}{\pi}+\tfrac{\arcsin^{2}\rho+\arcsin^{2}\omega-\arcsin^{2}(\omega\rho)}{\pi^{2}},
\]
and
\begin{eqnarray*}
\left(2[G(\rho,1)+G(\rho,-1)]\right)^{2} & = & \left(\tfrac{1}{4}+\tfrac{\arcsin\rho}{\pi}+\tfrac{\arcsin^{2}(1)}{\pi^{2}}\right)^{2}=\left(\tfrac{1}{4}+\tfrac{\arcsin\rho}{\pi}+\tfrac{\pi^{2}/4}{\pi^{2}}\right)^{2}\\
 & = & \left(\tfrac{1}{2}+\tfrac{\arcsin\rho}{\pi}\right)^{2}=\tfrac{1}{4}+\tfrac{\arcsin\rho}{\pi}+\tfrac{\arcsin^{2}\rho}{\pi^{2}}.
\end{eqnarray*}
Finally, we arrive at
\begin{eqnarray*}
\mathrm{cov}(Z,\tilde{Z}) & = & \frac{1}{4}+\frac{\arcsin(\rho)}{\pi}+\frac{\arcsin^{2}(\rho)+\arcsin^{2}(\omega)-\arcsin^{2}(\omega\rho)}{\pi^{2}}-\frac{1}{4}-\frac{\arcsin(\rho)}{\pi}-\frac{\arcsin^{2}(\rho)}{\pi^{2}}\\
 & = & \frac{\arcsin^{2}(\omega)-\arcsin^{2}(\omega\rho)}{\pi^{2}}.
\end{eqnarray*}
\end{proof}
\noindent\textbf{Proof of Theorem \ref{thm:efficiency}.} Define
$Z_{S,j}=1_{\{\Delta_{\frac{S}{N}}X_{\frac{j}{N}}\Delta_{\frac{S}{N}}Y_{\frac{j}{N}}\}}$
and $\hat{q}_{S}=\frac{1}{N-S+1}\sum_{j=S}^{N}Z_{S,j}$ and note that
$\hat{\tau}_{S}=2\hat{q}_{S}-1$. Thus, the the asymptotic variance
of $Q_{S}=\sin(\tfrac{\pi}{2}\hat{\tau}_{S})$ is given by that of
$\hat{q}_{S}$ and the delta-method. For the latter we need to multiply
the asymptotic variance of $\hat{q}_{S}$ with the square of:
\[
\frac{\partial\sin(\tfrac{\pi}{2}(2q-1))}{\partial q}=\pi\cos(\tfrac{\pi}{2}(2q-1))=\pi\cos(\arcsin\rho)=\pi\sqrt{1-\rho^{2}}.
\]
We have
\begin{eqnarray*}
\mathrm{var}\left(\sum_{j=S}^{N}Z_{S,j}\right) & = & \sum_{i=S}^{N}\sum_{j=S}^{N}\mathrm{cov}\left(Z_{S,i},Z_{S,j}\right)\\
 & = & \sum_{h=-N+1}^{N-1}(N-S+1-|h|)\mathrm{cov}\left(Z_{S,i},Z_{S,i+h}\right)\\
 & = & \sum_{h=-S+1}^{S-1}(N-S+1-|h|)\tfrac{\arcsin^{2}(\omega_{h})-\arcsin^{2}(\omega_{h}\rho)}{\pi^{2}},
\end{eqnarray*}
where we have used Lemma \ref{lem:SignOf4} with $\omega=\tfrac{S-|h|}{S}$
and $\omega=0$ for $|h|\geq S$. Next, as $N\rightarrow\infty$ we
have $\frac{n}{N-S+1}\rightarrow\frac{1}{S}$ and $\tfrac{N-S+1-|h|}{N-S+1}\rightarrow1$
for all $|h|\leq S$, such that 
\[
\mathrm{var}\left(\sqrt{n}\hat{q}_{S}\right)=\mathrm{var}\left(\sqrt{n}\frac{1}{N-S+1}\sum_{j=S}^{N}Z_{S,j}\right)\rightarrow\frac{1}{S}\sum_{h=-S+1}^{S-1}\tfrac{\arcsin^{2}(\omega_{h})-\arcsin^{2}(\omega_{h}\rho)}{\pi^{2}}.
\]
The asymptotic behavior of the sum, as $S\rightarrow\infty$, can
be inferred from 
\begin{eqnarray*}
I(\rho)=\int_{0}^{1}\arcsin^{2}(x\rho)\mathrm{d}x & = & \left[\tfrac{2\sqrt{1-\rho^{2}x^{2}}\arcsin(\rho x)}{\rho}+x\arcsin^{2}(\rho x)-2x\right]_{0}^{1}\\
 & = & \left(2\tfrac{\sqrt{1-\rho^{2}}}{\rho}\arcsin(\rho)+\arcsin^{2}(\rho)-2\right)-0,
\end{eqnarray*}
such that $\mathrm{avar}\left(\hat{q}_{S}\right)$ is $2/\pi^{2}$
times
\begin{eqnarray*}
\int_{0}^{1}\arcsin^{2}(x)-\arcsin^{2}(x\rho)\mathrm{d}x & = & I(1)-I(\rho)\\
 & = & 0\arcsin(1)+\arcsin^{2}(1)-2-\tfrac{2\sqrt{1-\rho^{2}}\arcsin(\rho)}{\rho}-\arcsin^{2}(\rho)+2\\
 & = & (\tfrac{\pi}{2})^{2}-2\tfrac{\sqrt{1-\rho^{2}}}{\rho}\arcsin(\rho)-\arcsin^{2}(\rho).
\end{eqnarray*}
Thus, multiplying by $2/\pi^{2}$ and applying the delta-method we
have 
\begin{eqnarray*}
\mathrm{var}(\sqrt{n}\hat{Q}_{S}) & \rightarrow & \pi^{2}(1-\rho^{2})\frac{2}{\pi^{2}}\left[(\tfrac{\pi}{2})^{2}-2\tfrac{\sqrt{1-\rho^{2}}}{\rho}\arcsin(\rho)-\arcsin^{2}(\rho)\right]\\
 & = & (1-\rho^{2})2\left[\mathrm{arcsin}^{2}(1)-2\tfrac{\sqrt{1-\rho^{2}}}{\rho}\arcsin(\rho)-\arcsin^{2}(\rho)\right].
\end{eqnarray*}

\hfill{}$\square$

\noindent\textbf{Proof of Theorem \ref{theo:Consistency}.} In the
proof we rely on the local-constancy approximation by \citet{MyklandZhang:2009}.
Their assumptions 1 and 2 are satisfied by our equidistant sampling
and $\sigma(u)$ being bounded away from zero. Under the approximating
measure we have that $(\Delta_{\frac{S}{N}}X_{\frac{j}{N}},\Delta_{\frac{S}{N}}Y_{\frac{j}{N}})=(\sigma_{x,j}Z_{x,j},\sigma_{y,j}Z_{y,j})$
where $\sigma_{x,j}=\sigma_{x}(\frac{j-S}{N})$ and $\sigma_{y,j}=\sigma_{y}(\frac{j-S}{N})$,
and $(Z_{x,j},Z_{y,j})\sim N_{2}(0,C)$ where $C$ is a correlation
matrix with correlation coefficient equal to $\rho$ . Hence, under
the approximate measure we have
\[
\mathbb{E}[\mathrm{sgn}(\Delta_{\frac{S}{N}}X_{\frac{j}{N}}\Delta_{\frac{S}{N}}Y_{\frac{j}{N}})]=\tau=\tfrac{2}{\pi}\arcsin\rho,
\]
and the statistic
\[
T_{S,s}=\frac{1}{\lfloor(N-s+1)/S\rfloor}\sum_{k=1}^{\lfloor(N-s+1)/S\rfloor}\mathrm{sgn}(\Delta_{\frac{S}{N}}X_{\frac{kS+s-1}{N}}\Delta_{\frac{S}{N}}Y_{\frac{kS+s-1}{N}}),
\]
is based on non-overlapping returns for each $s=1,\ldots,S$, such
that $T_{S,s}\overset{p}{\rightarrow}\tau$ as $N/S\rightarrow\infty$
by the standard law of large numbers. Since $Q_{S}=\sin(\frac{\pi}{2}T)$
where $T$ is a (nearly evenly) weighted average of $T_{S,1},\ldots,T_{S,S}$
it also follows that $Q_{S}$ is consistent under the local-constancy
approximation measure, and since consistency is not affected by the
change of measure, $Q_{S}$ is consistent for $\rho$.

For $P$ we simply note that $\sum_{k=1}^{\lfloor\delta^{-1}\rfloor}\Delta_{\delta}X_{k\delta}\Delta_{\delta}Y_{\delta k}-\rho\int_{0}^{1}\sigma_{x}(u)\sigma_{y}(u)\mathrm{d}u=o_{p}(1)$
with a similar result for the denominator, such that $P-\rho\lambda=o_{p}(1)$
where $|\lambda|\leq1$ is given in the theorem. For $K$ we sketch
the proof, using the same local-constancy approximation method as
above. We have
\begin{eqnarray*}
\mathrm{cov}(\Delta_{\delta}X_{i\delta}-\Delta_{\delta}X_{j\delta},\Delta_{\delta}Y_{i\delta}-\Delta_{\delta}Y_{j\delta}) & = & \mathrm{cov}(\sigma_{x,i}Z_{x,i}-\sigma_{x,j}Z_{x,j},\sigma_{y,i}Z_{y,i}-\sigma_{y,j}Z_{y,j})\\
 & = & \mathrm{cov}(\sigma_{x,i}Z_{x,i},\sigma_{y,i}Z_{y,i})+\mathrm{cor}(\sigma_{x,j}Z_{x,j},\sigma_{y,j}Z_{y,j})\\
 & = & (\sigma_{x,i}\sigma_{y,i}+\sigma_{x,j}\sigma_{y,j})\rho,
\end{eqnarray*}
such that 
\[
\mathbb{E}[\mathrm{sgn}([\Delta_{\delta}X_{i\delta}-\Delta_{\delta}X_{j\delta}][\Delta_{\delta}Y_{i\delta}-\Delta_{\delta}Y_{j\delta}])]=\tfrac{2}{\pi}\arcsin\left(\rho\frac{\sigma_{x,i}\sigma_{y,i}+\sigma_{x,j}\sigma_{y,j}}{\sqrt{\sigma_{x,i}^{2}+\sigma_{x,j}^{2}}\sqrt{\sigma_{y,i}^{2}+\sigma_{y,j}^{2}}}\right).
\]
Now define 
\[
h_{\delta}(u,v)=\frac{\sigma_{x,\lceil u/\delta\rceil}\sigma_{y,\lceil u/\delta\rceil}+\sigma_{x,\lceil v/\delta\rceil}\sigma_{y,\lceil v/\delta\rceil}}{\sqrt{\sigma_{x,\lceil u/\delta\rceil}^{2}+\sigma_{x,\lceil v/\delta\rceil}^{2}}\sqrt{\sigma_{y,\lceil u/\delta\rceil}^{2}+\sigma_{y,\lceil v/\delta\rceil}^{2}}},
\]
which converges uniformly to 
\[
h(u,v)=\frac{\sigma_{x}(u)\sigma_{y}(u)+\sigma_{x}(v)\sigma_{y}(v)}{\sqrt{\sigma_{x}^{2}(u)+\sigma_{x}^{2}(v)}\sqrt{\sigma_{y}^{2}(u)+\sigma_{y}^{2}(v)}},
\]
for $(u,v)\in(0,1]\times(0,1]$. The Kendall estimator 
\[
\hat{\tau}_{K}=\frac{1}{\lfloor1/\delta\rfloor^{2}}\sum_{i,j=1}^{\lfloor1/\delta\rfloor}\mathrm{sgn}([\Delta_{\delta}X_{i\delta}-\Delta_{\delta}X_{j\delta}][\Delta_{\delta}Y_{i\delta}-\Delta_{\delta}Y_{j\delta}])\overset{p}{\rightarrow}\tfrac{2}{\pi}\int_{0}^{1}\int_{0}^{1}\arcsin[\rho h(u,v)]\mathrm{d}u\mathrm{d}v.
\]

For the Spearman and Gaussian rank correlation estimators their inconsistency
follows from a simple example. First note that under constant volatility
these estimators will be consistent. Now suppose $\rho$ is large,
$\rho=0.8$ say, but the volatility process are time-varying such
that $(\sigma_{x}(u),\sigma_{y}(u))=(10,1)$ for $u<\frac{1}{2}$
and $(\sigma_{x}(u),\sigma_{y}(u))=(1,10)$ for $u\geq0.5$. Then
the extreme ranks for $\Delta X$ will be concentrated in the first
half of the data where as those $\Delta Y$ will be concentrated on
the second half of the observations. This implies a low rank correlation,
such that the estimators will be inconsistent and biased towards zero
in this example. 

\hfill{}$\square$

{\footnotesize{}\bibliographystyle{agsm}
\bibliography{prh}
}{\footnotesize\par}

\clearpage\setcounter{equation}{0}\renewcommand{\theequation}{S.\arabic{equation}}
\setcounter{prop}{0}\renewcommand{\theprop}{S.\arabic{prop}}
\setcounter{section}{0}\renewcommand{\thesection}{S.\arabic{section}}
\setcounter{figure}{0}\renewcommand{\thefigure}{S.\arabic{figure}}
\setcounter{page}{1}\renewcommand{\thepage}{S.\arabic{page}}

\part*{Supplementary Material}

\section{Supplementary Empirical Results}

We presented correlation signature plots for some selected pair of
assets in Figure \ref{fig:CorrSignaturePlots}. Here we present the
complete set of signature plots for all assets in the Small Universe,
see Figure \ref{fig:SignaturePlotsSmallUniverse}. 
\begin{figure}[H]
\begin{centering}
\subfloat[$\mathrm{corr}$(SPY,D), $\mathrm{corr}$(SPY,DUK), $\mathrm{corr}$(D,DUK)]{\begin{centering}
\includegraphics[width=0.16\textwidth]{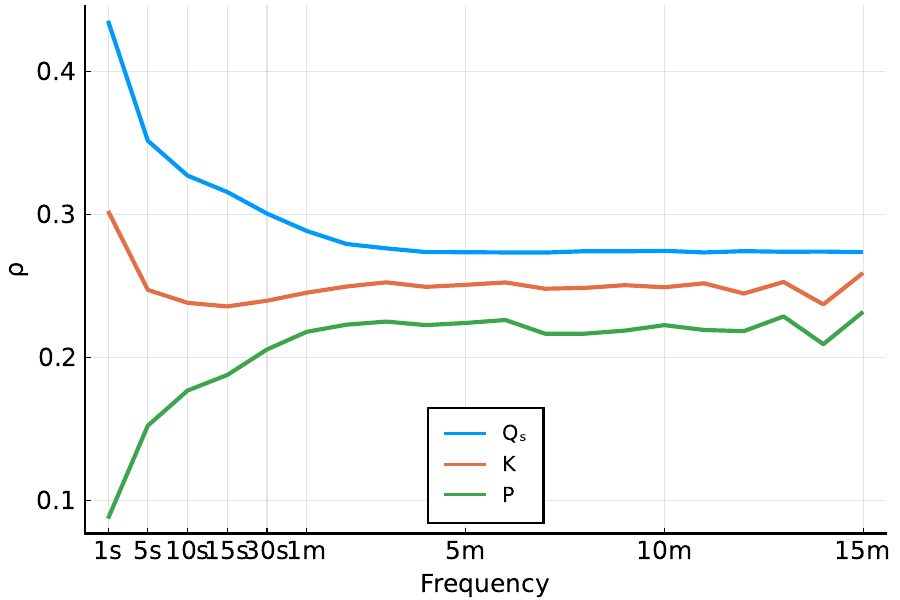}\includegraphics[width=0.16\textwidth]{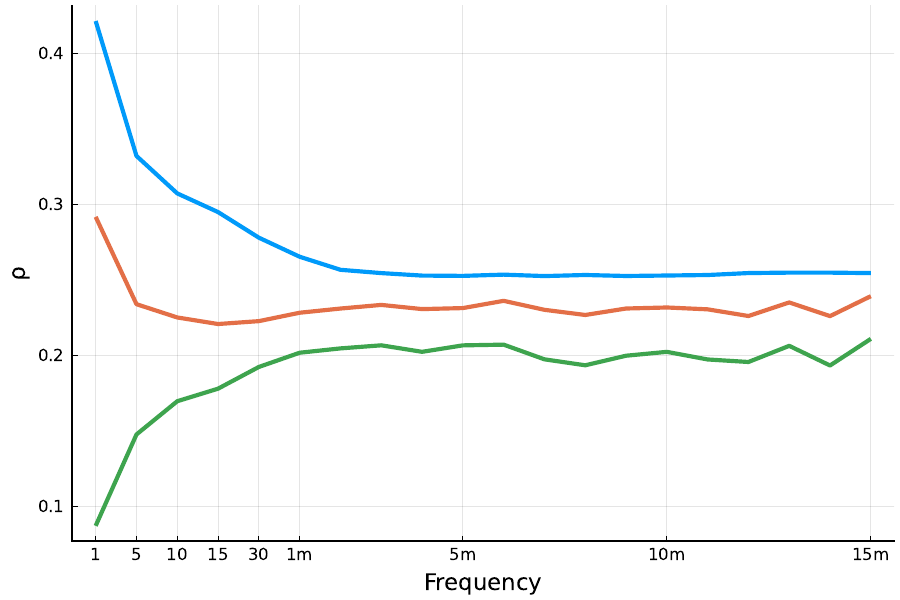}\includegraphics[width=0.16\textwidth]{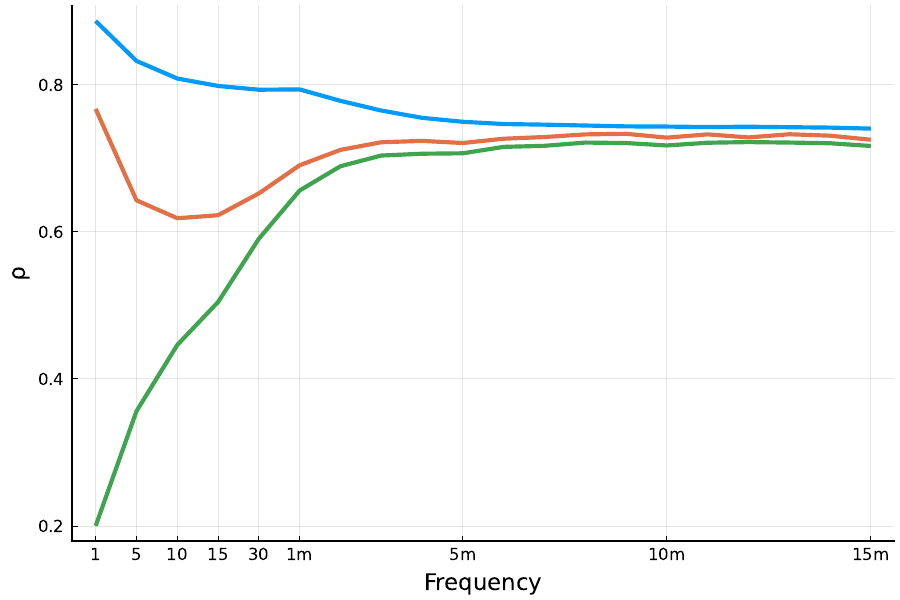}
\par\end{centering}
}\subfloat[$\mathrm{corr}$(SPY,AMT), $\mathrm{corr}$(SPY,PLD), $\mathrm{corr}$(AMT,PLD)]{\begin{centering}
\includegraphics[width=0.16\textwidth]{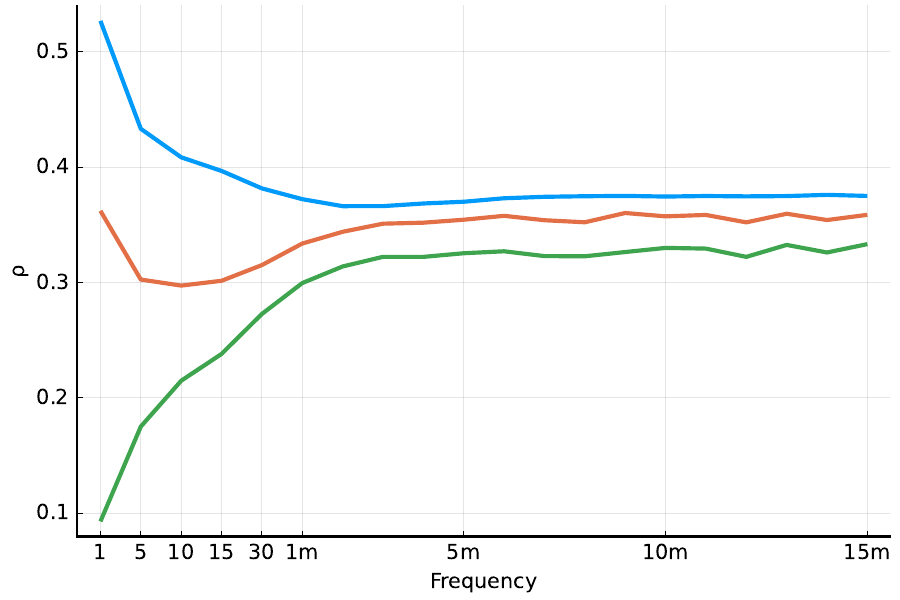}\includegraphics[width=0.16\textwidth]{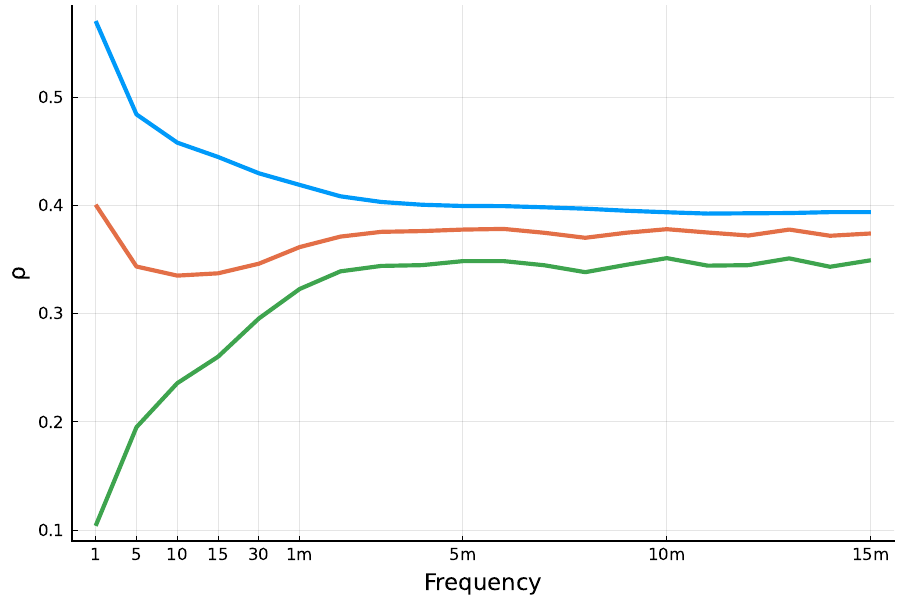}\includegraphics[width=0.16\textwidth]{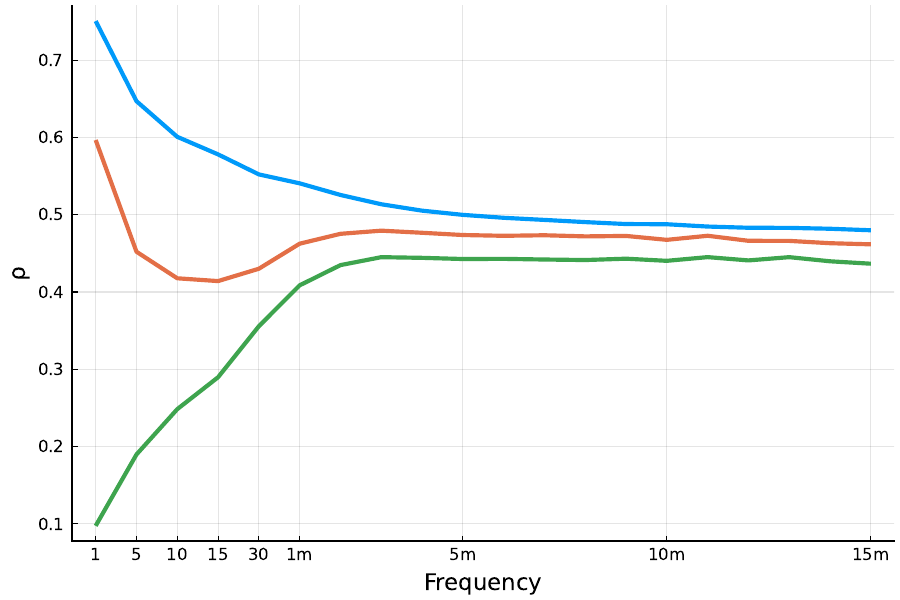}
\par\end{centering}
}
\par\end{centering}
\begin{centering}
\subfloat[$\mathrm{corr}$(SPY,LYB), $\mathrm{corr}$(SPY,NEM), $\mathrm{corr}$(LYB,NEM)]{\begin{centering}
\includegraphics[width=0.16\textwidth]{figures/Empirical/SPY_LYB_cts}\includegraphics[width=0.16\textwidth]{figures/Empirical/SPY_NEM_cts}\includegraphics[width=0.16\textwidth]{figures/Empirical/LYB_NEM_cts}
\par\end{centering}
}\subfloat[$\mathrm{corr}$(SPY,AAPL), $\mathrm{corr}$(SPY,AMD), $\mathrm{corr}$(AAPL,AMD)]{\begin{centering}
\includegraphics[width=0.16\textwidth]{figures/Empirical/SPY_AAPL_cts}\includegraphics[width=0.16\textwidth]{figures/Empirical/SPY_AMD_cts}\includegraphics[width=0.16\textwidth]{figures/Empirical/AAPL_AMD_cts}
\par\end{centering}
}
\par\end{centering}
\begin{centering}
\subfloat[$\mathrm{corr}$(SPY,AAL), $\mathrm{corr}$(SPY,UNP), $\mathrm{corr}$(AAL,UNP)]{\begin{centering}
\includegraphics[width=0.16\textwidth]{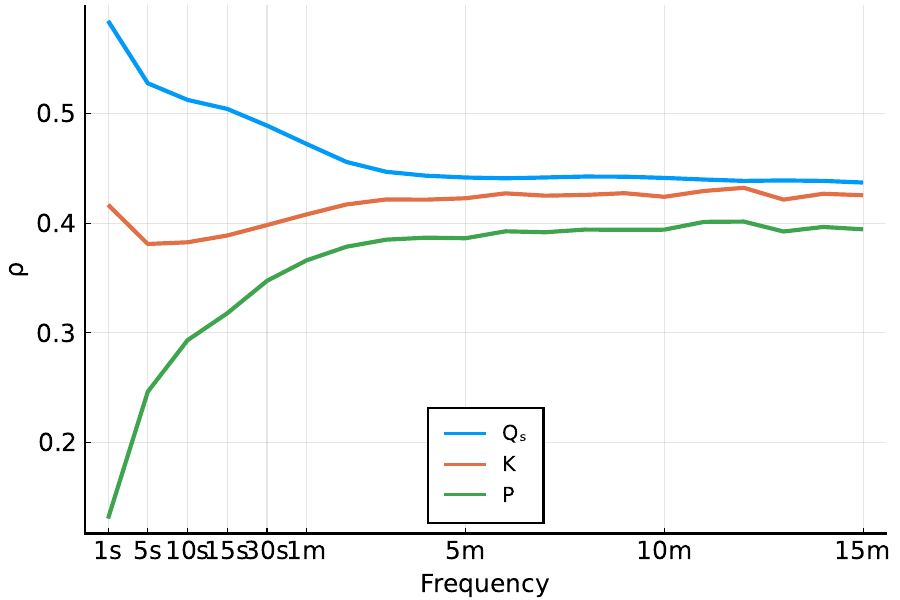}\includegraphics[width=0.16\textwidth]{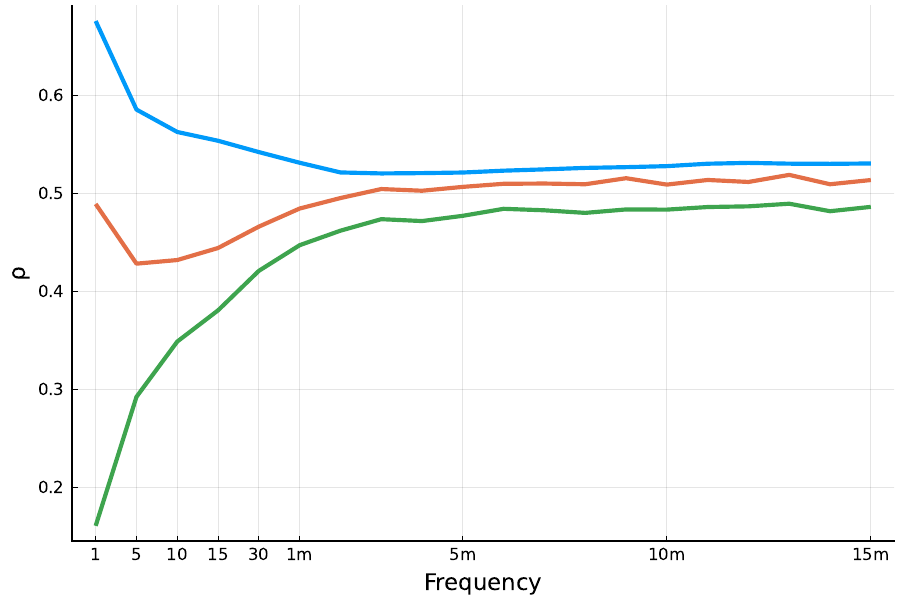}\includegraphics[width=0.16\textwidth]{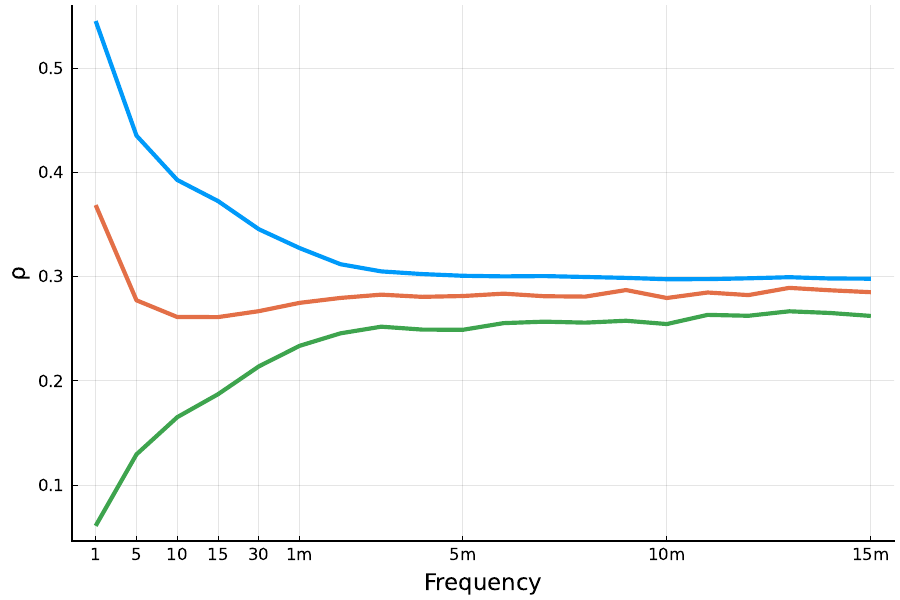}
\par\end{centering}
}\subfloat[$\mathrm{corr}$(SPY,JNJ), $\mathrm{corr}$(SPY,MRK), $\mathrm{corr}$(JNJ,MRK)]{\begin{centering}
\includegraphics[width=0.16\textwidth]{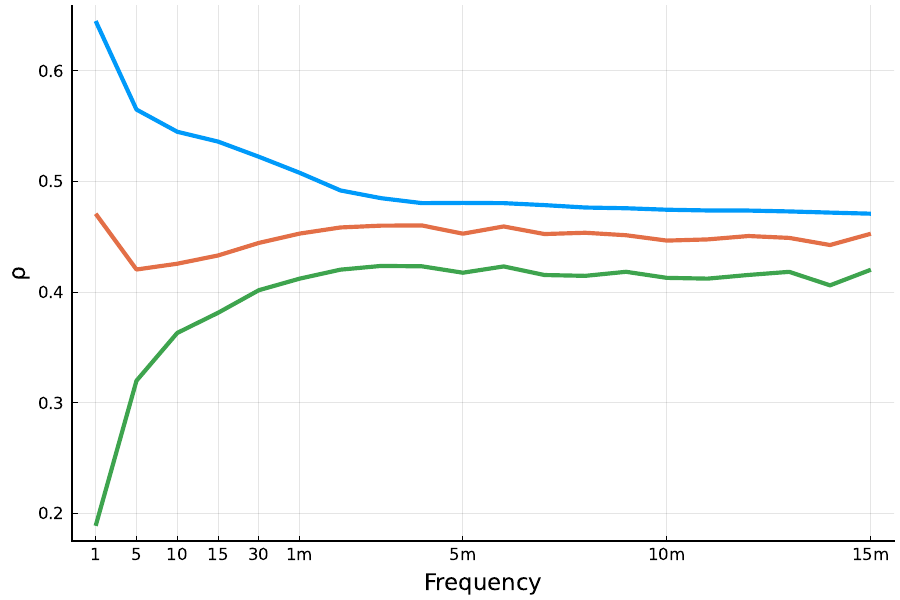}\includegraphics[width=0.16\textwidth]{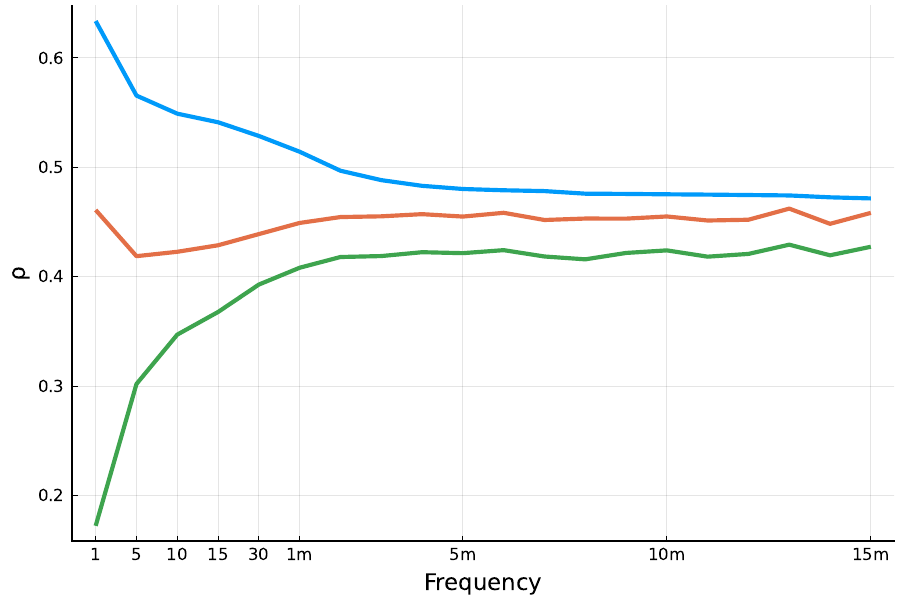}\includegraphics[width=0.16\textwidth]{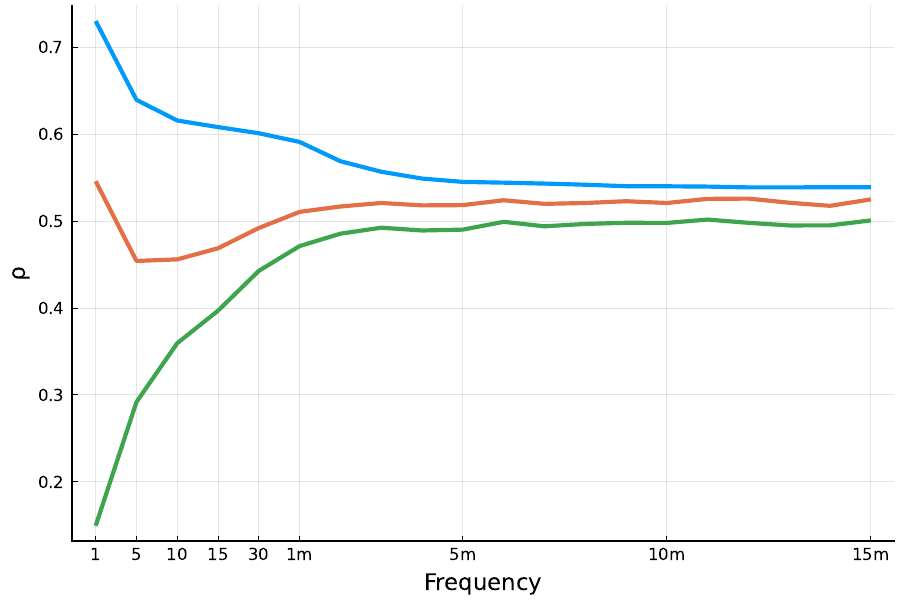}
\par\end{centering}
}
\par\end{centering}
\begin{centering}
\subfloat[$\mathrm{corr}$(SPY,JPM), $\mathrm{corr}$(SPY,WFC), $\mathrm{corr}$(JPM,WFC)]{\begin{centering}
\includegraphics[width=0.16\textwidth]{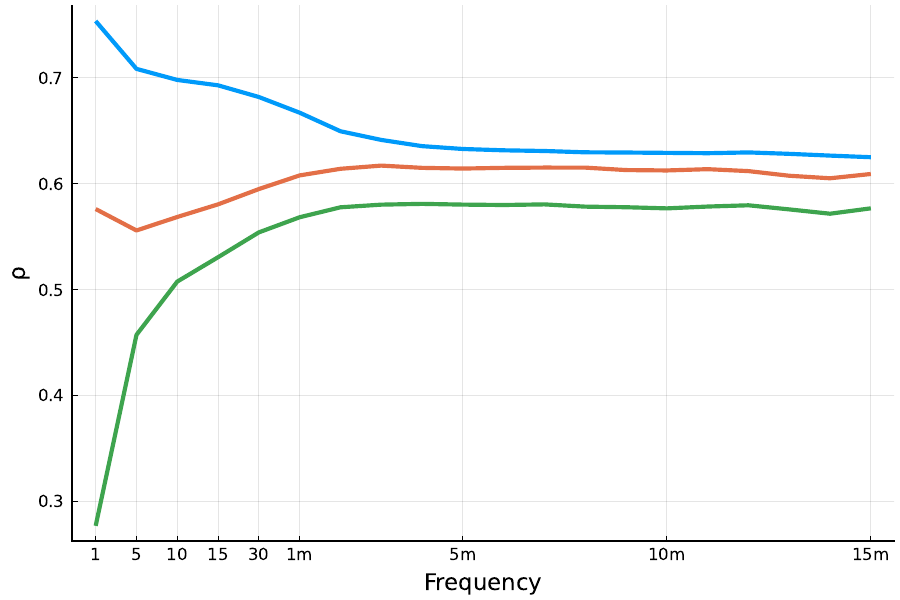}\includegraphics[width=0.16\textwidth]{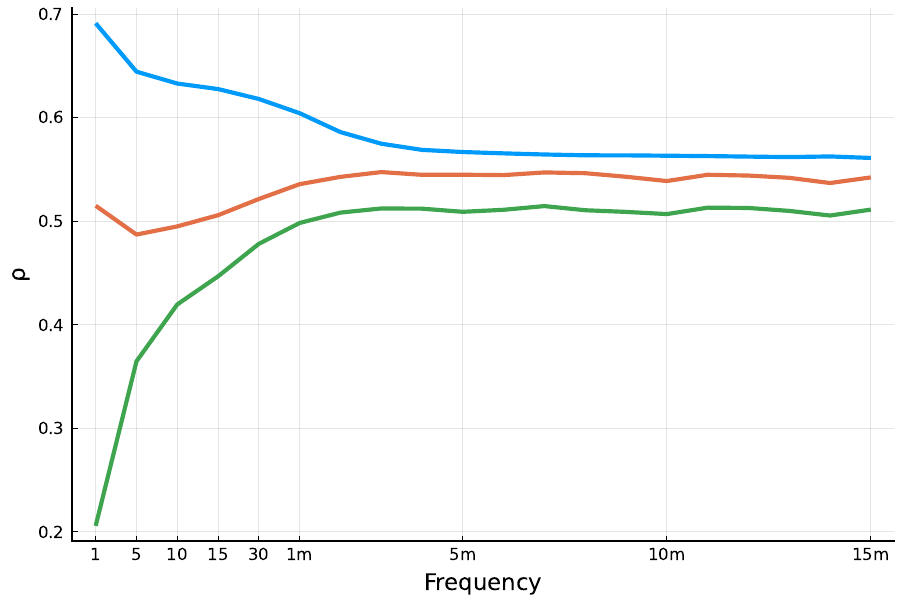}\includegraphics[width=0.16\textwidth]{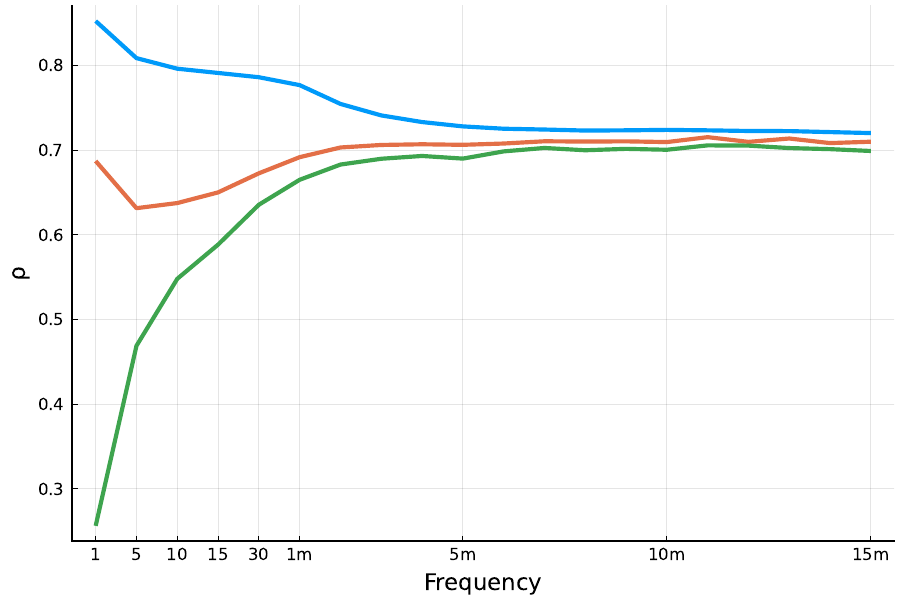}
\par\end{centering}
}\subfloat[$\mathrm{corr}$(SPY,HAL), $\mathrm{corr}$(SPY,XOM), $\mathrm{corr}$(HAL,XOM)]{\begin{centering}
\includegraphics[width=0.16\textwidth]{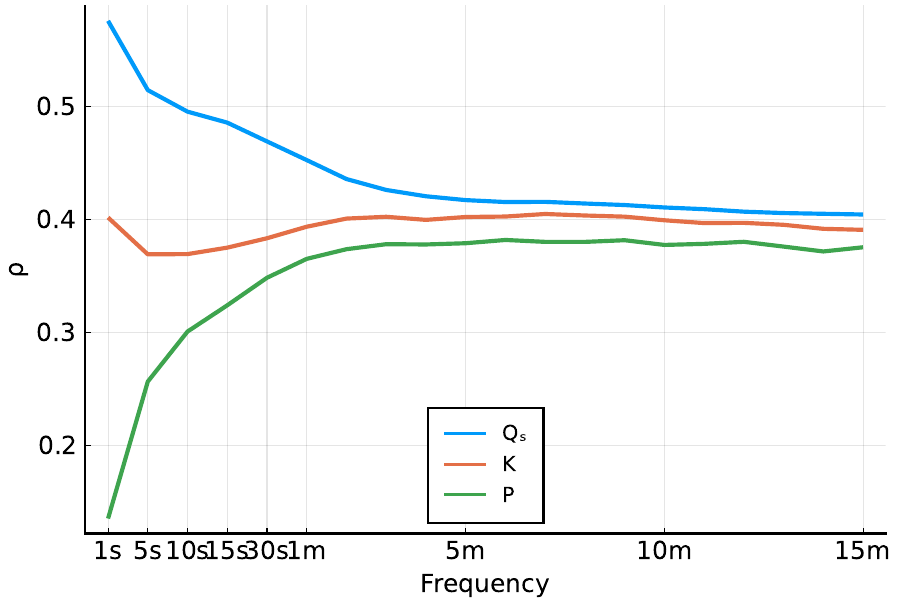}\includegraphics[width=0.16\textwidth]{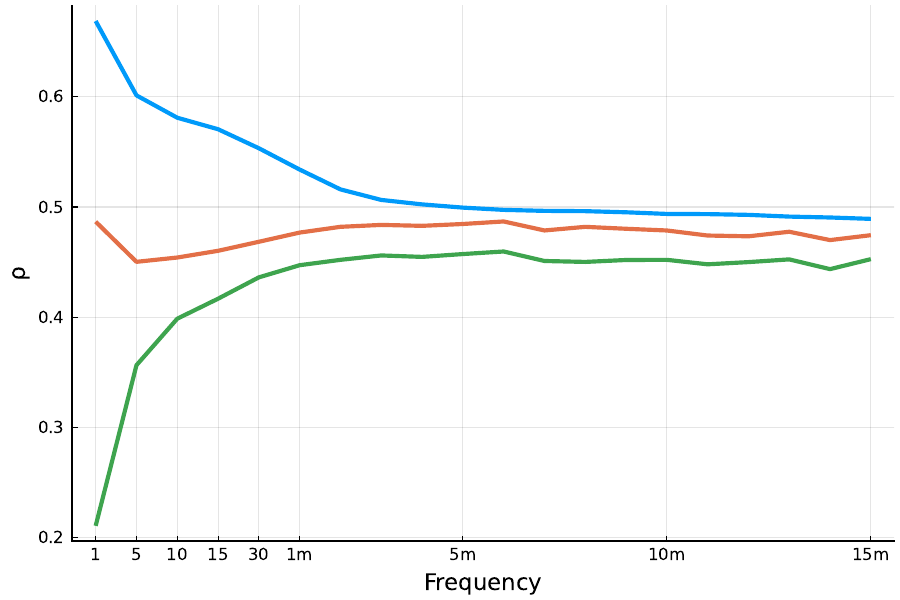}\includegraphics[width=0.16\textwidth]{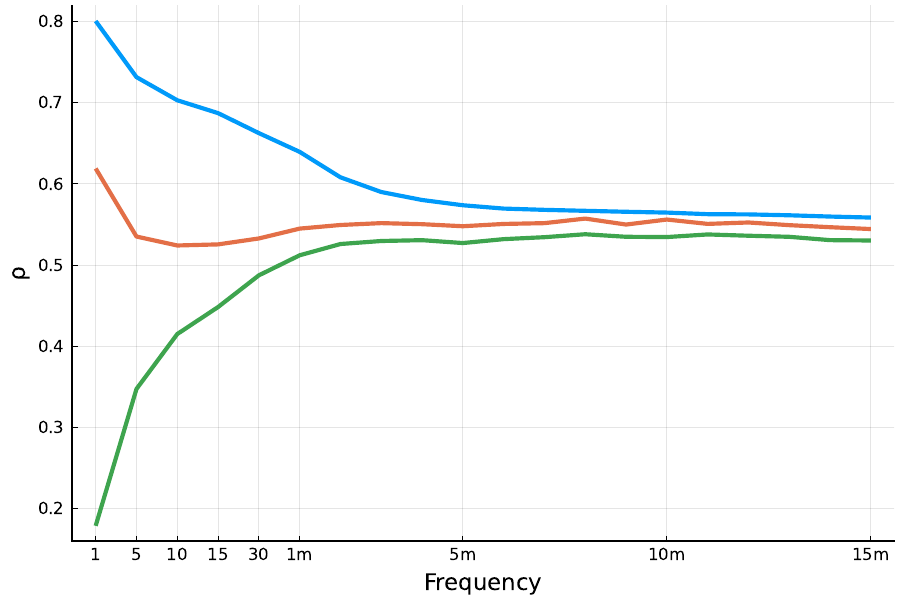}
\par\end{centering}
}
\par\end{centering}
\begin{centering}
\subfloat[$\mathrm{corr}$(SPY,PG), $\mathrm{corr}$(SPY,WMT), $\mathrm{corr}$(PG,WMT)]{\begin{centering}
\includegraphics[width=0.16\textwidth]{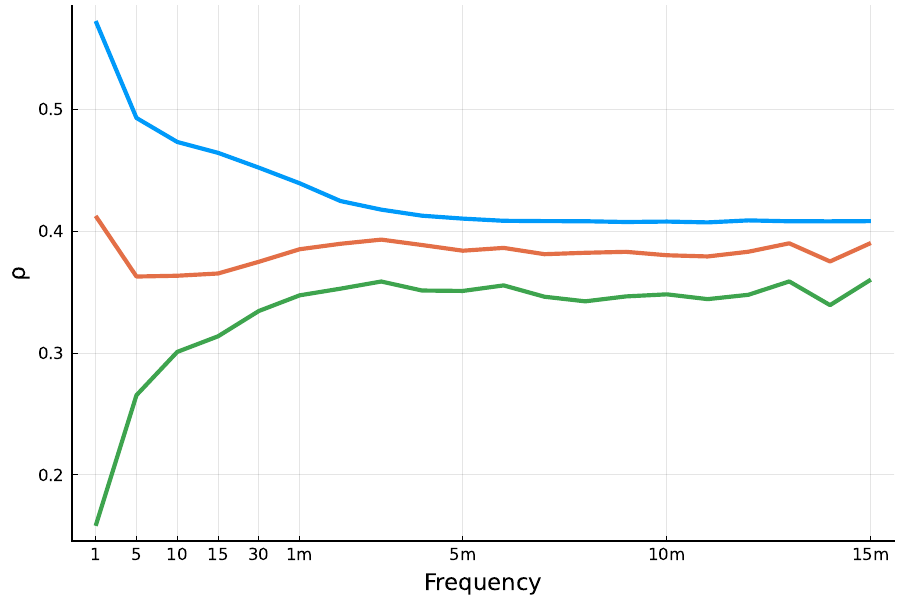}\includegraphics[width=0.16\textwidth]{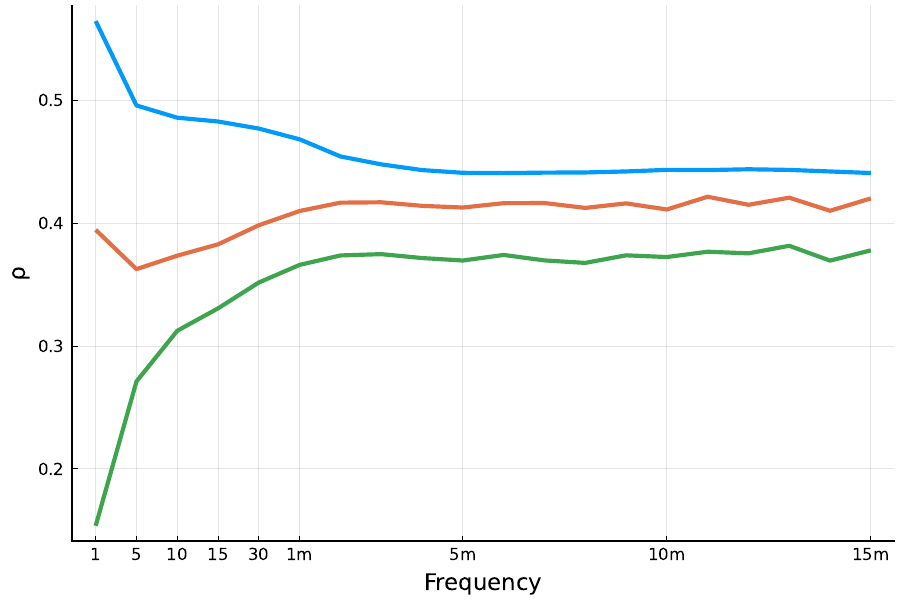}\includegraphics[width=0.16\textwidth]{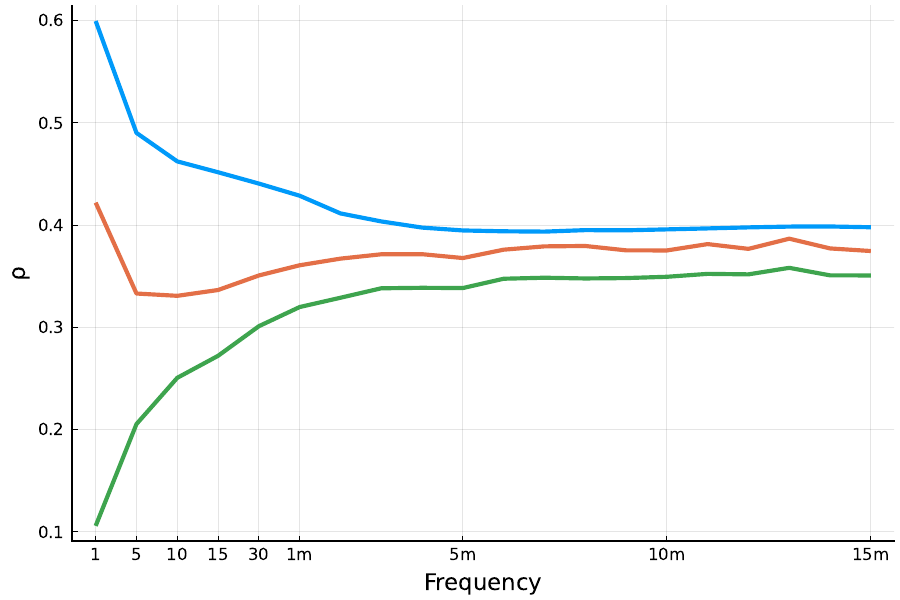}
\par\end{centering}
}\subfloat[$\mathrm{corr}$(SPY,TSLA), $\mathrm{corr}$(SPY,AMZN), $\mathrm{corr}$(TSLA,AMZN)]{\begin{centering}
\includegraphics[width=0.16\textwidth]{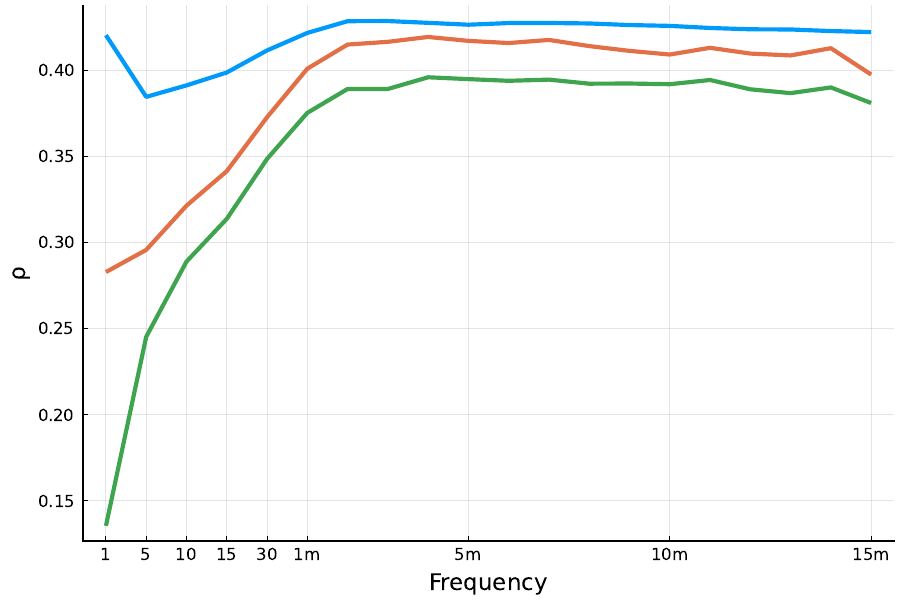}\includegraphics[width=0.16\textwidth]{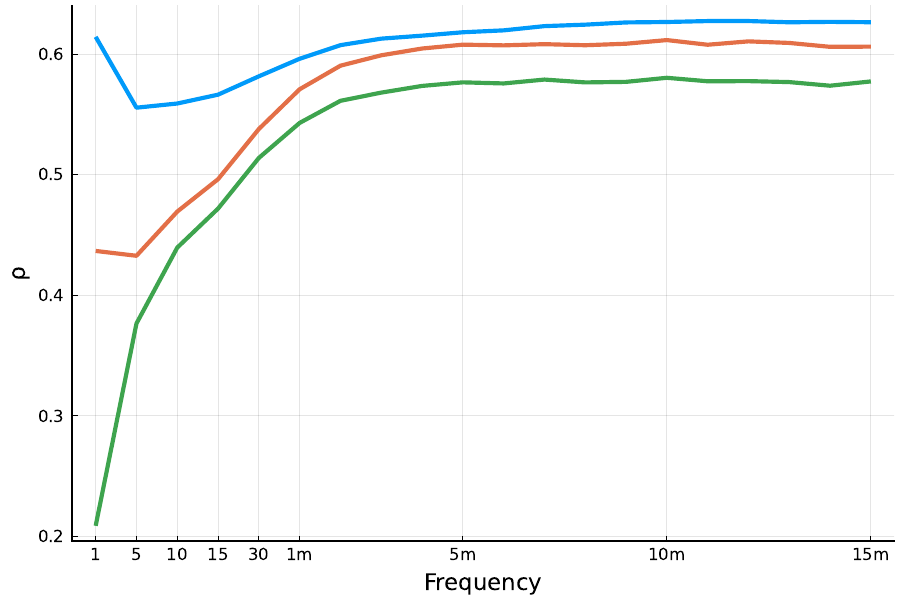}\includegraphics[width=0.16\textwidth]{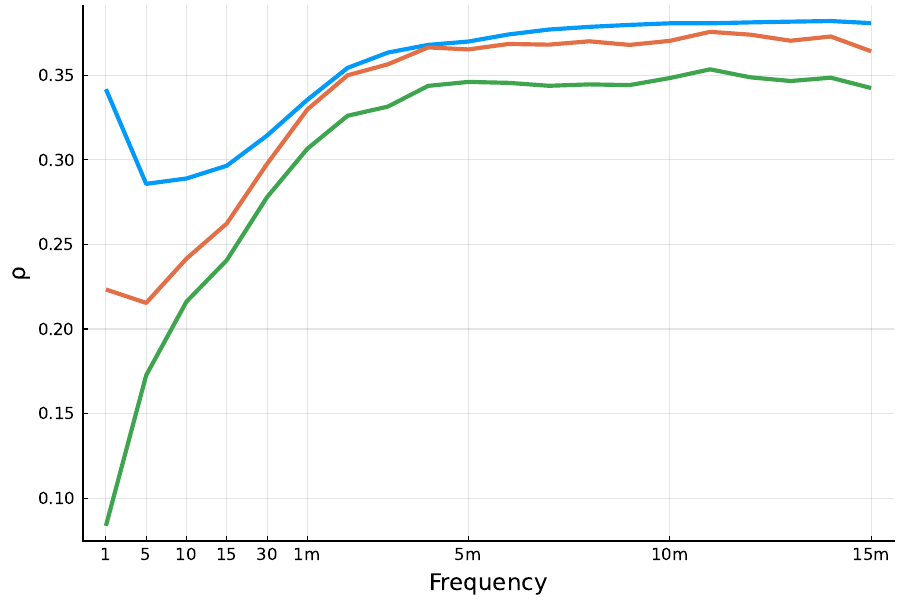}
\par\end{centering}
}
\par\end{centering}
\begin{centering}
\subfloat[$\mathrm{corr}$(SPY,DIS), $\mathrm{corr}$(SPY,FB), $\mathrm{corr}$(DIS,FB)]{\begin{centering}
\includegraphics[width=0.16\textwidth]{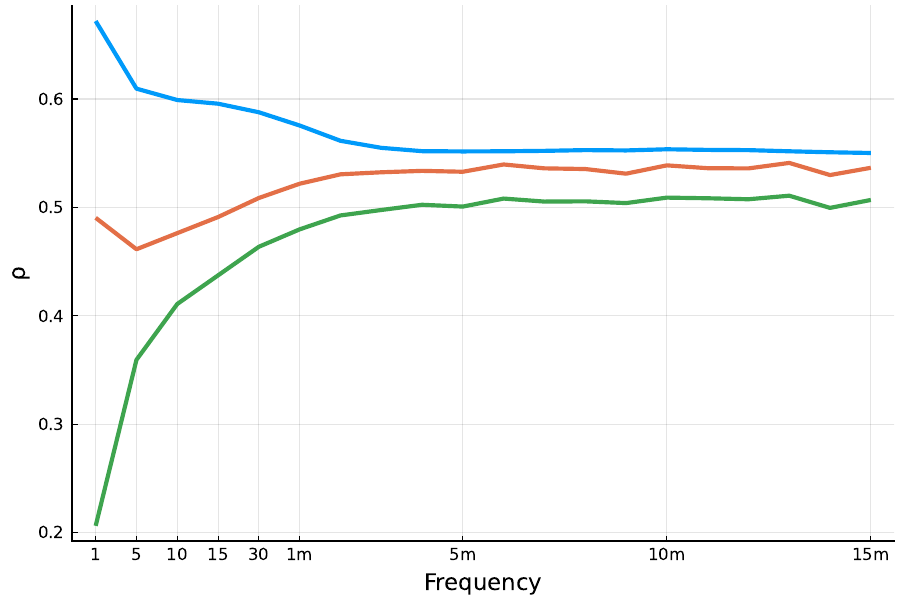}\includegraphics[width=0.16\textwidth]{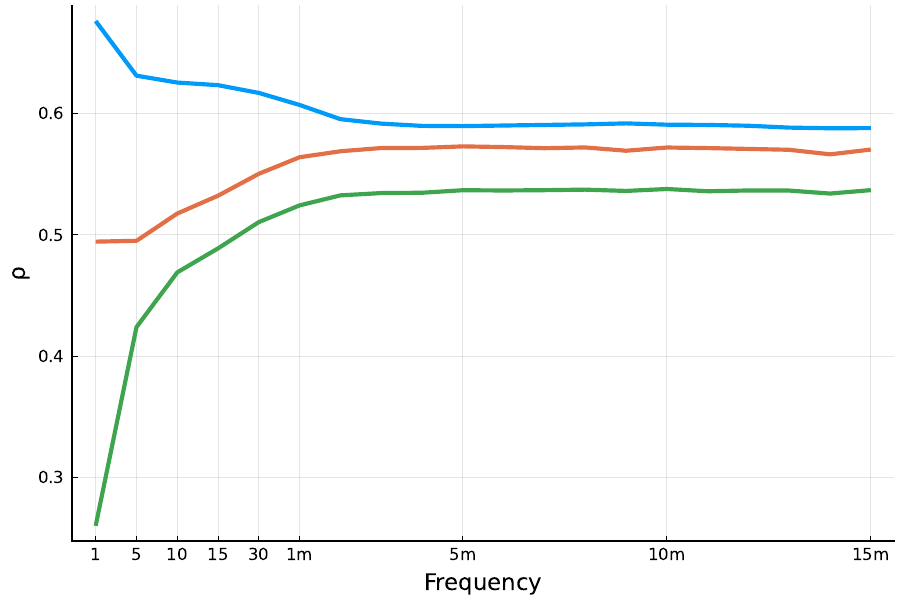}\includegraphics[width=0.16\textwidth]{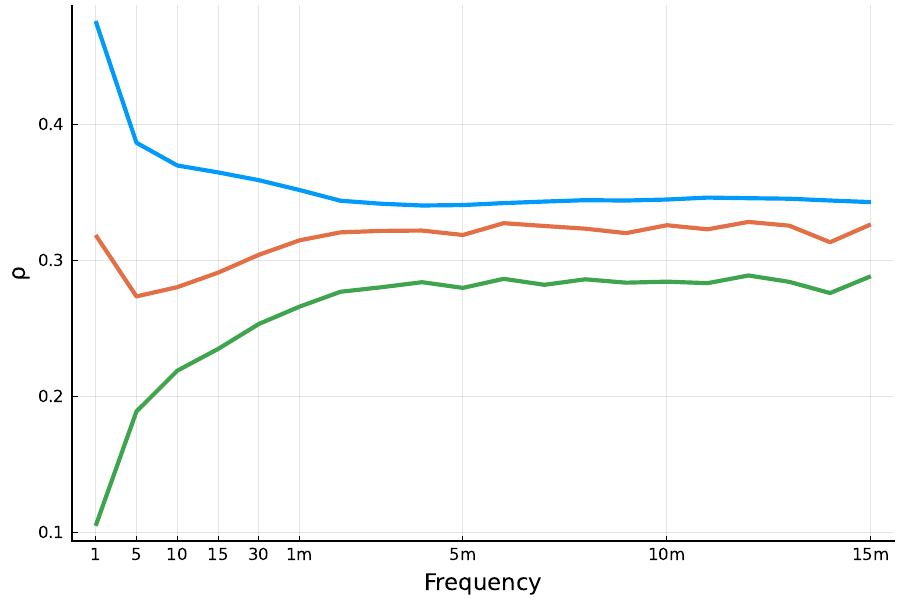}
\par\end{centering}
}
\par\end{centering}
\caption{Correlations signature plots for assets in Small Universe.\label{fig:SignaturePlotsSmallUniverse}}
\end{figure}

In Figure \ref{fig:Daily-Correlations-within-Sector-pair} we present
the range of correlation estimates for the pairs of assets within
the same section. This figure is analogous to Figure \ref{fig:Daily-Correlations-with-SPY},
where we reported correlations between each asset and SPY. 
\begin{figure}[H]
\begin{centering}
\includegraphics[width=0.6\textwidth]{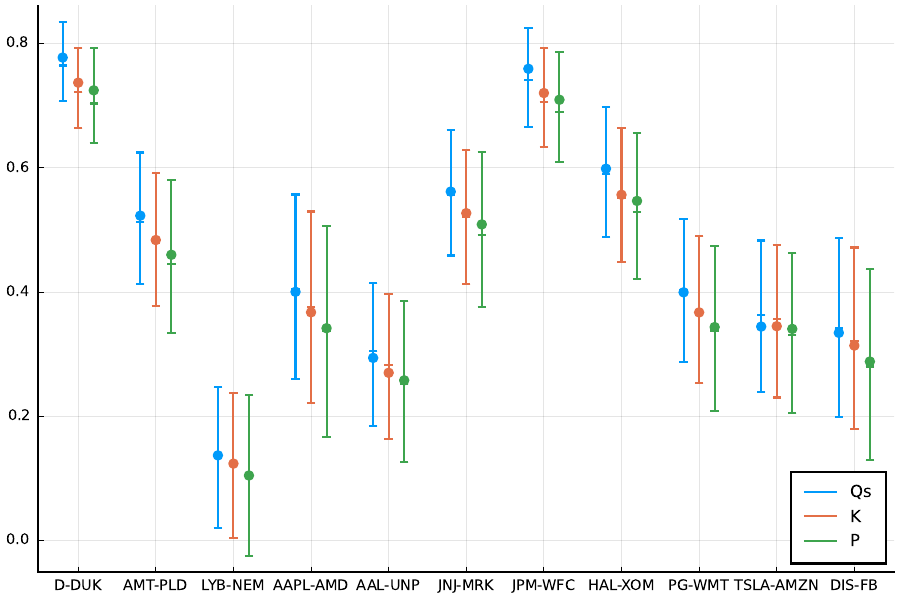}
\par\end{centering}
\caption{Daily correlations between pair of assets in each of the 11 sectors.
The three estimators were applied to 1,763 trading days. The average
estimate (dash) and median estimate (bullet) are shown. The vertical
lines present the interquartile range over the 1,763 daily estimates.\label{fig:Daily-Correlations-within-Sector-pair}}
\end{figure}

Rolling window estimates of correlations, relative volatilities, and
market betas are presented for all assets in the Small Universe in
Figure \ref{fig:RhoLambdaBetasSmallUniverse}.
\begin{figure}[H]
\begin{centering}
\subfloat[D]{\begin{centering}
\includegraphics[width=0.1\textwidth,height=0.06\textheight]{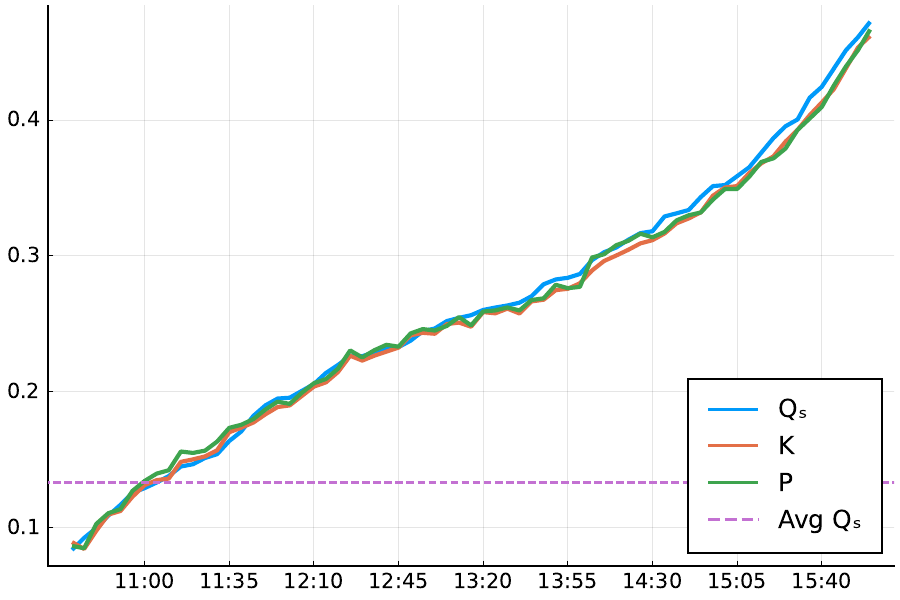}\includegraphics[width=0.1\textwidth,height=0.06\textheight]{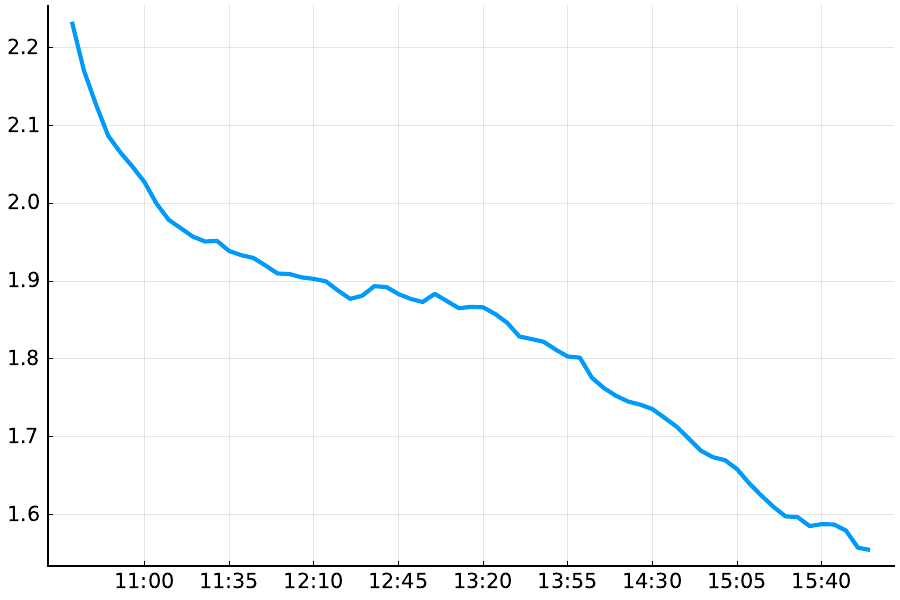}\includegraphics[width=0.1\textwidth,height=0.06\textheight]{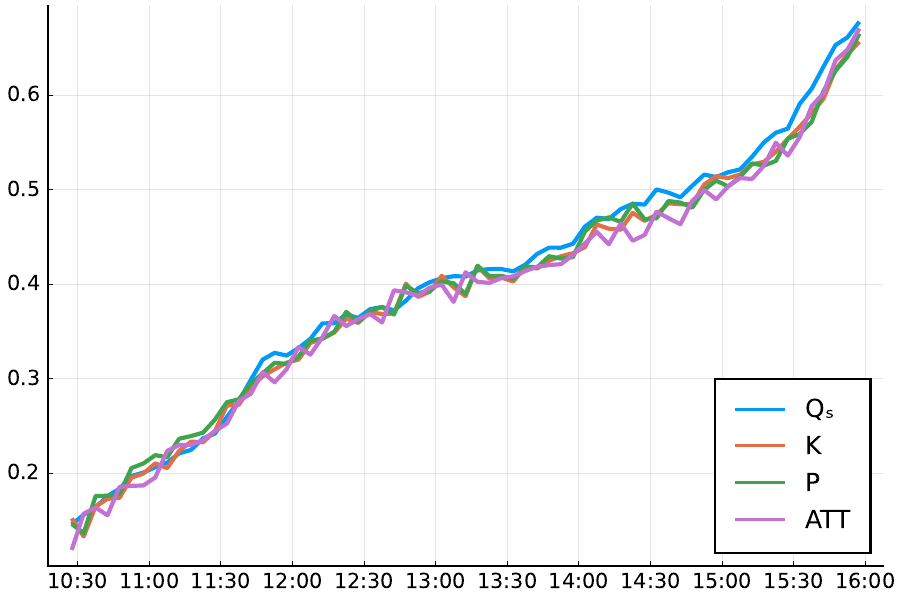}
\par\end{centering}
}\subfloat[DUK]{\begin{centering}
\includegraphics[width=0.1\textwidth,height=0.06\textheight]{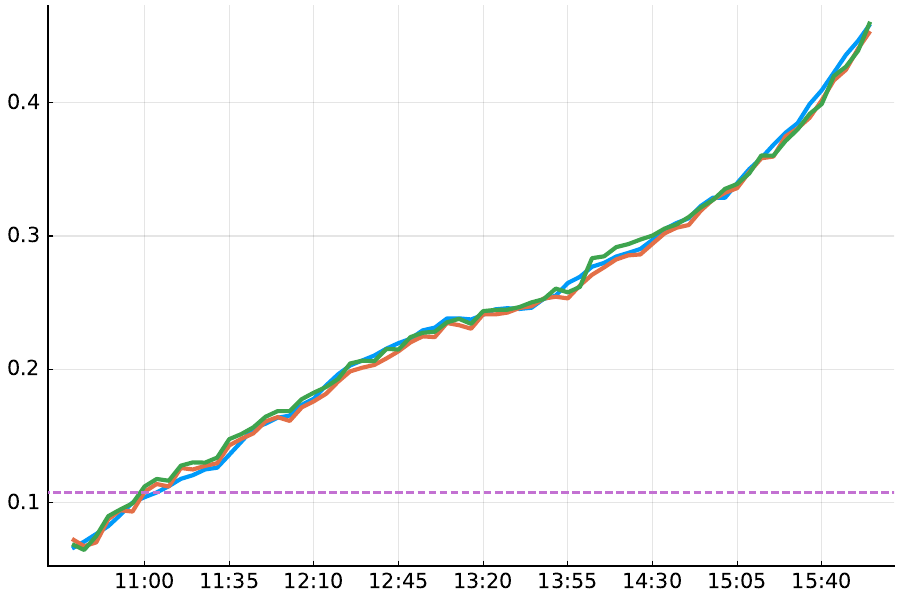}\includegraphics[width=0.1\textwidth,height=0.06\textheight]{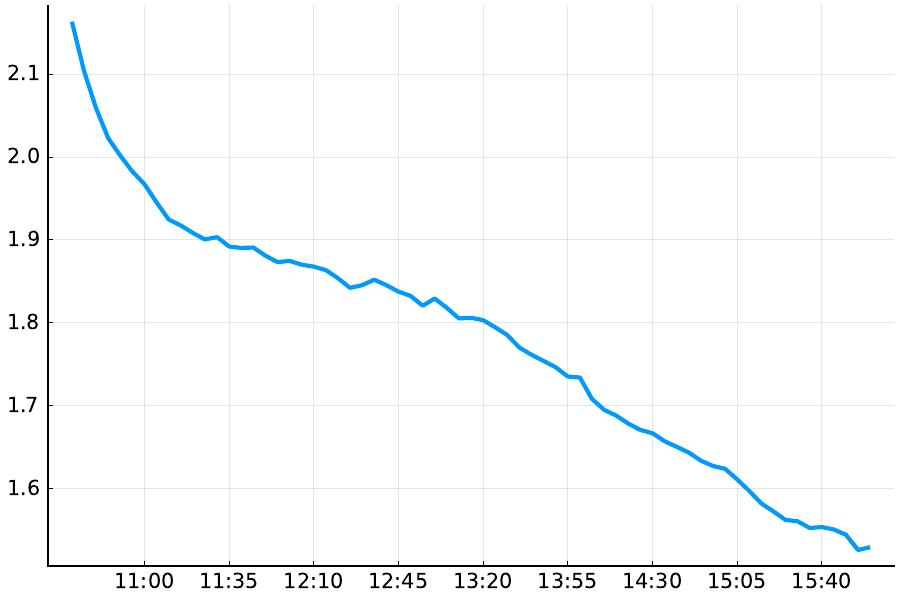}\includegraphics[width=0.1\textwidth,height=0.06\textheight]{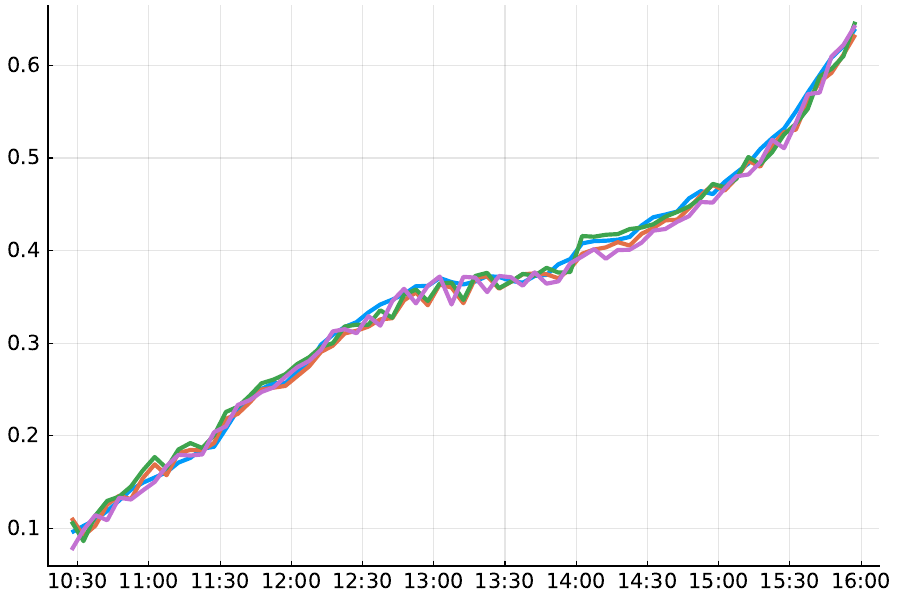}
\par\end{centering}
}\subfloat[AMT]{\begin{centering}
\includegraphics[width=0.1\textwidth,height=0.06\textheight]{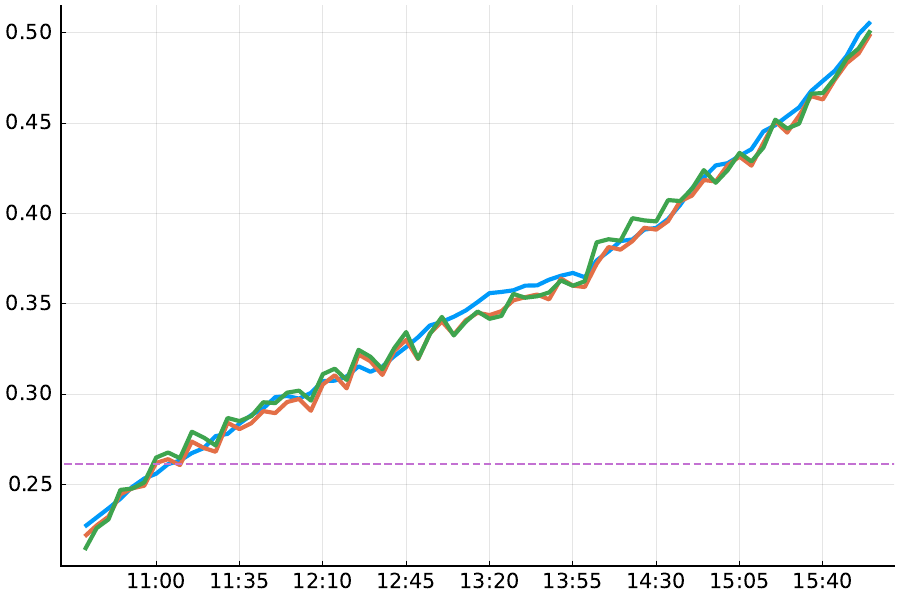}\includegraphics[width=0.1\textwidth,height=0.06\textheight]{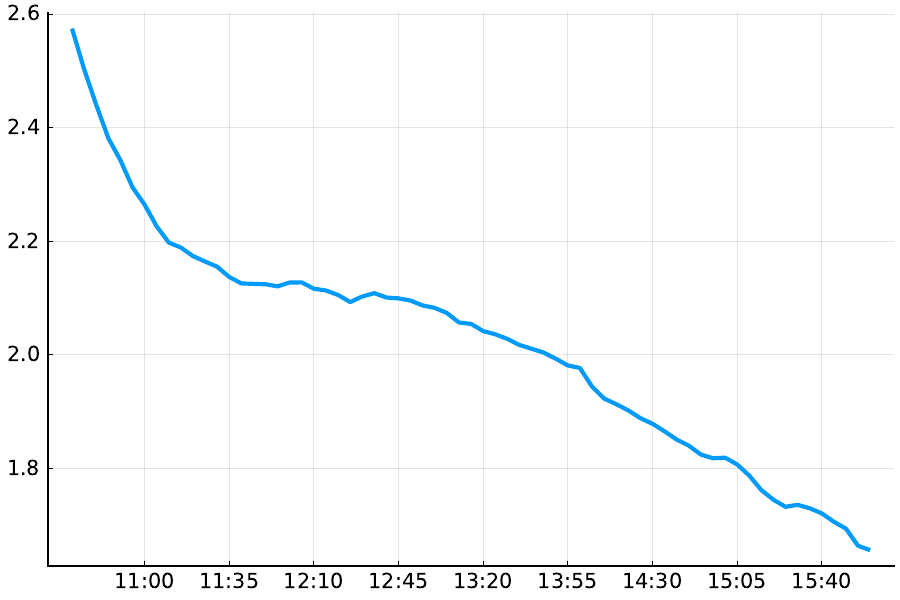}\includegraphics[width=0.1\textwidth,height=0.06\textheight]{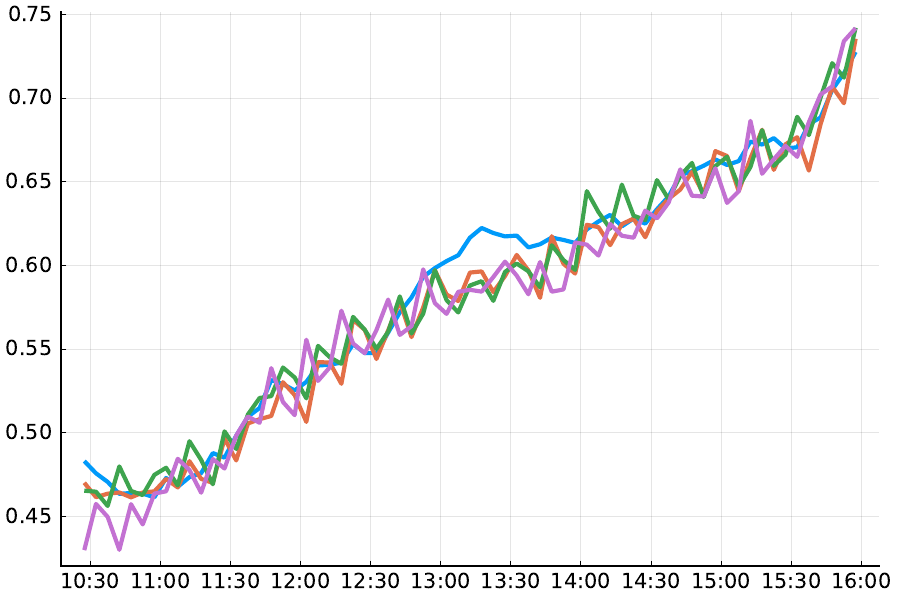}
\par\end{centering}
}
\par\end{centering}
\begin{centering}
\subfloat[PLD]{\begin{centering}
\includegraphics[width=0.1\textwidth,height=0.06\textheight]{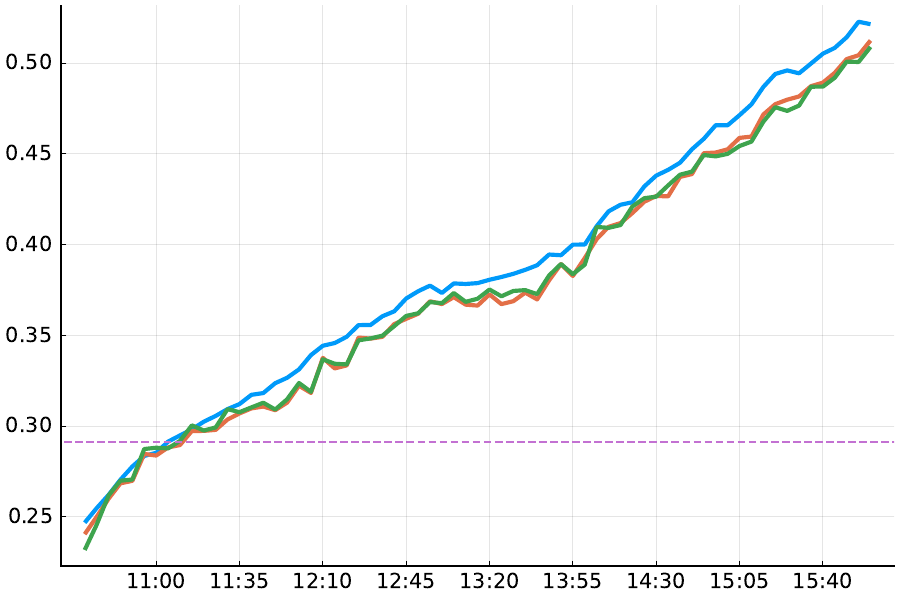}\includegraphics[width=0.1\textwidth,height=0.06\textheight]{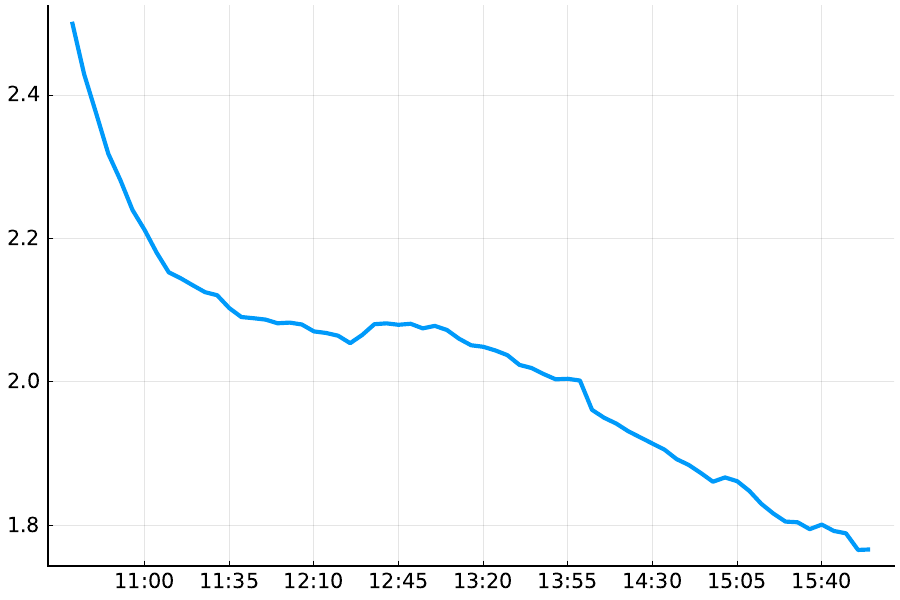}\includegraphics[width=0.1\textwidth,height=0.06\textheight]{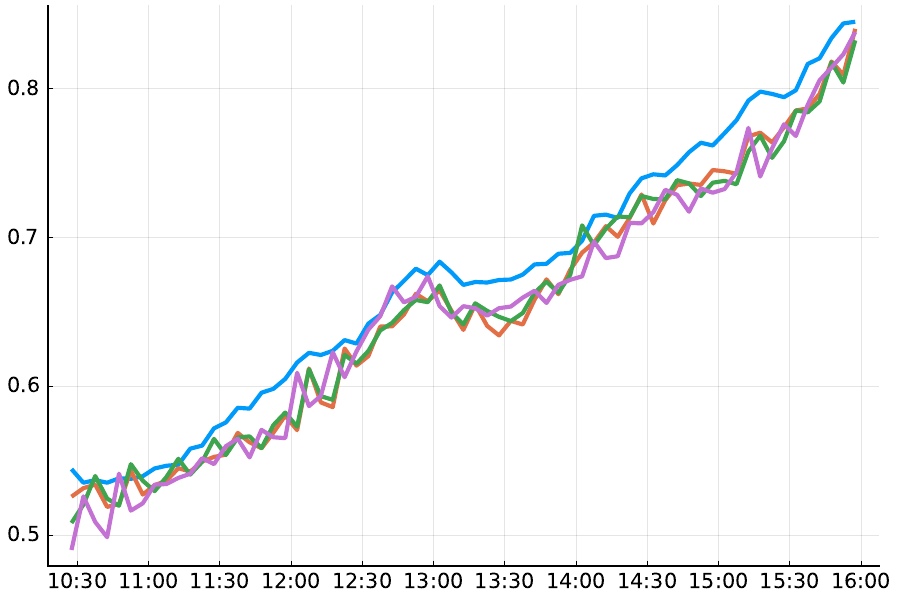}
\par\end{centering}
}\subfloat[AAL]{\begin{centering}
\includegraphics[width=0.1\textwidth,height=0.06\textheight]{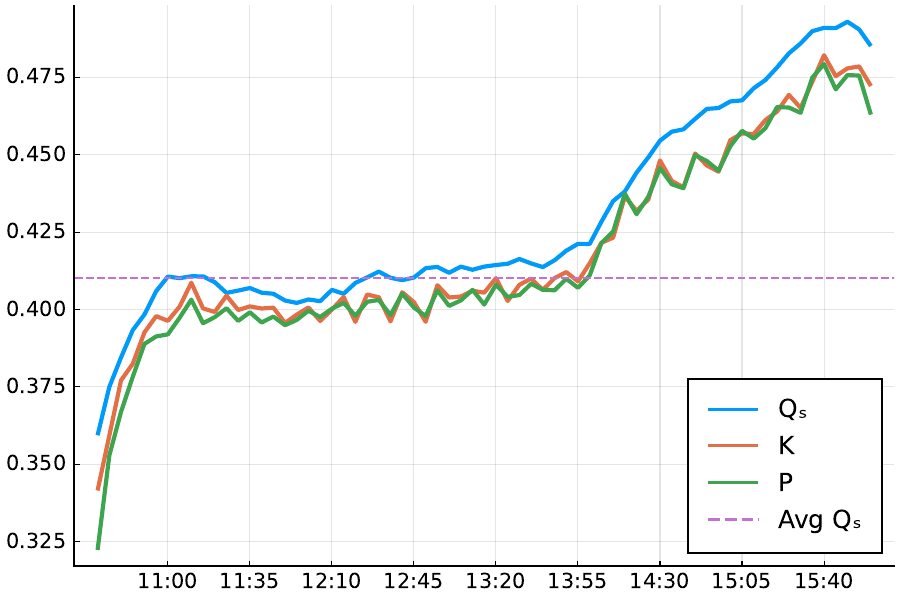}\includegraphics[width=0.1\textwidth,height=0.06\textheight]{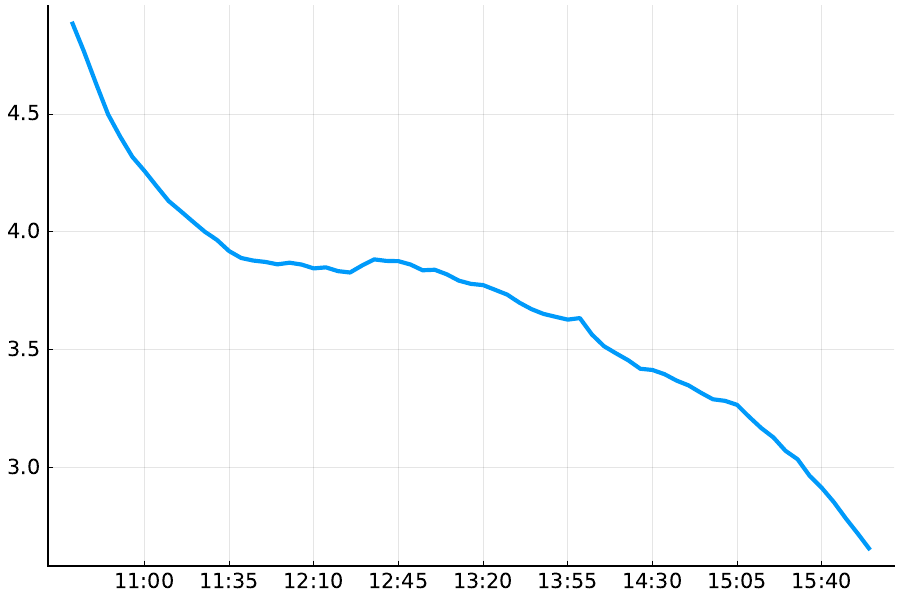}\includegraphics[width=0.1\textwidth,height=0.06\textheight]{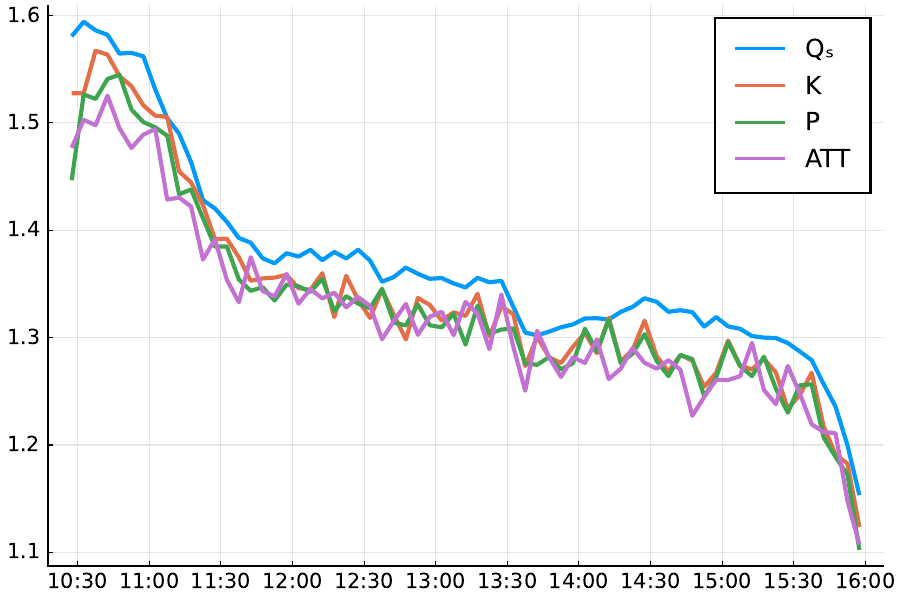}
\par\end{centering}
}\subfloat[UNP]{\begin{centering}
\includegraphics[width=0.1\textwidth,height=0.06\textheight]{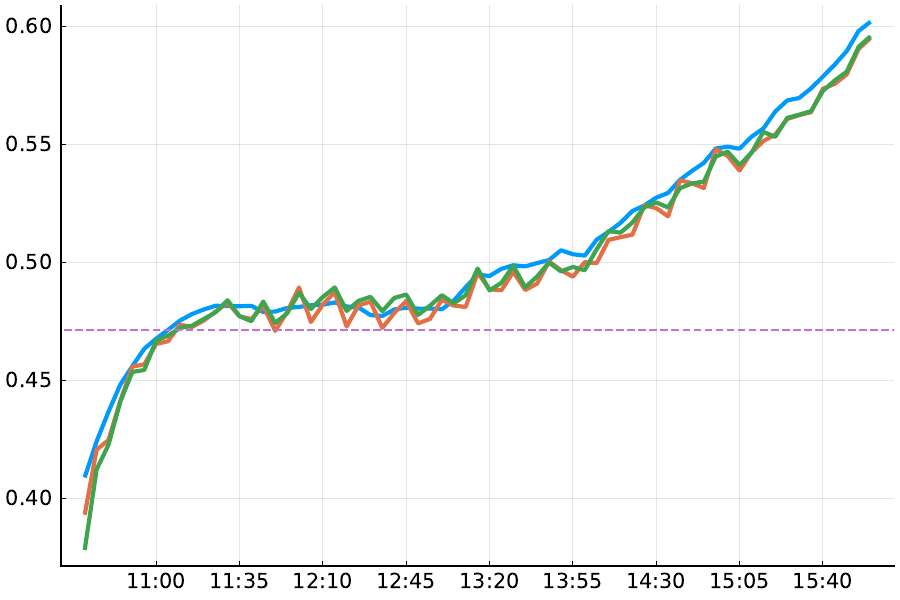}\includegraphics[width=0.1\textwidth,height=0.06\textheight]{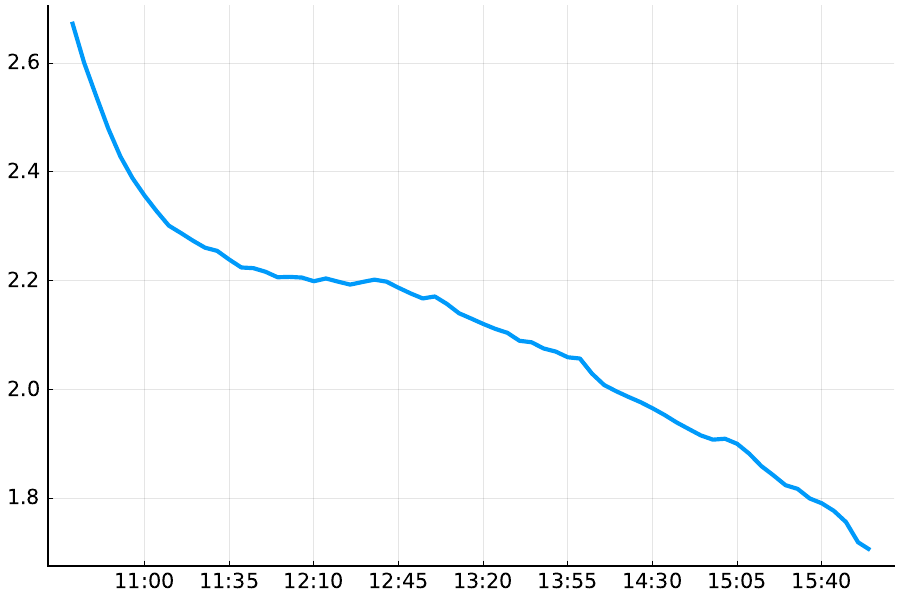}\includegraphics[width=0.1\textwidth,height=0.06\textheight]{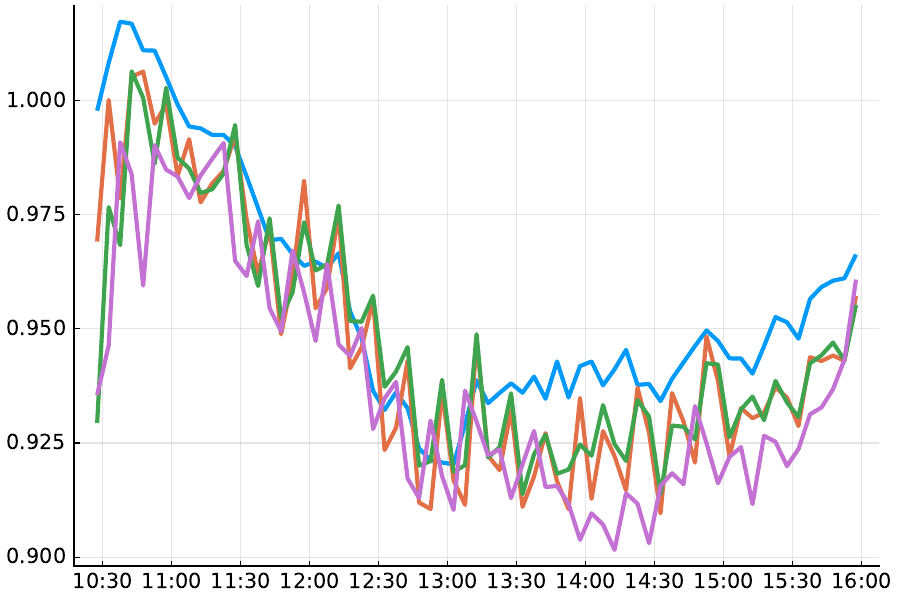}
\par\end{centering}
}
\par\end{centering}
\begin{centering}
\subfloat[JNJ]{\begin{centering}
\includegraphics[width=0.1\textwidth,height=0.06\textheight]{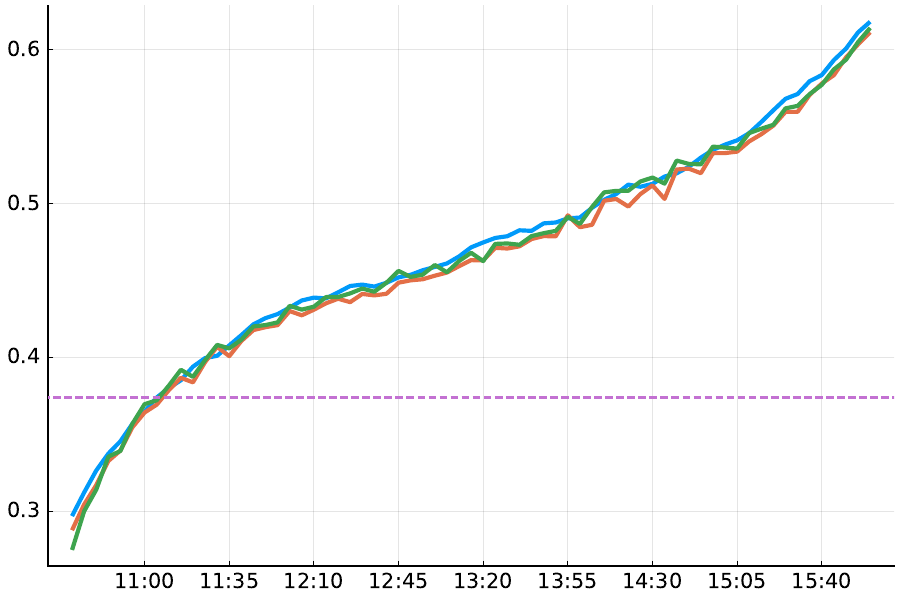}\includegraphics[width=0.1\textwidth,height=0.06\textheight]{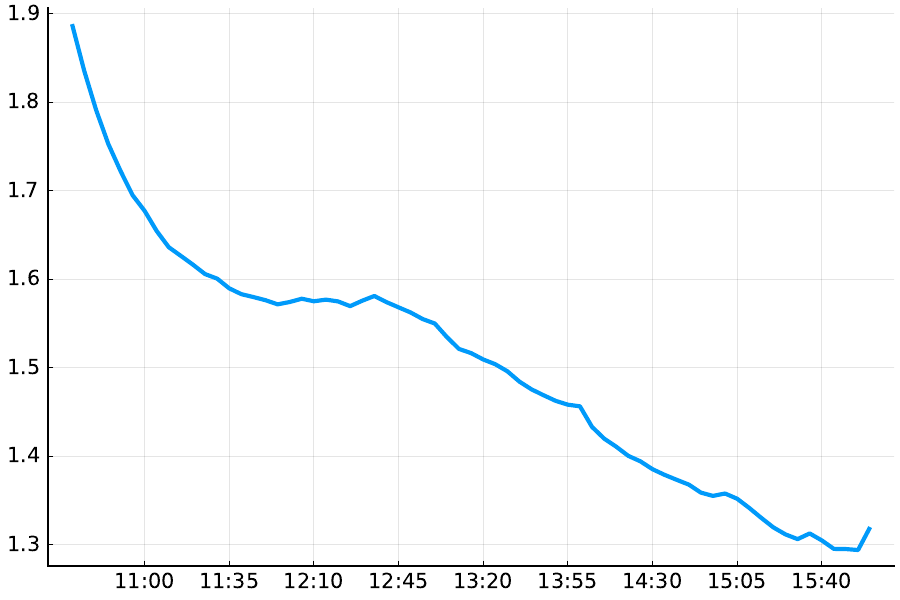}\includegraphics[width=0.1\textwidth,height=0.06\textheight]{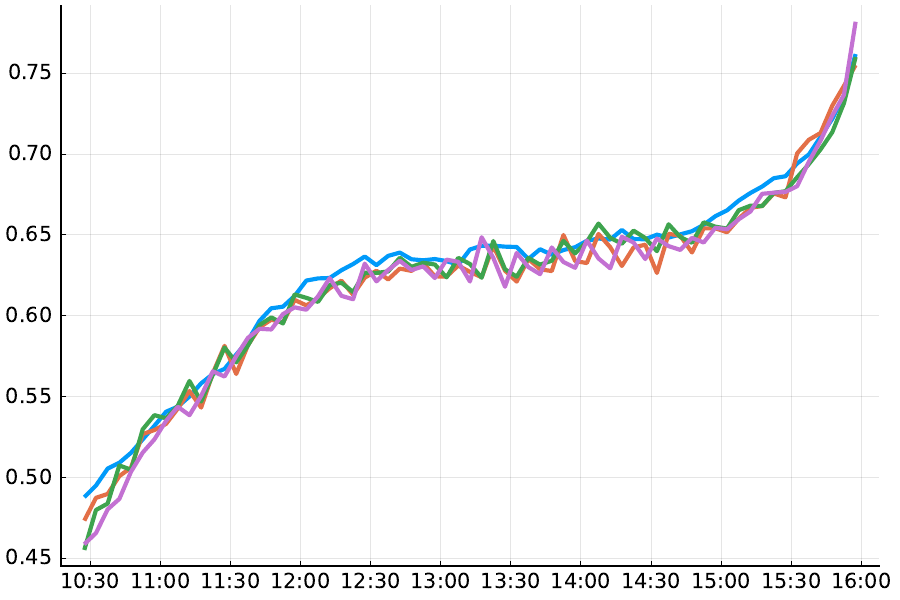}
\par\end{centering}
}\subfloat[MRK]{\begin{centering}
\includegraphics[width=0.1\textwidth,height=0.06\textheight]{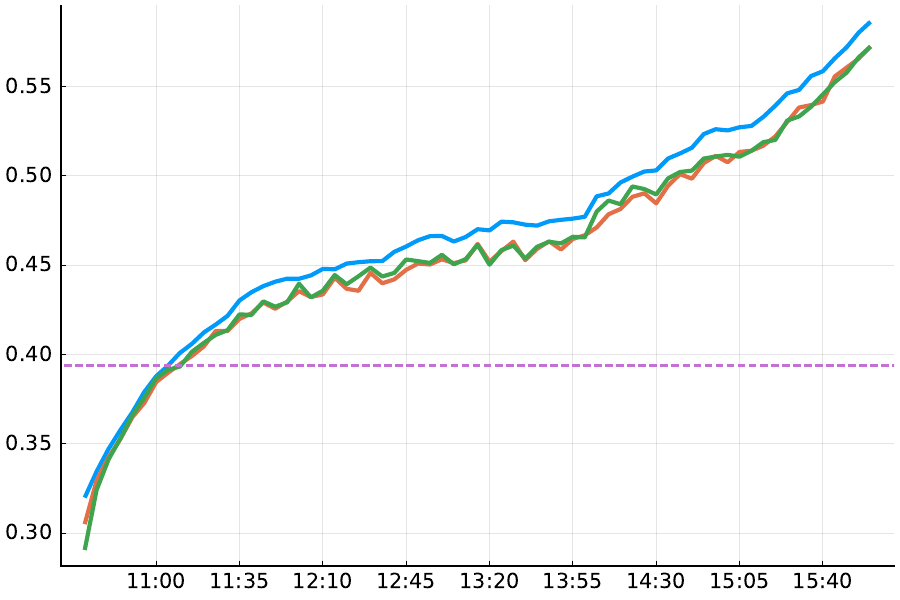}\includegraphics[width=0.1\textwidth,height=0.06\textheight]{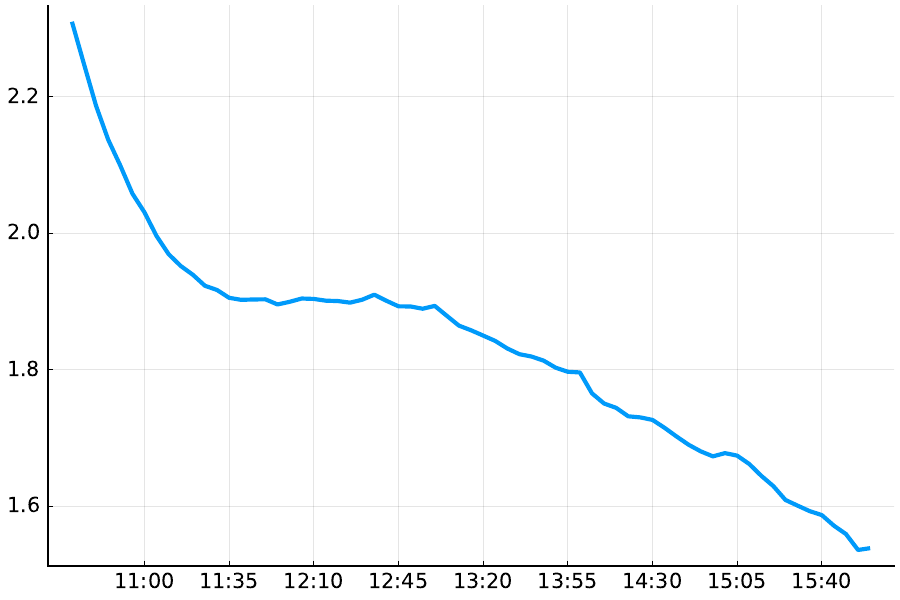}\includegraphics[width=0.1\textwidth,height=0.06\textheight]{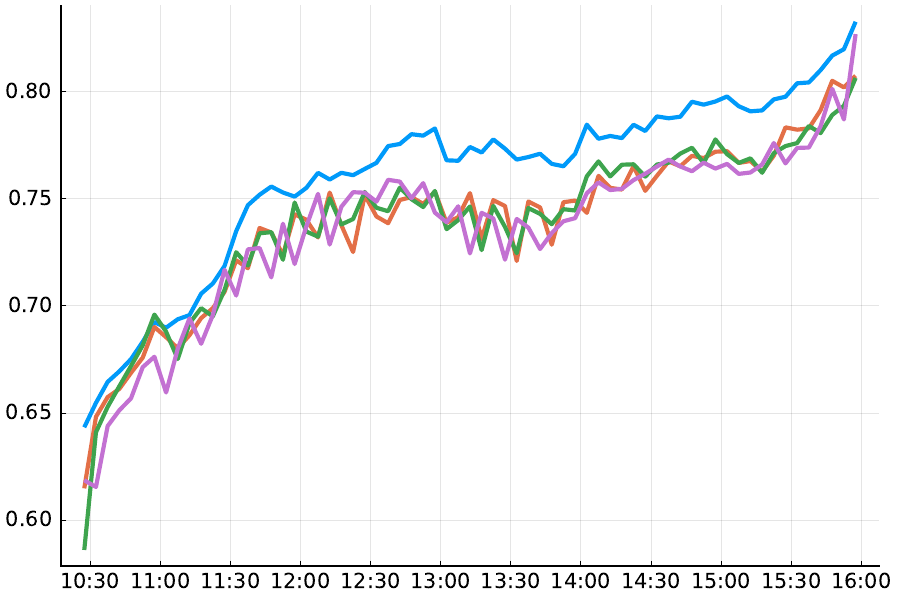}
\par\end{centering}
}\subfloat[JPM]{\begin{centering}
\includegraphics[width=0.1\textwidth,height=0.06\textheight]{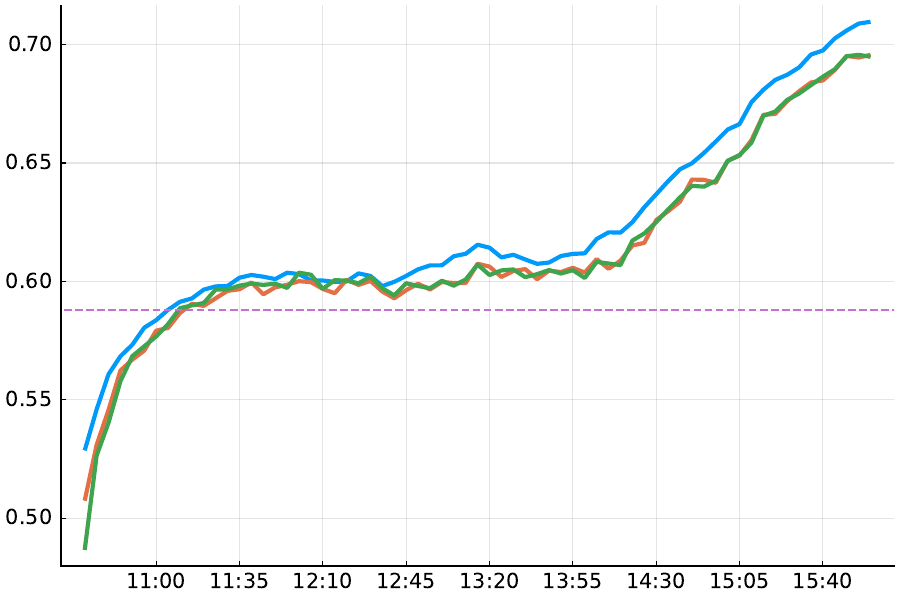}\includegraphics[width=0.1\textwidth,height=0.06\textheight]{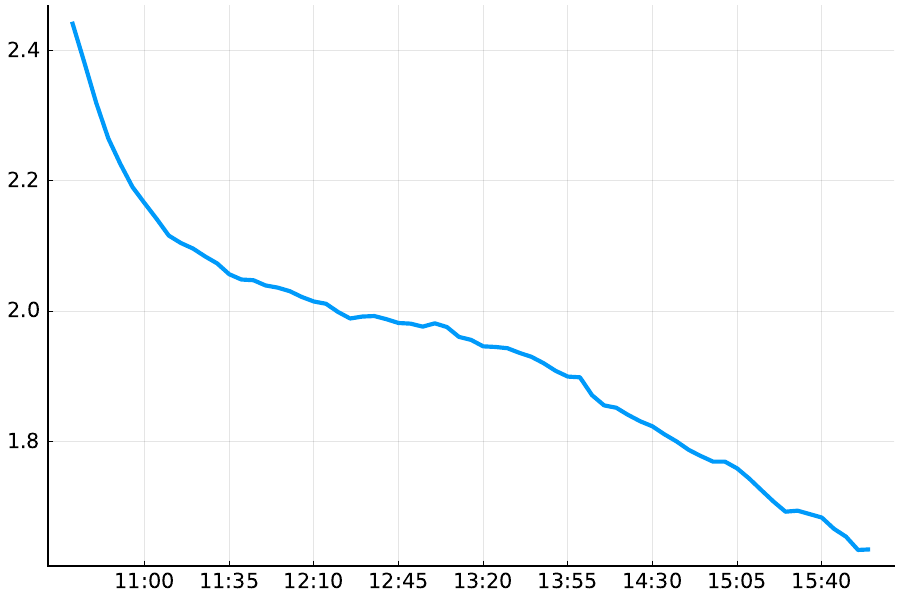}\includegraphics[width=0.1\textwidth,height=0.06\textheight]{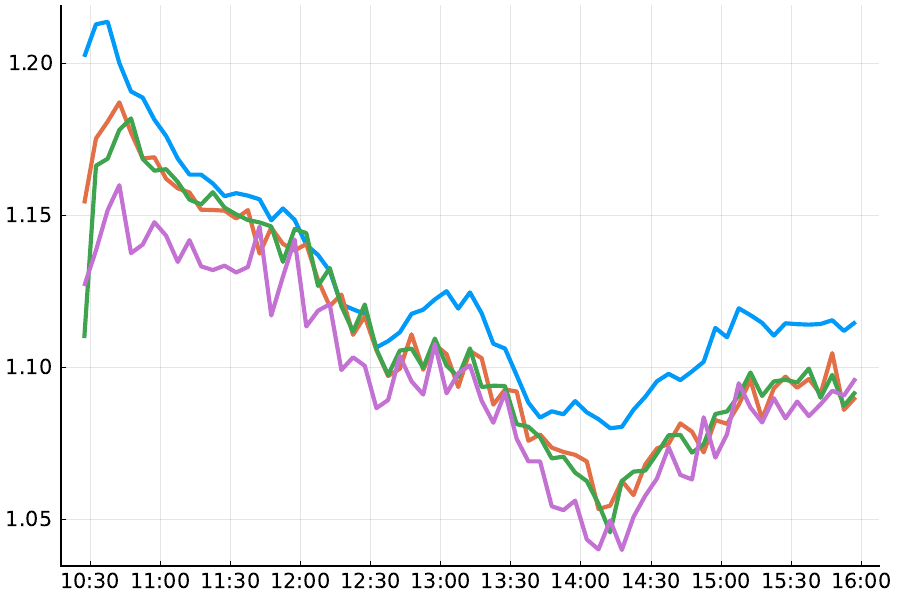}
\par\end{centering}
}
\par\end{centering}
\begin{centering}
\subfloat[WFC]{\begin{centering}
\includegraphics[width=0.1\textwidth,height=0.06\textheight]{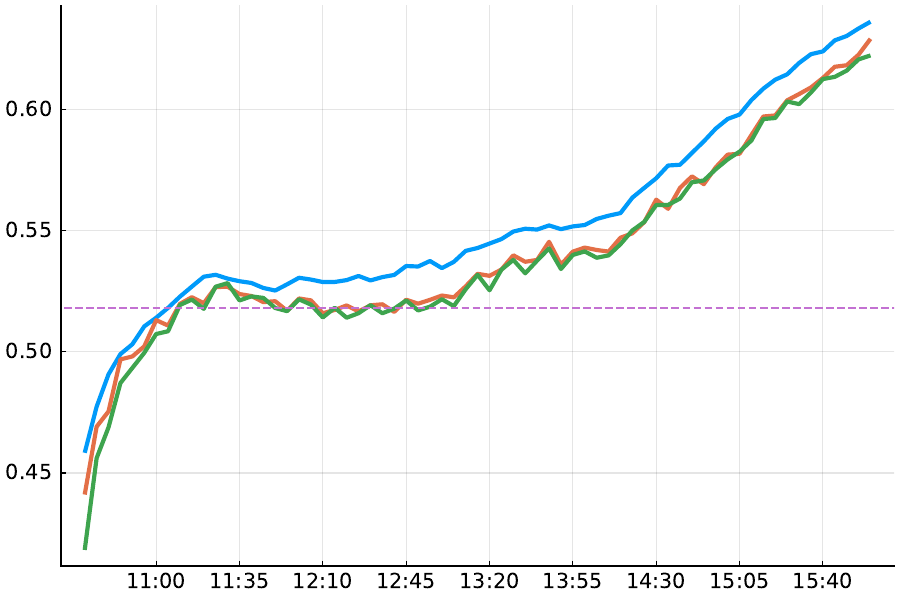}\includegraphics[width=0.1\textwidth,height=0.06\textheight]{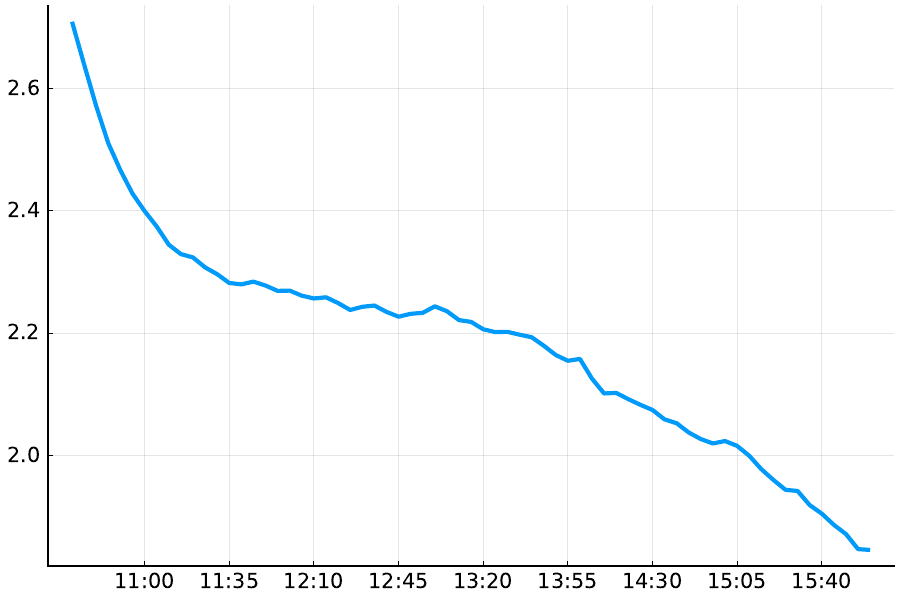}\includegraphics[width=0.1\textwidth,height=0.06\textheight]{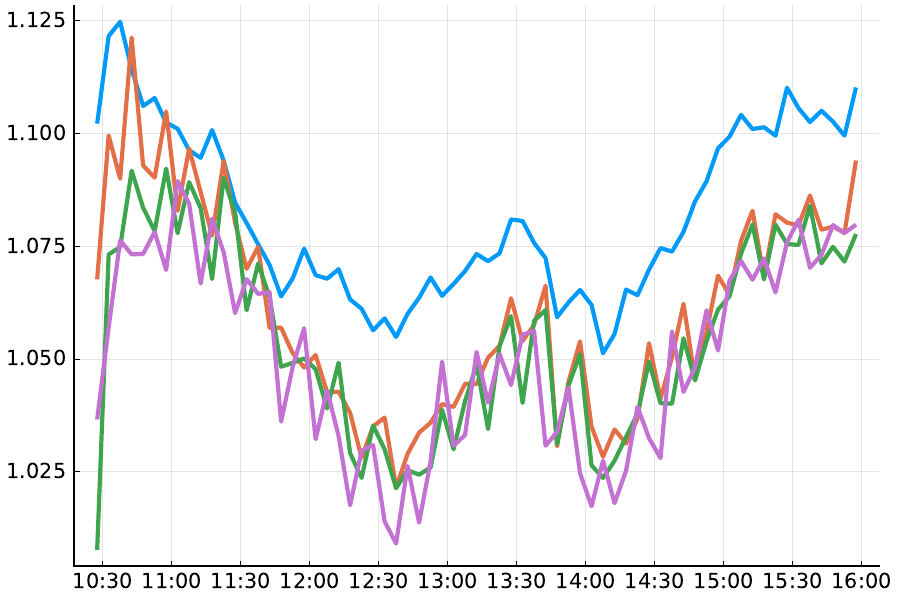}
\par\end{centering}
}\subfloat[HAL]{\begin{centering}
\includegraphics[width=0.1\textwidth,height=0.06\textheight]{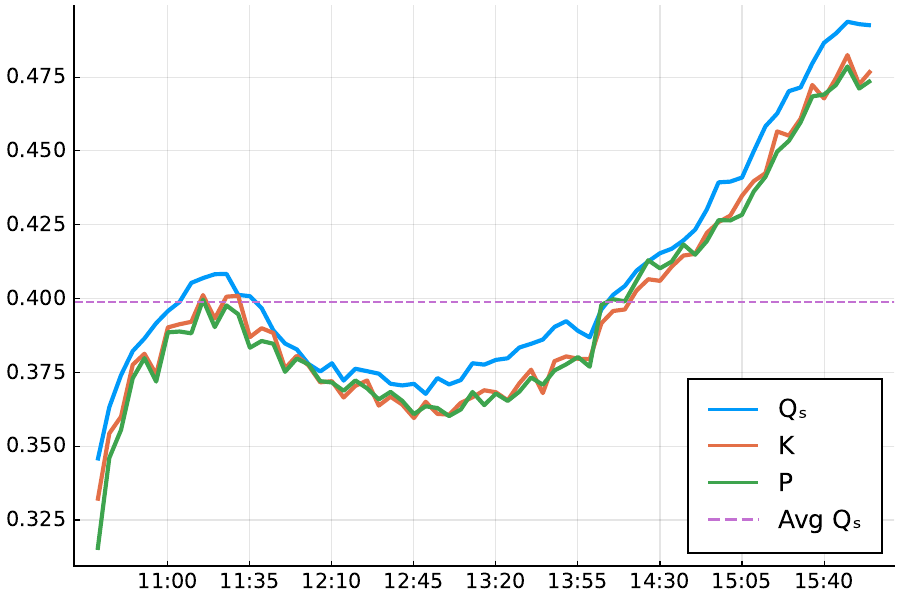}\includegraphics[width=0.1\textwidth,height=0.06\textheight]{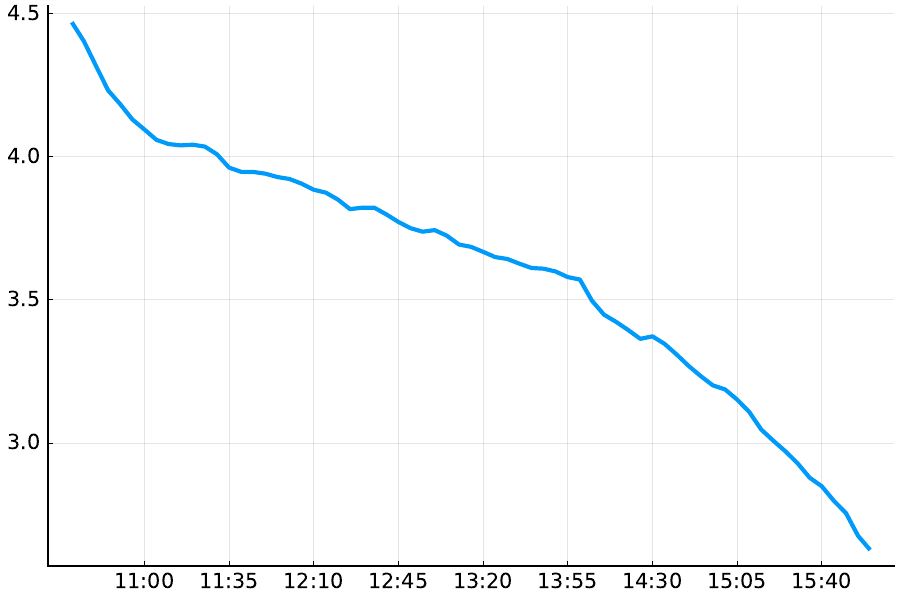}\includegraphics[width=0.1\textwidth,height=0.06\textheight]{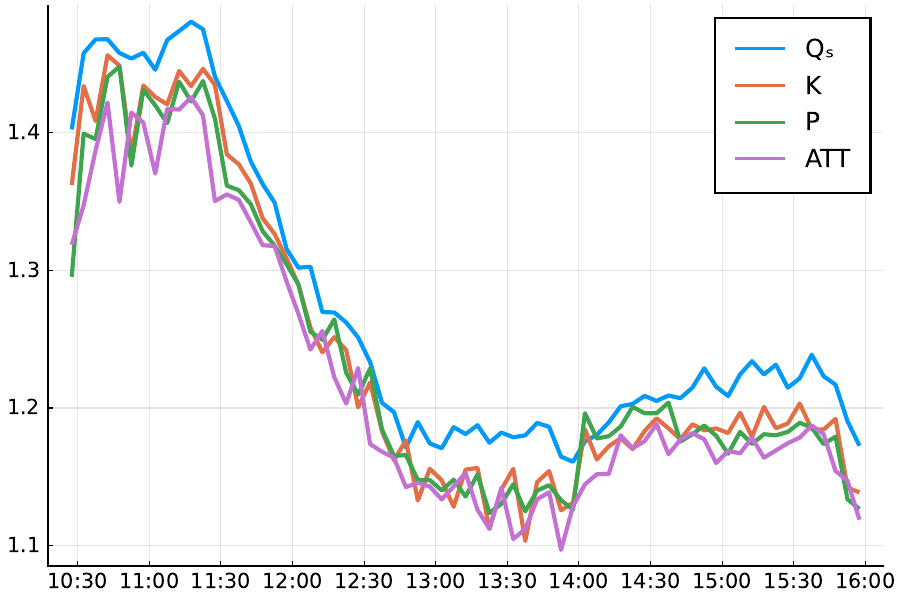}
\par\end{centering}
}\subfloat[XOM]{\begin{centering}
\includegraphics[width=0.1\textwidth,height=0.06\textheight]{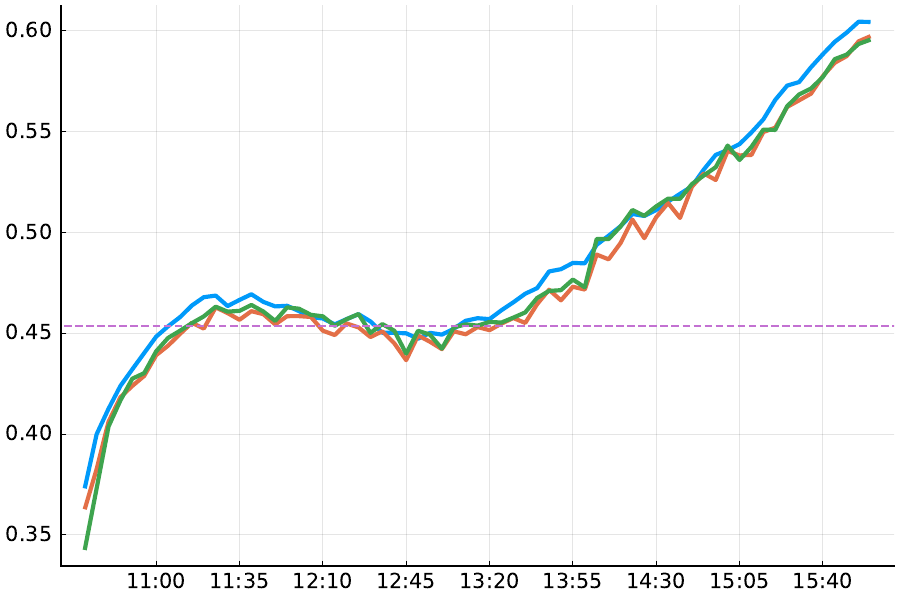}\includegraphics[width=0.1\textwidth,height=0.06\textheight]{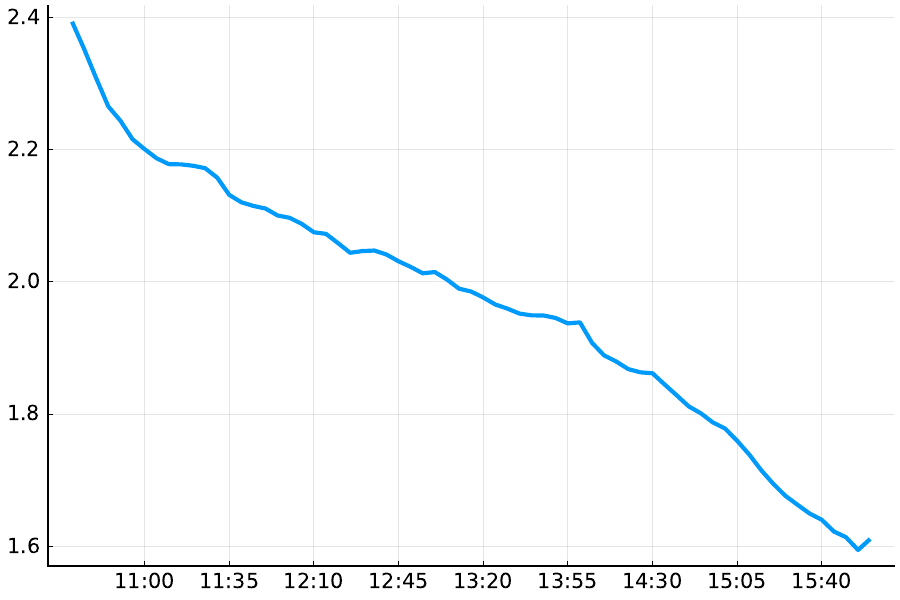}\includegraphics[width=0.1\textwidth,height=0.06\textheight]{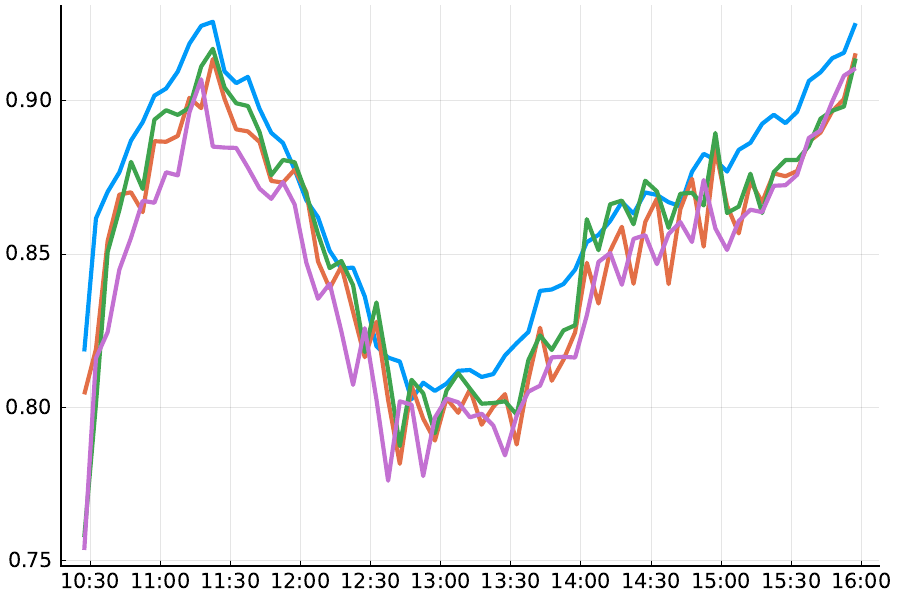}
\par\end{centering}
}
\par\end{centering}
\begin{centering}
\subfloat[PG]{\begin{centering}
\includegraphics[width=0.1\textwidth,height=0.06\textheight]{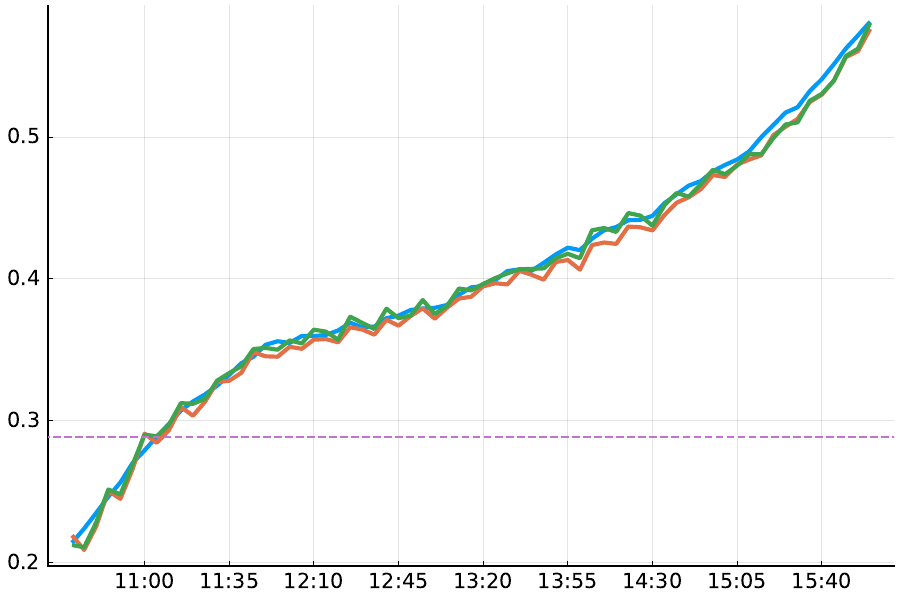}\includegraphics[width=0.1\textwidth,height=0.06\textheight]{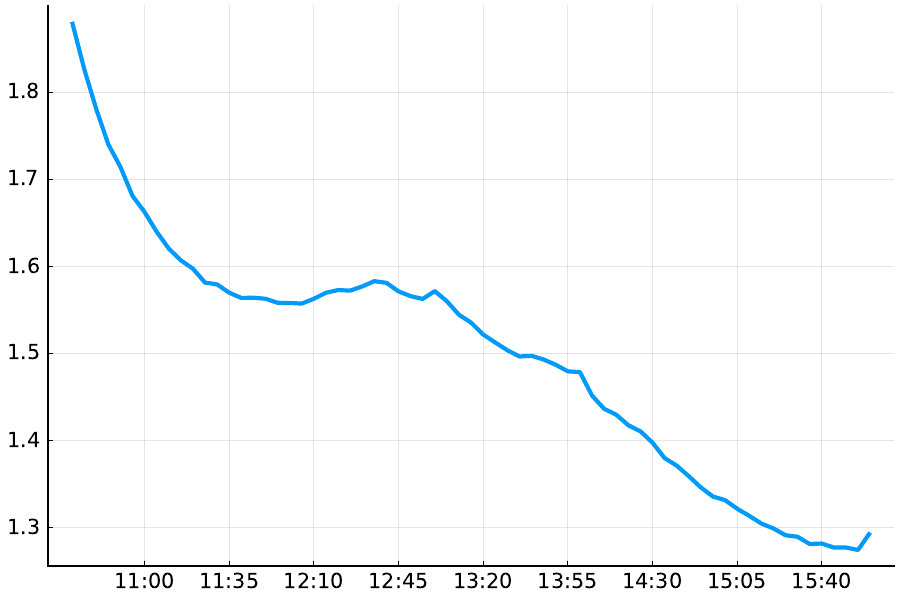}\includegraphics[width=0.1\textwidth,height=0.06\textheight]{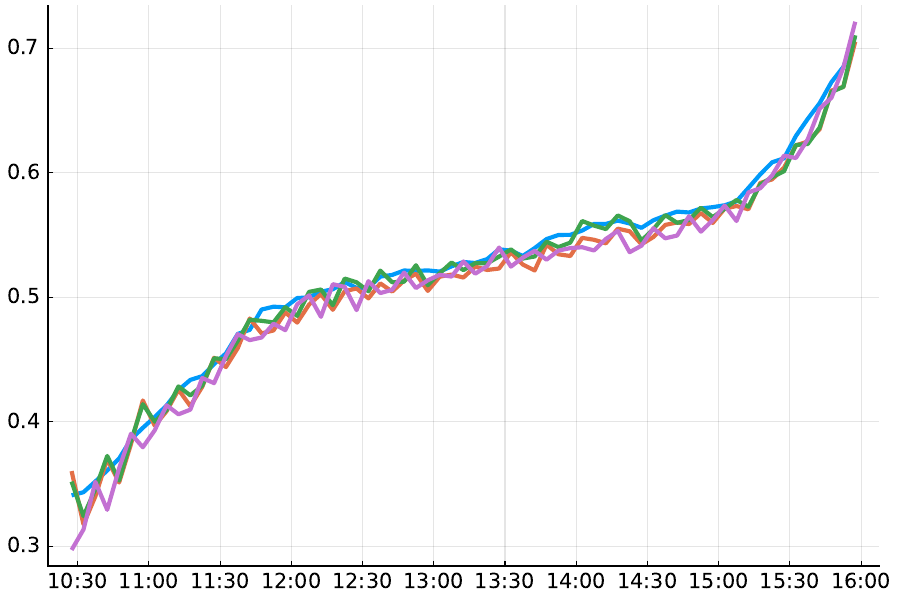}
\par\end{centering}
}\subfloat[WMT]{\begin{centering}
\includegraphics[width=0.1\textwidth,height=0.06\textheight]{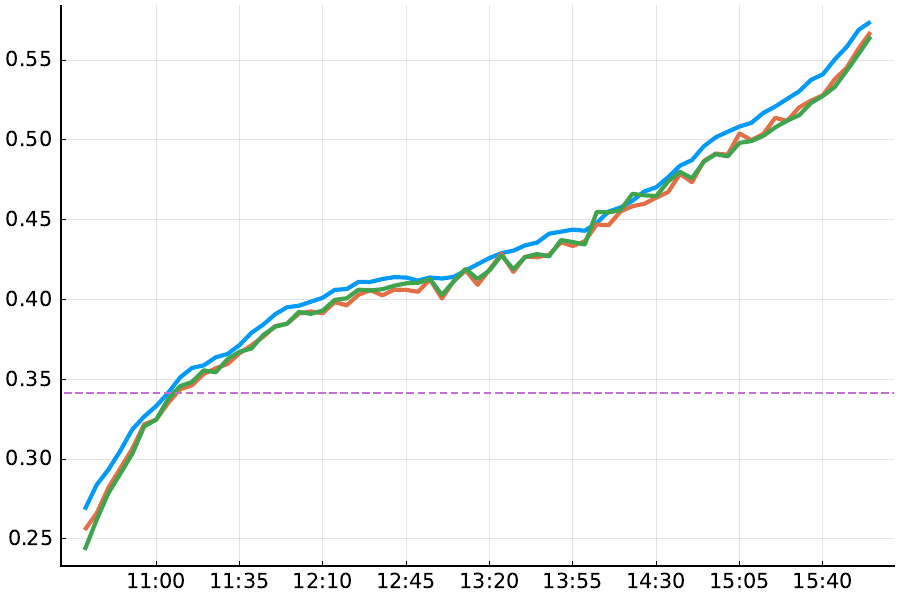}\includegraphics[width=0.1\textwidth,height=0.06\textheight]{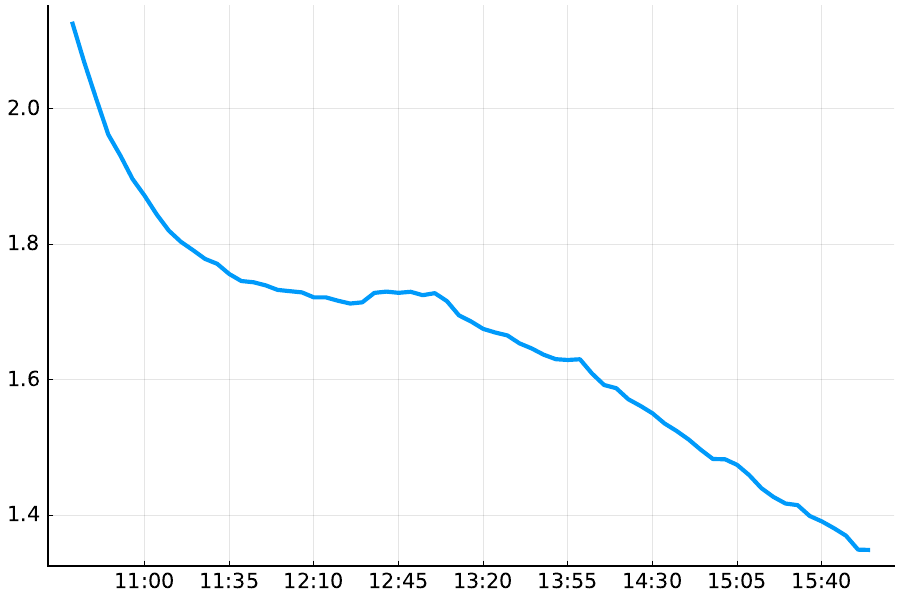}\includegraphics[width=0.1\textwidth,height=0.06\textheight]{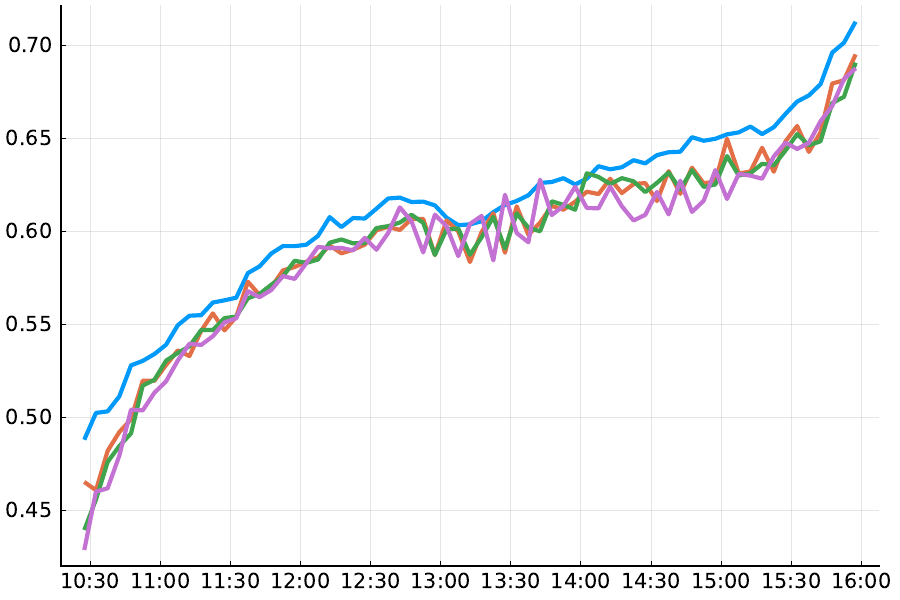}
\par\end{centering}
}\subfloat[TSLA]{\begin{centering}
\includegraphics[width=0.1\textwidth,height=0.06\textheight]{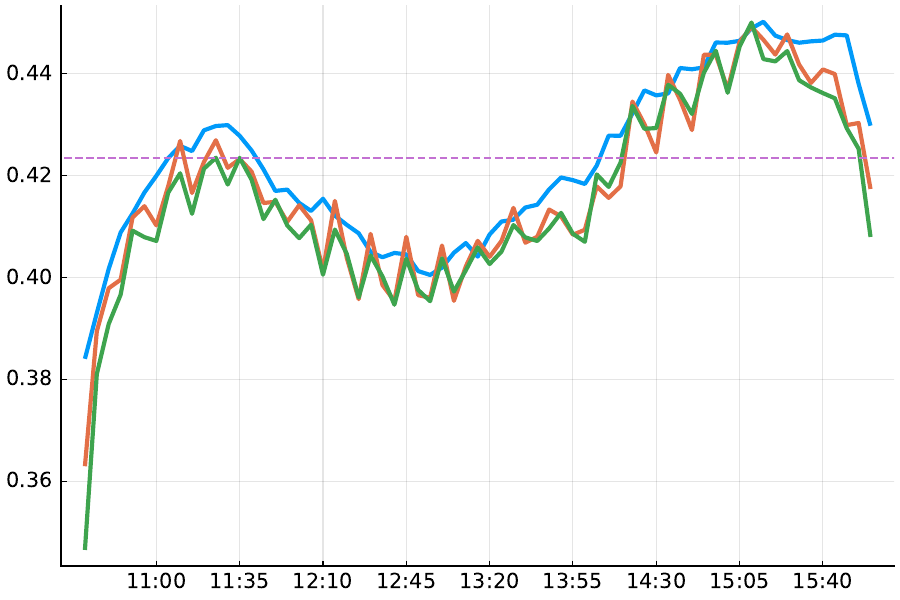}\includegraphics[width=0.1\textwidth,height=0.06\textheight]{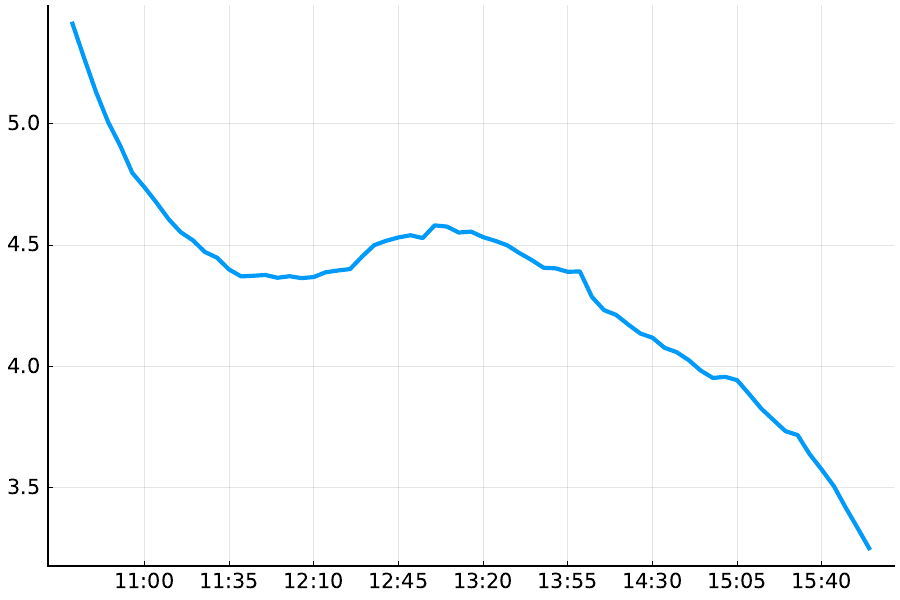}\includegraphics[width=0.1\textwidth,height=0.06\textheight]{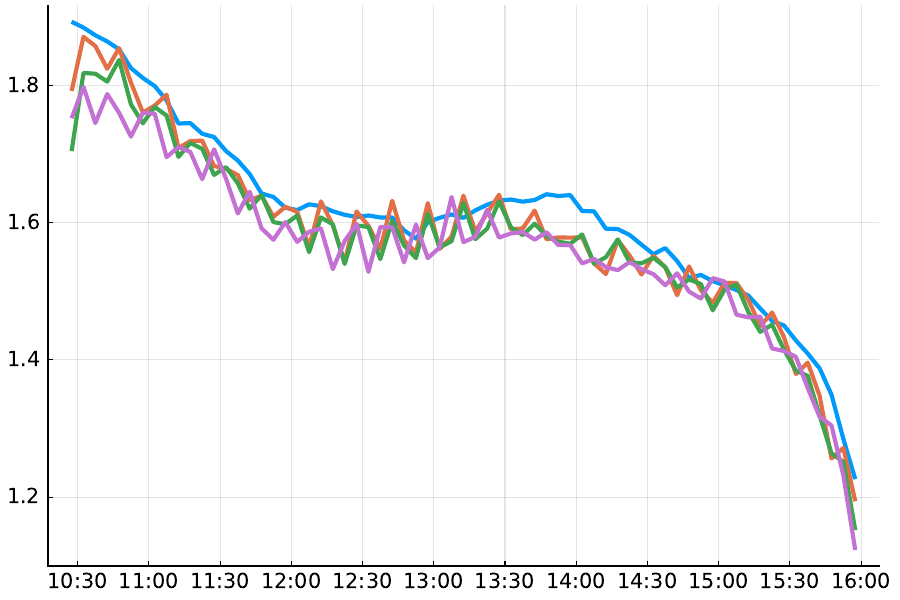}
\par\end{centering}
}
\par\end{centering}
\begin{centering}
\subfloat[AMZN]{\begin{centering}
\includegraphics[width=0.1\textwidth,height=0.06\textheight]{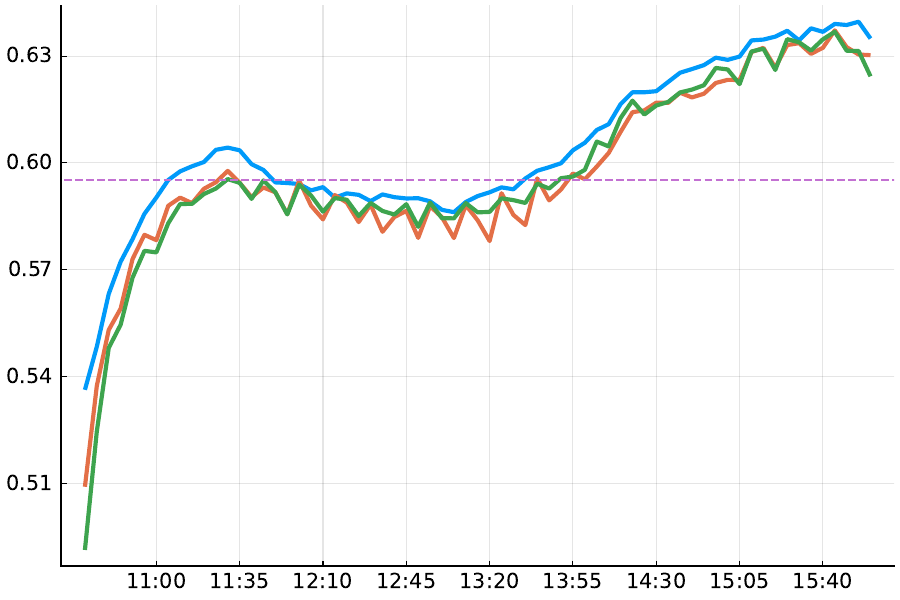}\includegraphics[width=0.1\textwidth,height=0.06\textheight]{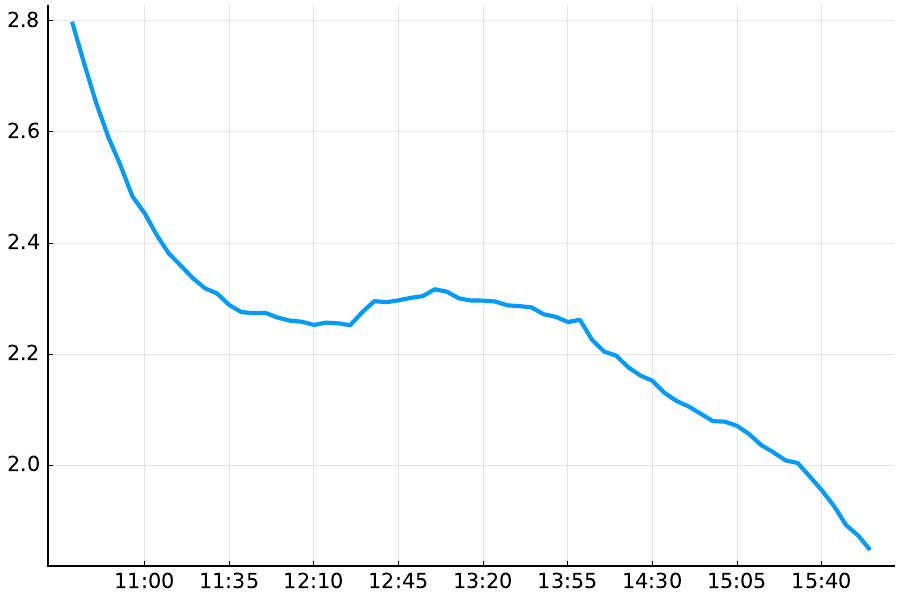}\includegraphics[width=0.1\textwidth,height=0.06\textheight]{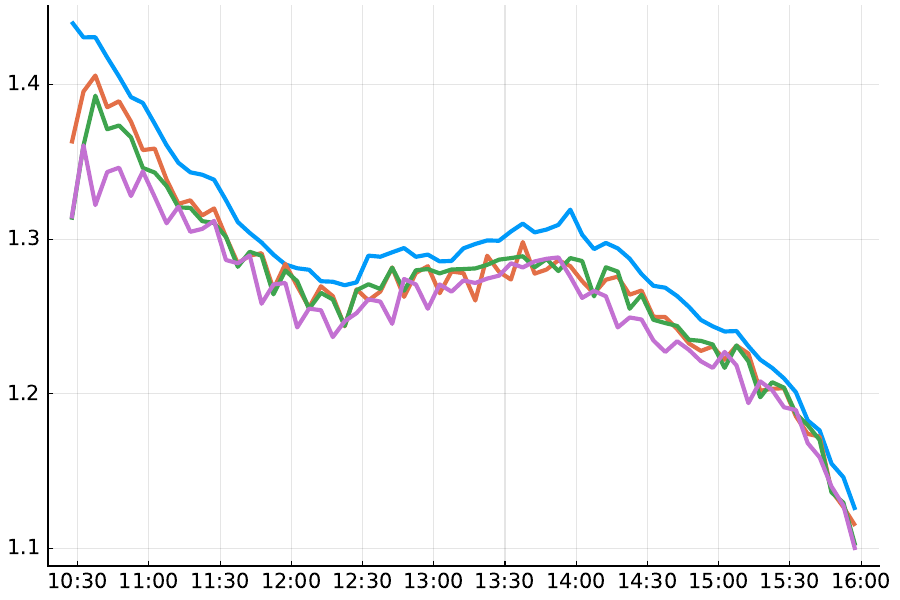}
\par\end{centering}
}\subfloat[DIS]{\begin{centering}
\includegraphics[width=0.1\textwidth,height=0.06\textheight]{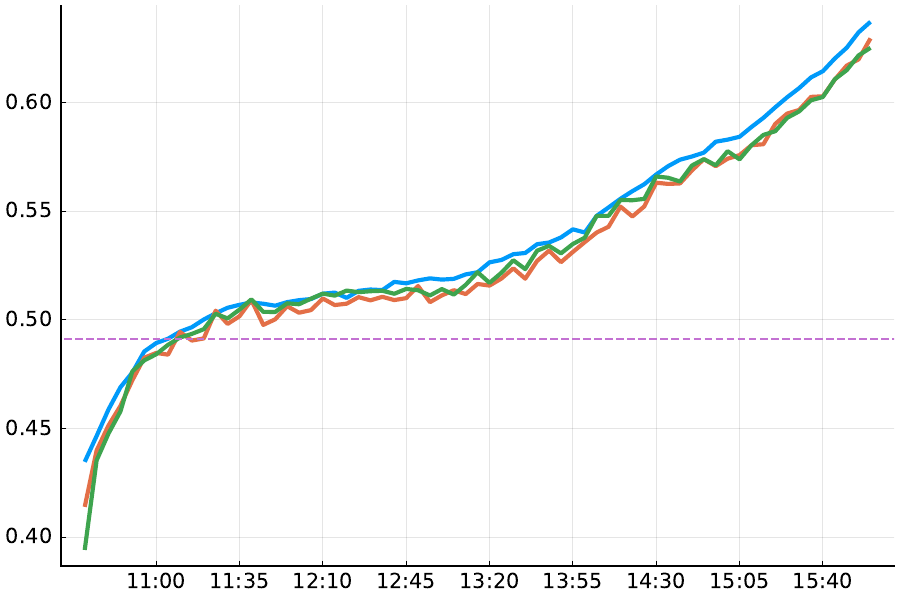}\includegraphics[width=0.1\textwidth,height=0.06\textheight]{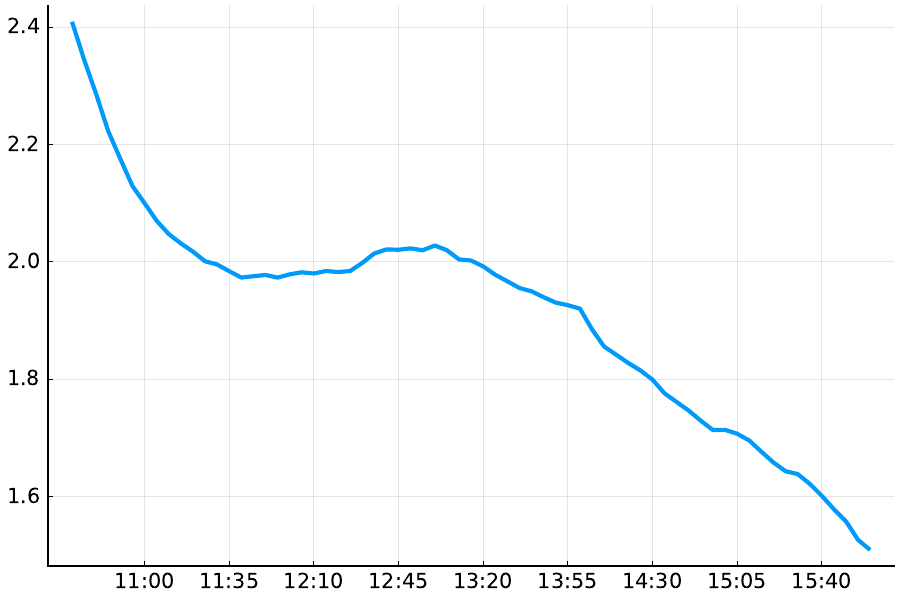}\includegraphics[width=0.1\textwidth,height=0.06\textheight]{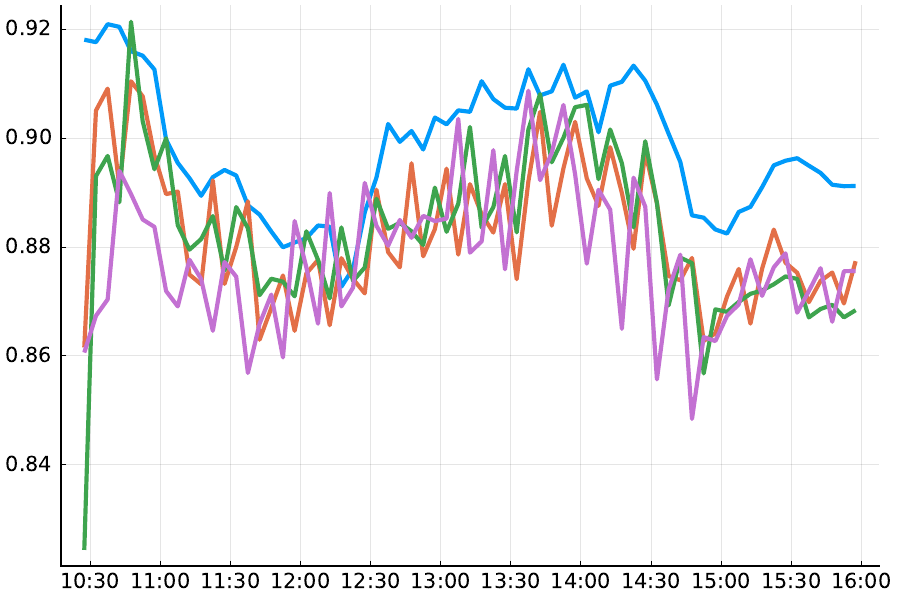}
\par\end{centering}
}\subfloat[FB]{\begin{centering}
\includegraphics[width=0.1\textwidth,height=0.06\textheight]{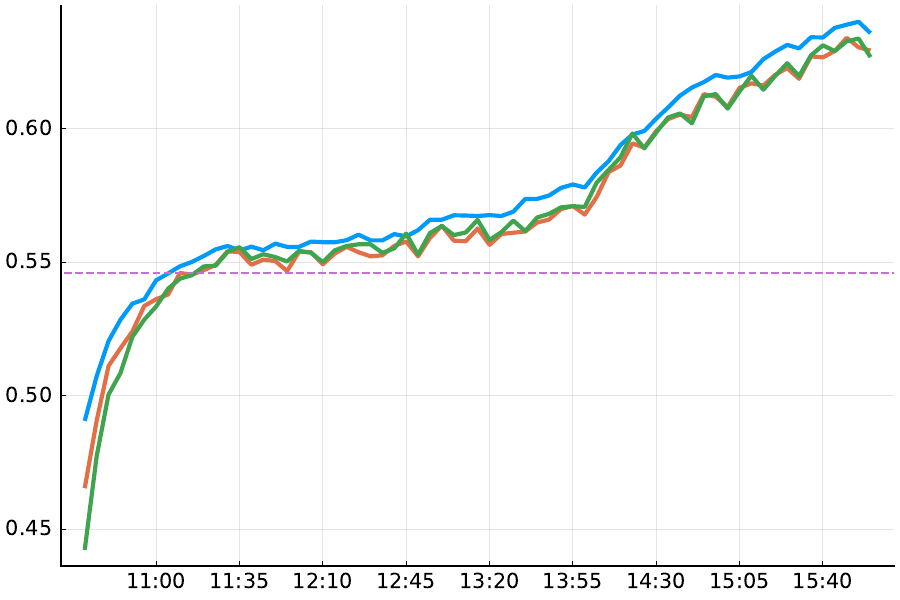}\includegraphics[width=0.1\textwidth,height=0.06\textheight]{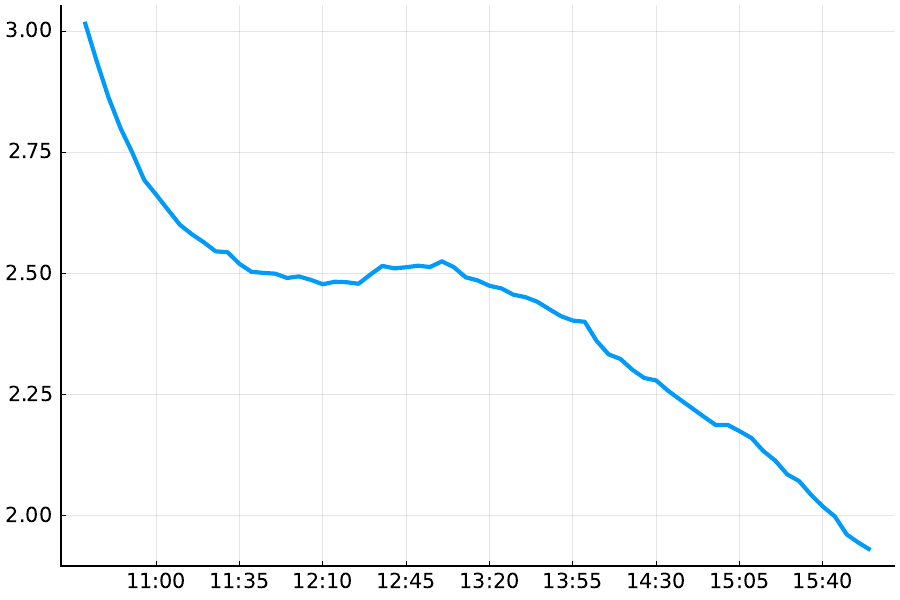}\includegraphics[width=0.1\textwidth,height=0.06\textheight]{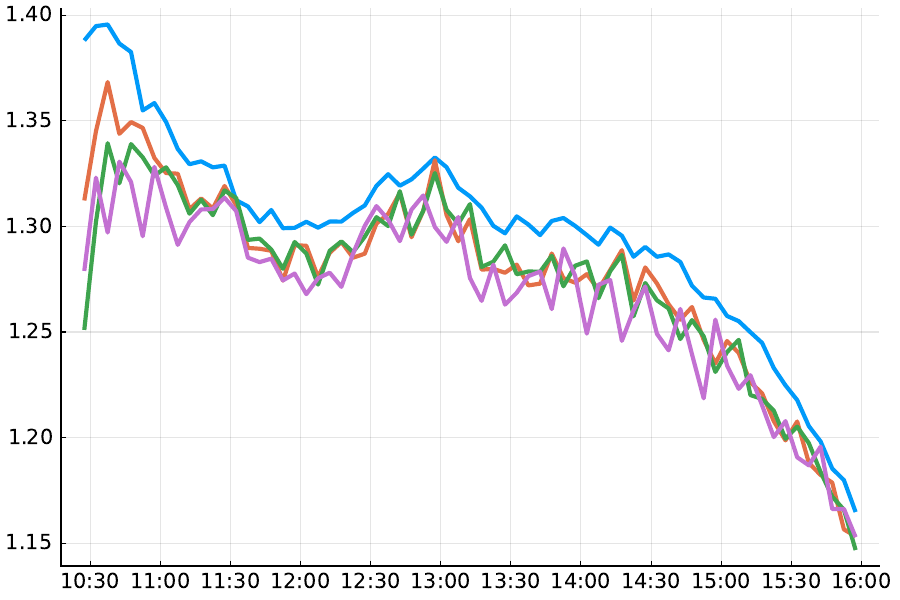}
\par\end{centering}
}
\par\end{centering}
\caption{Average intraday correlations (left), relative volatility (middle),
and market $\beta$ (right) for 18 stocks.\label{fig:RhoLambdaBetasSmallUniverse}}
\end{figure}
 The correlation increases for all assets over the trading day, and
the relative volatility decreases. The effect on the market beta (the
product of the two) is therefore determined by which of the two increases/decreases
the most. 

The correlation signature plots are based on averages over many trading
days. It is therefore important to explore if the findings are robust
to choice of sample period. Correlation signature plots for each sub-period
are shown in Figure \ref{fig:SplitSignaturePlotsMarketCorr} for each
stock in the Small Universe and the SPY, and for each pair of assets
within the same sector in Figure \ref{fig:SplitSignaturePlotsSector}.
\begin{figure}[H]
\begin{centering}
\includegraphics[width=0.2\textwidth]{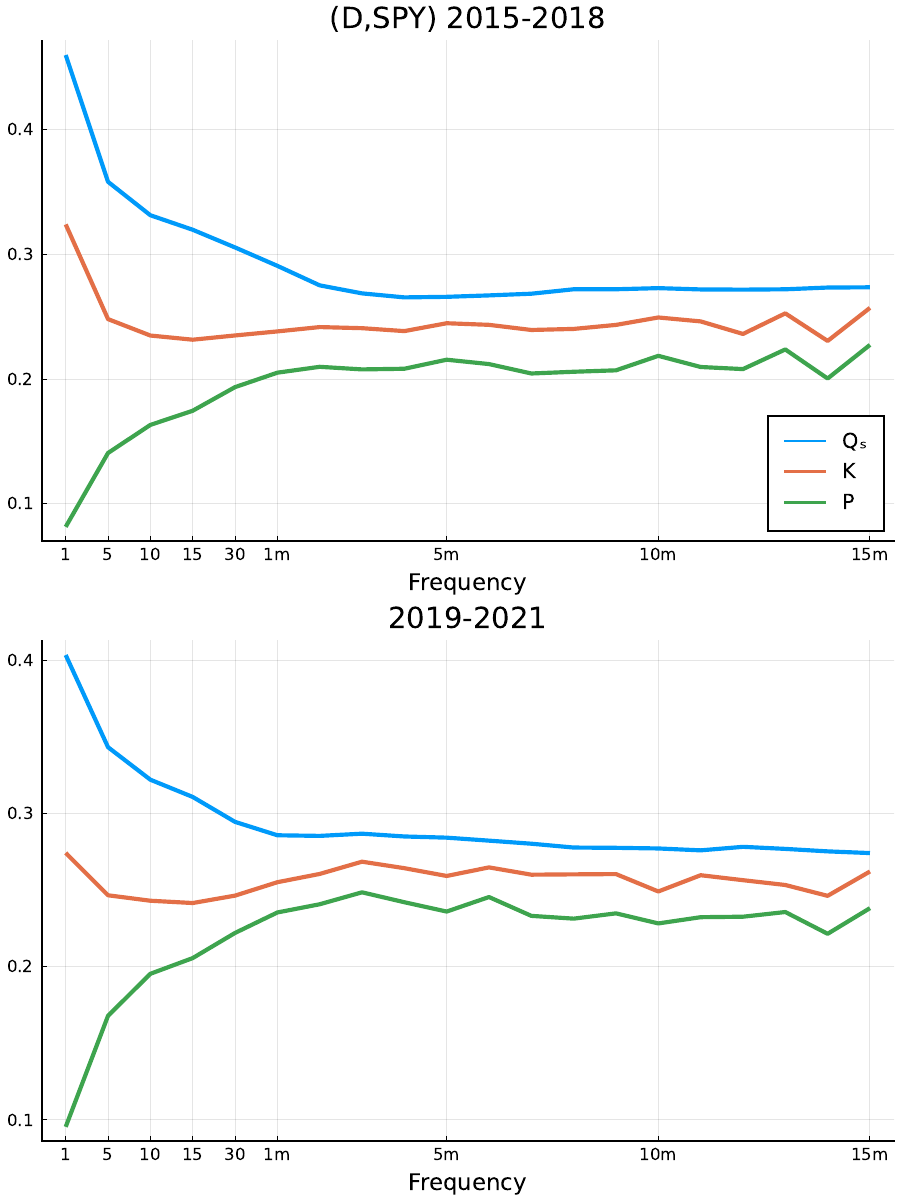}\includegraphics[width=0.2\textwidth]{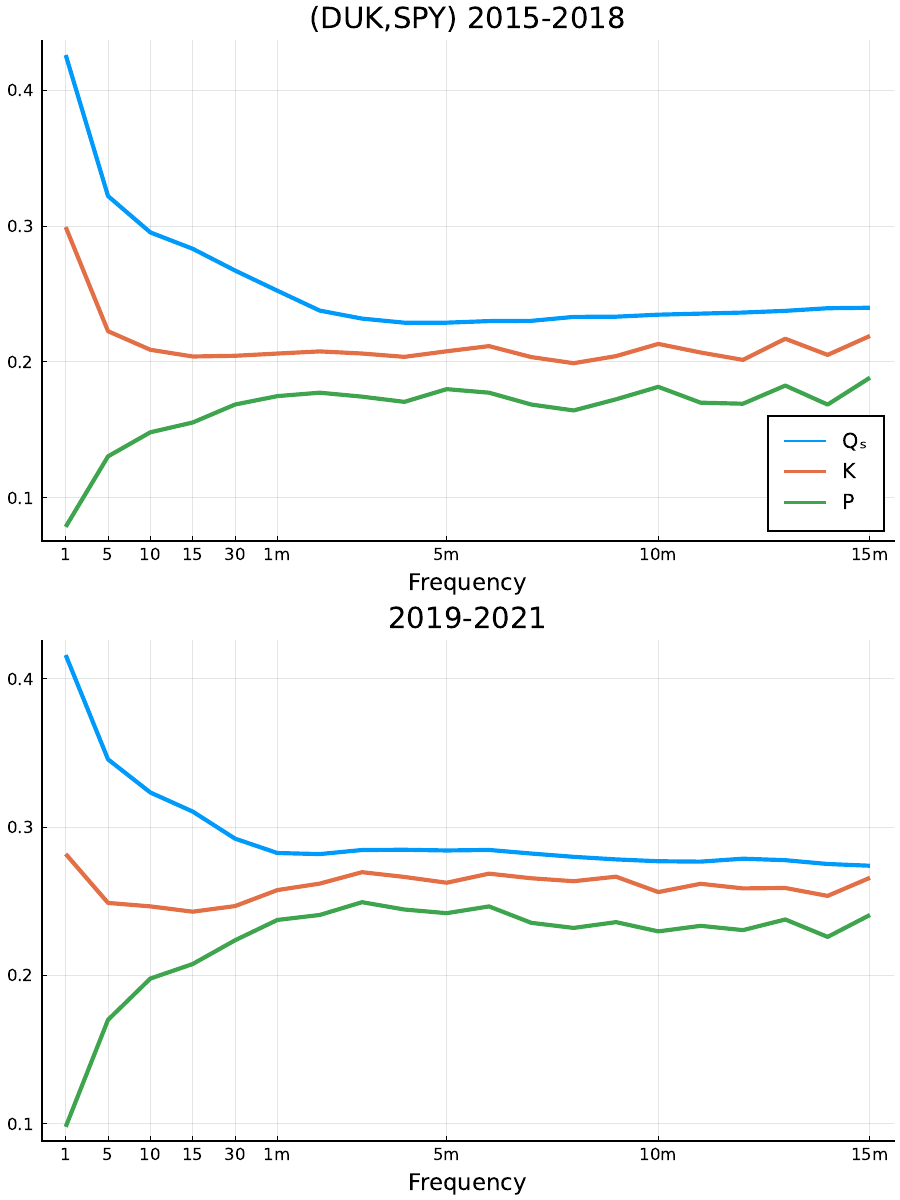}\includegraphics[width=0.2\textwidth]{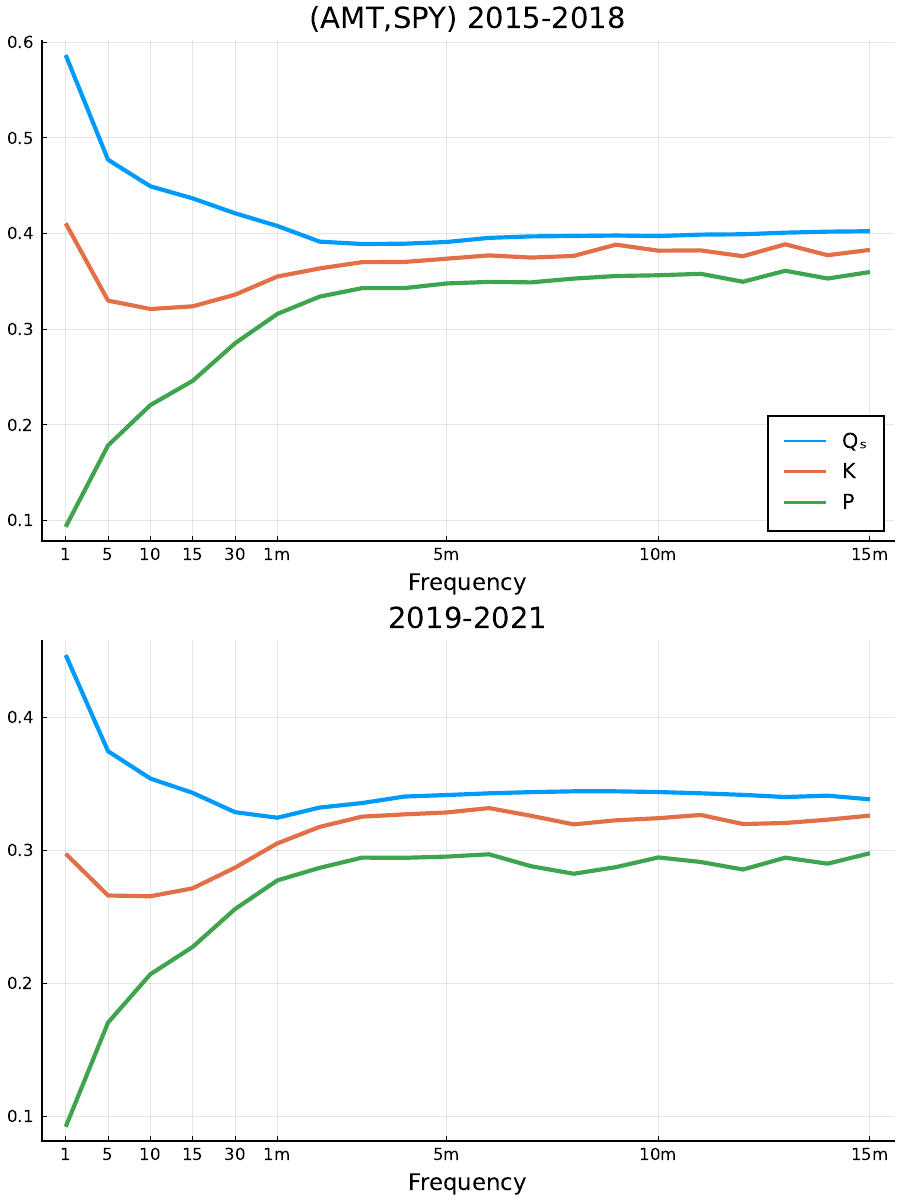}\includegraphics[width=0.2\textwidth]{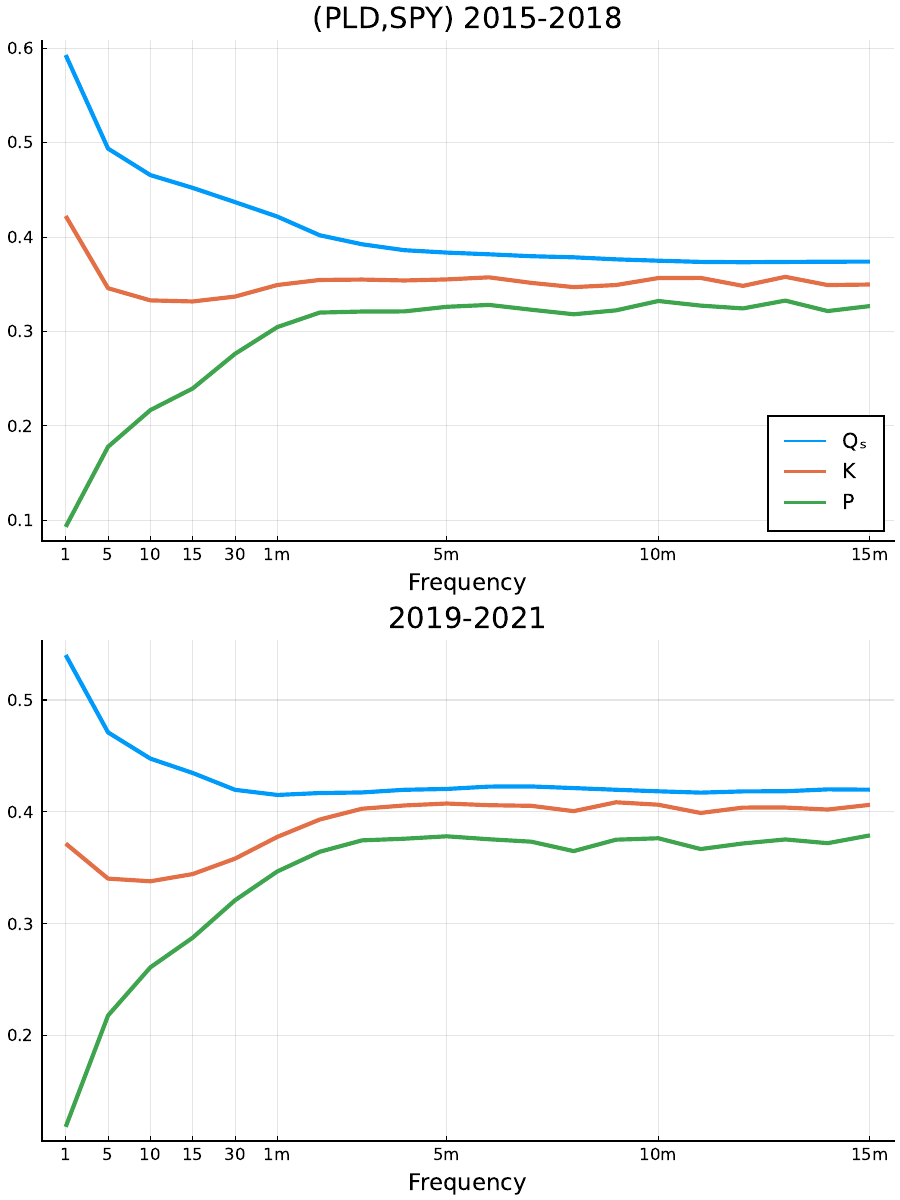}\includegraphics[width=0.2\textwidth]{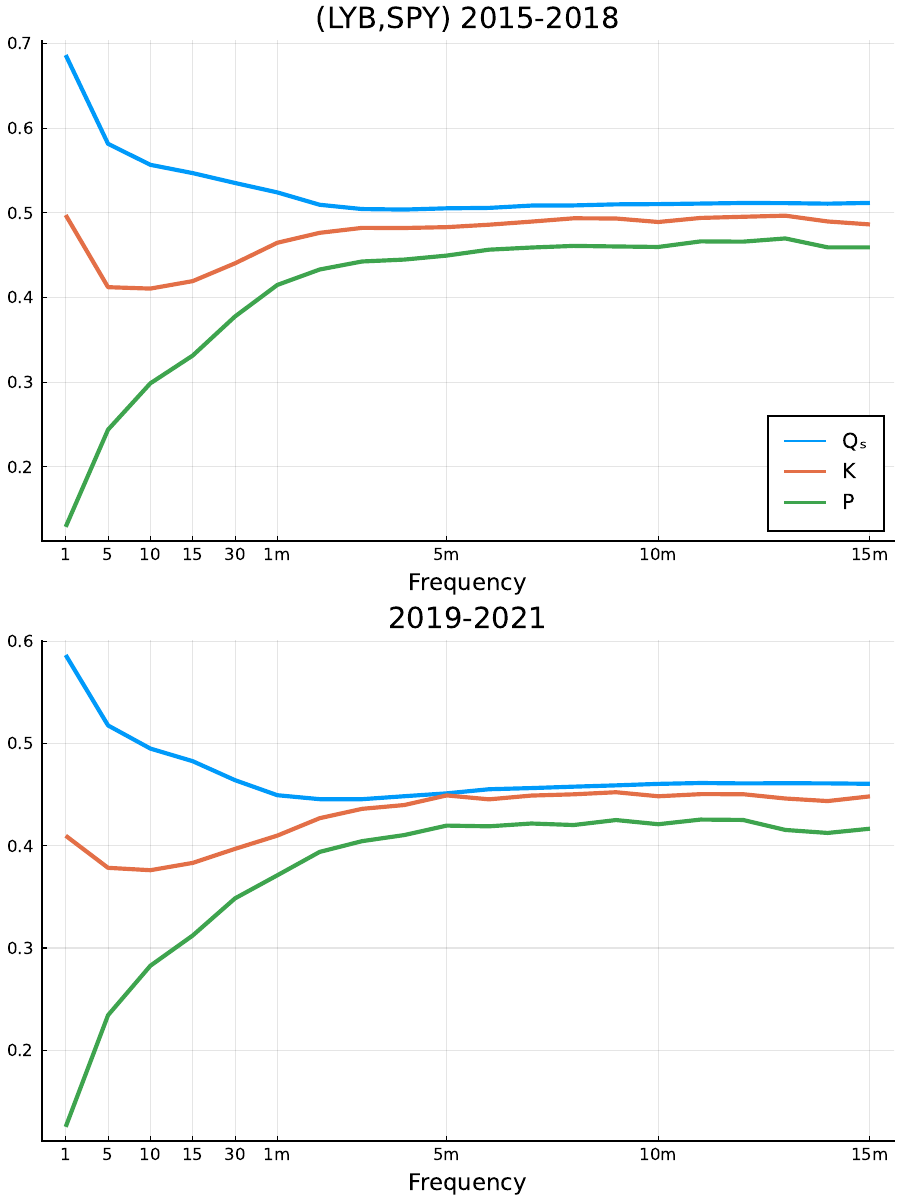}
\par\end{centering}
\begin{centering}
\includegraphics[width=0.2\textwidth]{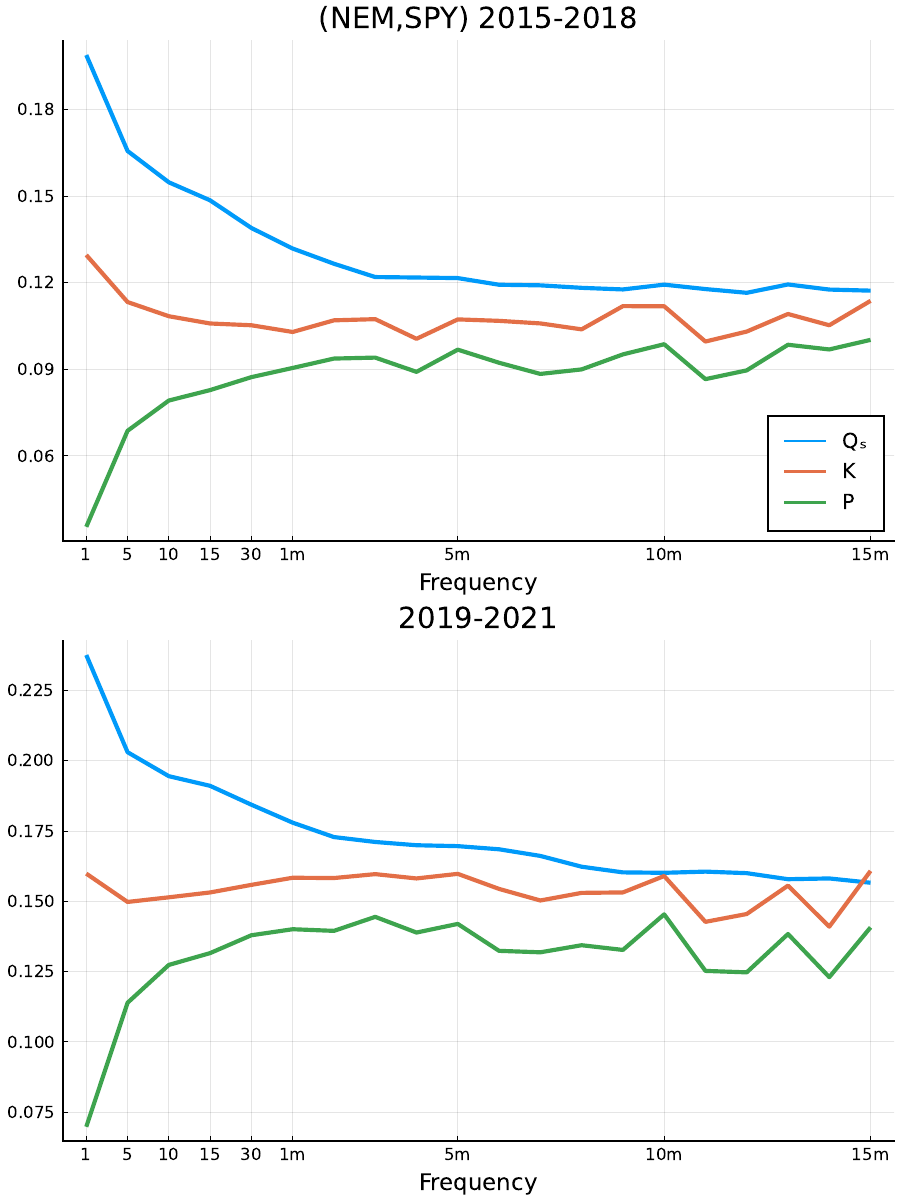}\includegraphics[width=0.2\textwidth]{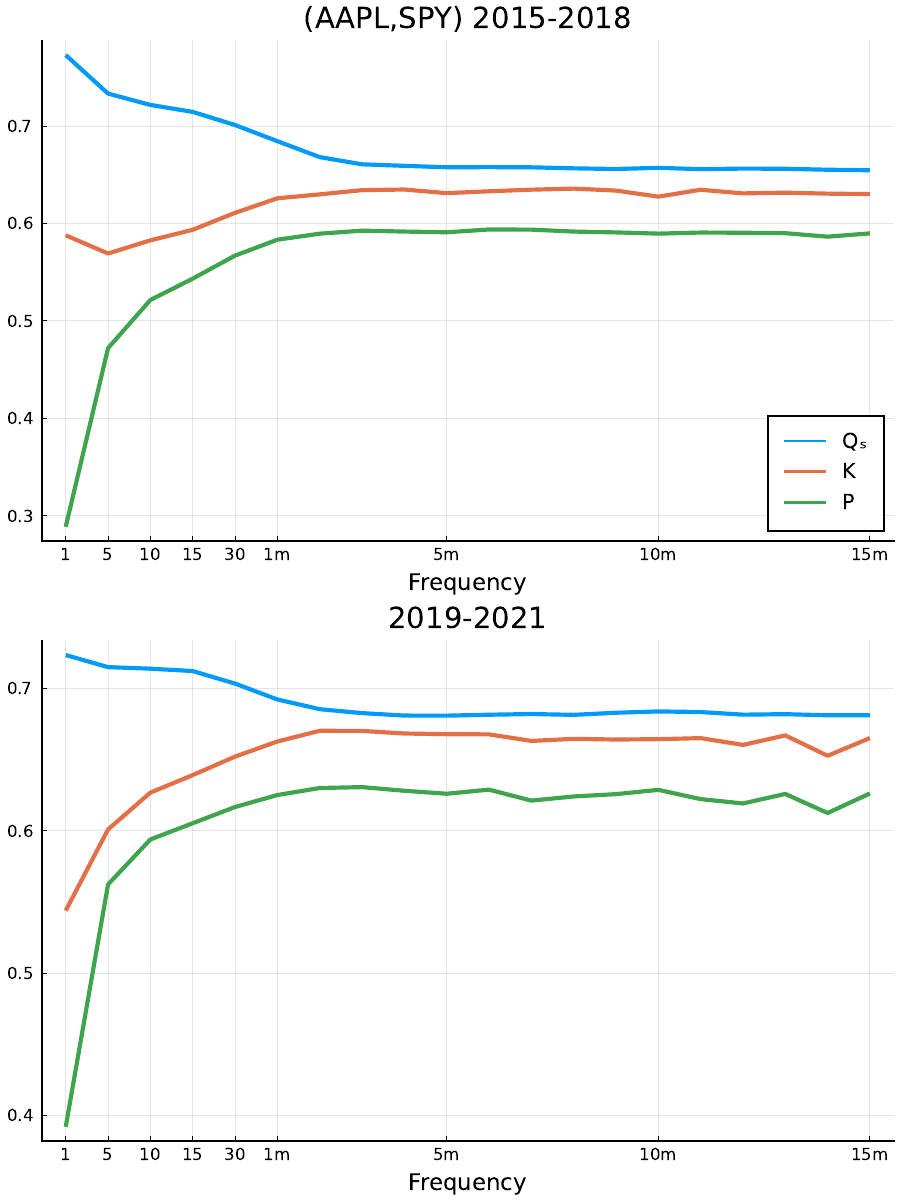}\includegraphics[width=0.2\textwidth]{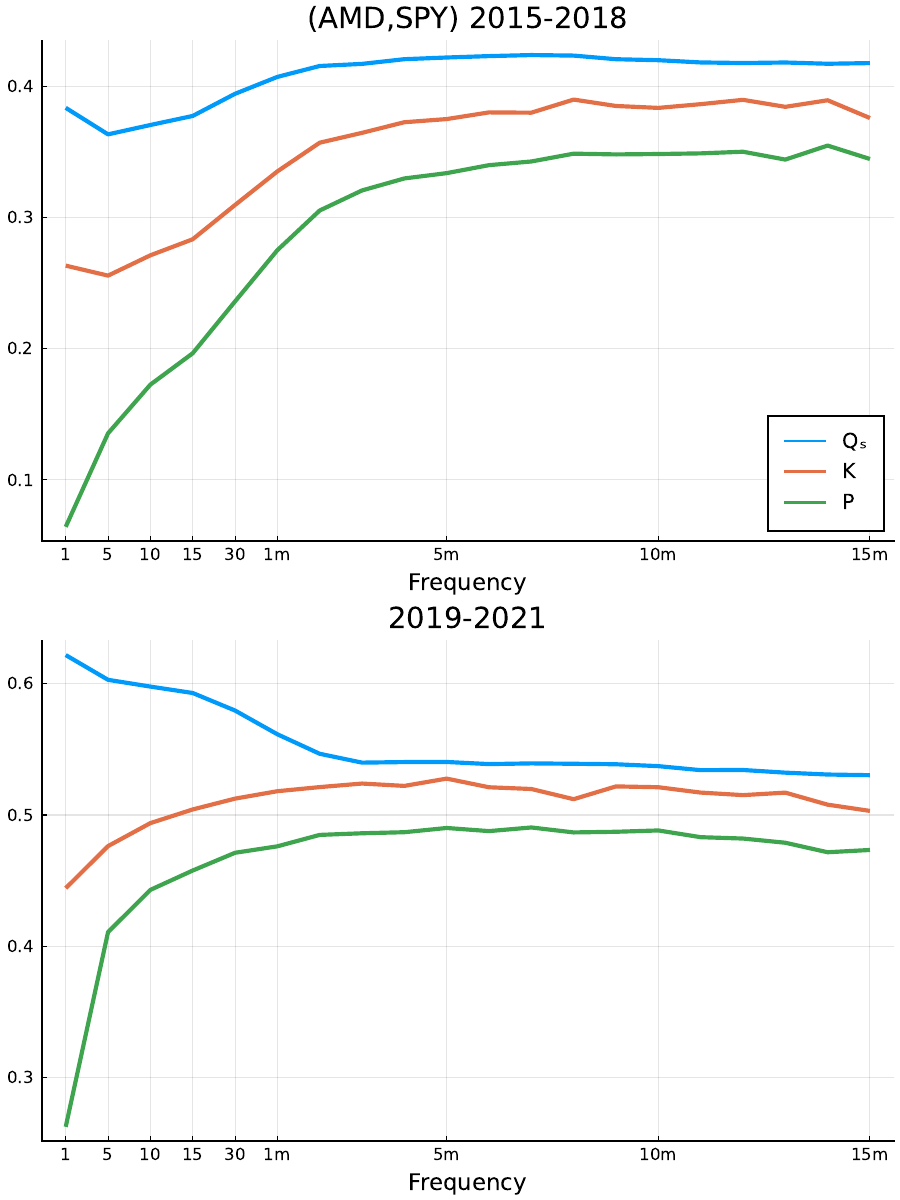}\includegraphics[width=0.2\textwidth]{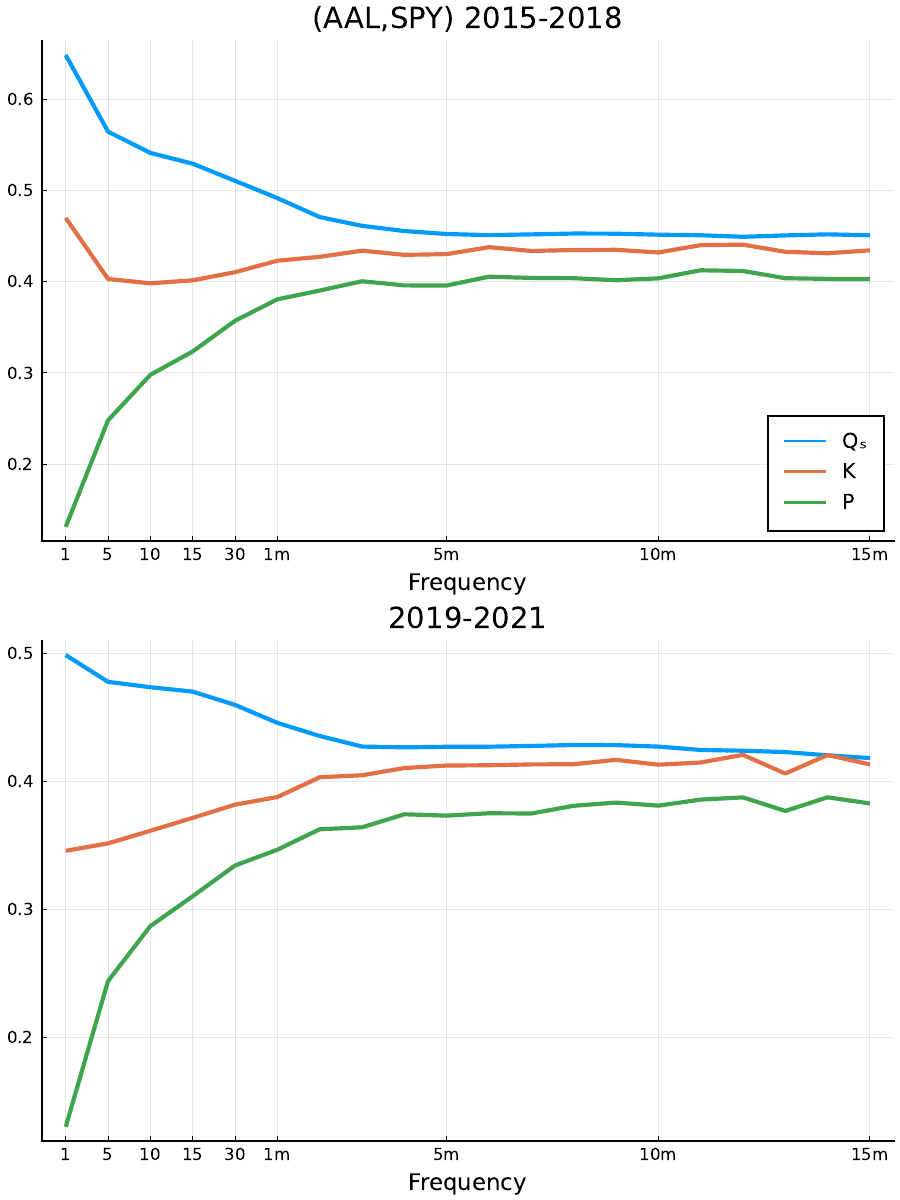}\includegraphics[width=0.2\textwidth]{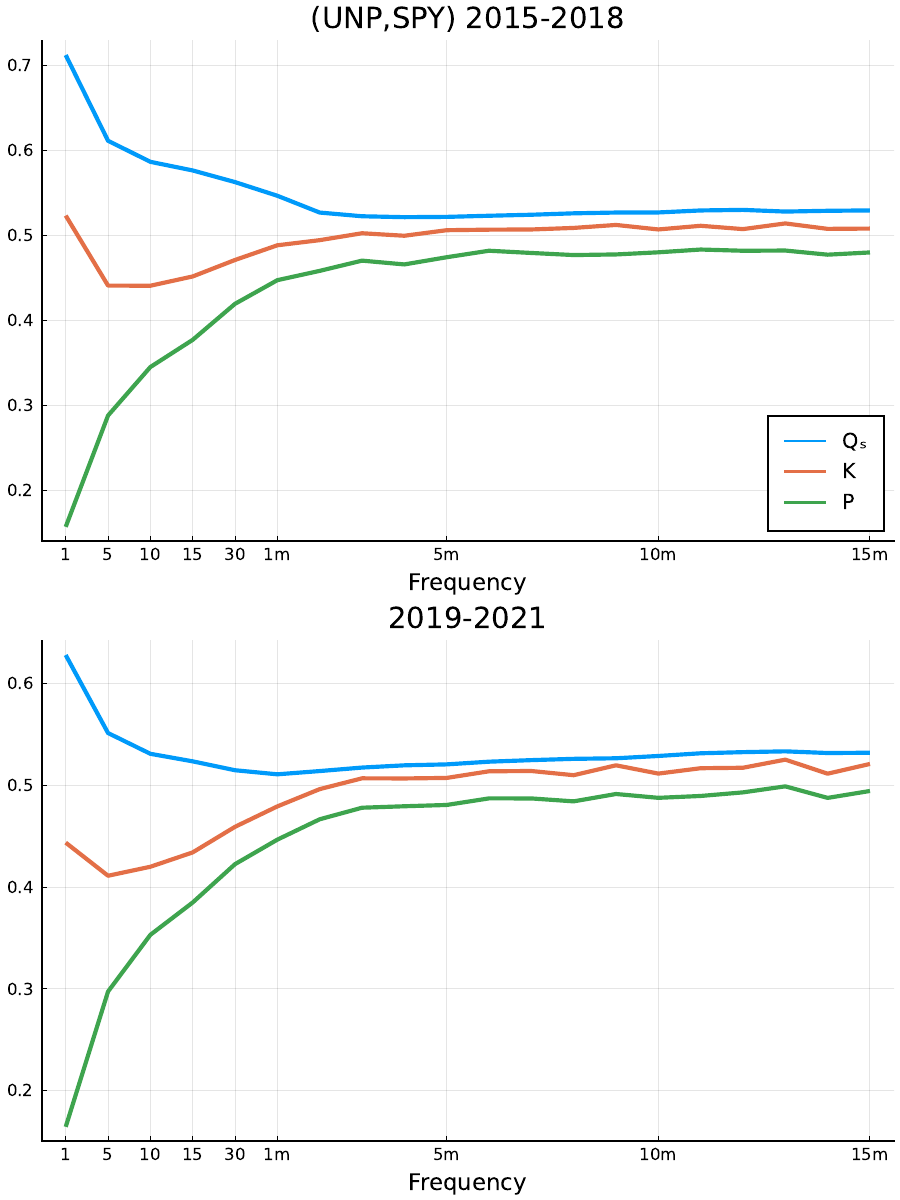}
\par\end{centering}
\begin{centering}
\includegraphics[width=0.2\textwidth]{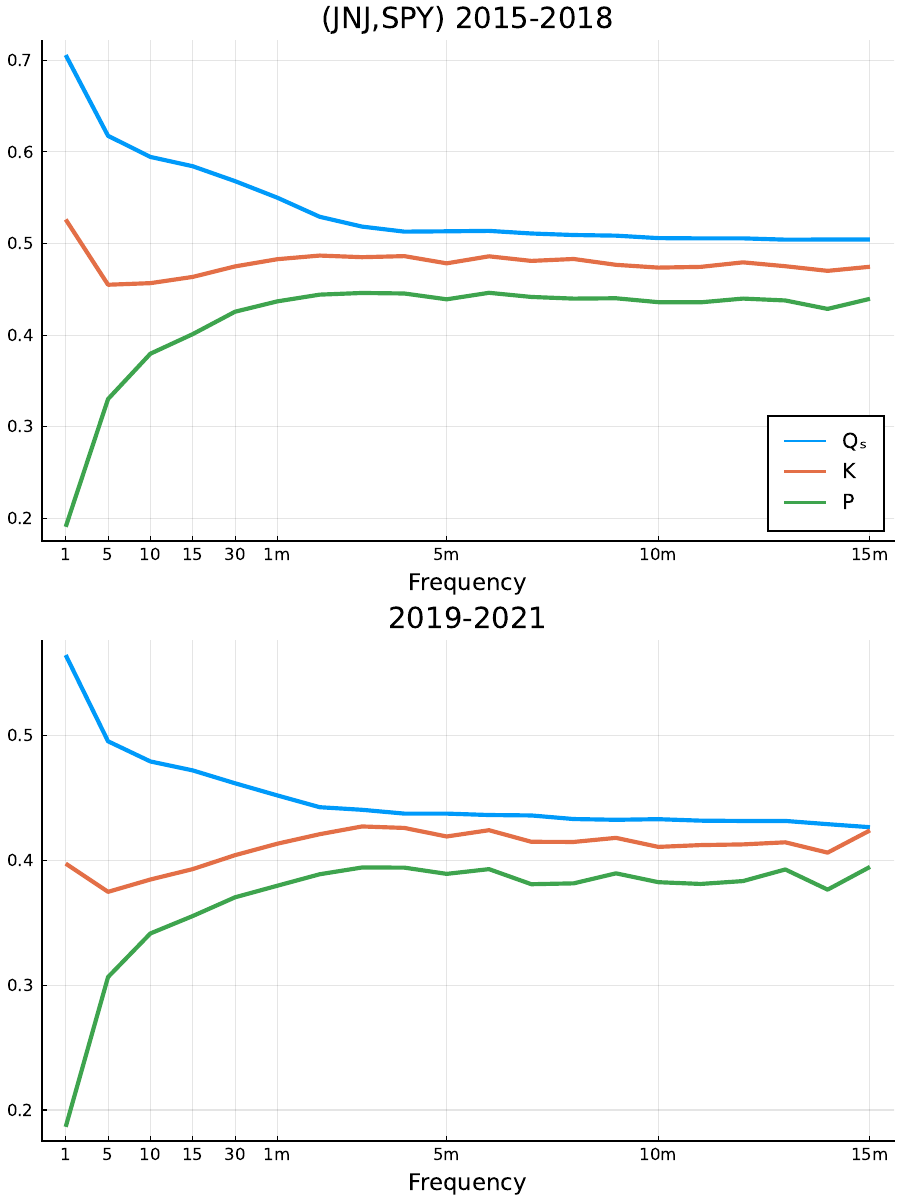}\includegraphics[width=0.2\textwidth]{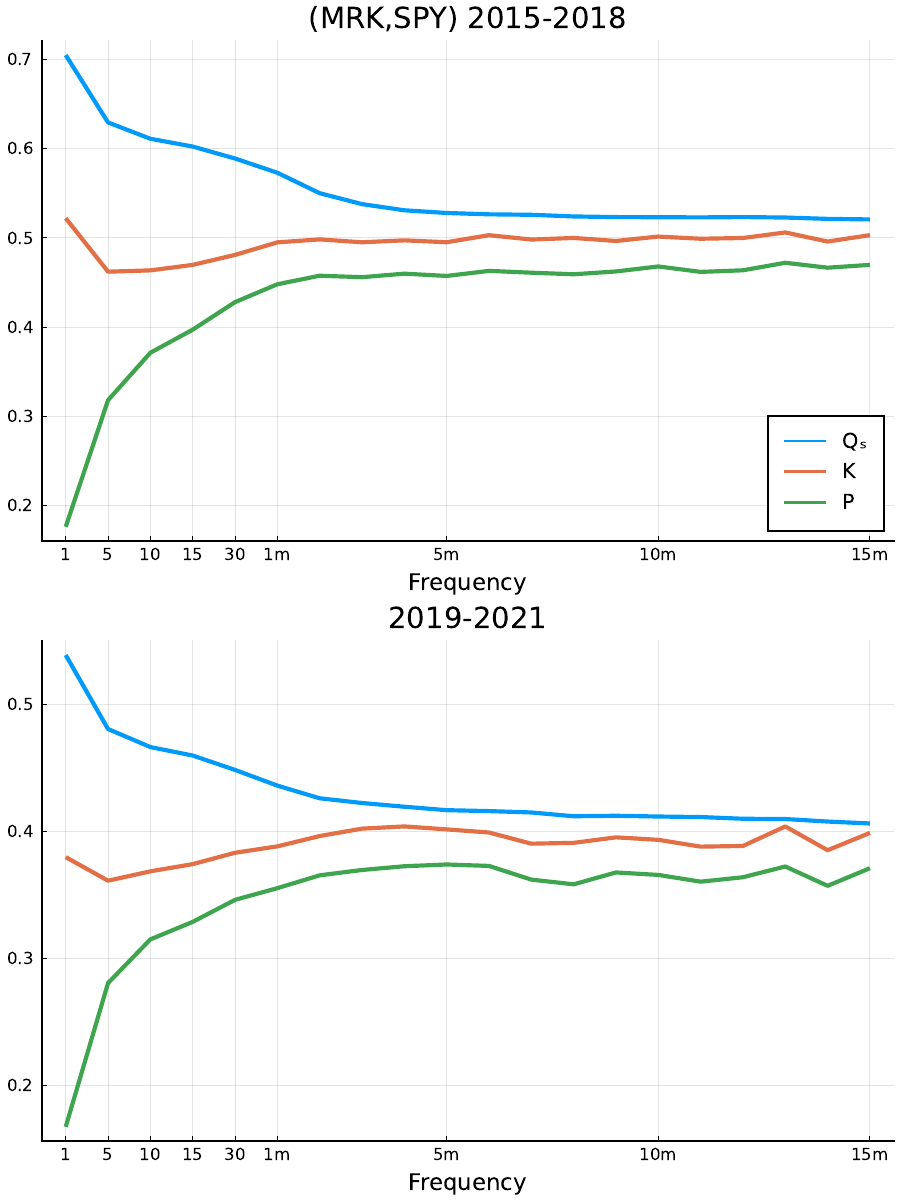}\includegraphics[width=0.2\textwidth]{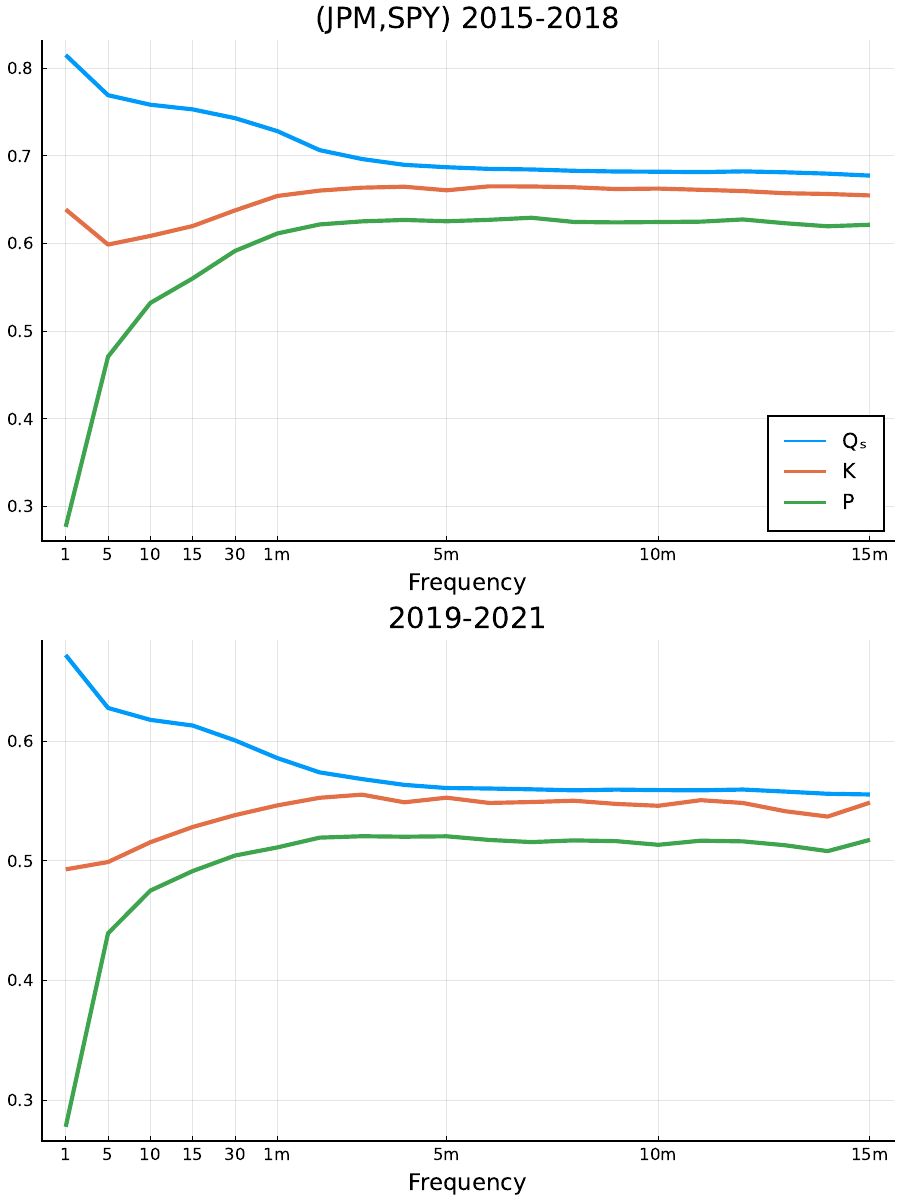}\includegraphics[width=0.2\textwidth]{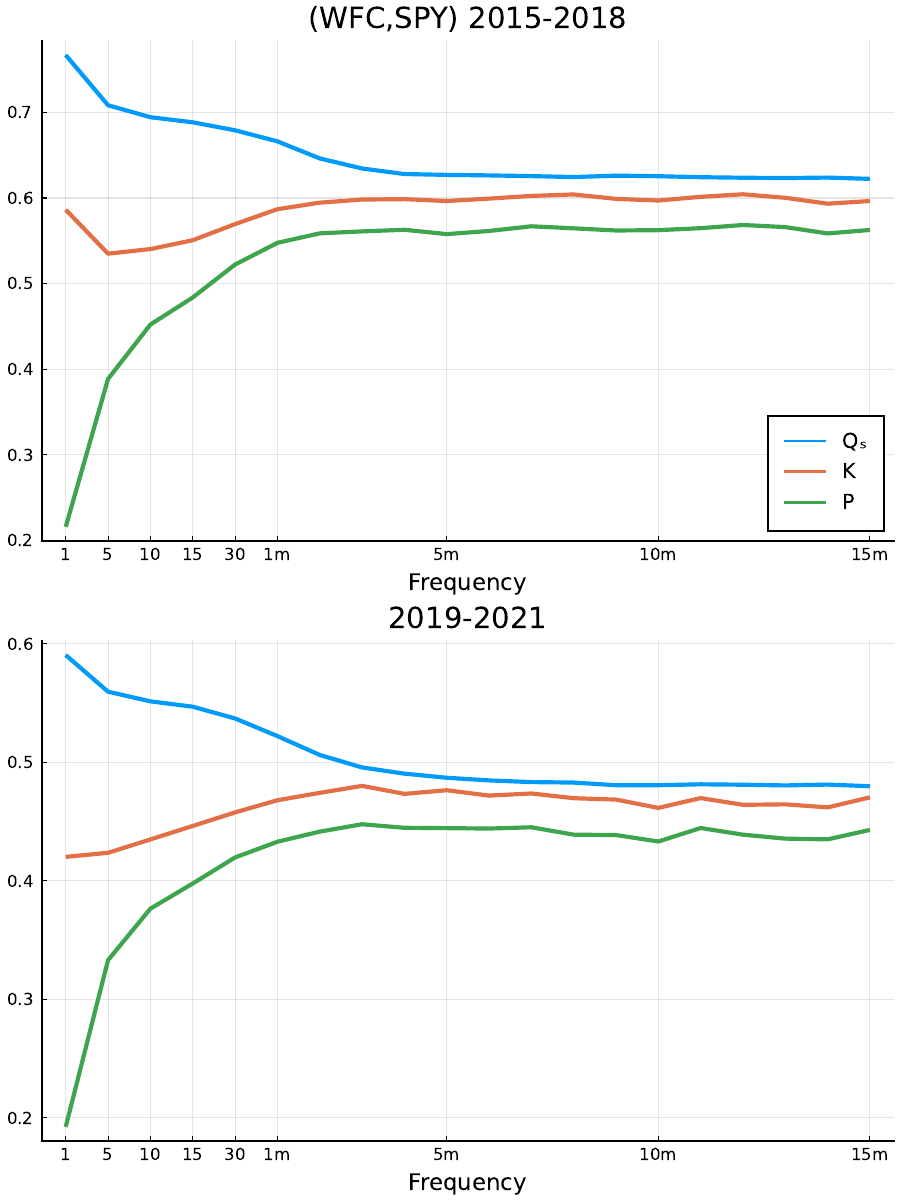}\includegraphics[width=0.2\textwidth]{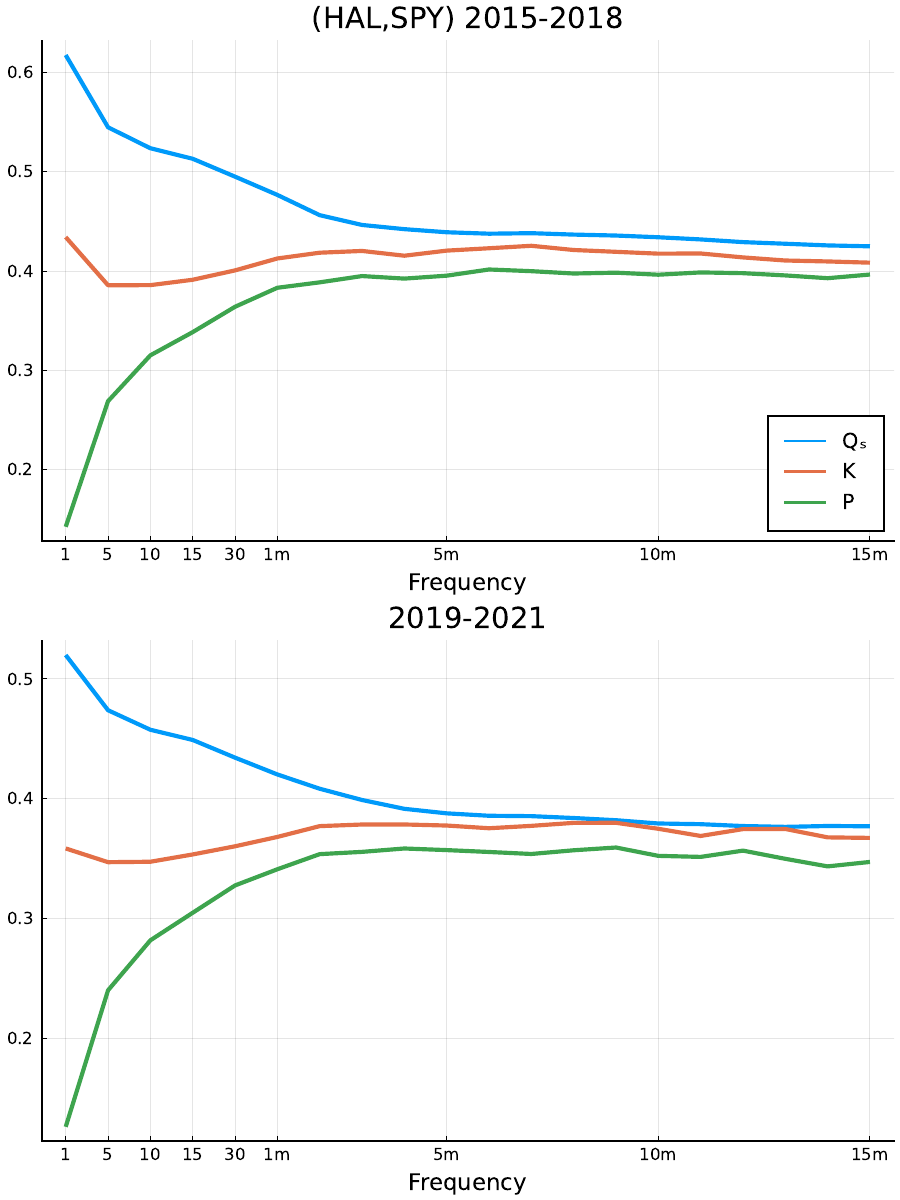}
\par\end{centering}
\begin{centering}
\includegraphics[width=0.2\textwidth]{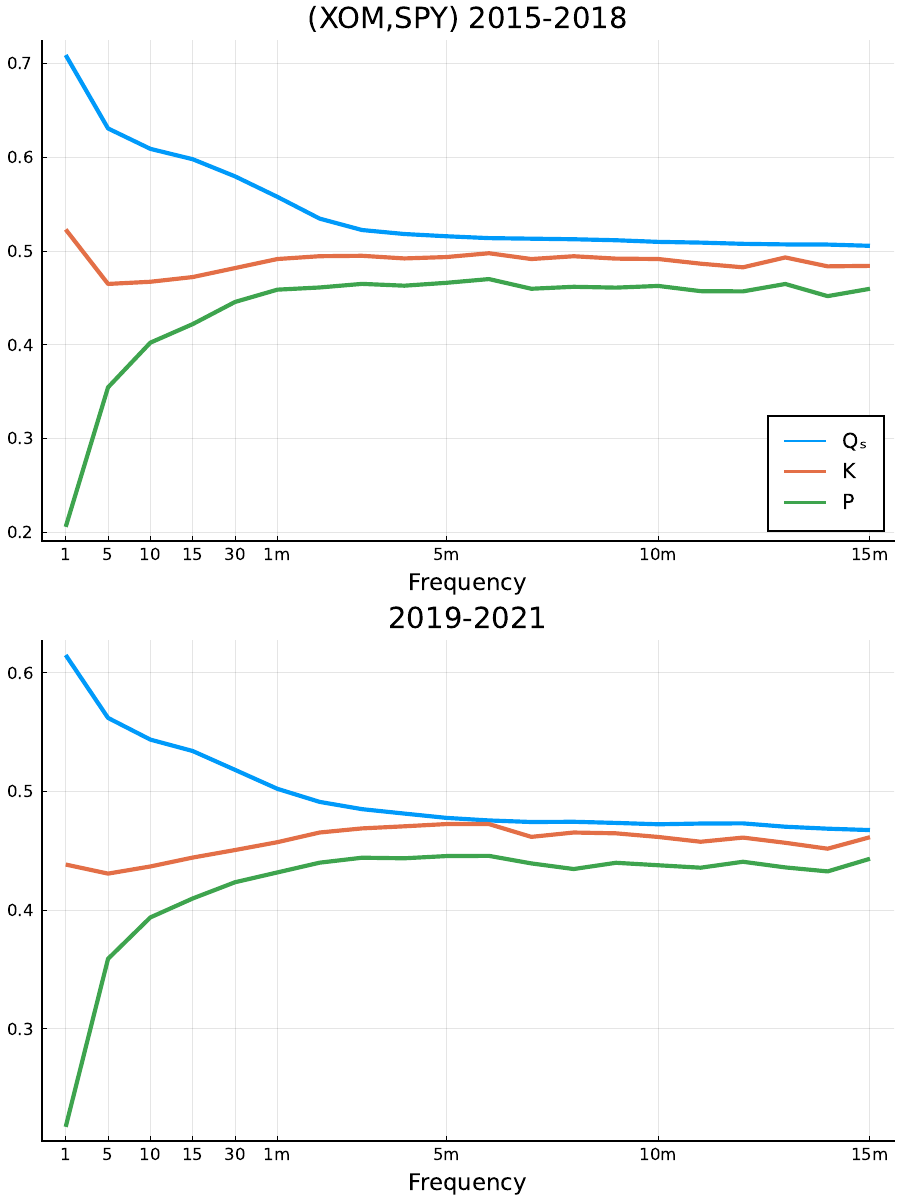}\includegraphics[width=0.2\textwidth]{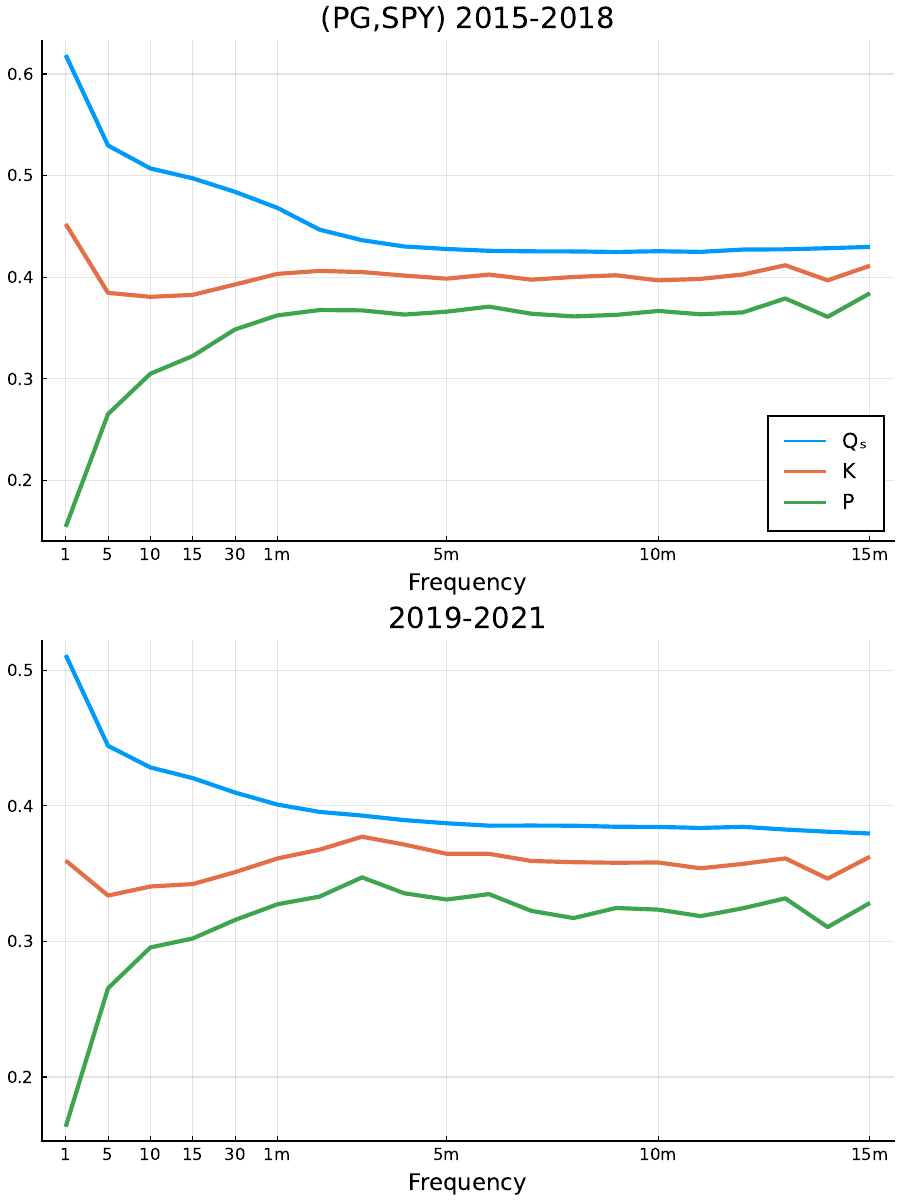}\includegraphics[width=0.2\textwidth]{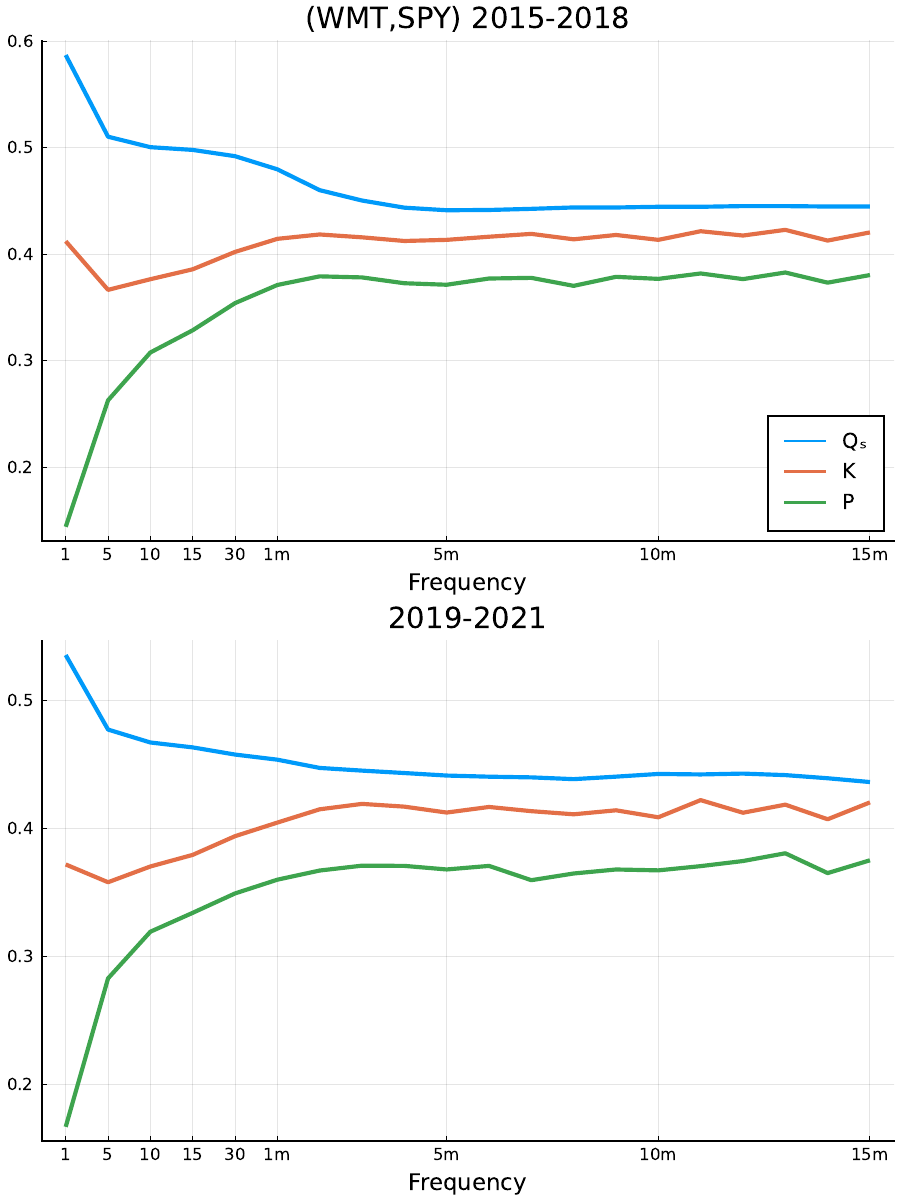}\includegraphics[width=0.2\textwidth]{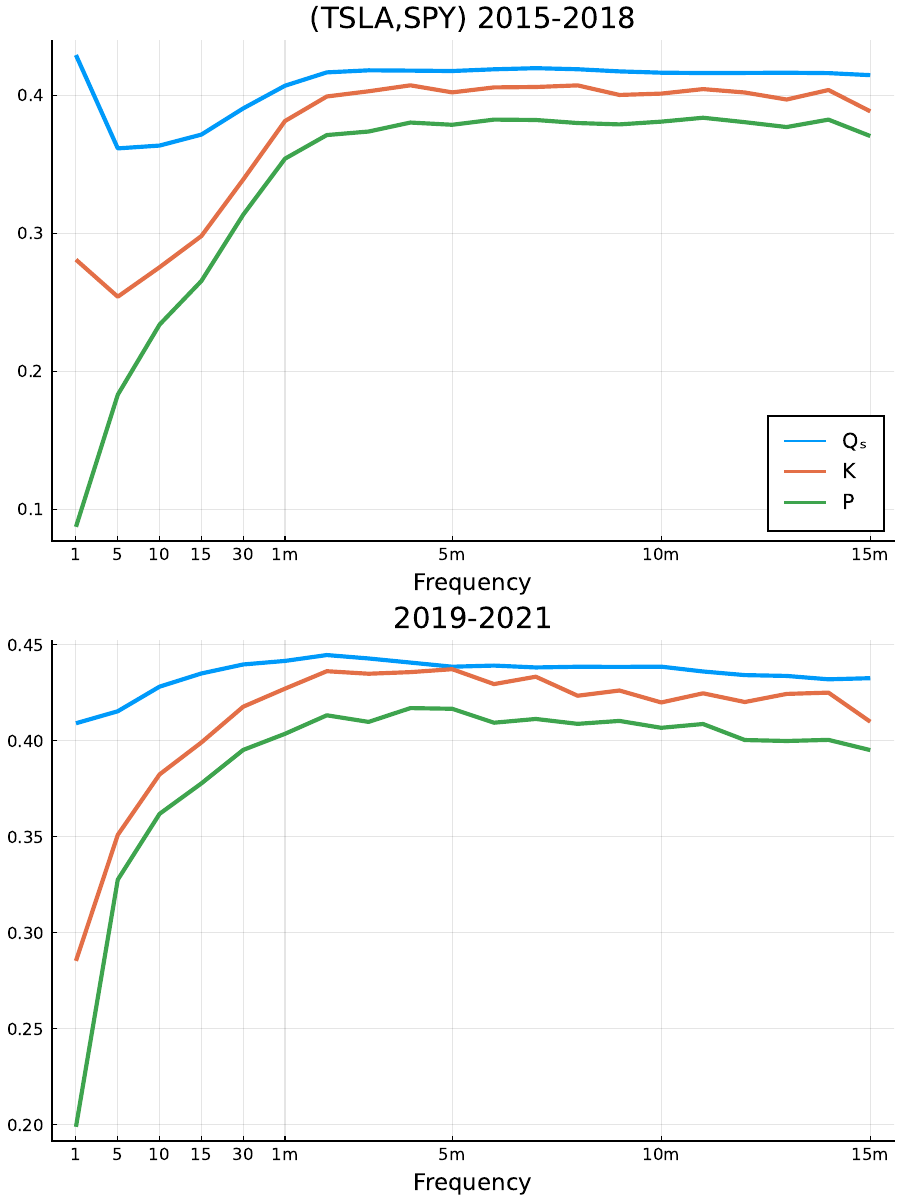}\includegraphics[width=0.2\textwidth]{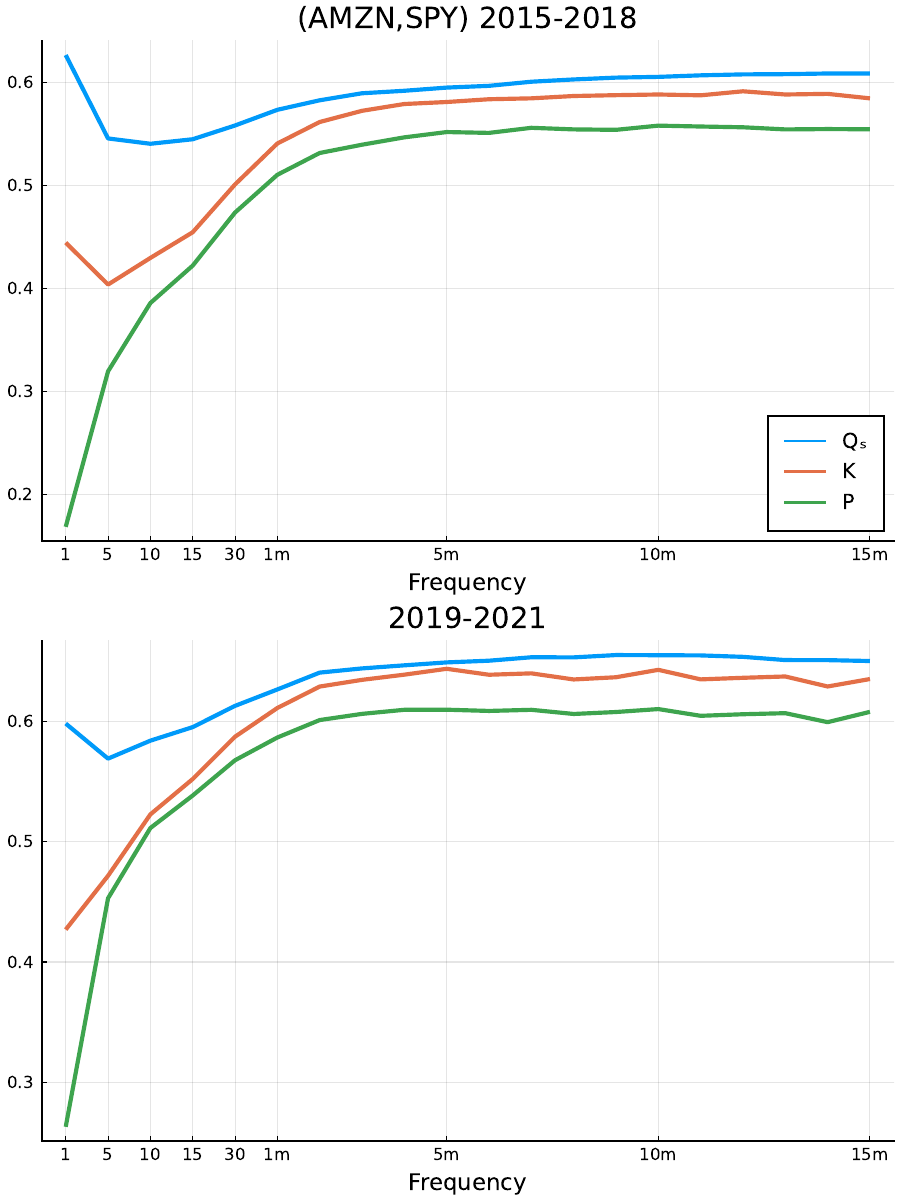}
\par\end{centering}
\begin{centering}
\includegraphics[width=0.2\textwidth]{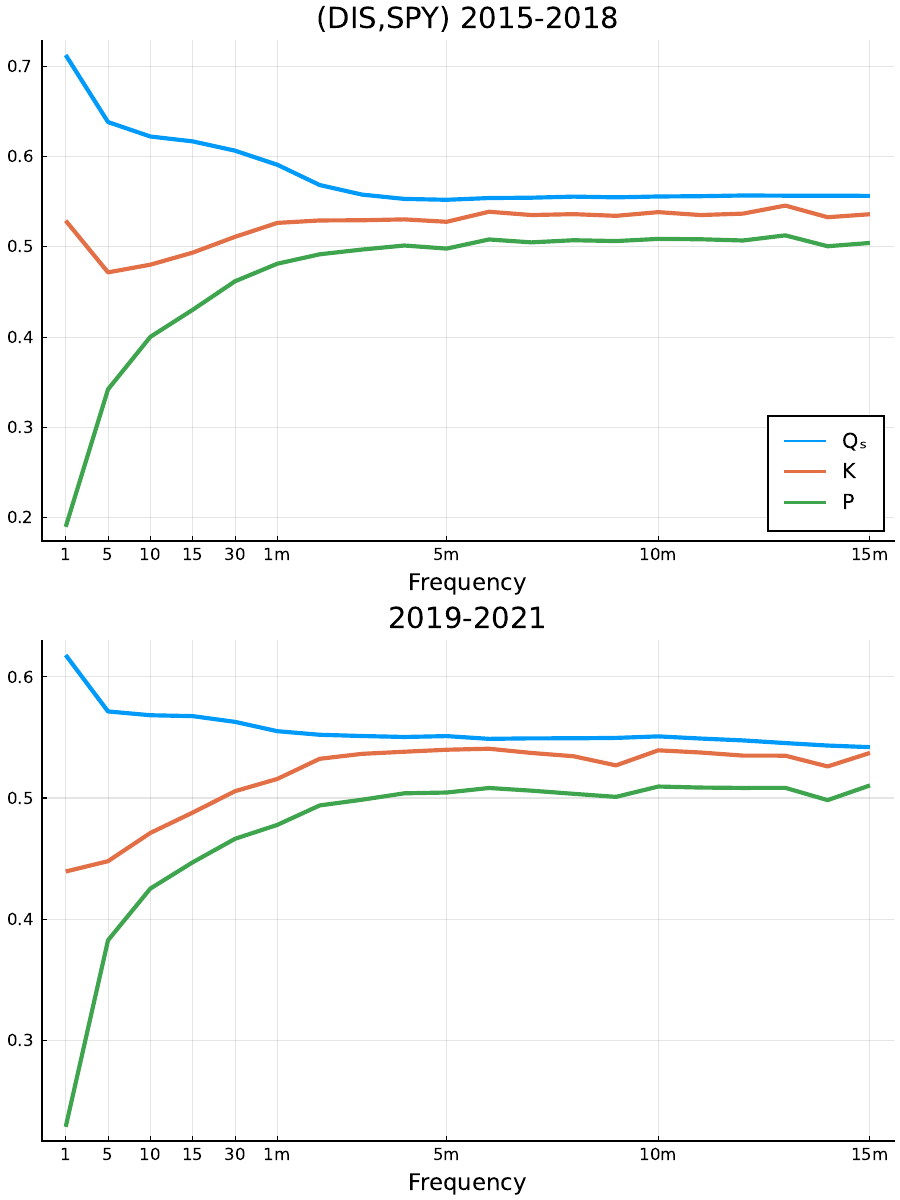}\includegraphics[width=0.2\textwidth]{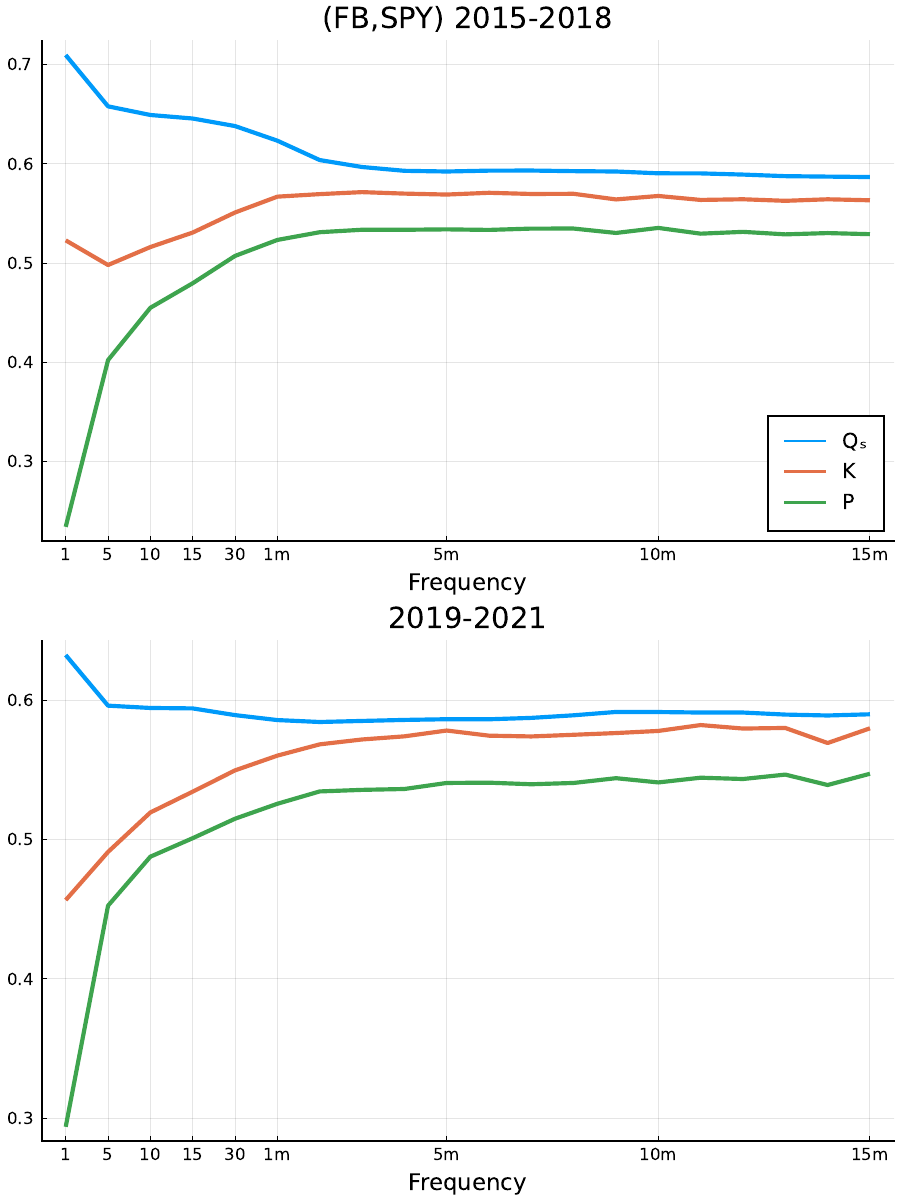}
\par\end{centering}
\caption{Correlation signature plots for 22 assets and SPY.\label{fig:SplitSignaturePlotsMarketCorr}}
\end{figure}
\begin{figure}[H]
\begin{centering}
\includegraphics[width=0.25\textwidth]{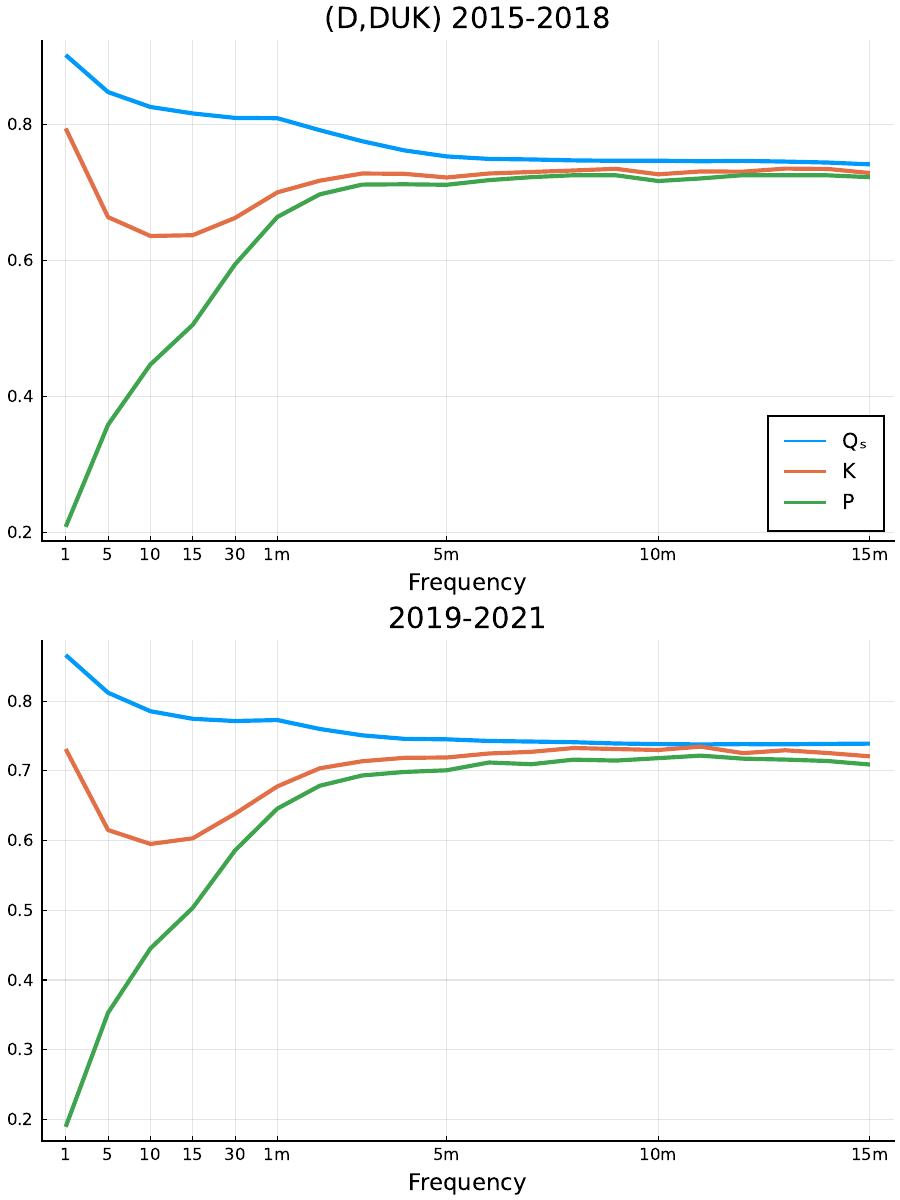}\includegraphics[width=0.25\textwidth]{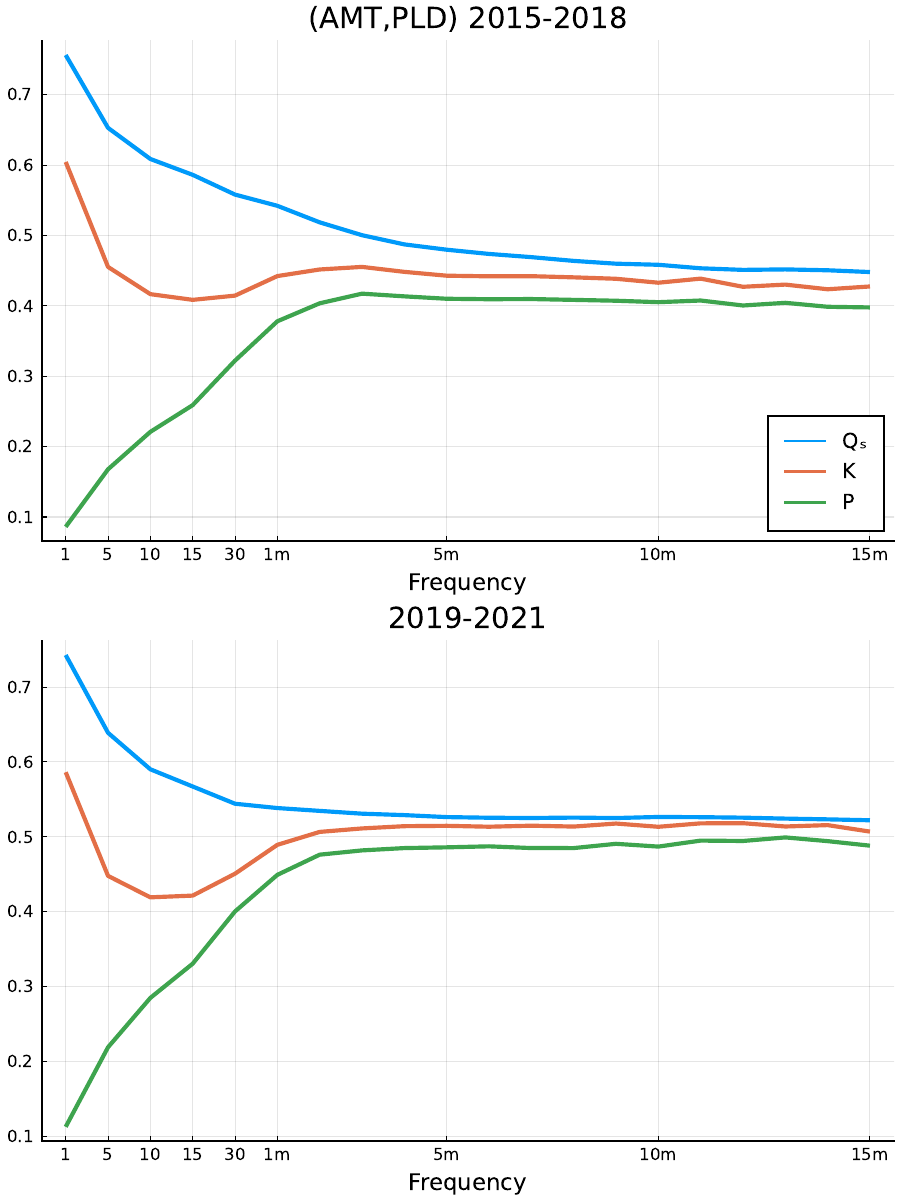}\includegraphics[width=0.25\textwidth]{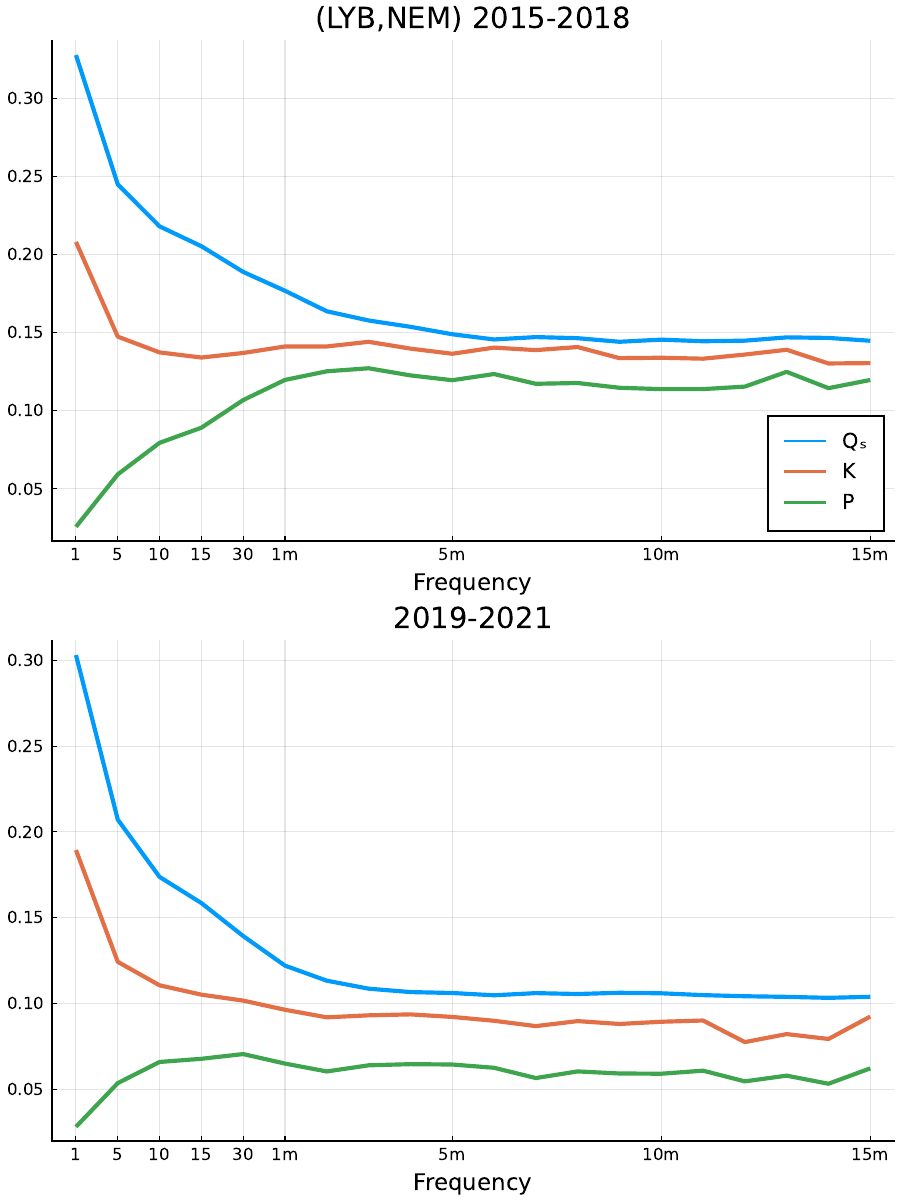}\includegraphics[width=0.25\textwidth]{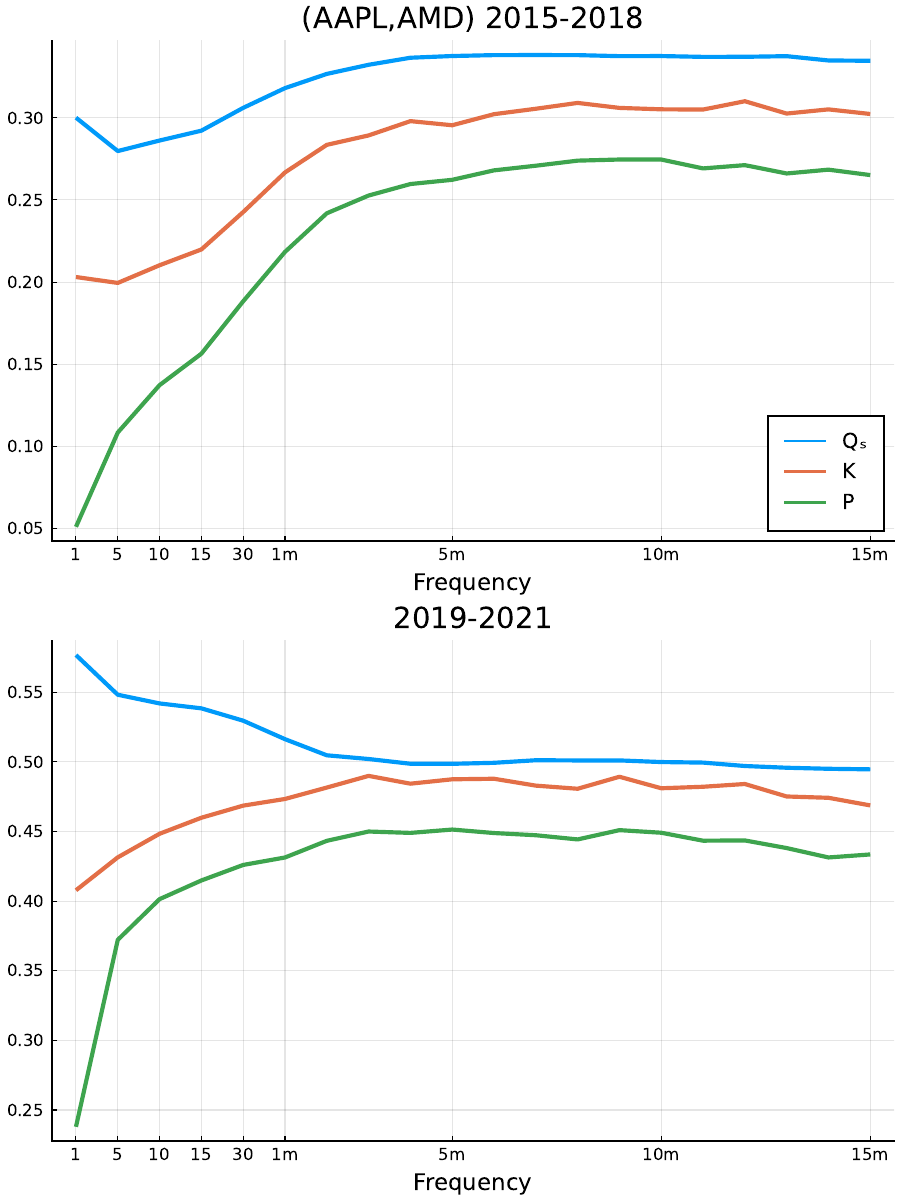}
\par\end{centering}
\begin{centering}
\includegraphics[width=0.25\textwidth]{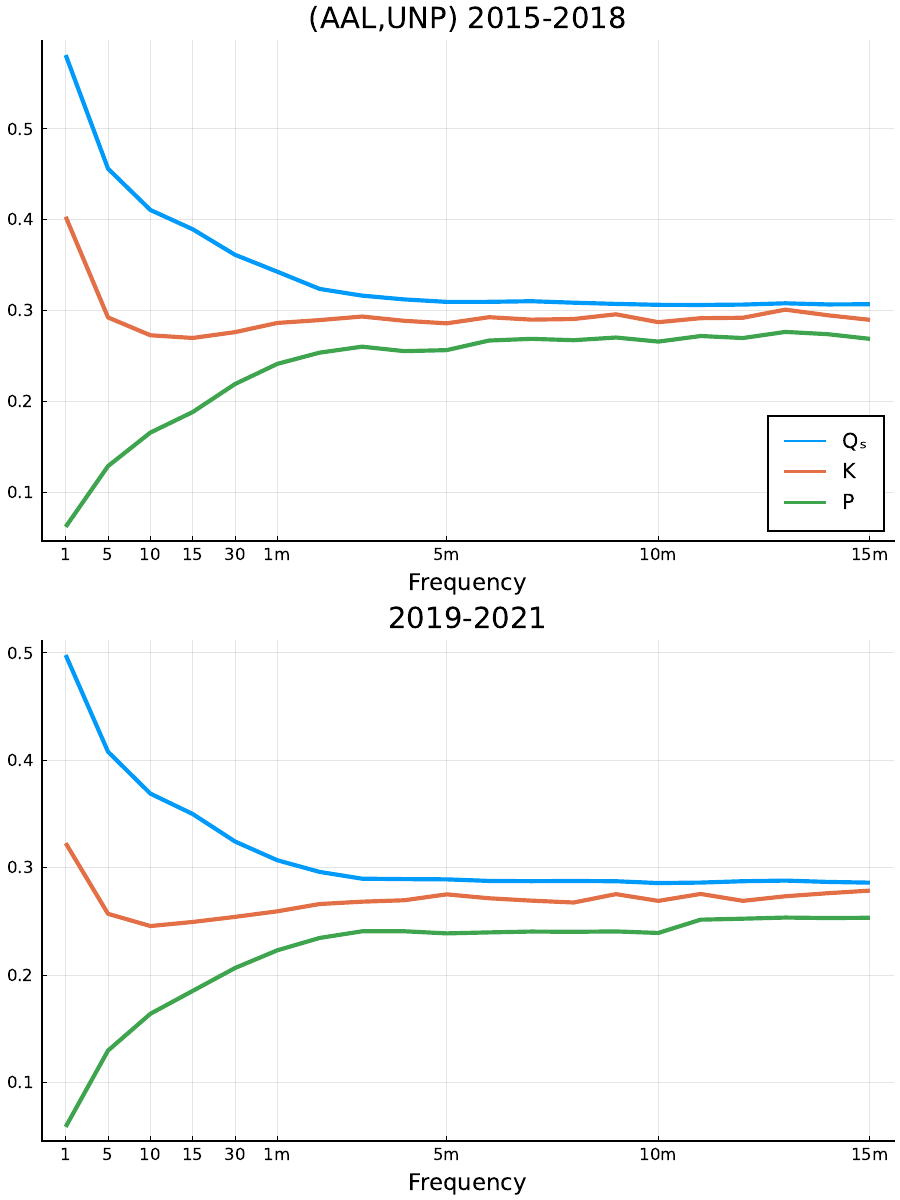}\includegraphics[width=0.25\textwidth]{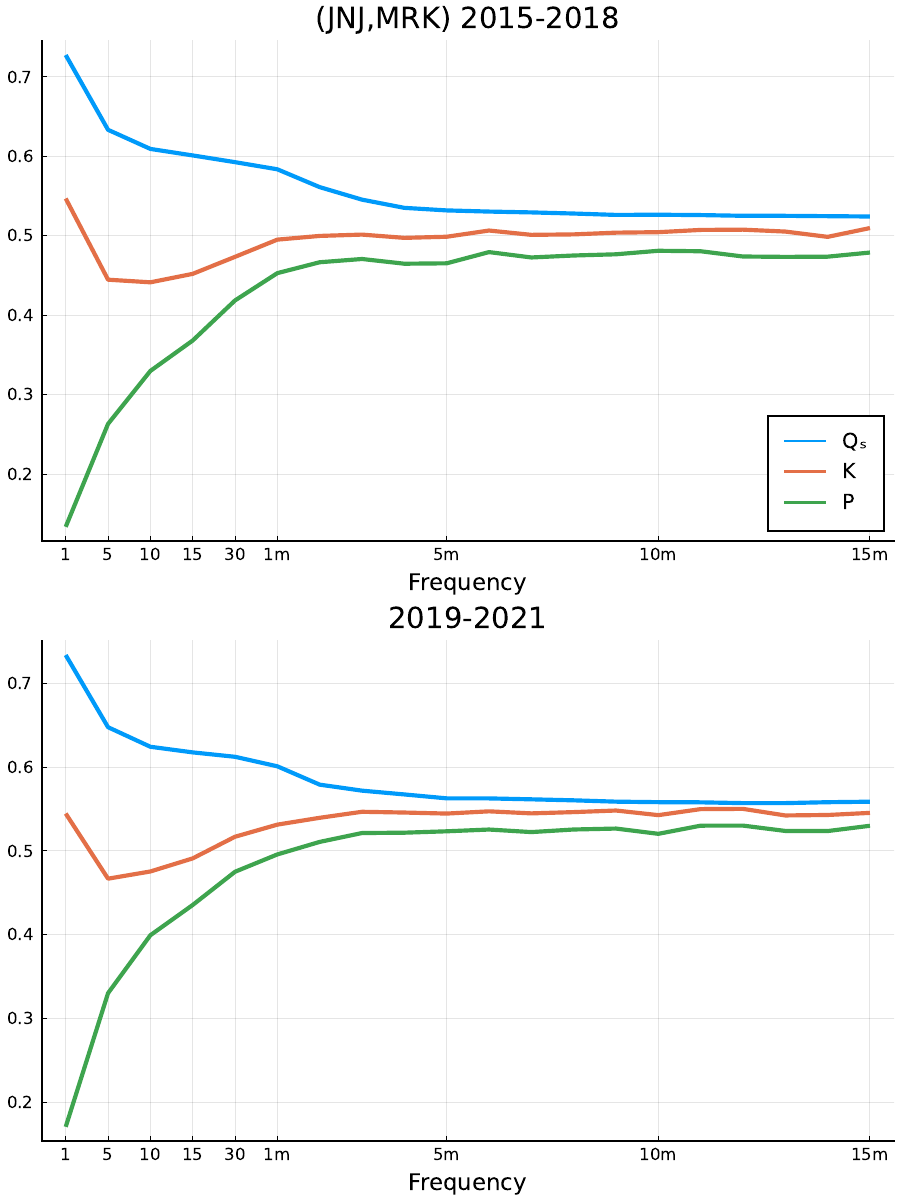}\includegraphics[width=0.25\textwidth]{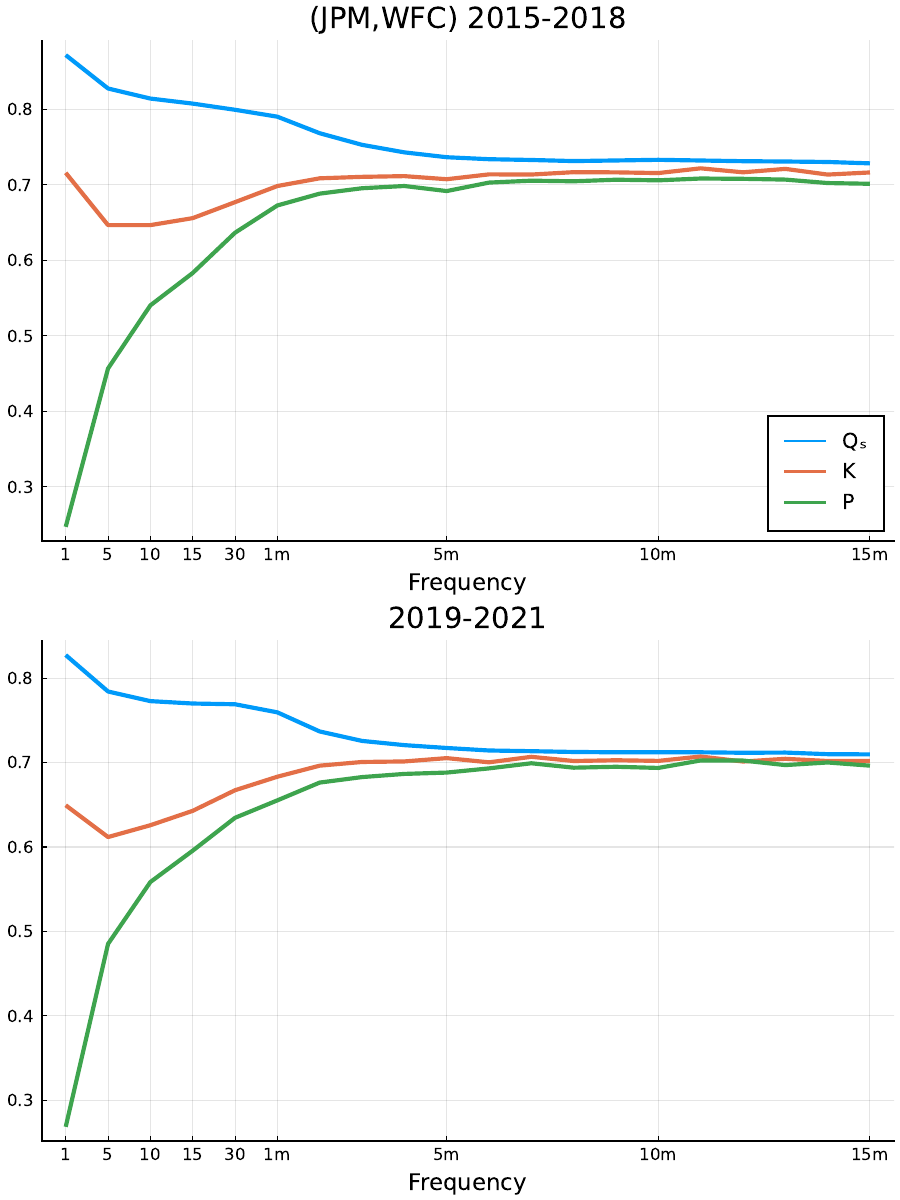}\includegraphics[width=0.25\textwidth]{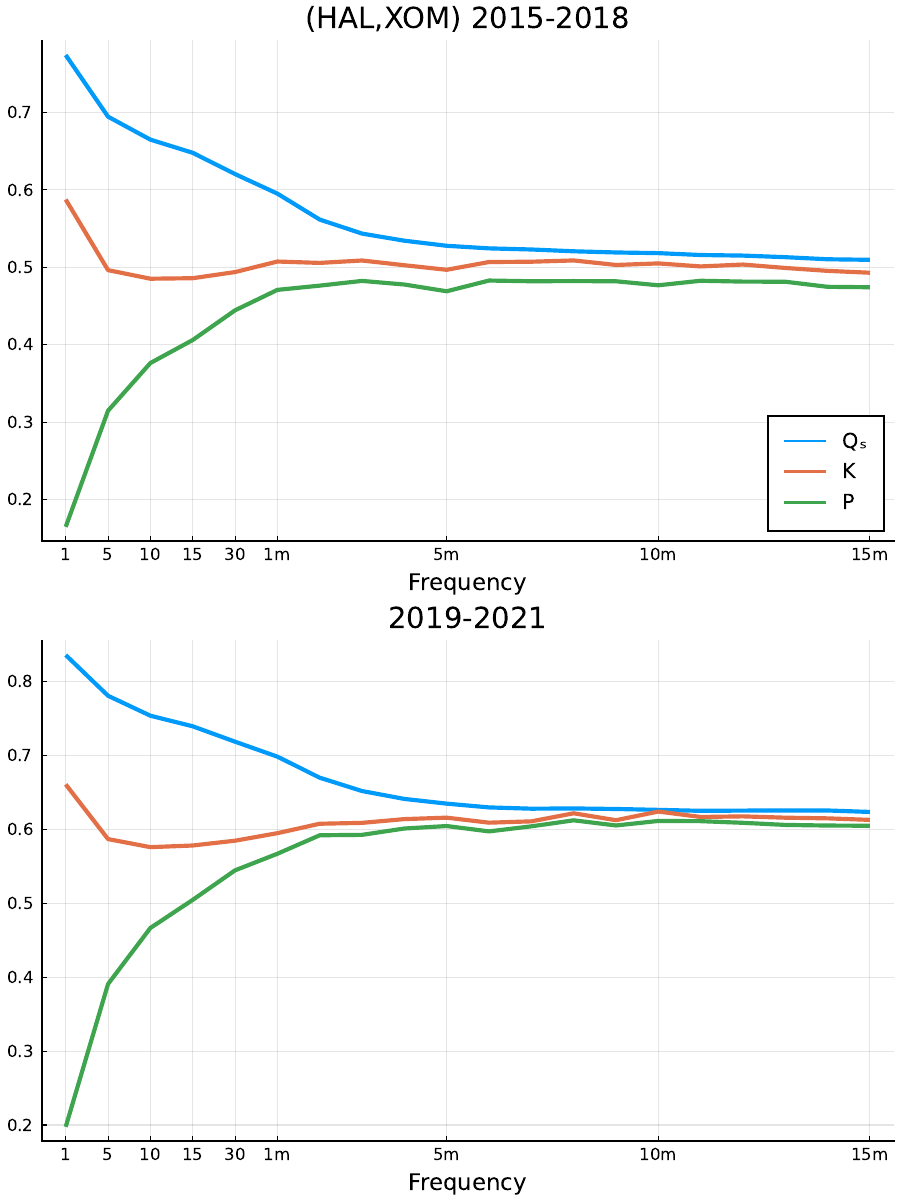}
\par\end{centering}
\begin{centering}
\includegraphics[width=0.25\textwidth]{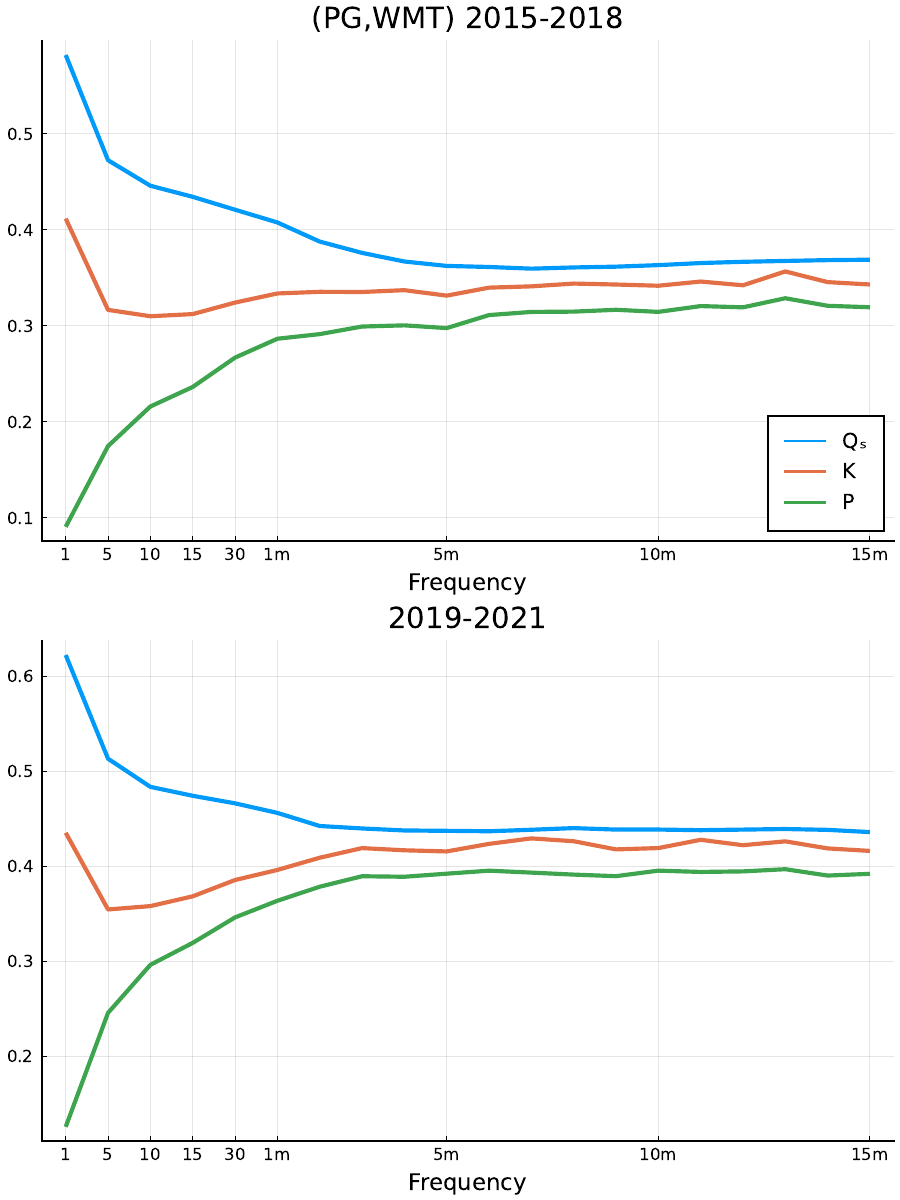}\includegraphics[width=0.25\textwidth]{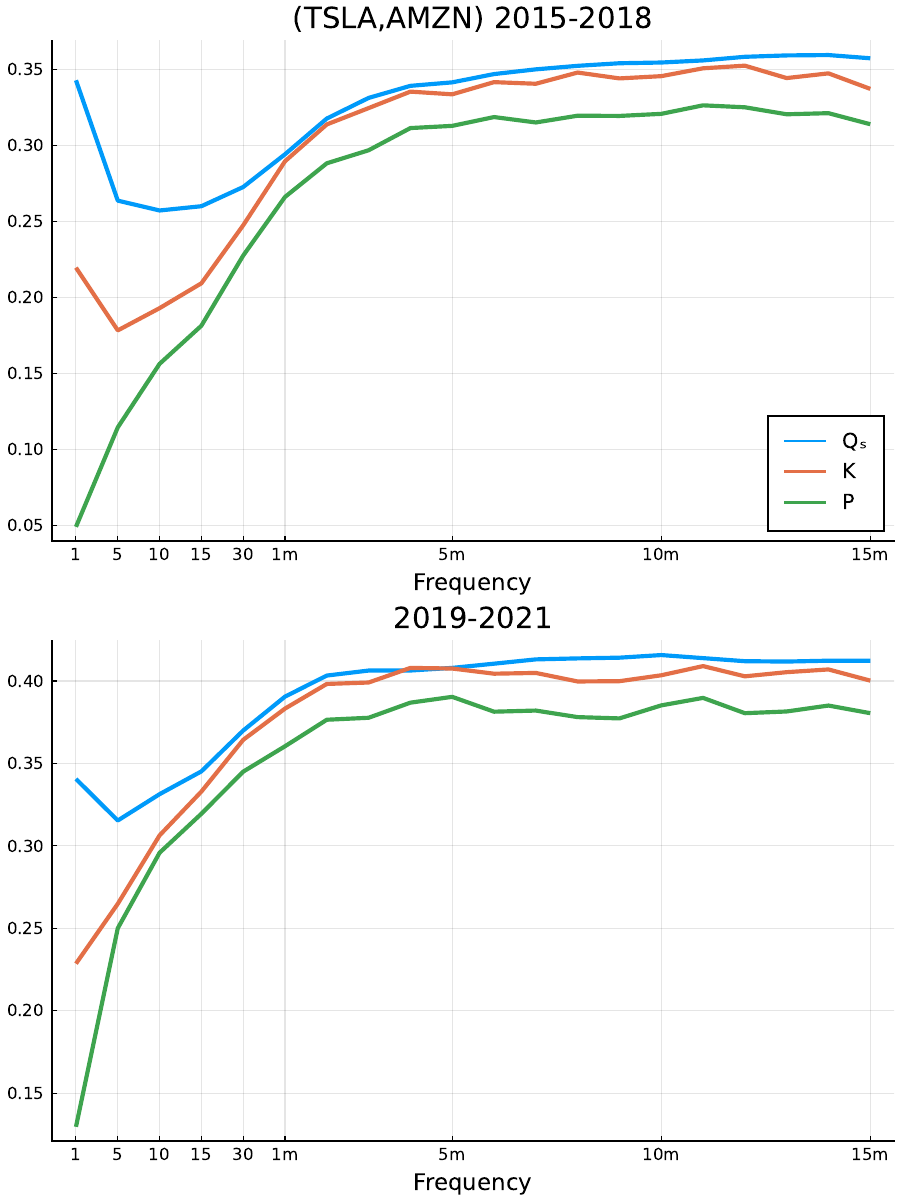}\includegraphics[width=0.25\textwidth]{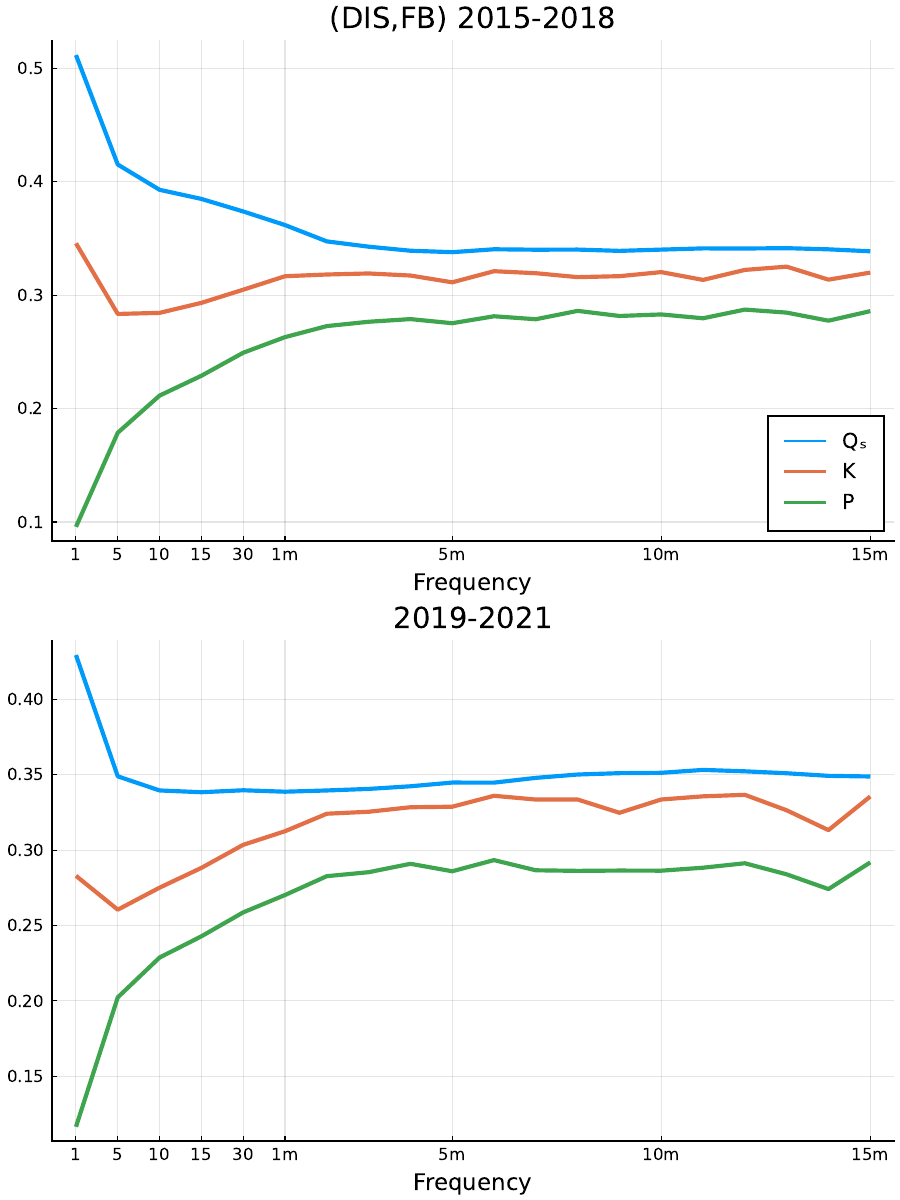}
\par\end{centering}
\caption{Correlation signature plots for 11 pairs of stocks.\label{fig:SplitSignaturePlotsSector}}
\end{figure}
 The average level of correlation can obviously be different in the
two sample periods, but the shapes of the signature plots are nevertheless
remarkably similar for the two sub-period for all pairs of assets,
see Figures \ref{fig:SplitSignaturePlotsMarketCorr} and \ref{fig:SplitSignaturePlotsSector}. 

Similarly it is important to investigate if the patterns we observed
in intraday correlations, relative volatilities, and market betas
are robust features or merely specific to the sample period. We would
expect these results to be robust, because ATT reported the same results
(for market betas) and they used a different sample period. To explore
this a bit further, we split the sample in two and estimated intraday
correlations, relative volatilities, and market betas for each sub-period
and each assets in the Small Universe.
\begin{figure}[H]
\begin{centering}
\includegraphics[width=0.33\textwidth]{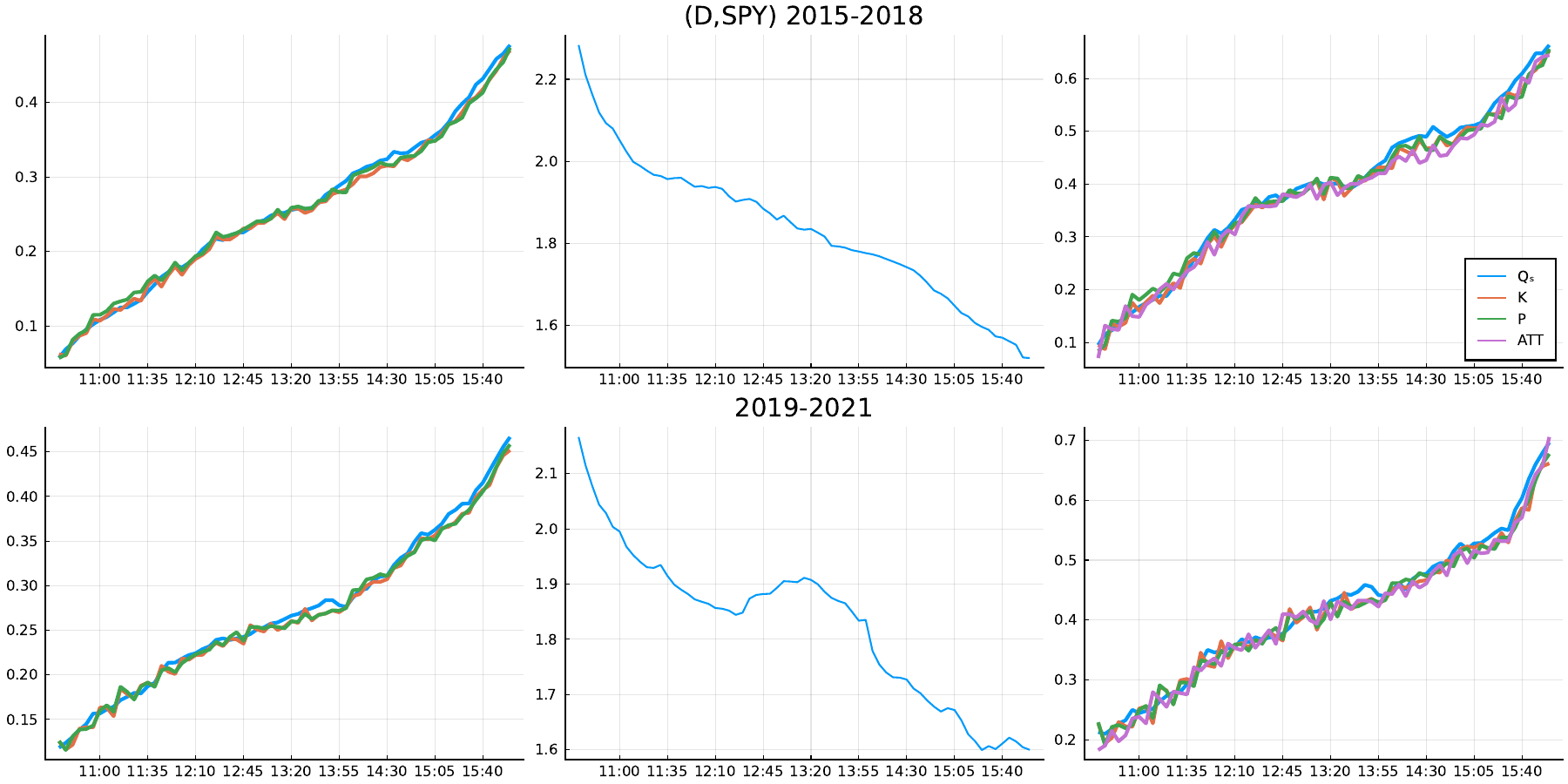}\includegraphics[width=0.33\textwidth]{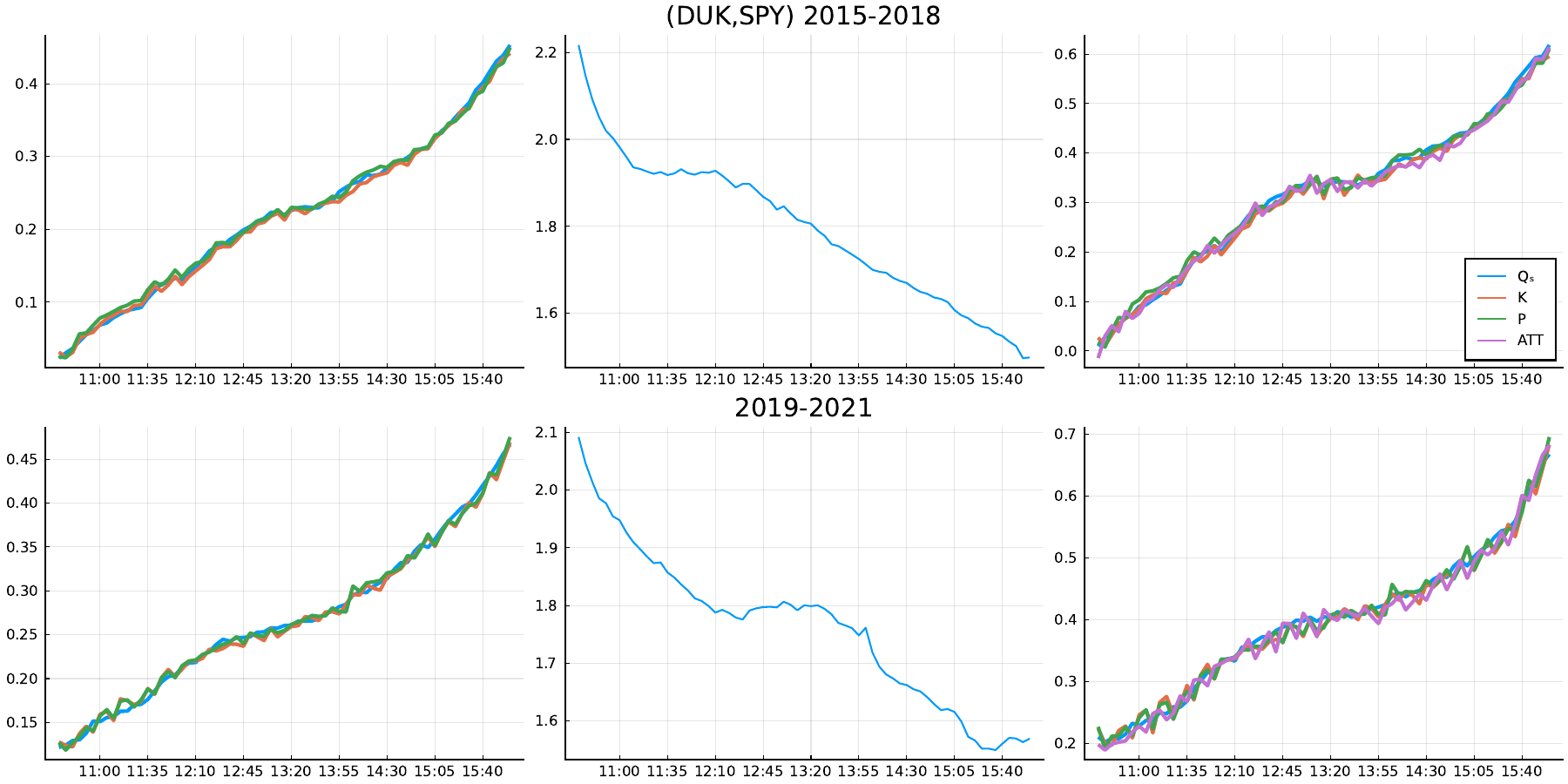}\includegraphics[width=0.33\textwidth]{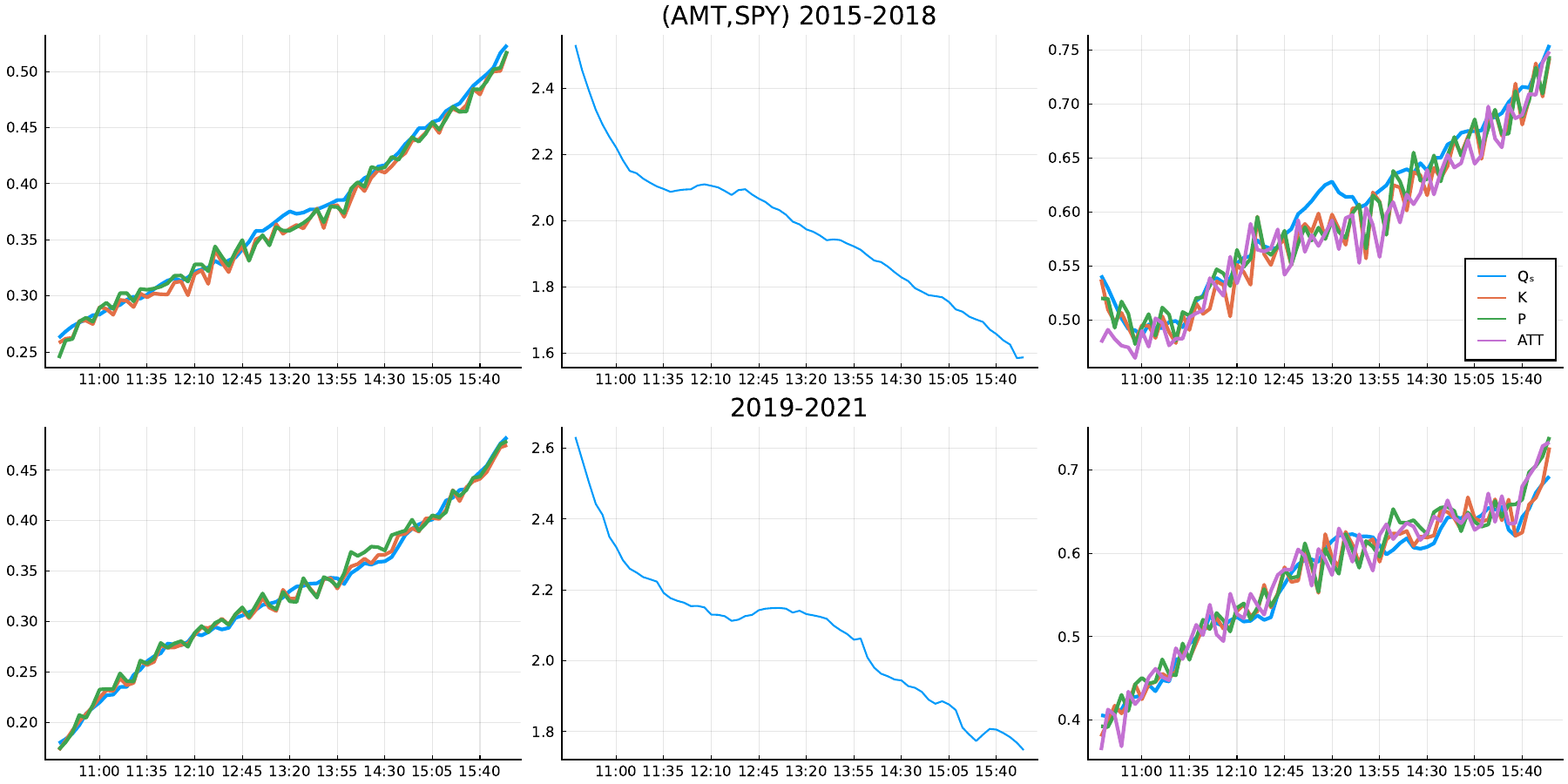}
\par\end{centering}
\begin{centering}
\includegraphics[width=0.33\textwidth]{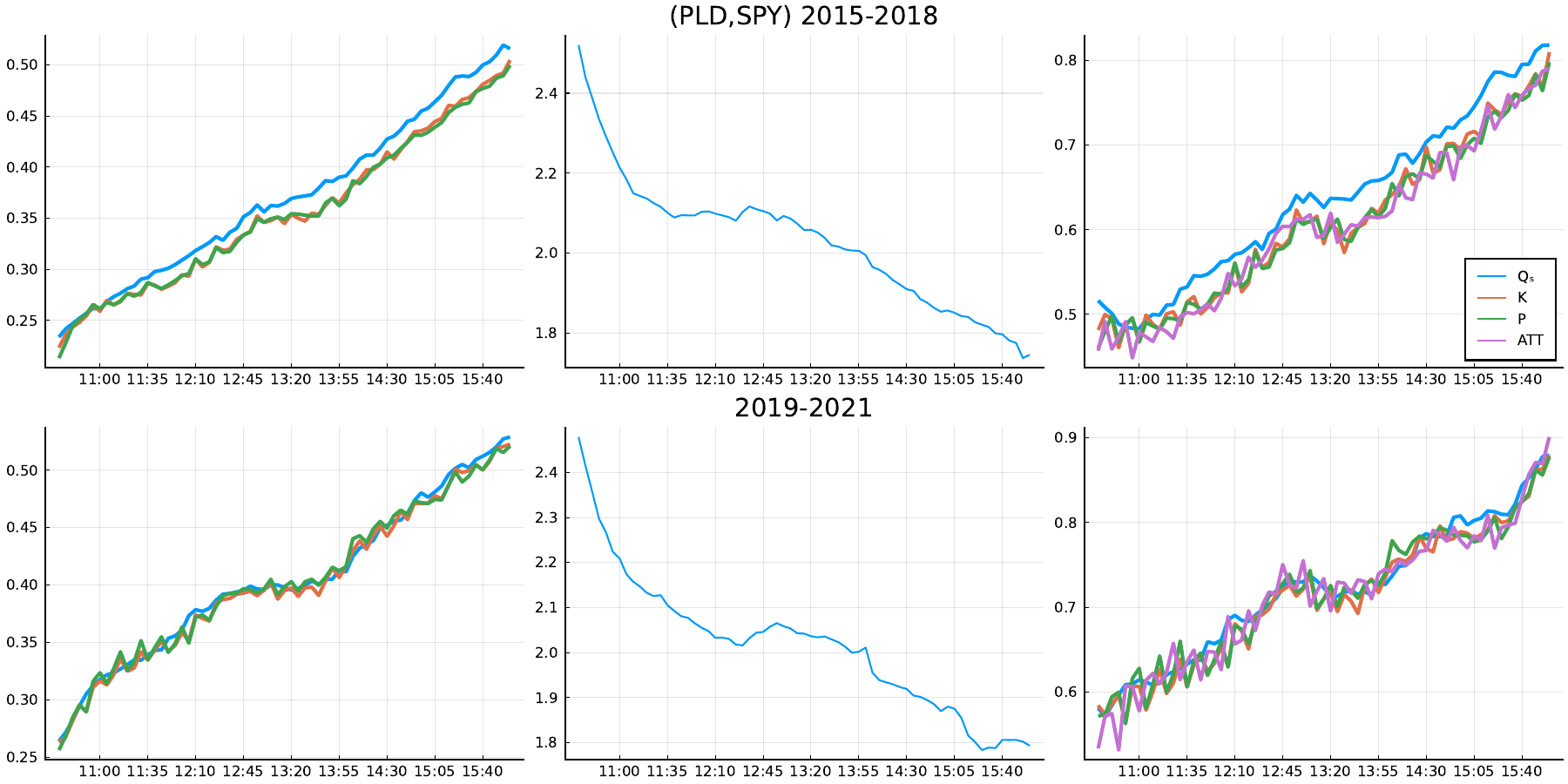}\includegraphics[width=0.33\textwidth]{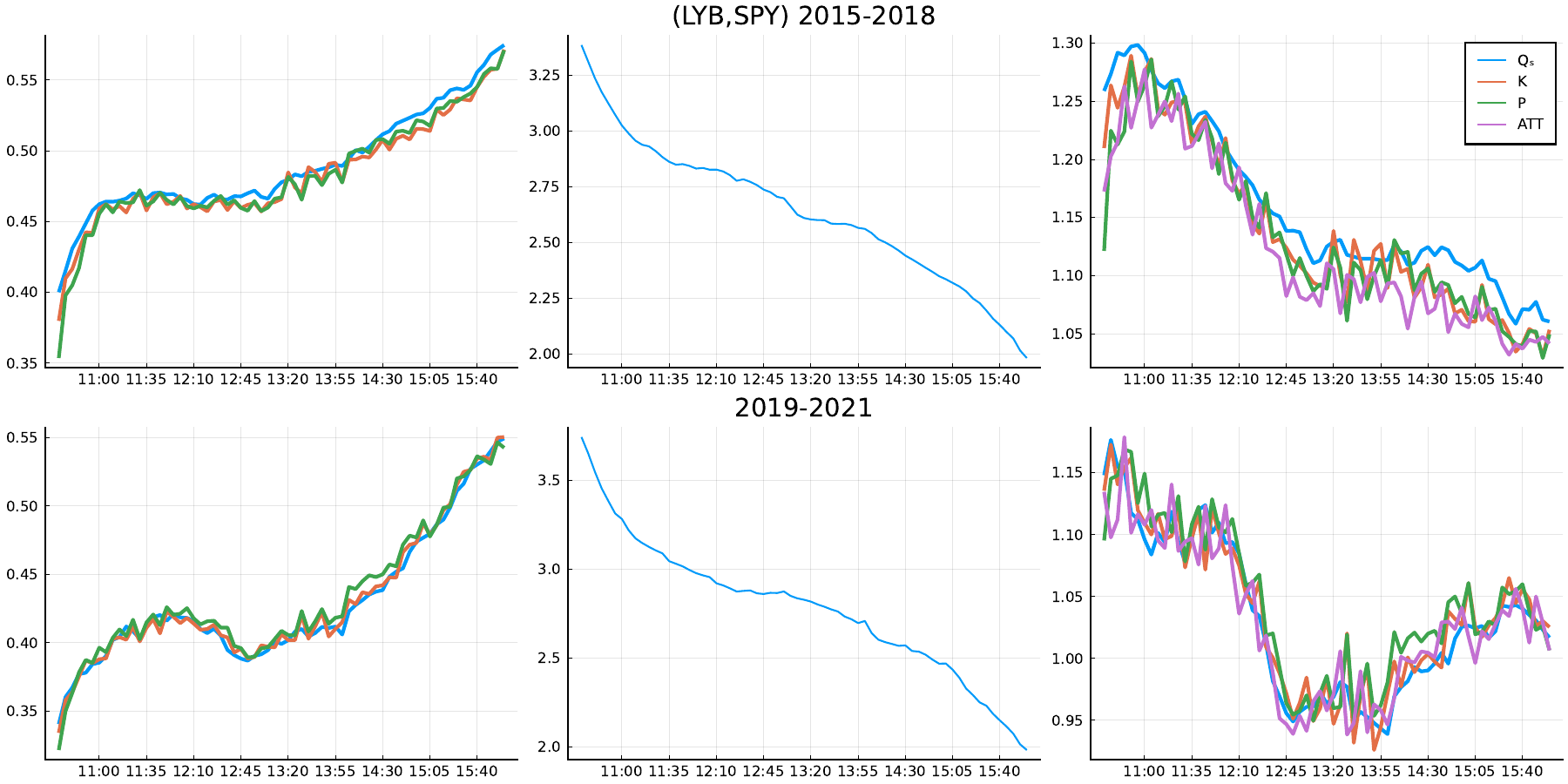}\includegraphics[width=0.33\textwidth]{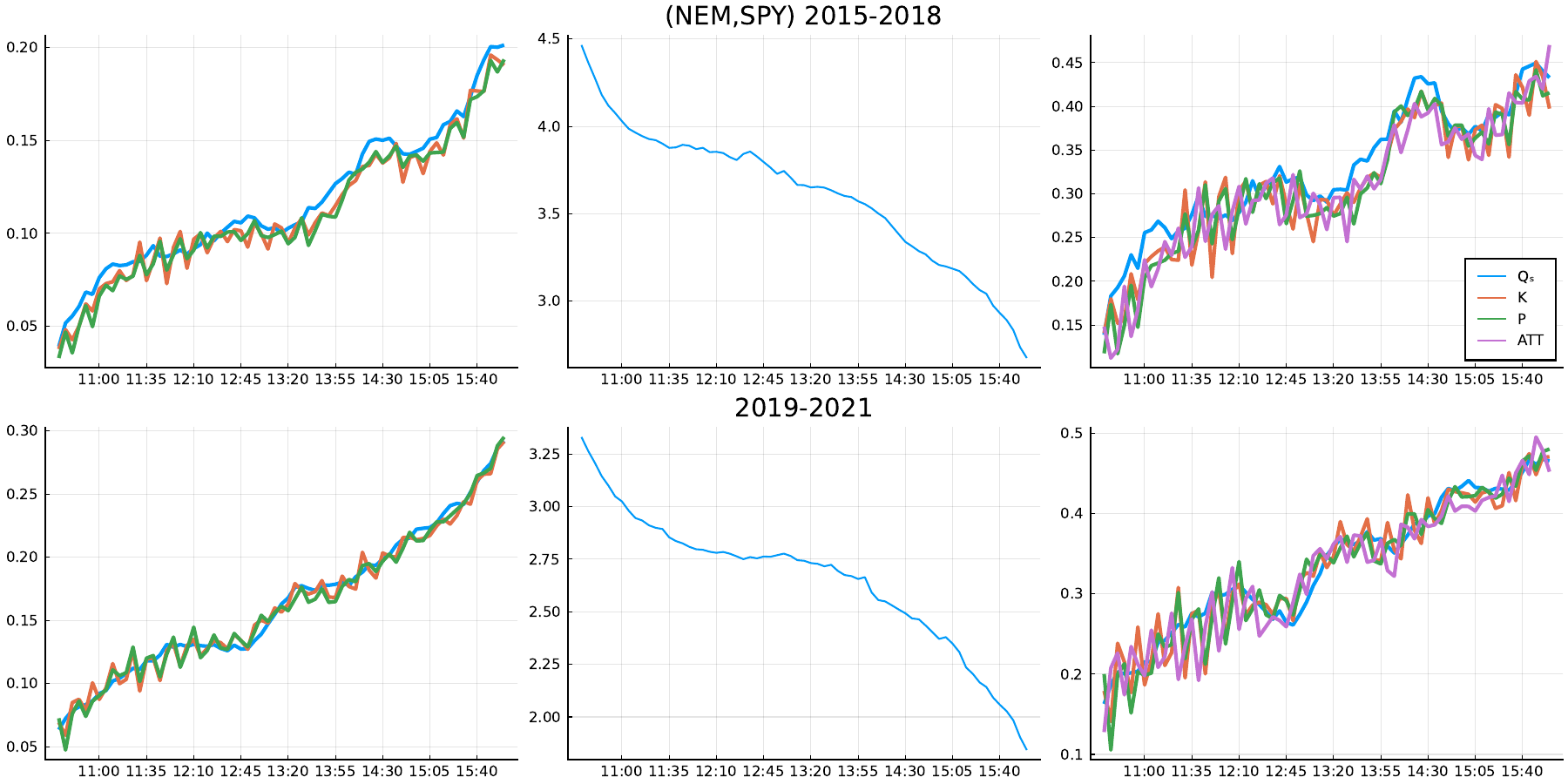}
\par\end{centering}
\begin{centering}
\includegraphics[width=0.33\textwidth]{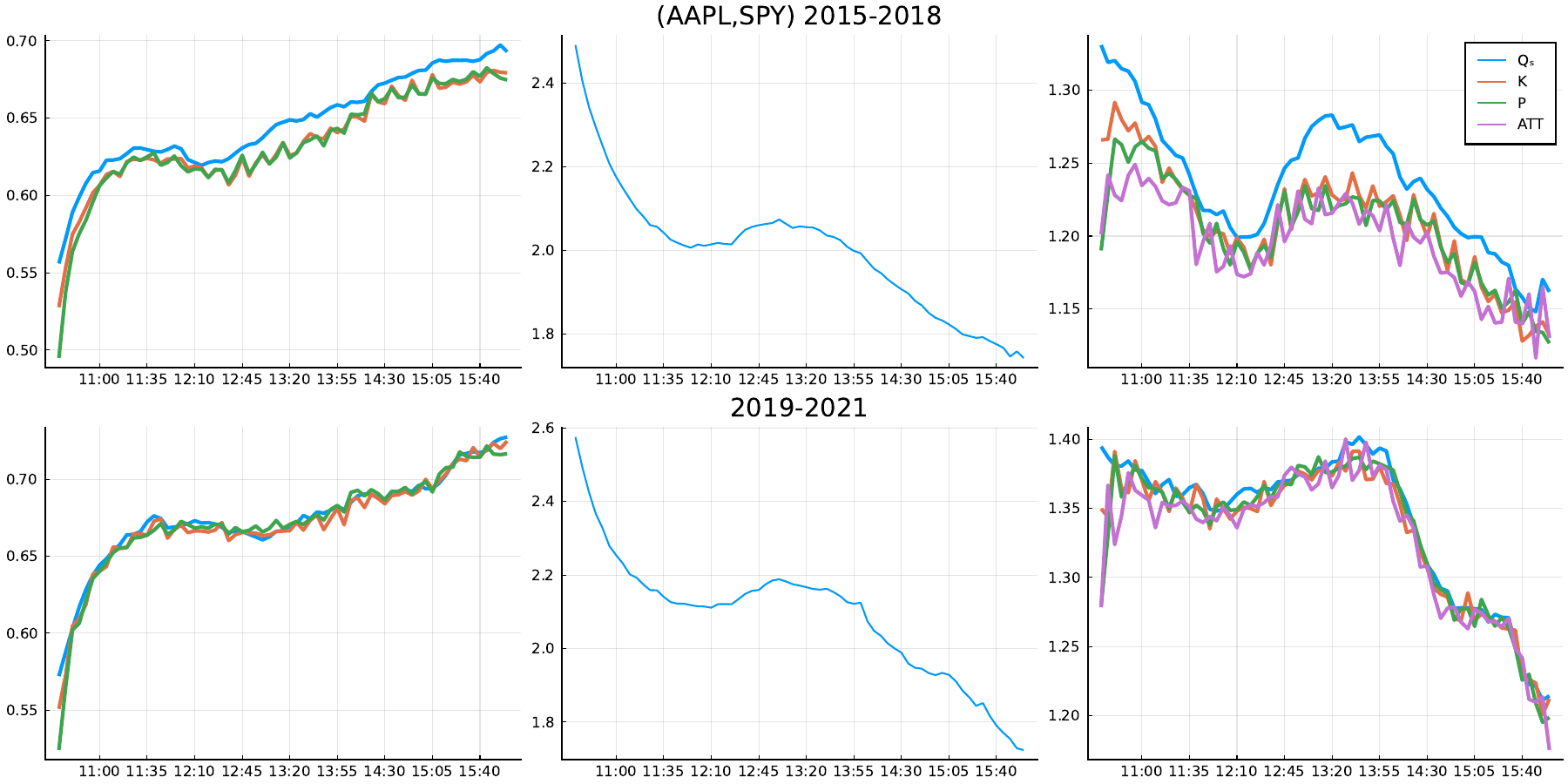}\includegraphics[width=0.33\textwidth]{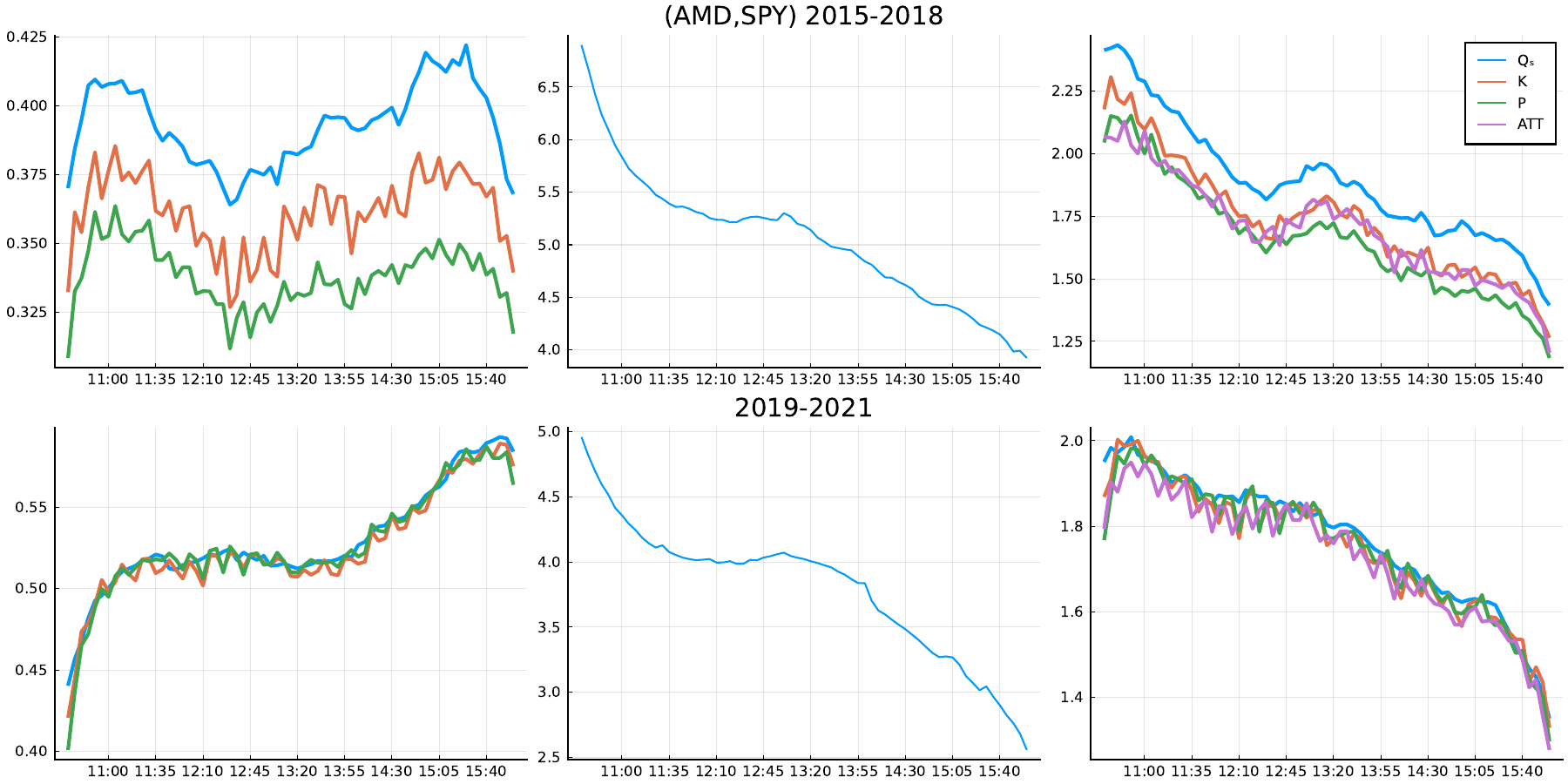}\includegraphics[width=0.33\textwidth]{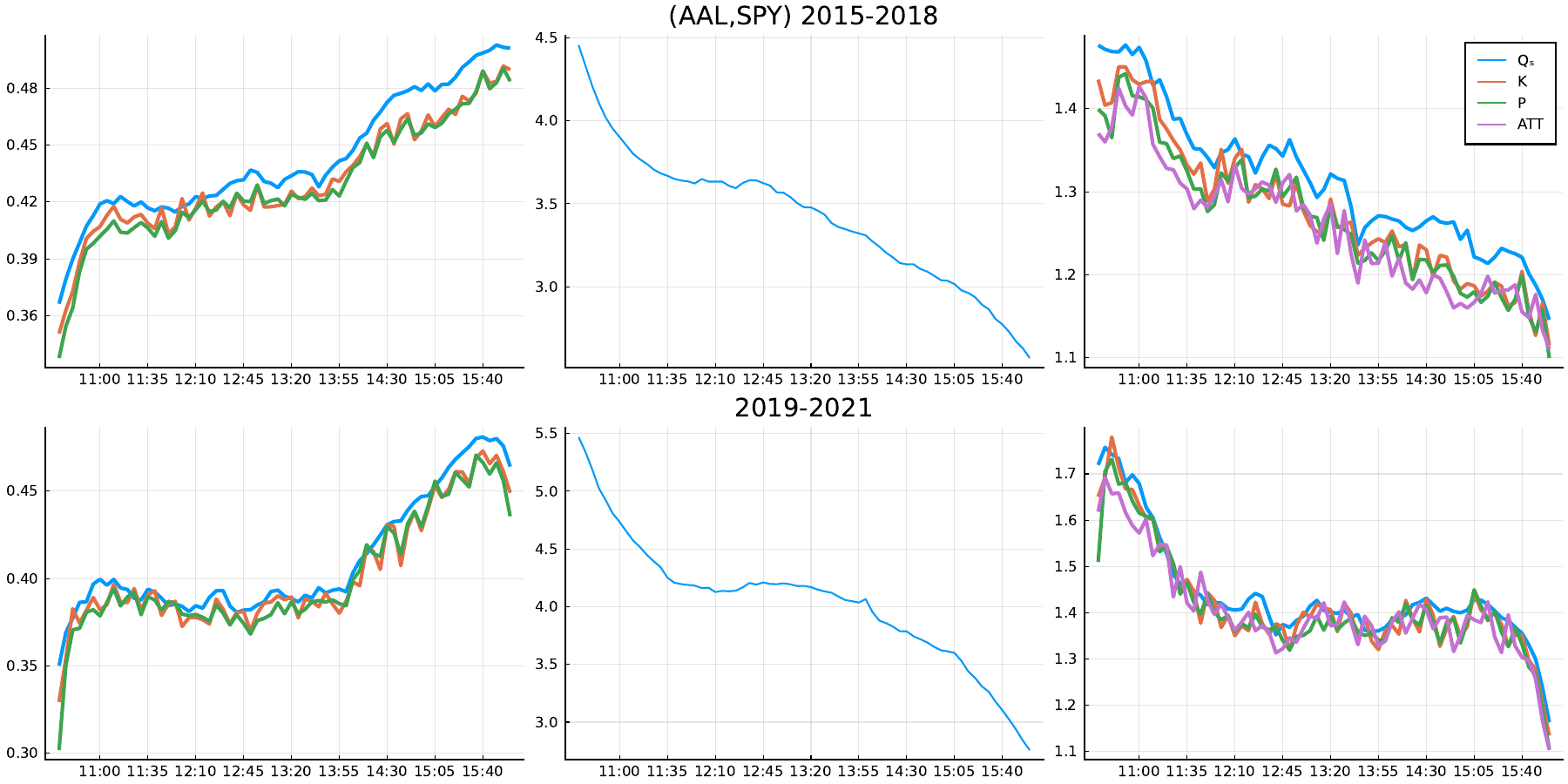}
\par\end{centering}
\begin{centering}
\includegraphics[width=0.33\textwidth]{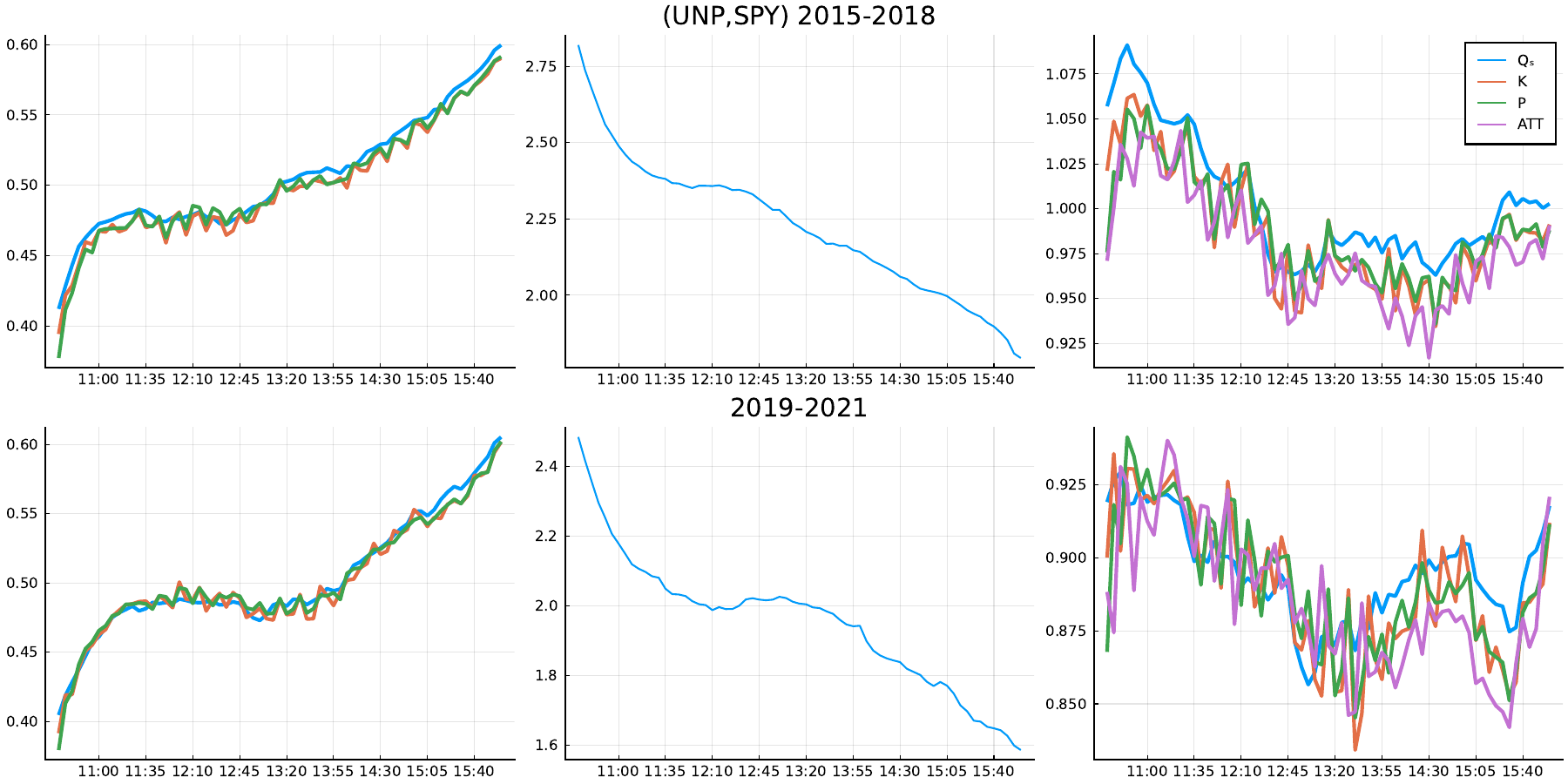}\includegraphics[width=0.33\textwidth]{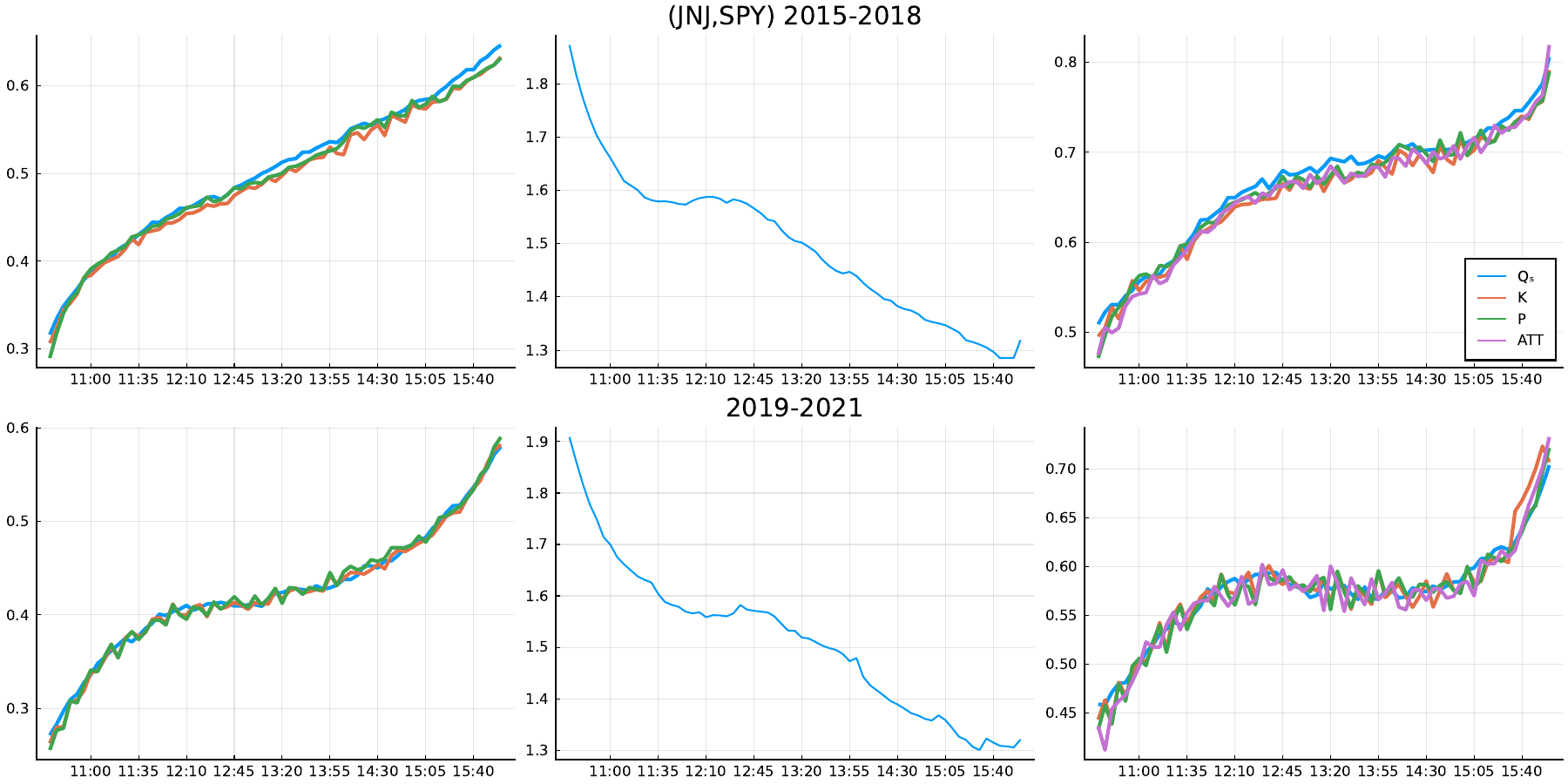}\includegraphics[width=0.33\textwidth]{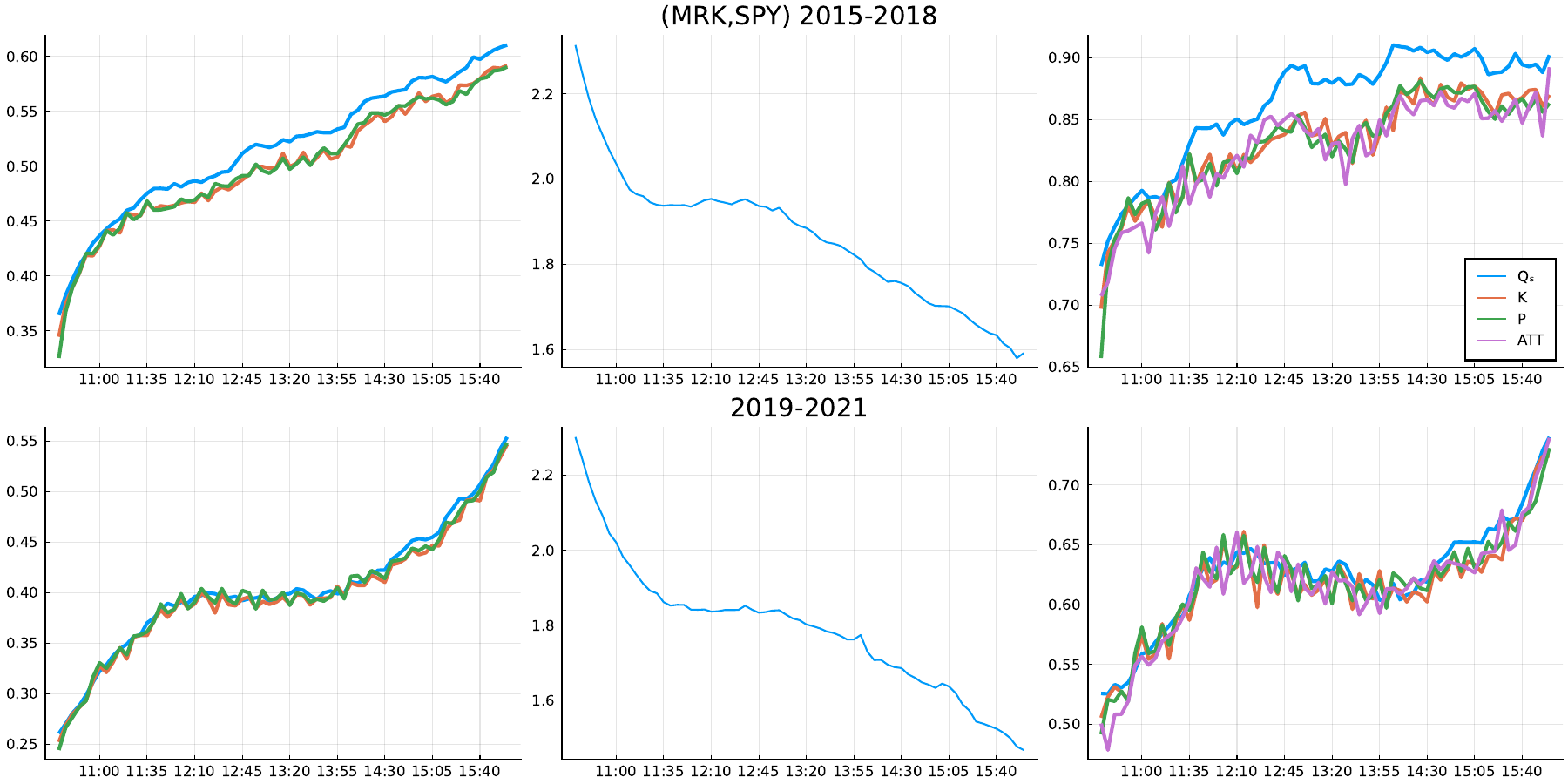}
\par\end{centering}
\begin{centering}
\includegraphics[width=0.33\textwidth]{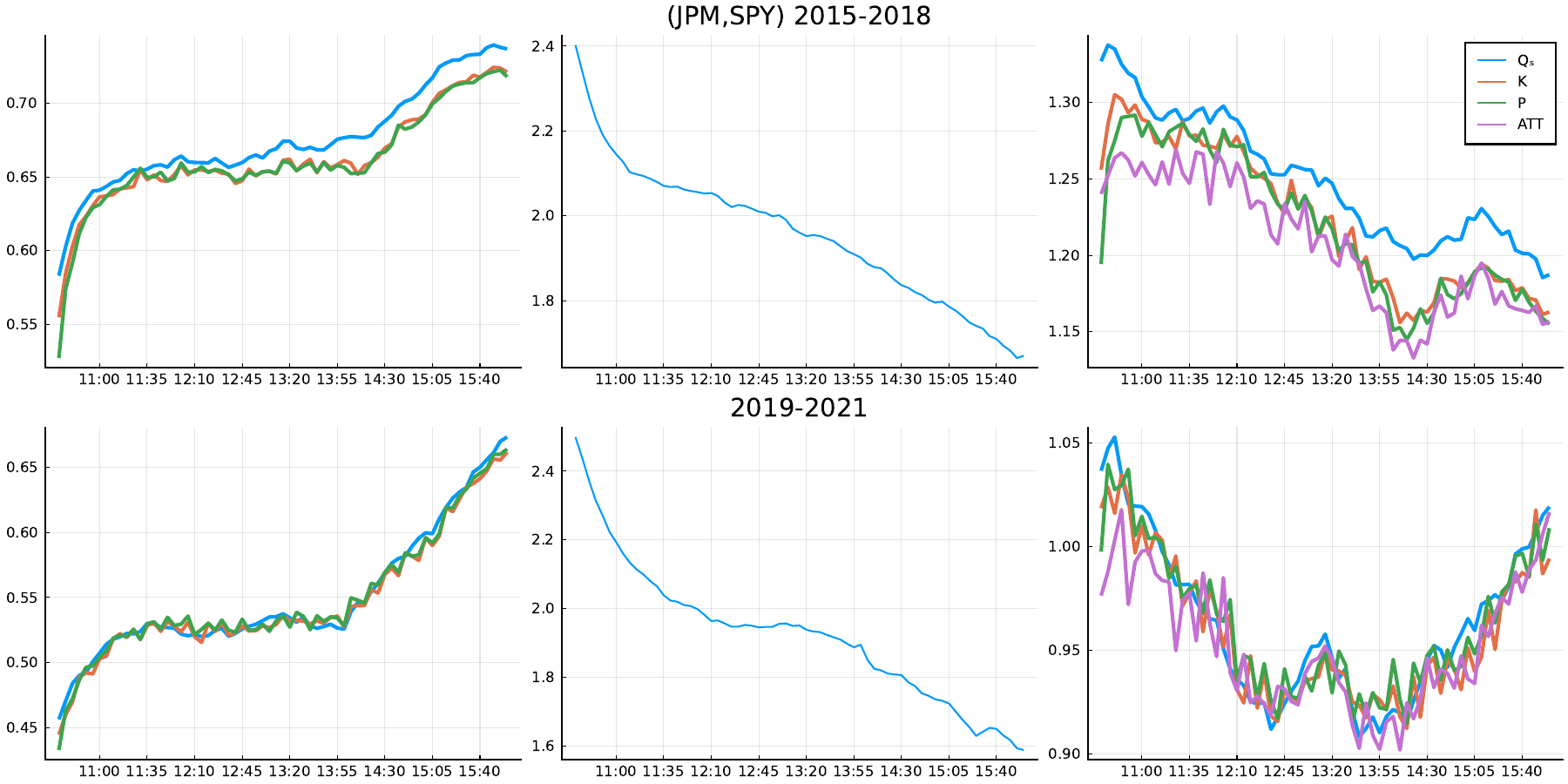}\includegraphics[width=0.33\textwidth]{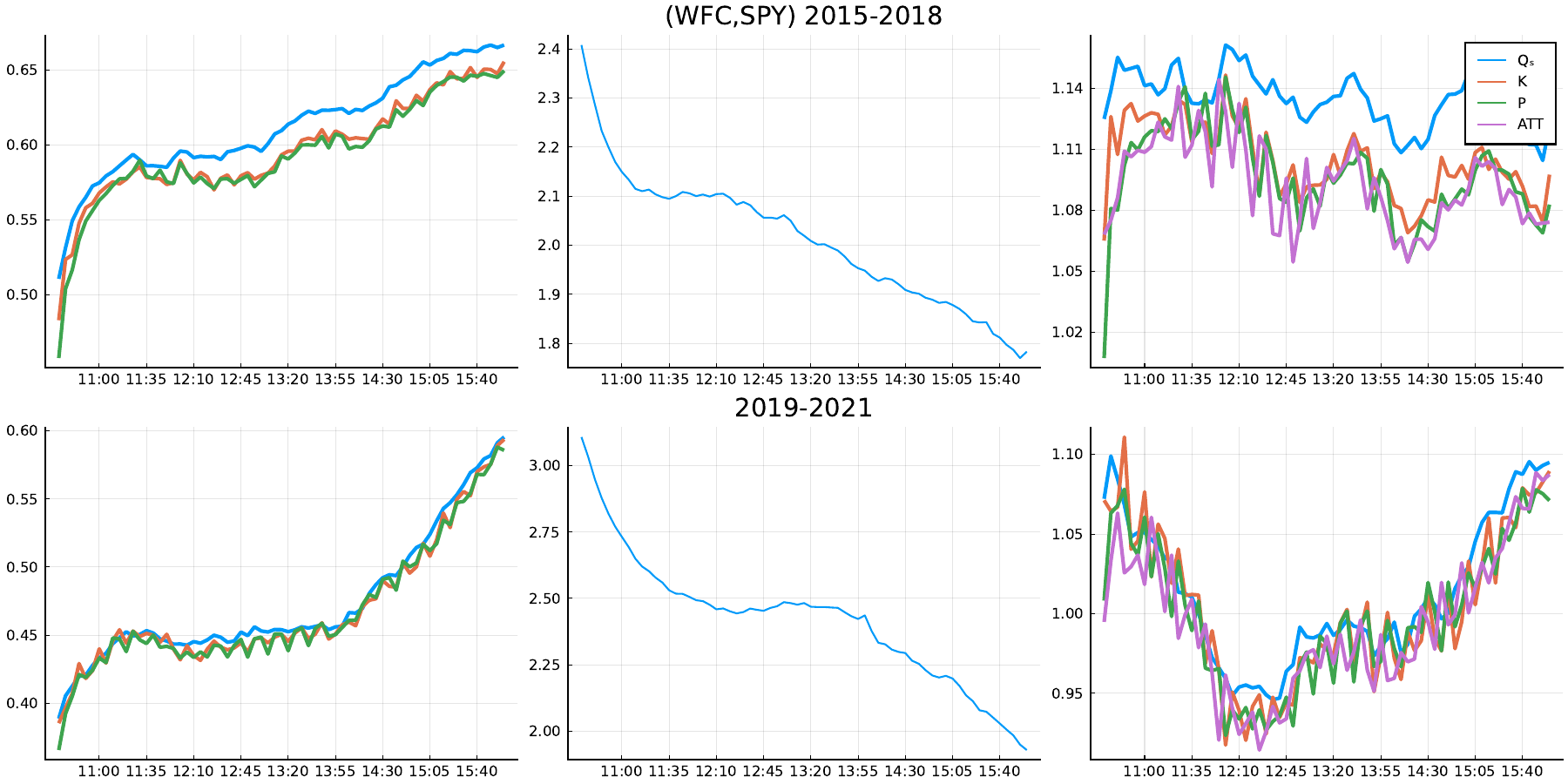}\includegraphics[width=0.33\textwidth]{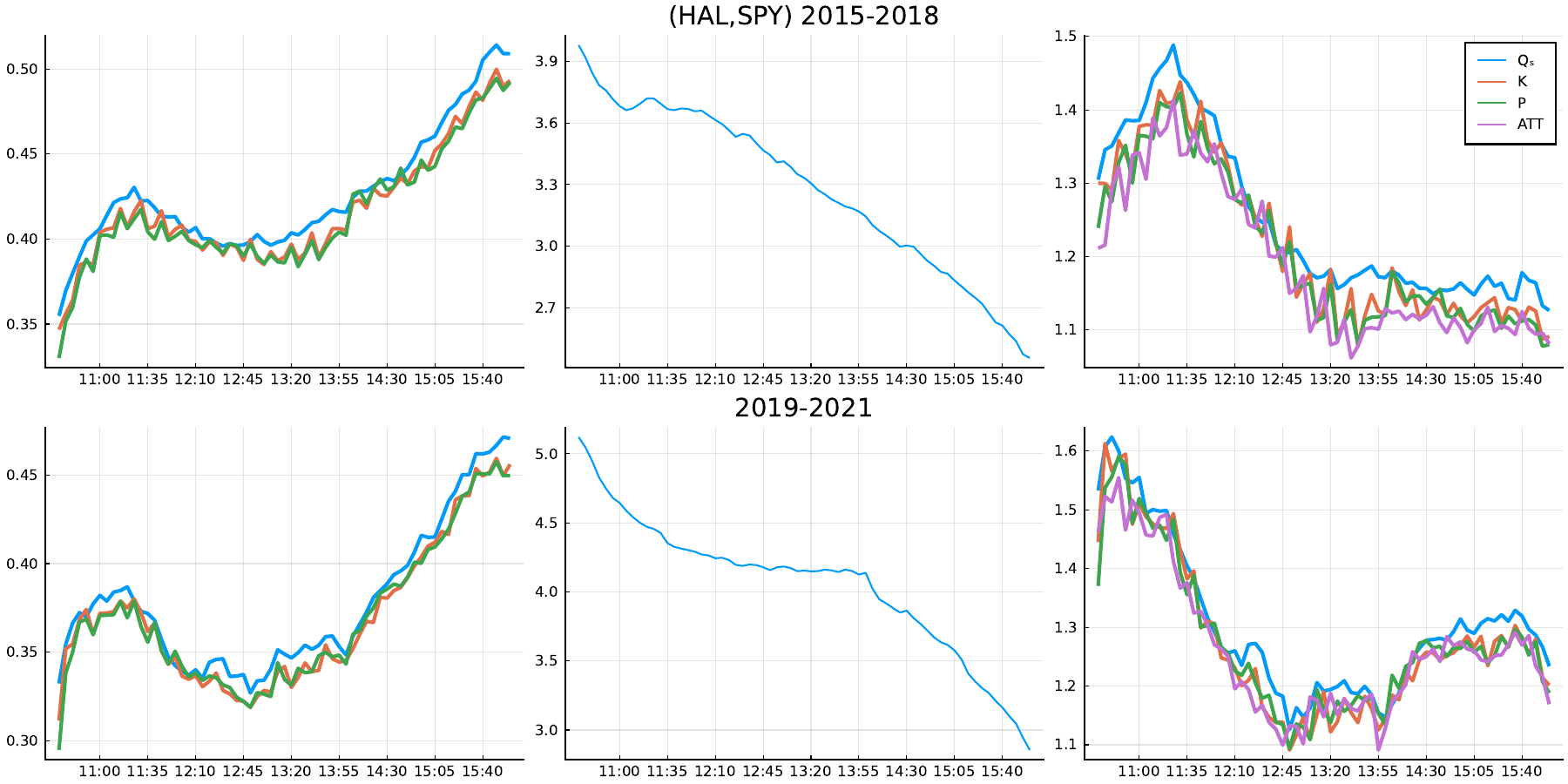}
\par\end{centering}
\begin{centering}
\includegraphics[width=0.33\textwidth]{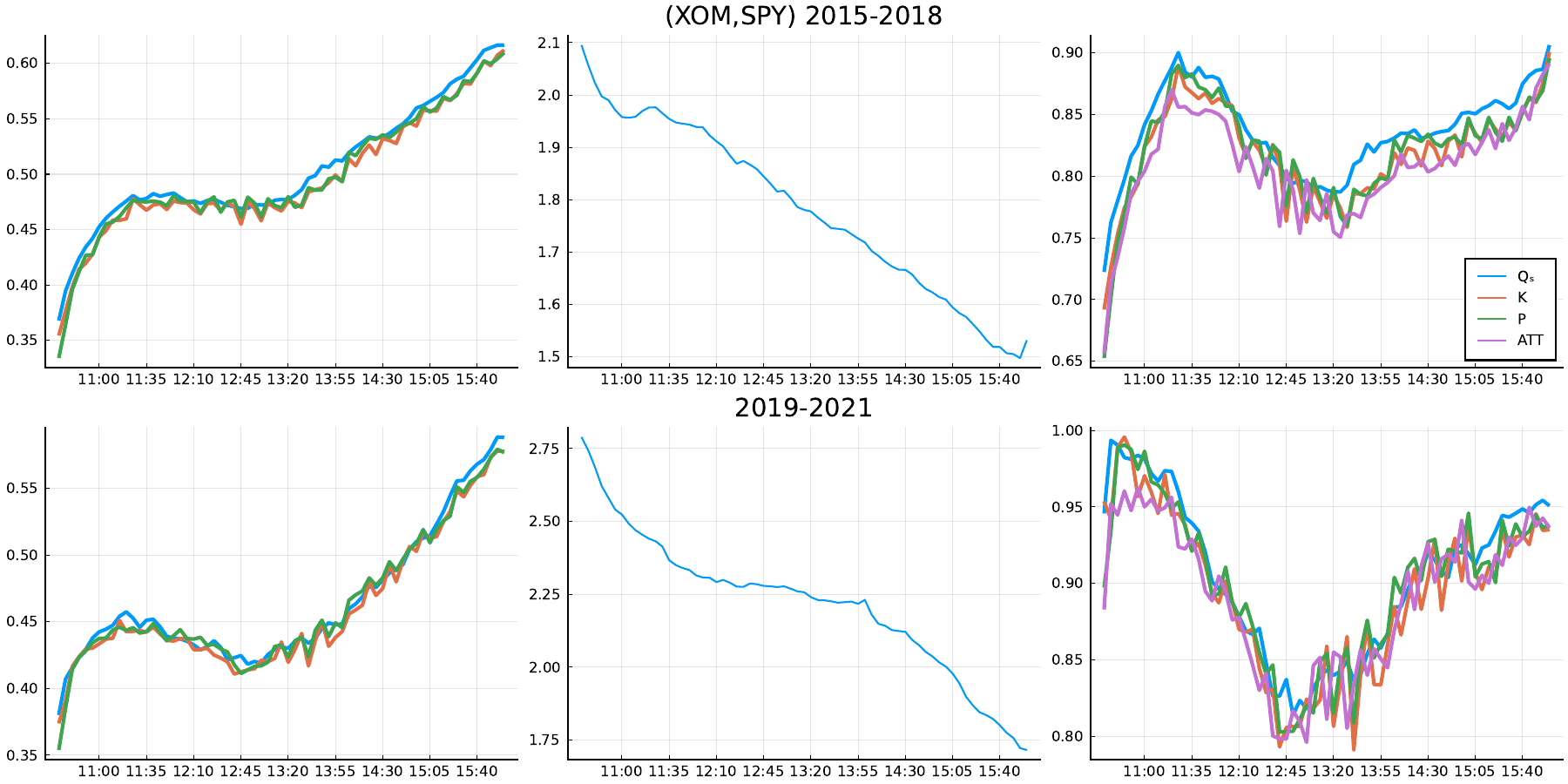}\includegraphics[width=0.33\textwidth]{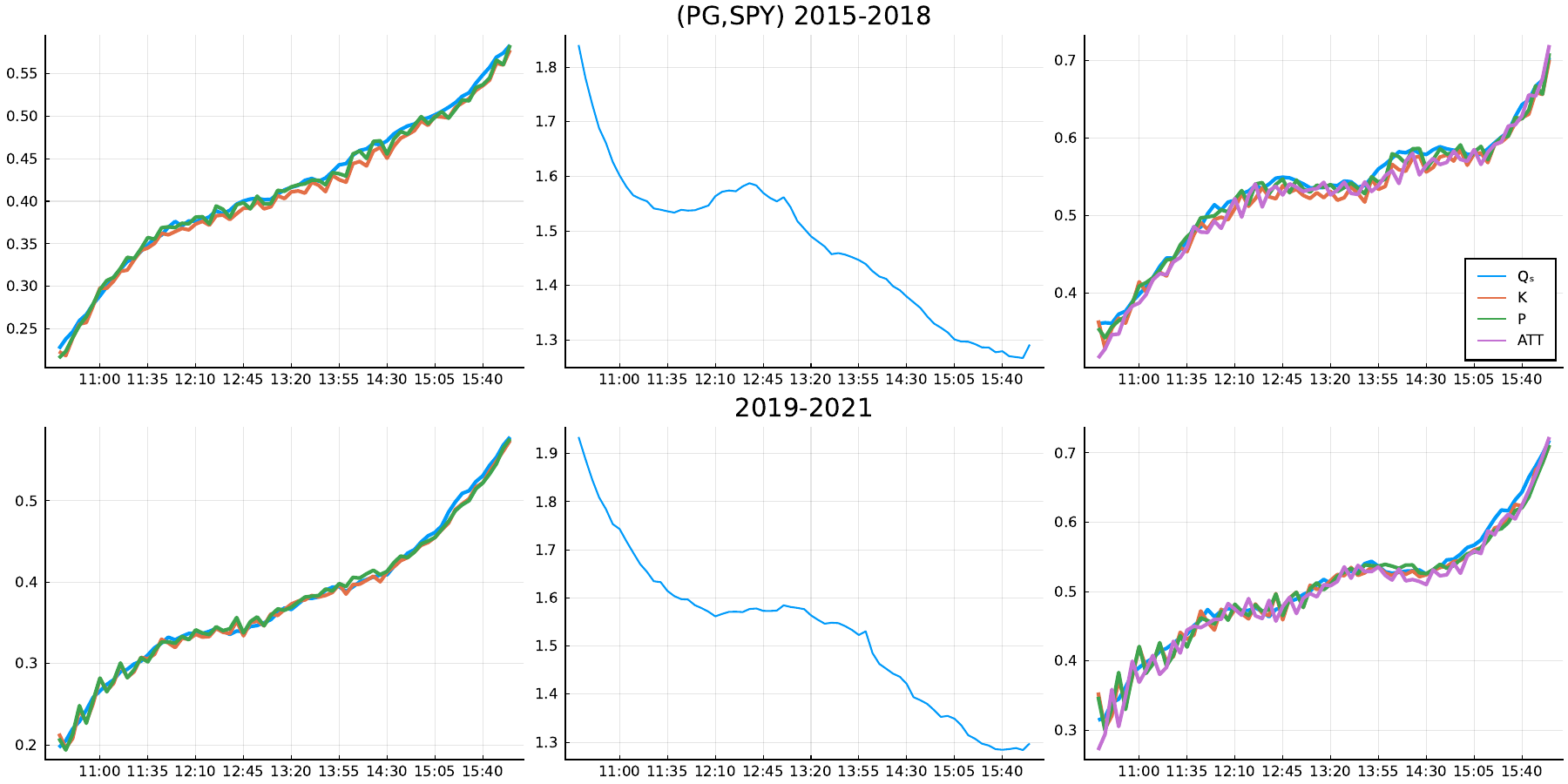}\includegraphics[width=0.33\textwidth]{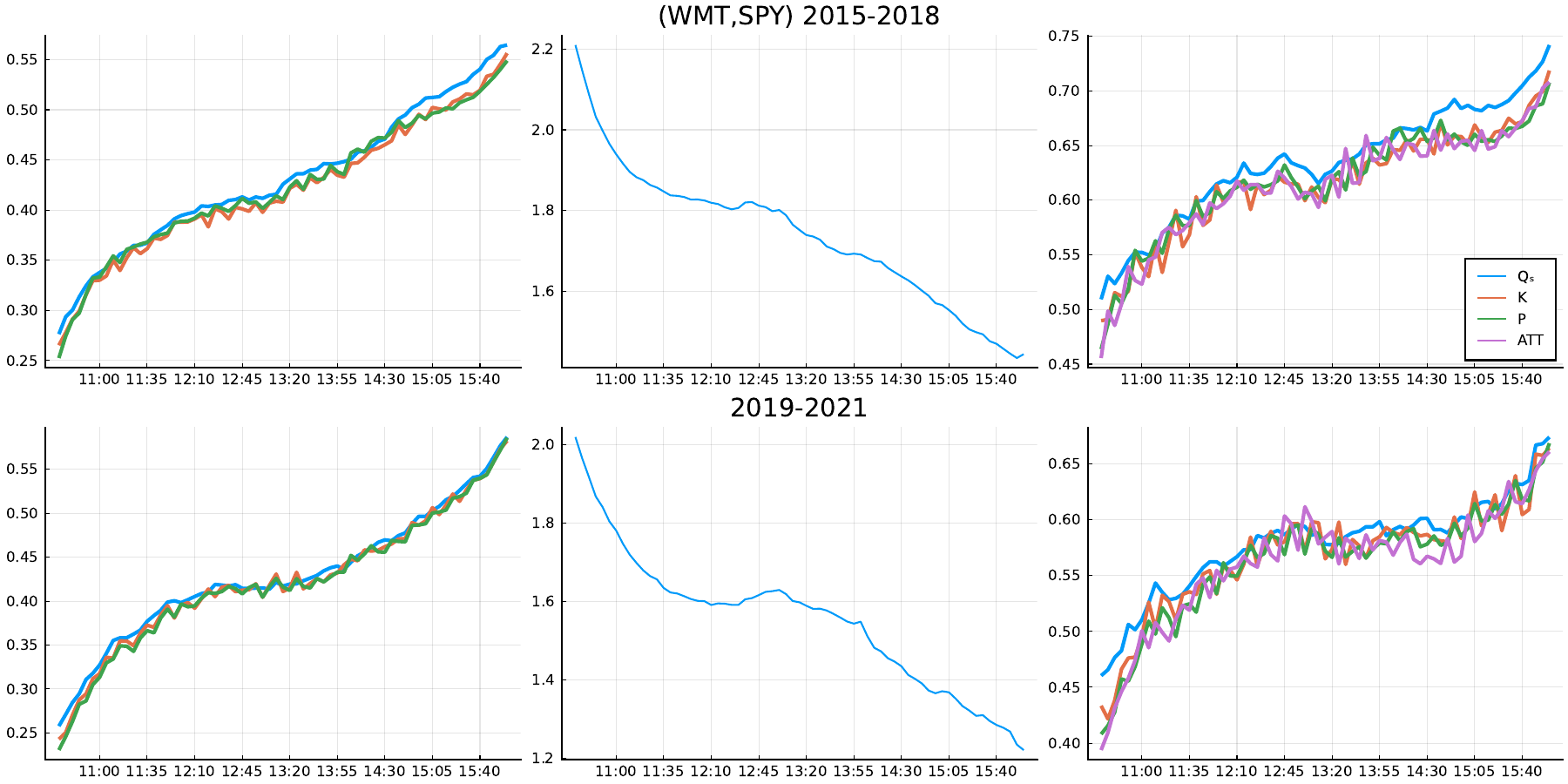}
\par\end{centering}
\begin{centering}
\includegraphics[width=0.33\textwidth]{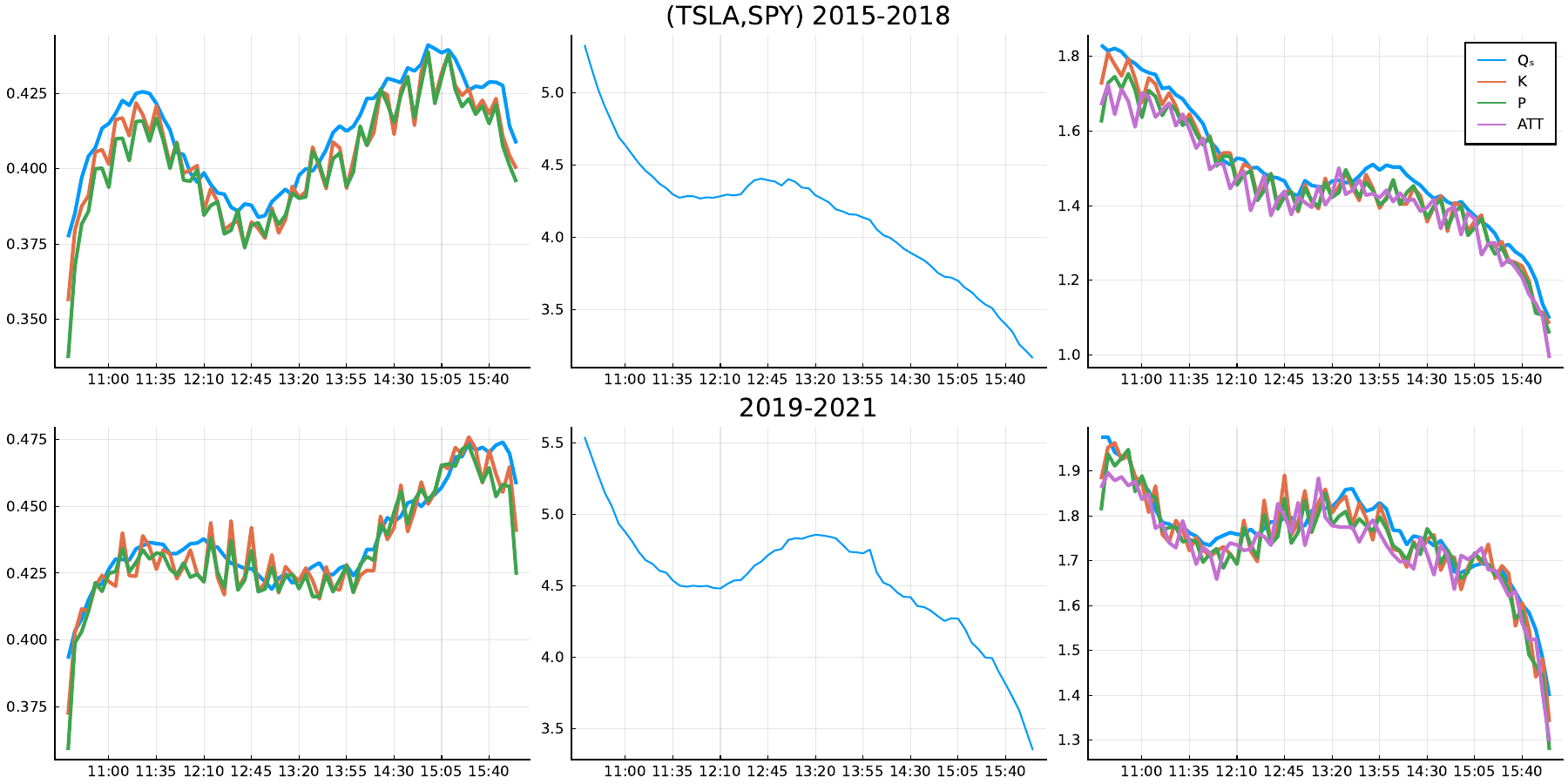}\includegraphics[width=0.33\textwidth]{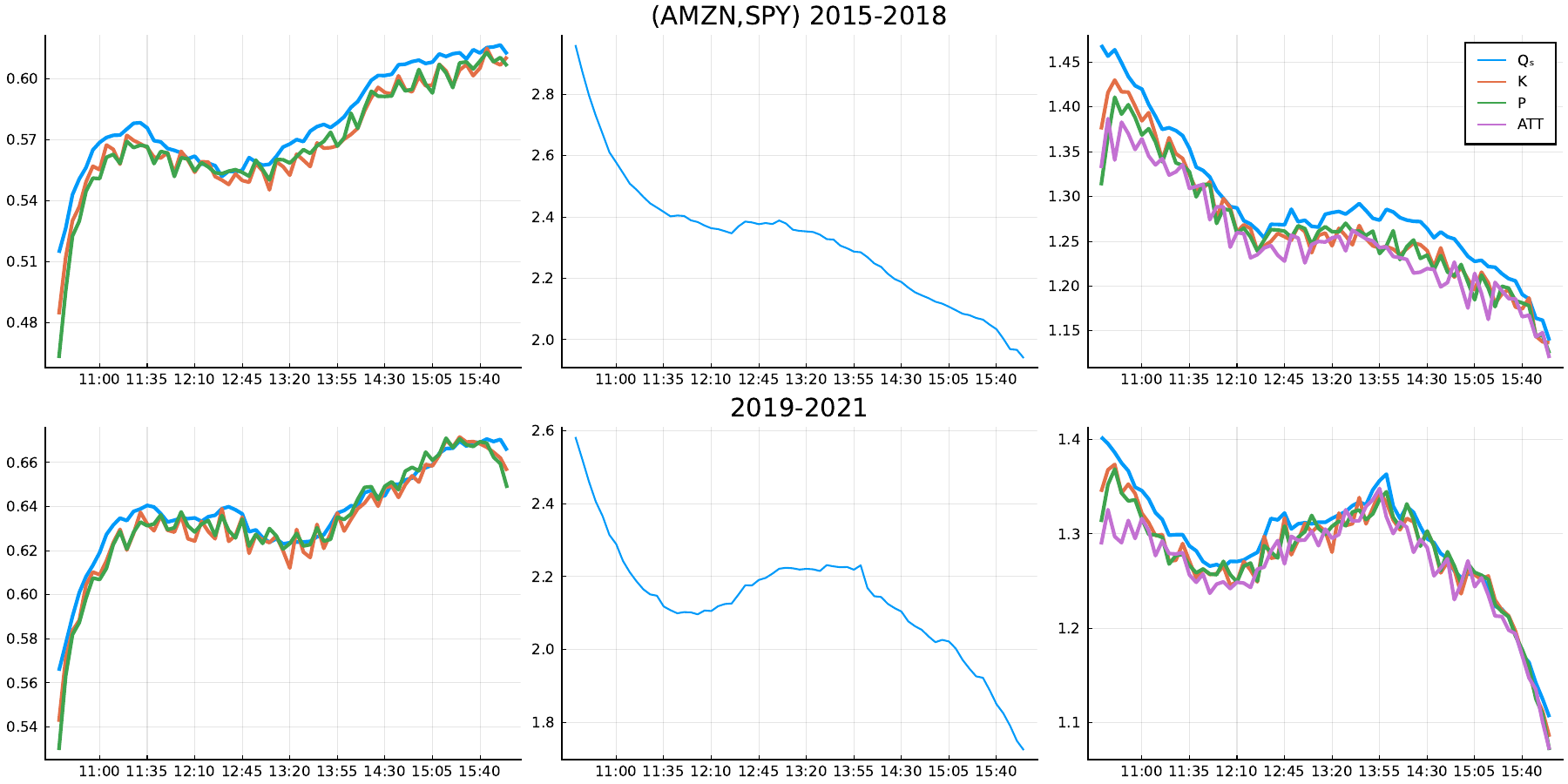}\includegraphics[width=0.33\textwidth]{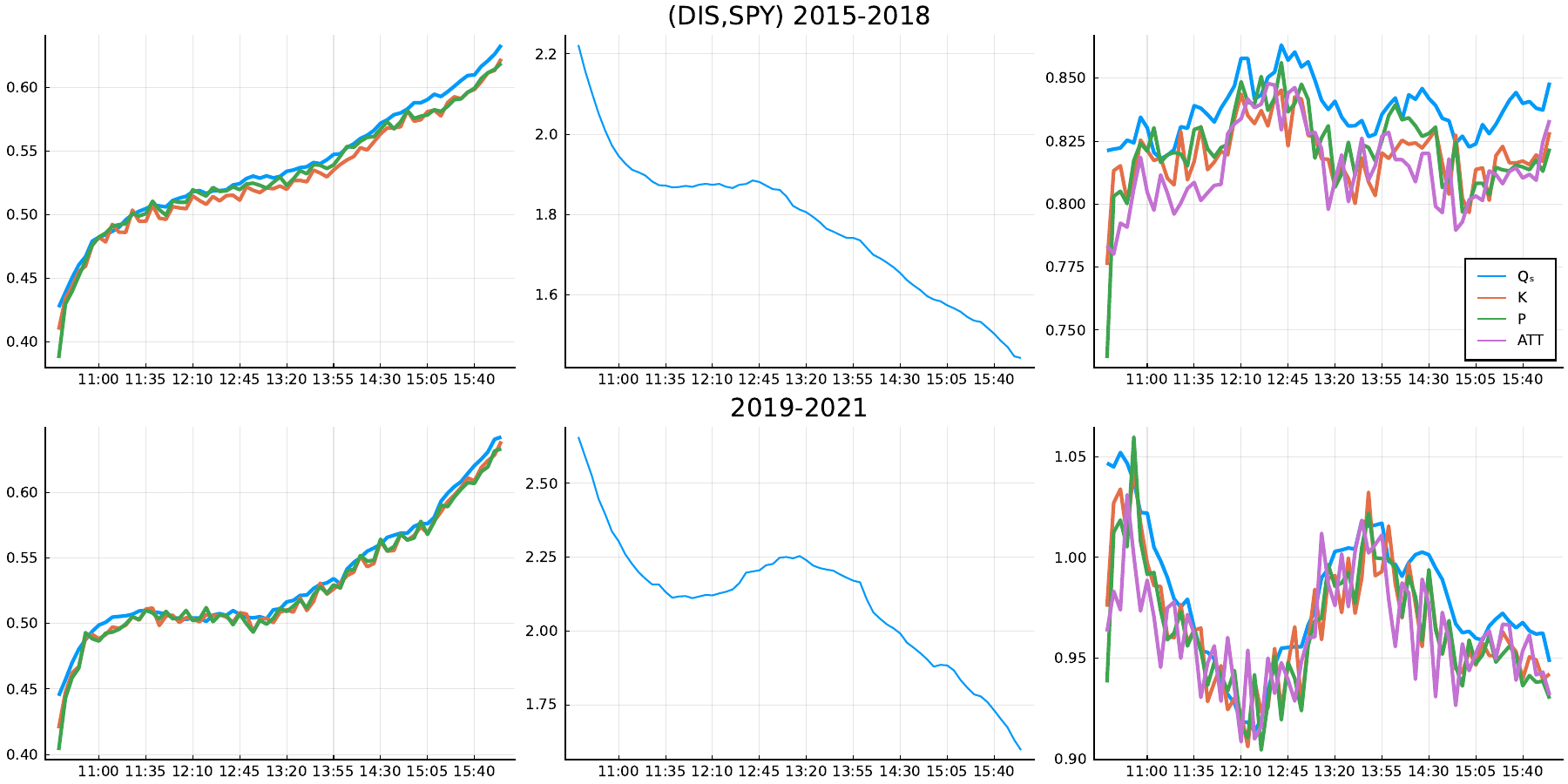}
\par\end{centering}
\begin{centering}
\includegraphics[width=0.33\textwidth]{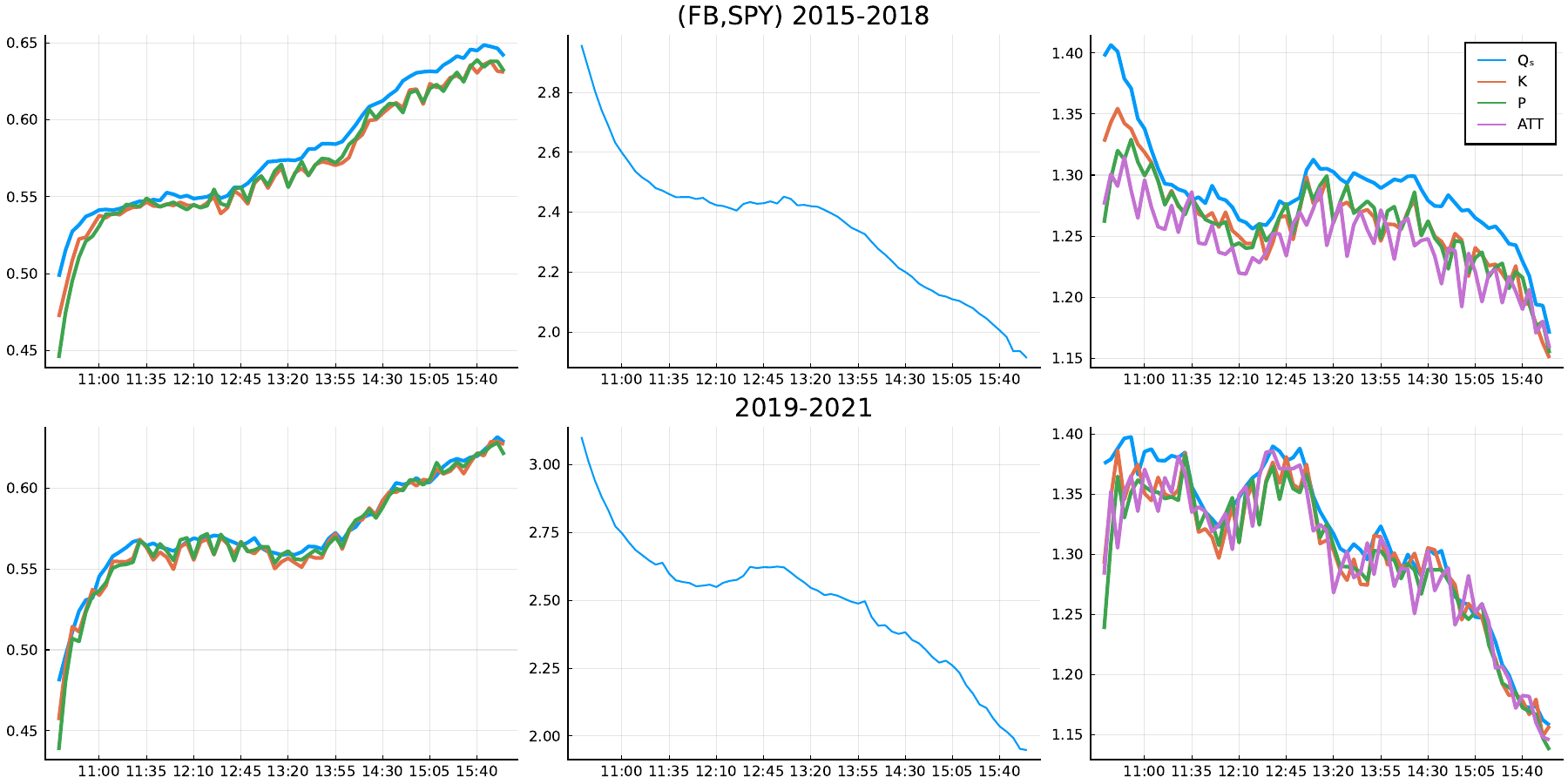}
\par\end{centering}
\caption{Intraday estimates for 22 assets for two sample periods.\label{fig:SplitIntradayEstimates}}
\end{figure}
 The results are show in Figure \ref{fig:SplitIntradayEstimates}
and the results for the two sample periods are again very similar.

Additional results for the Large Universe of assets are reported in
Figures \ref{fig:DeltaRhoLambdaBeta-vs-BetaSizeBTM} and \ref{fig:FirstLastRhoBetavsBetaSizeBTM}.
We consider intraday changes in market correlations, relative volatility,
and intraday betas, as defined by the difference between the estimate
from the first hour of trading and the estimate from the last hour
of trading. The changes are measured for the average estimates over
the days in the sample period, and the increments are plotted against
low-frequency market betas, size, and book-to-market. 
\begin{figure}[H]
\begin{centering}
\subfloat[]{\begin{centering}
\includegraphics[width=0.32\textwidth]{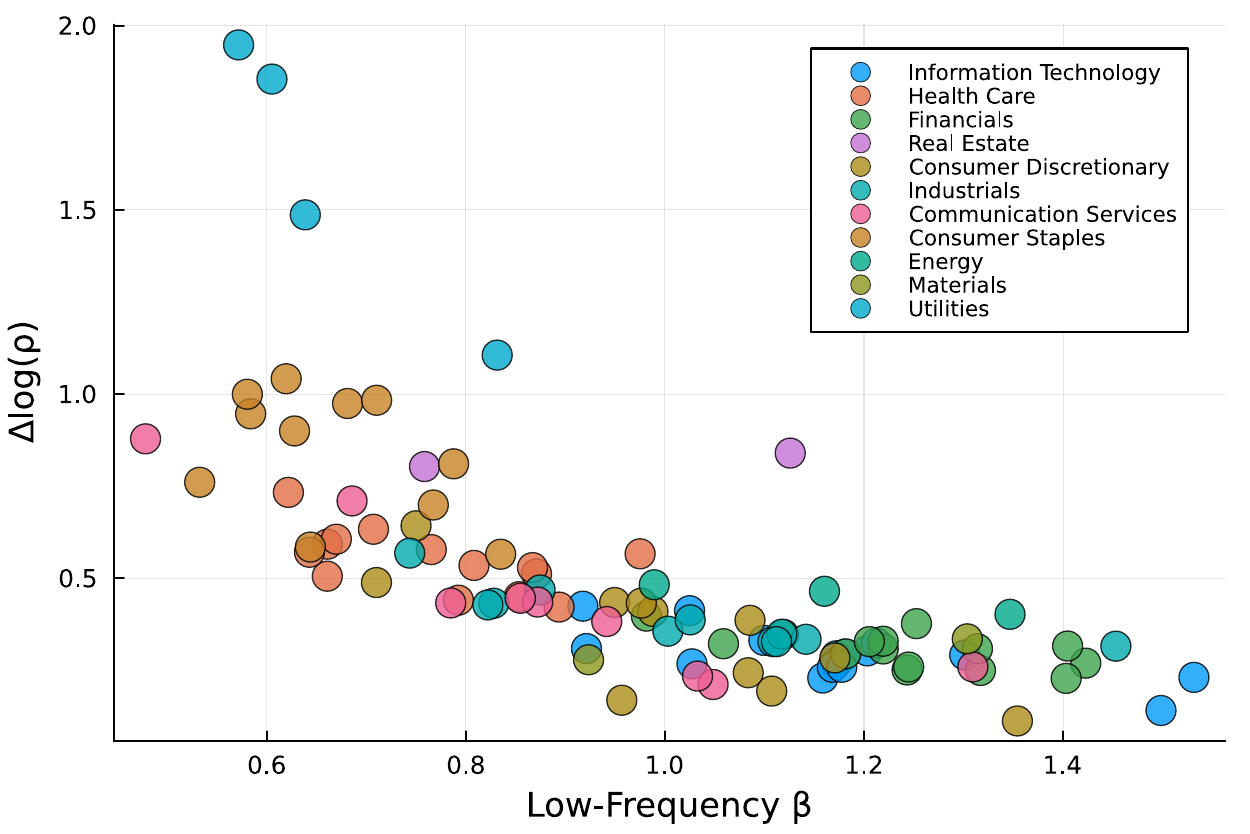}
\par\end{centering}
}\negthinspace{}\subfloat[]{\begin{centering}
\includegraphics[width=0.32\textwidth]{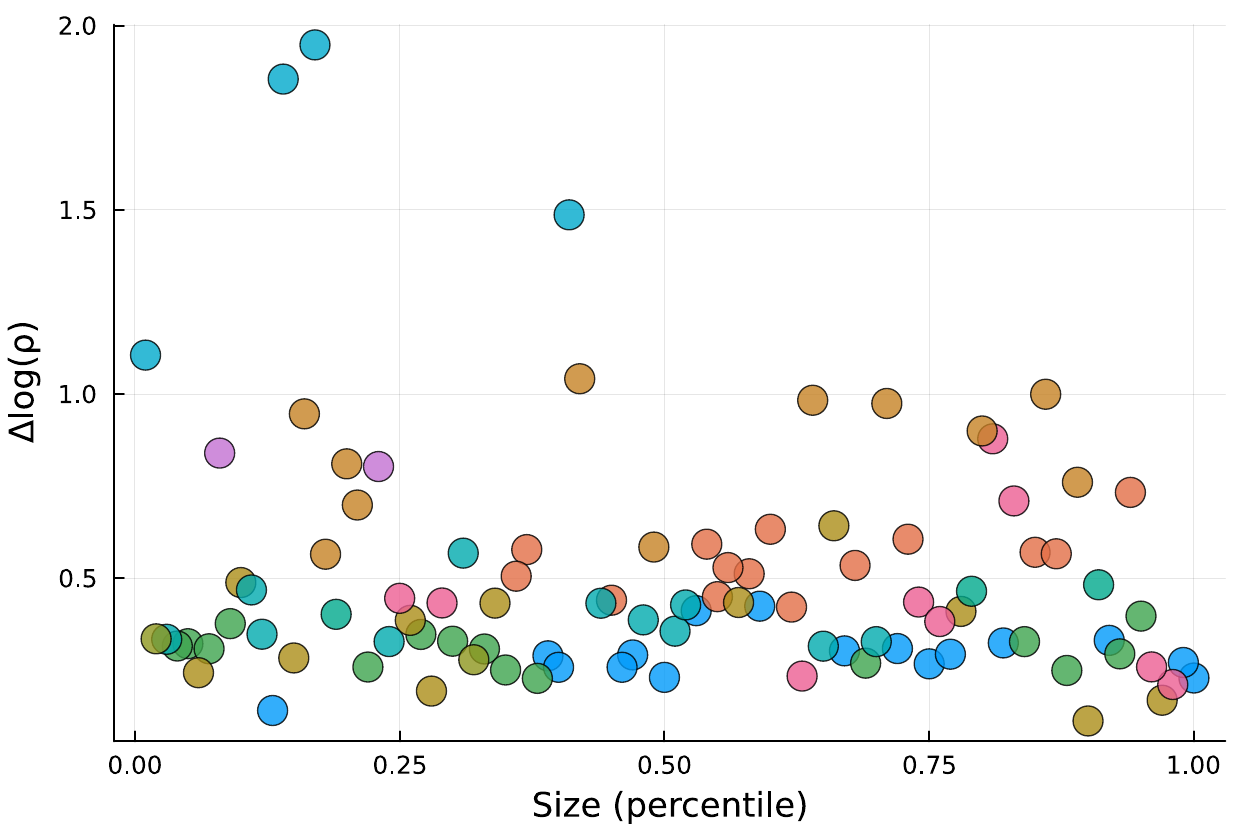}
\par\end{centering}
}\negthinspace{}\subfloat[]{\begin{centering}
\includegraphics[width=0.32\textwidth]{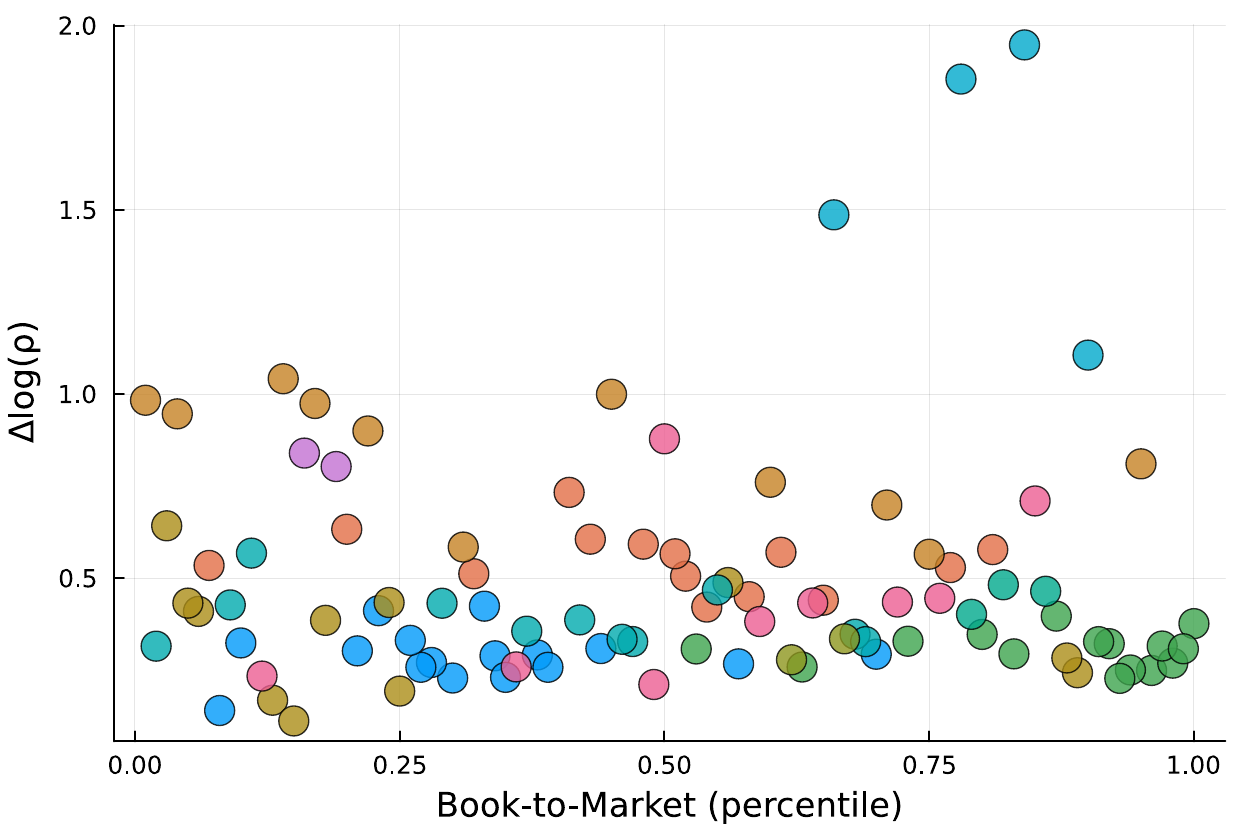}
\par\end{centering}
}
\par\end{centering}
\begin{centering}
\subfloat[]{\begin{centering}
\includegraphics[width=0.32\textwidth]{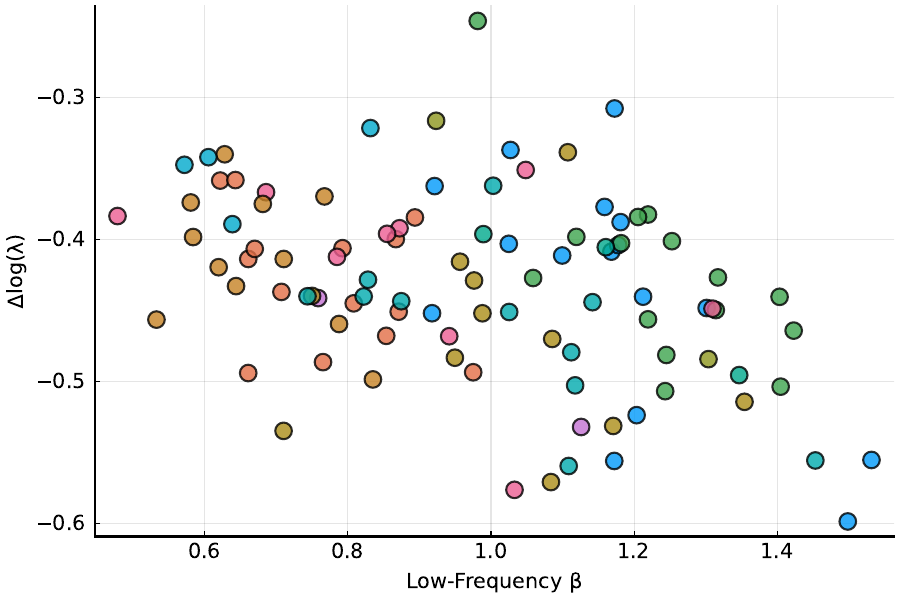}
\par\end{centering}
}\negthinspace{}\subfloat[]{\begin{centering}
\includegraphics[width=0.32\textwidth]{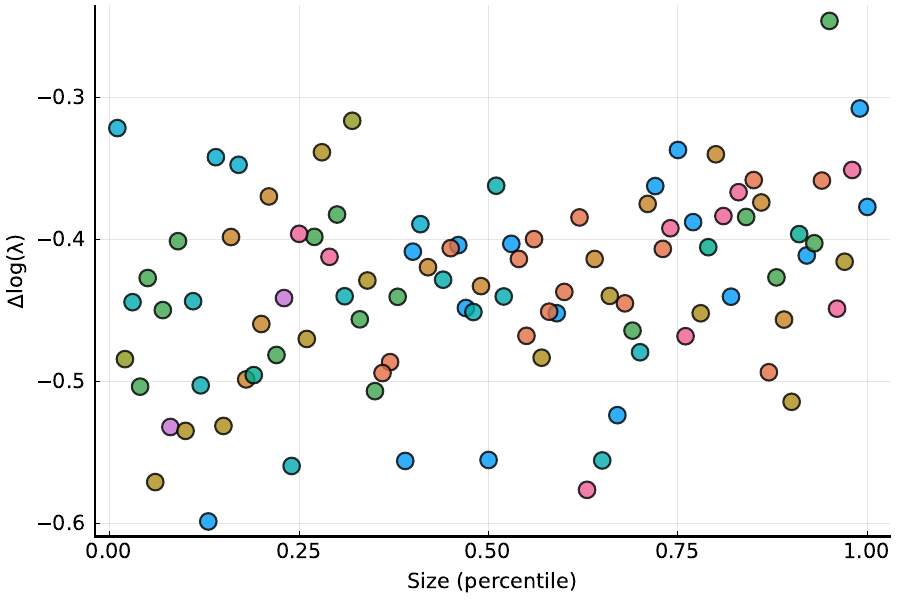}
\par\end{centering}
}\negthinspace{}\subfloat[]{\begin{centering}
\includegraphics[width=0.32\textwidth]{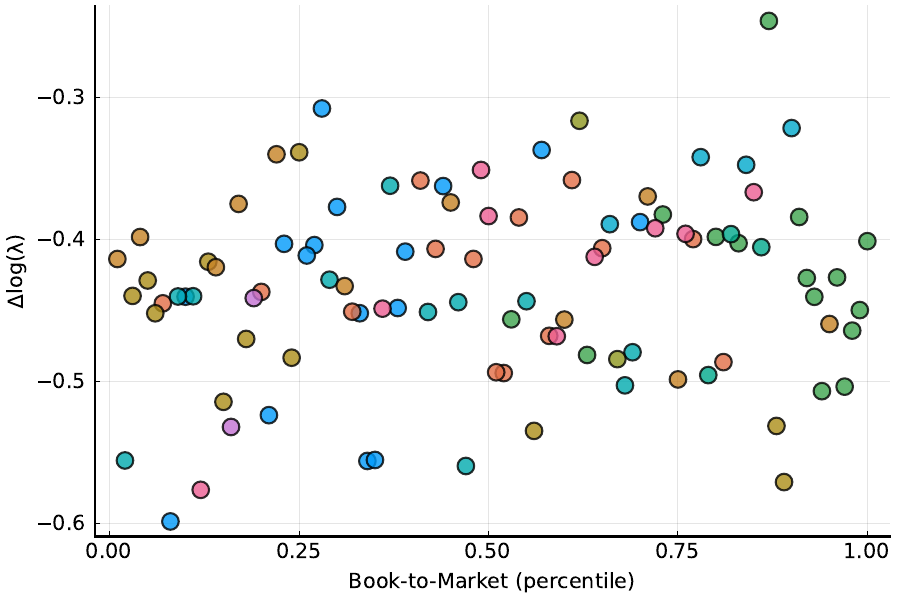}
\par\end{centering}
}
\par\end{centering}
\begin{centering}
\subfloat[]{\begin{centering}
\includegraphics[width=0.32\textwidth]{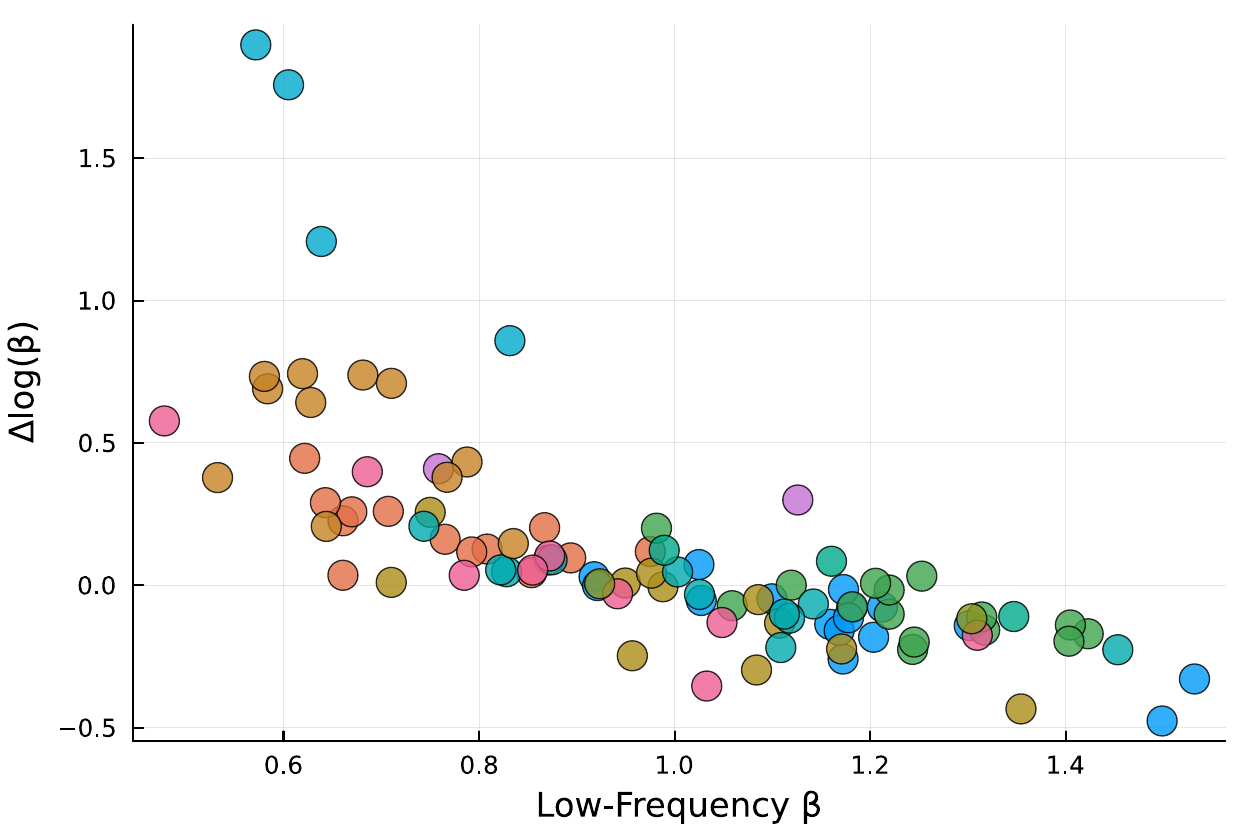}
\par\end{centering}
}\negthinspace{}\subfloat[]{\begin{centering}
\includegraphics[width=0.32\textwidth]{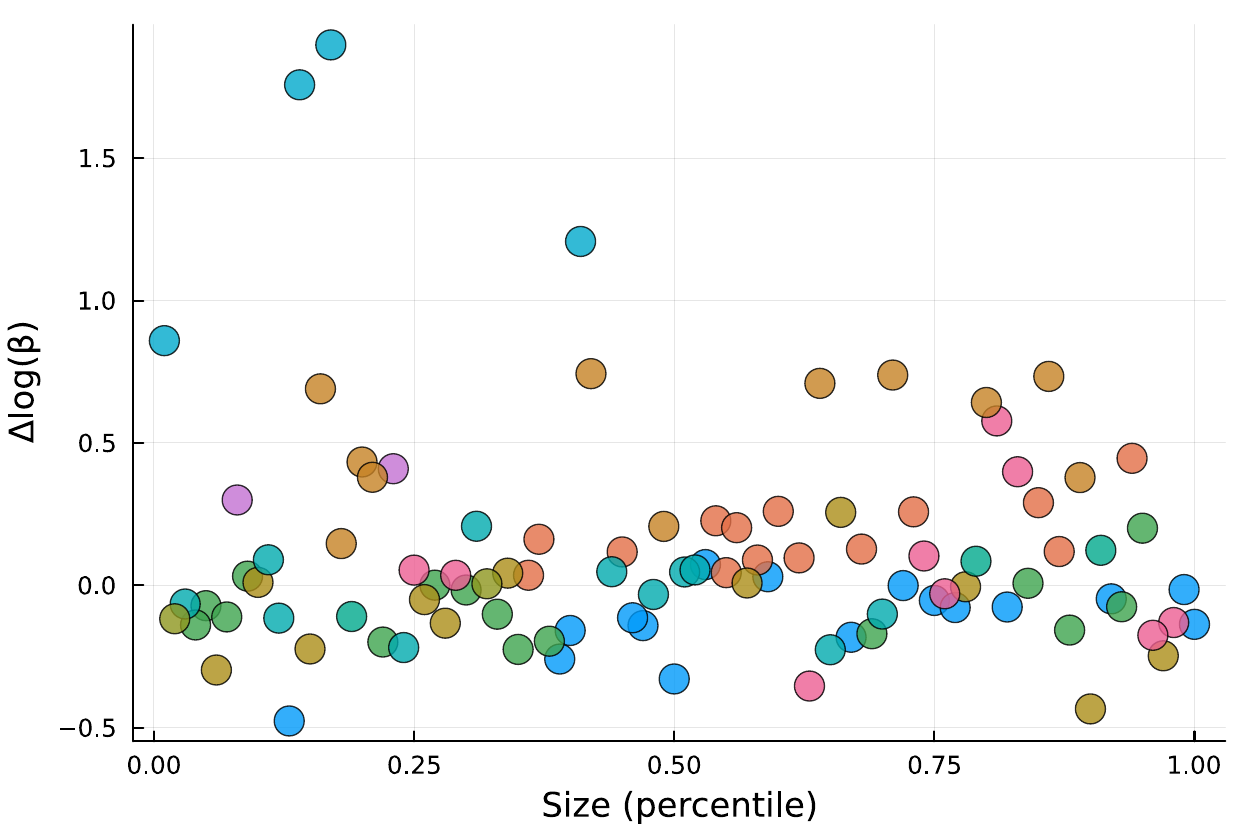}
\par\end{centering}
}\negthinspace{}\subfloat[]{\begin{centering}
\includegraphics[width=0.32\textwidth]{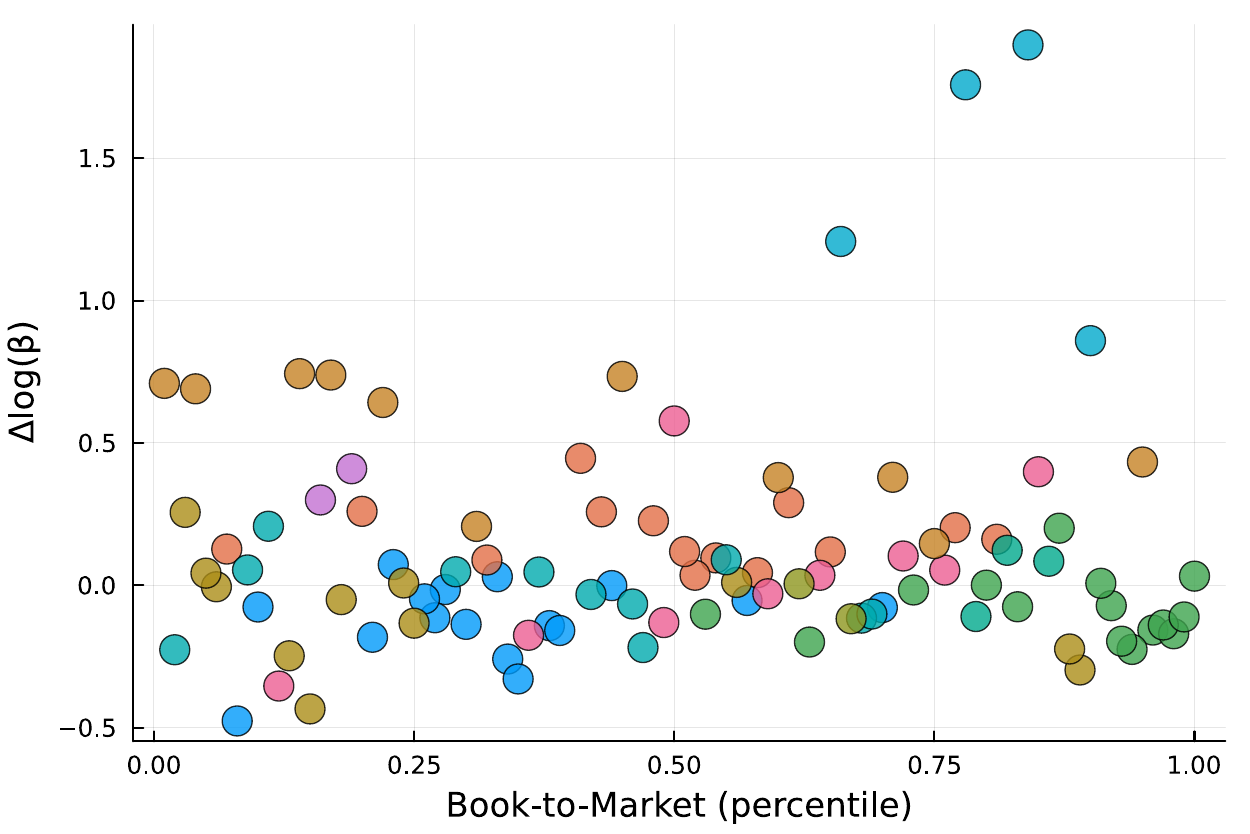}
\par\end{centering}
}
\par\end{centering}
\caption{The average changes in market correlations and market beta, from the
first hour to the last hour of the active trading day are plotted
against three variables. In the left panels the quantities are plotted
against the (low frequency) market beta, which is computed from daily
returns. In the middle panels they are plotted against (percentiles
of) market capitalization size, and in the right panels they are plotted
against the (percentiles of) book-to-market values.\label{fig:DeltaRhoLambdaBeta-vs-BetaSizeBTM}}
\end{figure}
 The low-frequency market beta is the conventional estimate, which
is based on daily returns in our sample period. We have also sorted
assets by ``Size'' and ``Book-to-Market'' where the former is
the market value (market cap) of the company and book-to-market is
defined by the company's book value relative to its market value. 

Figure \ref{fig:FirstLastRhoBetavsBetaSizeBTM} plots the intraday
correlations and market betas, estimated for the first and last hour
of the trading day, against the low-frequency beta and the percentiles
for Size and Book-to-Market. The association is clearly strongest
with the low-frequency betas in the left panels.
\begin{figure}[H]
\begin{centering}
\includegraphics[width=0.32\textwidth]{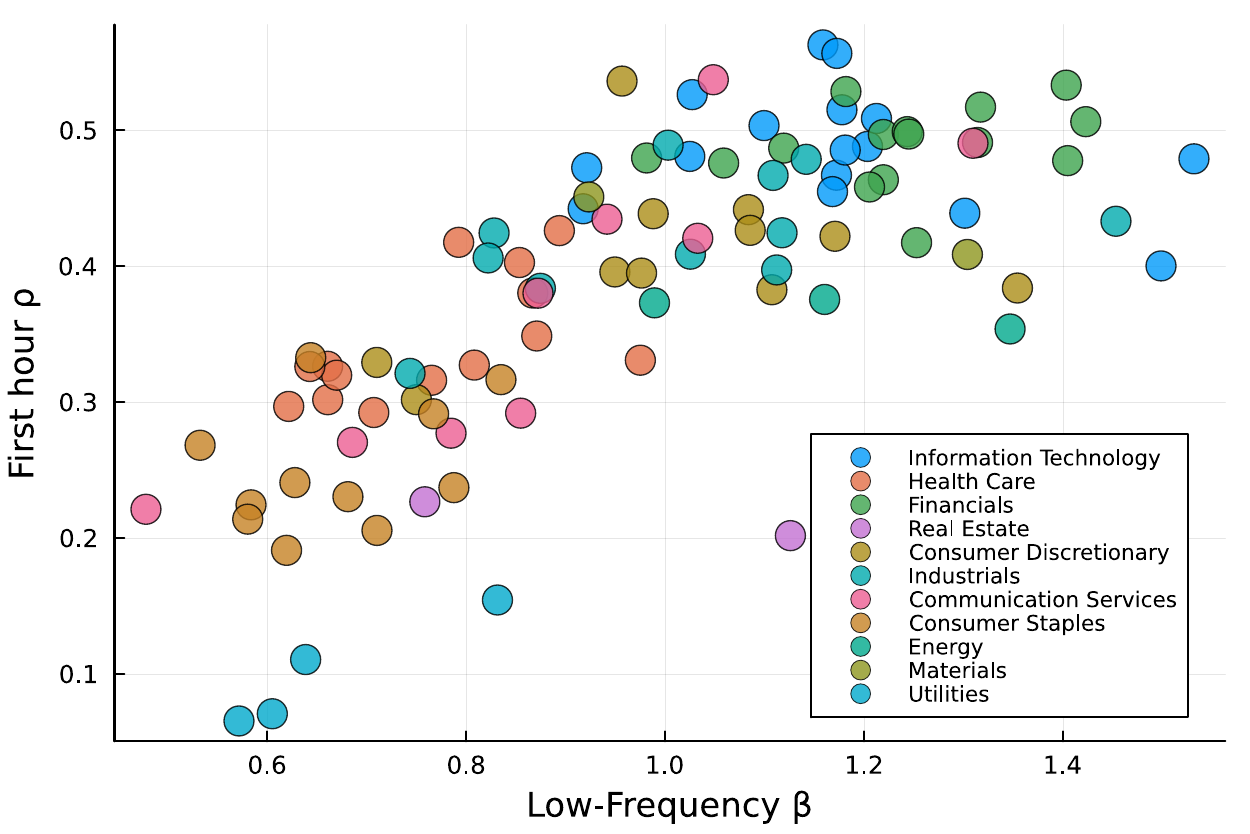}\includegraphics[width=0.32\textwidth]{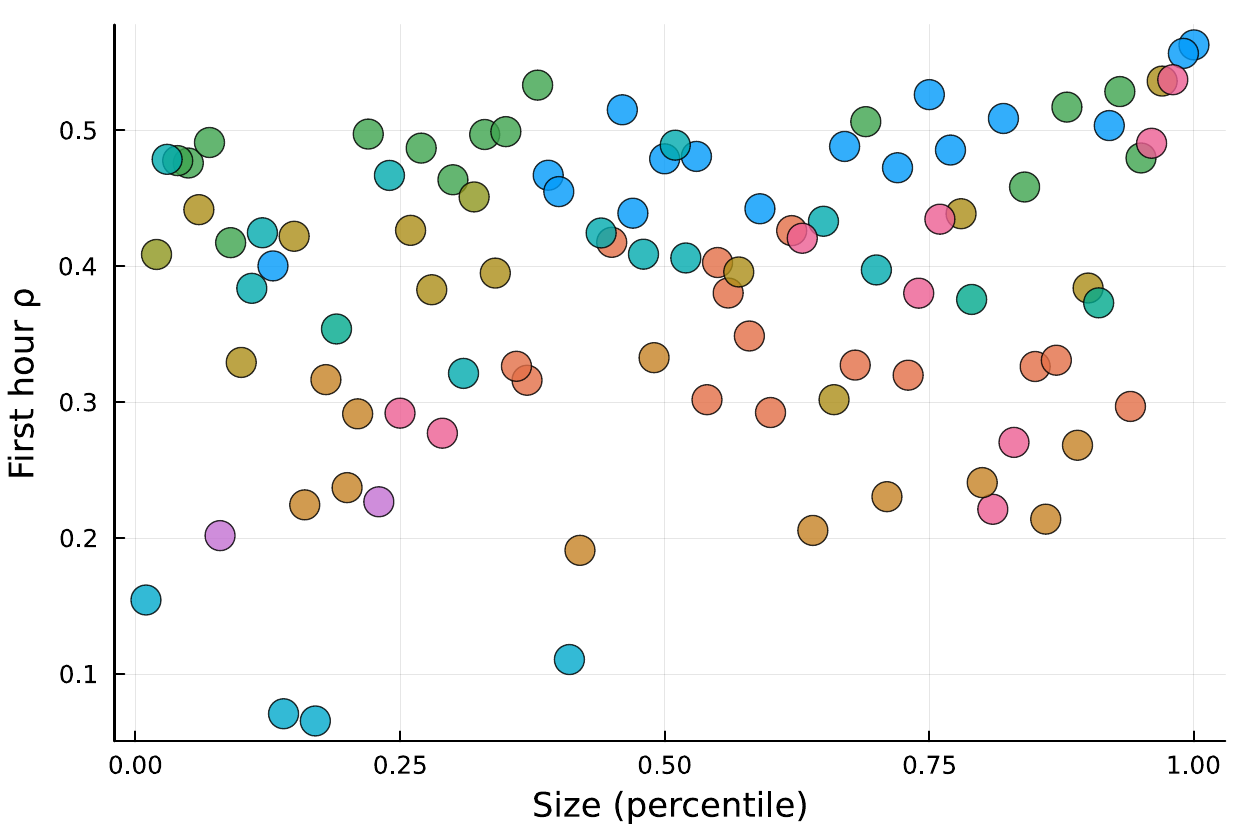}\includegraphics[width=0.32\textwidth]{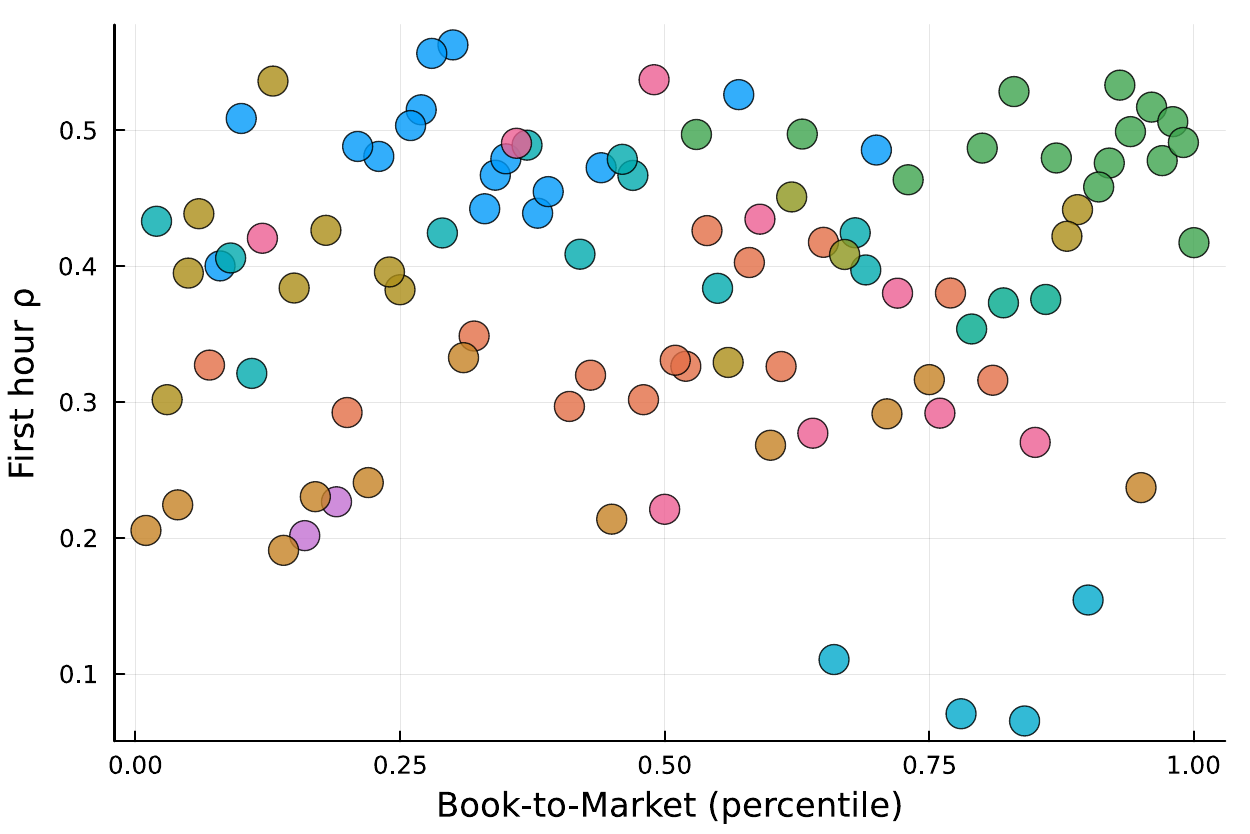}
\par\end{centering}
\begin{centering}
\includegraphics[width=0.32\textwidth]{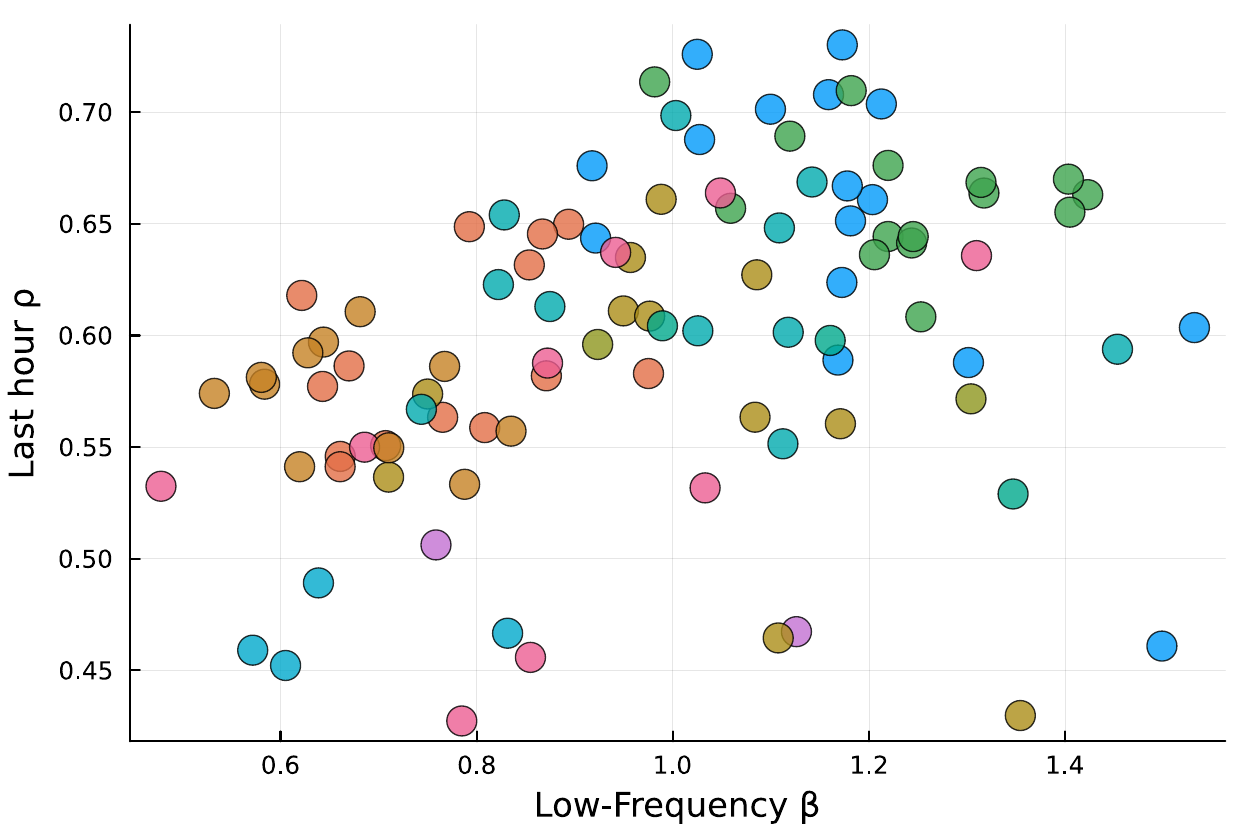}\includegraphics[width=0.32\textwidth]{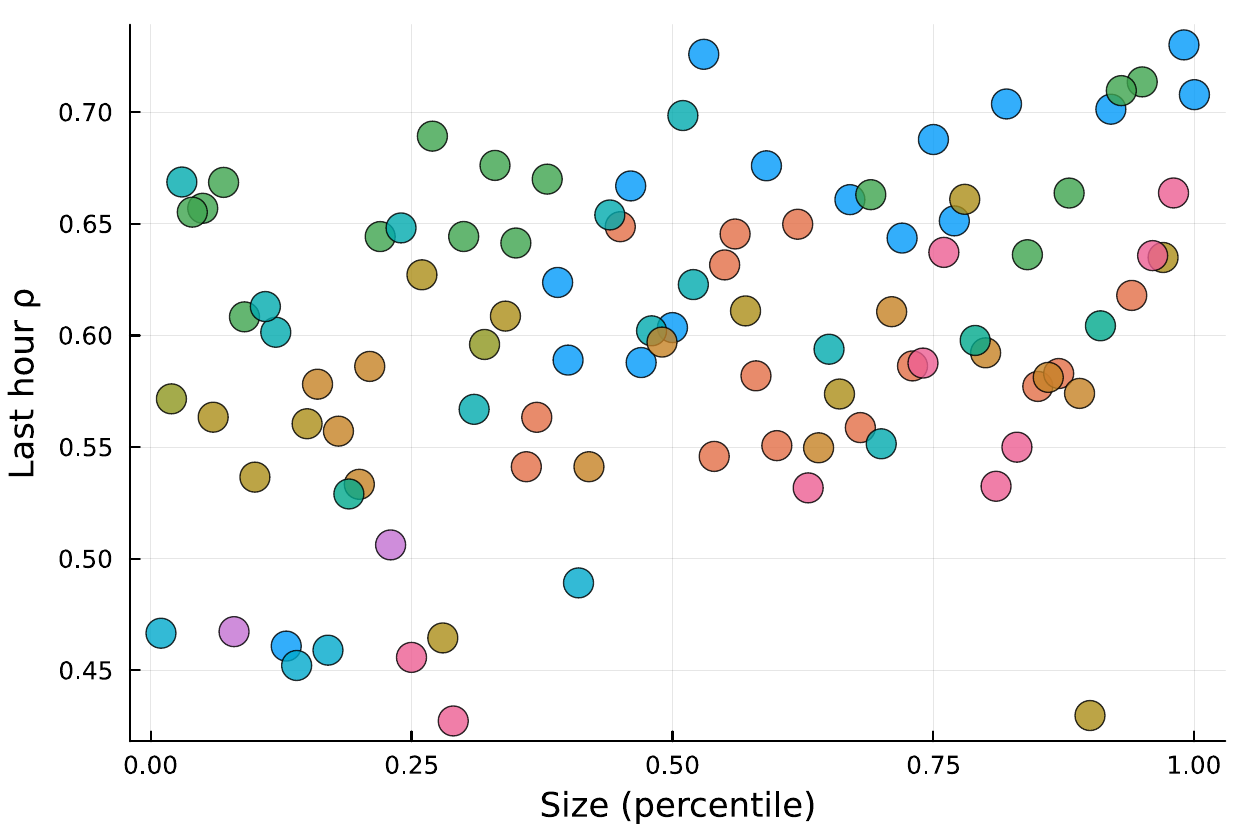}\includegraphics[width=0.32\textwidth]{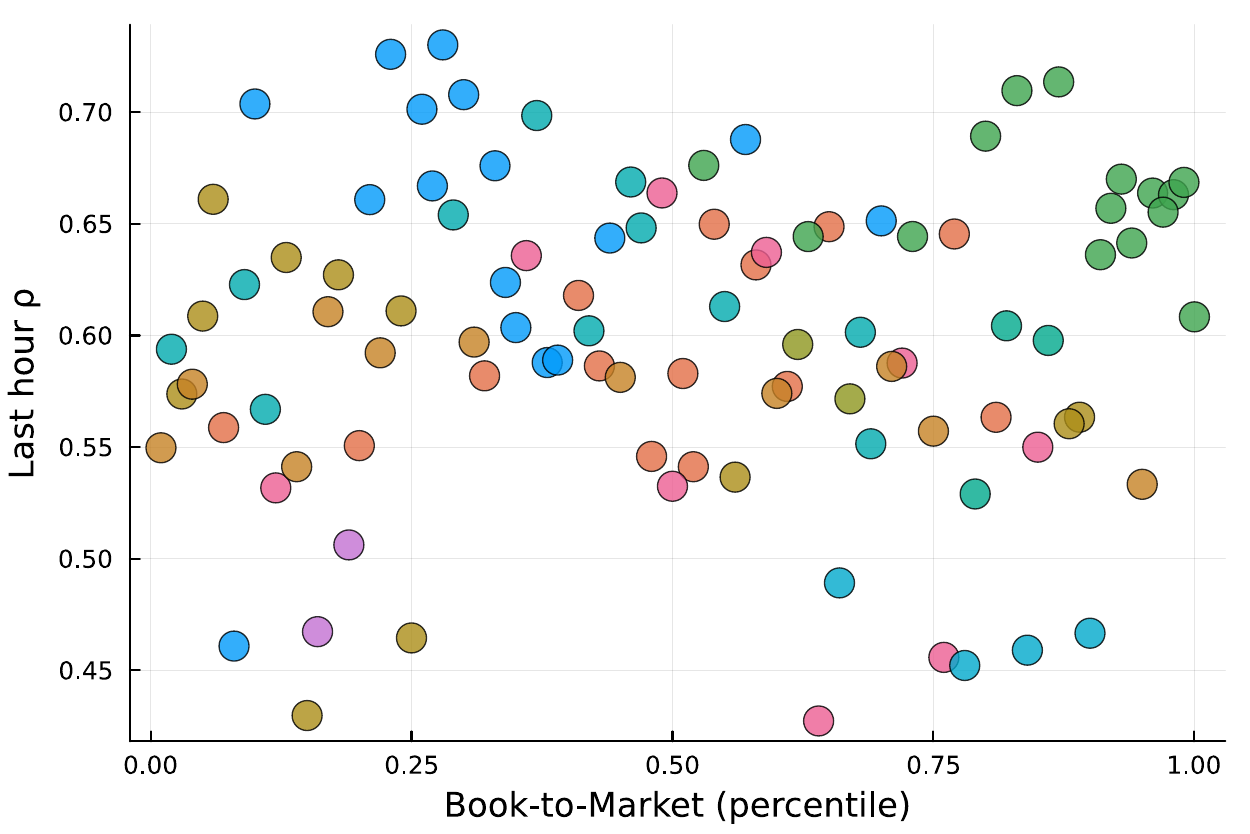}
\par\end{centering}
\begin{centering}
\includegraphics[width=0.32\textwidth]{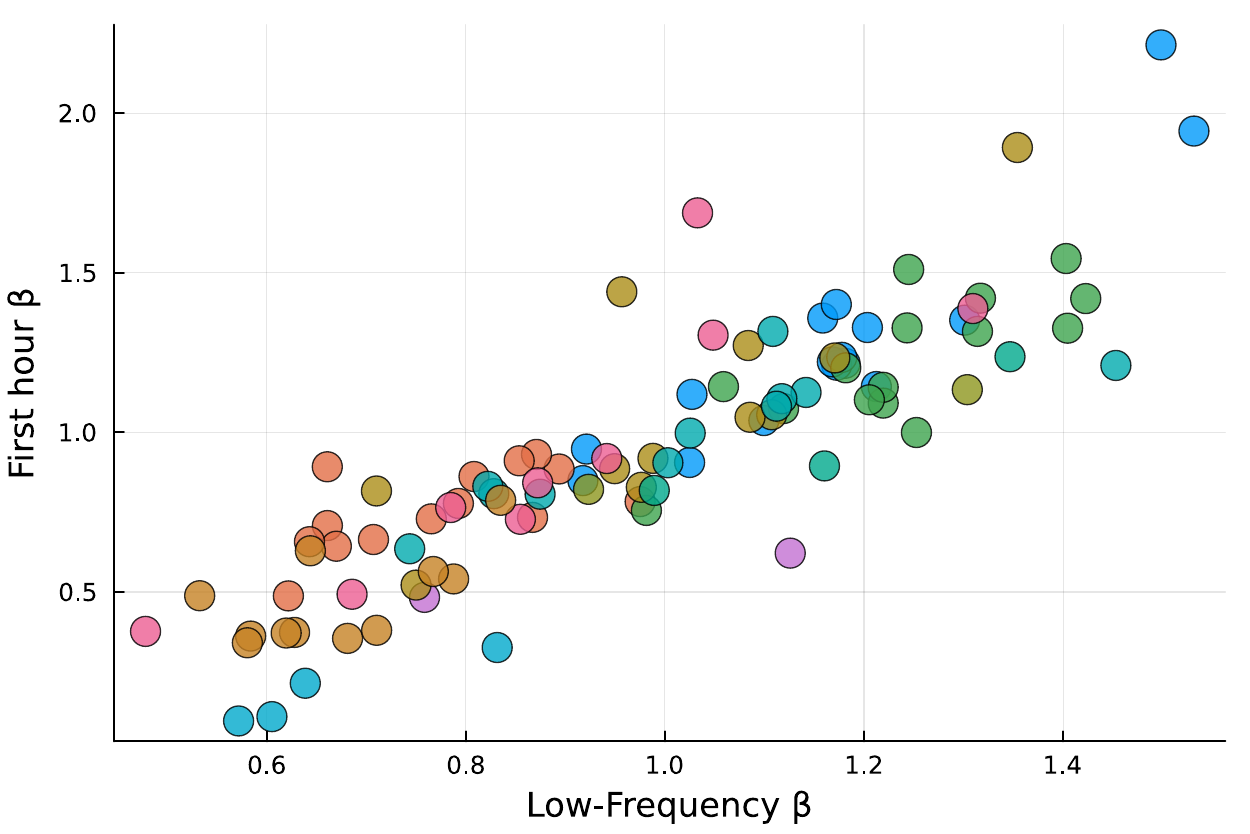}\includegraphics[width=0.32\textwidth]{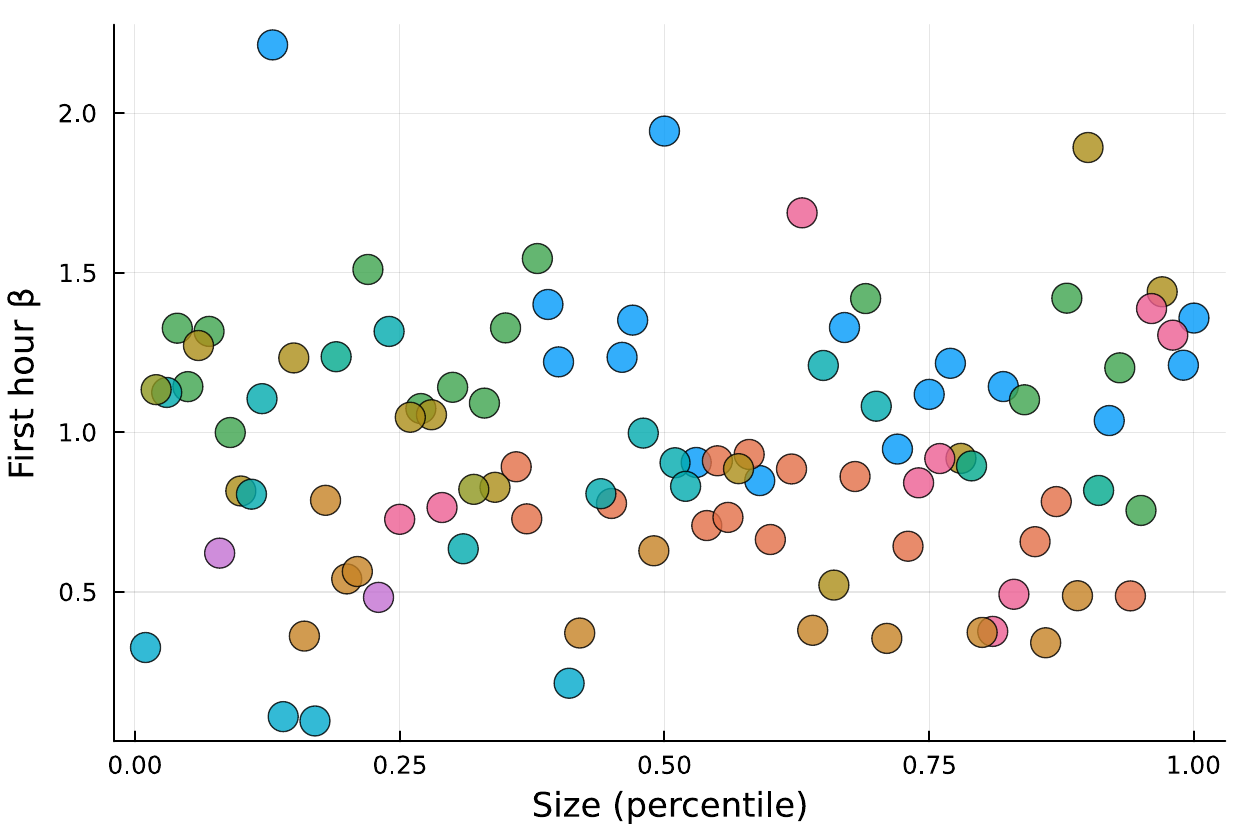}\includegraphics[width=0.32\textwidth]{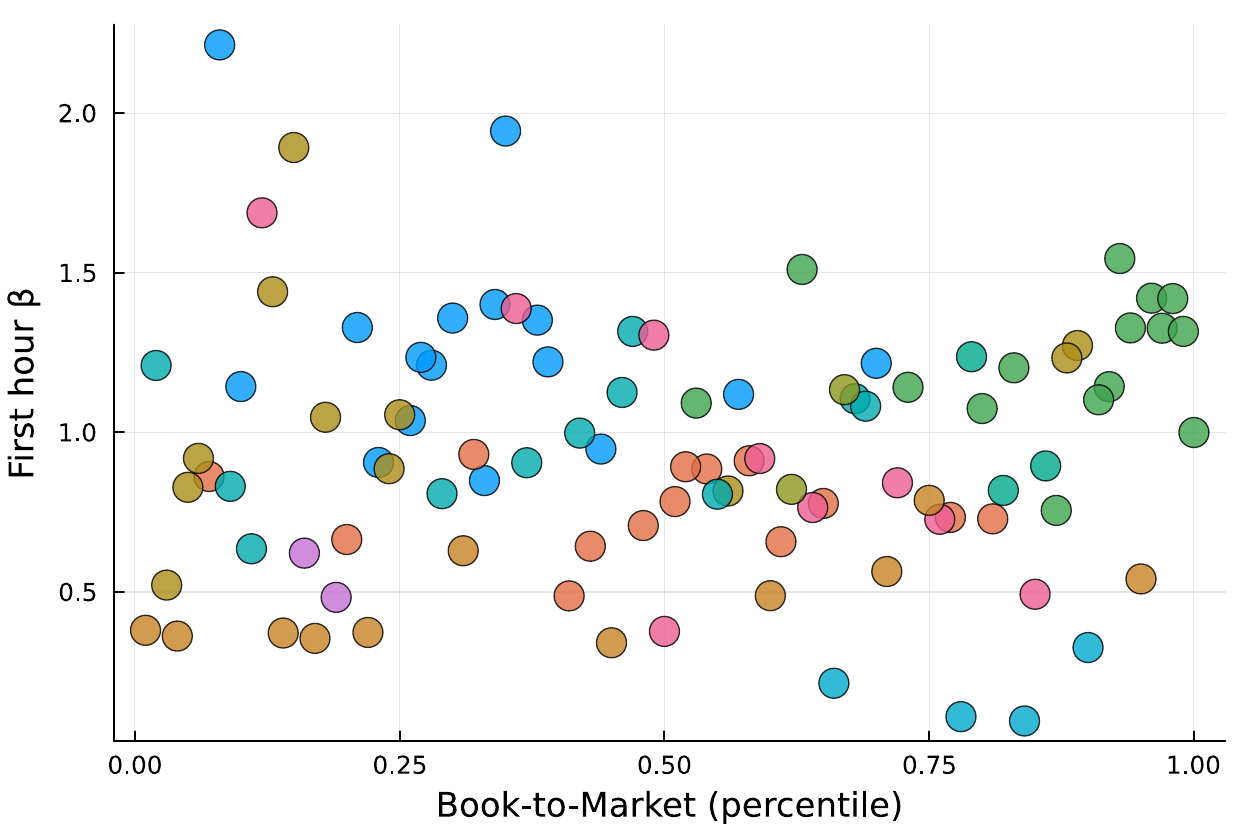}
\par\end{centering}
\begin{centering}
\includegraphics[width=0.32\textwidth]{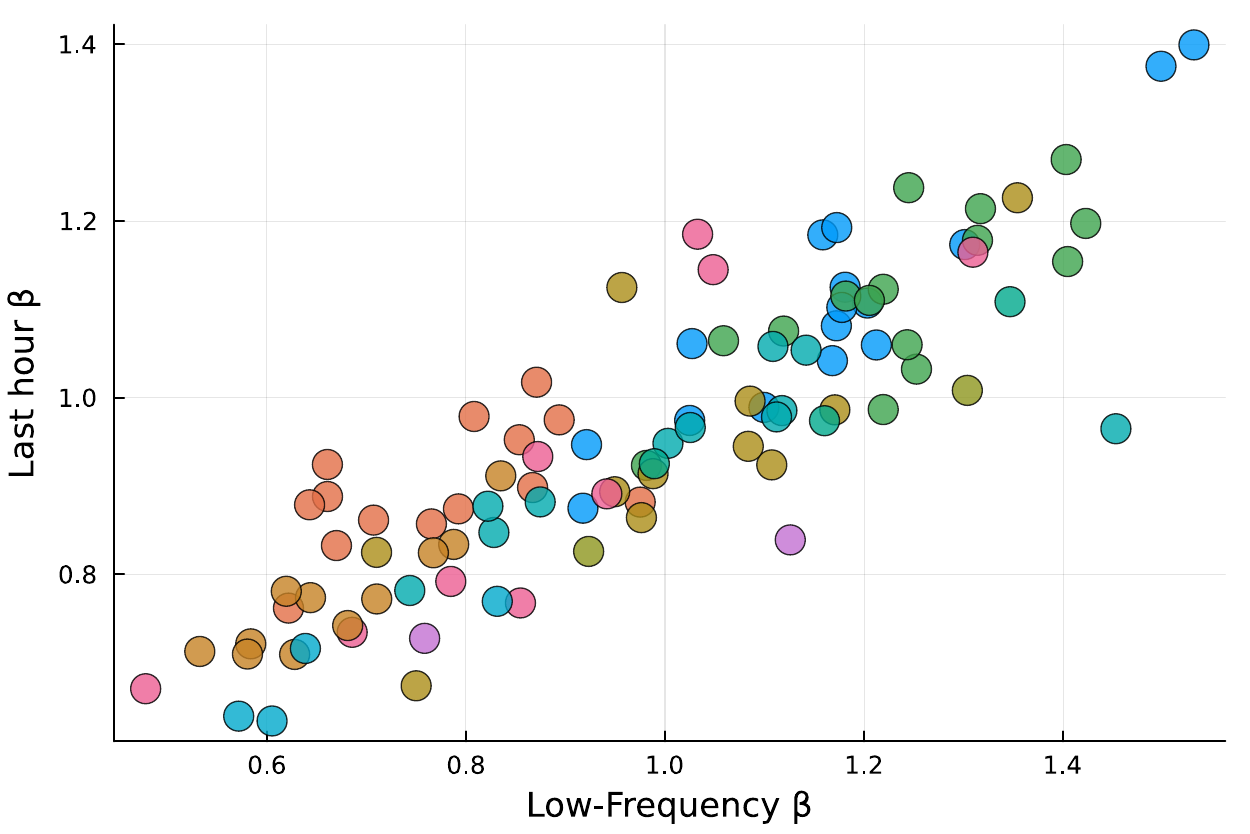}\includegraphics[width=0.32\textwidth]{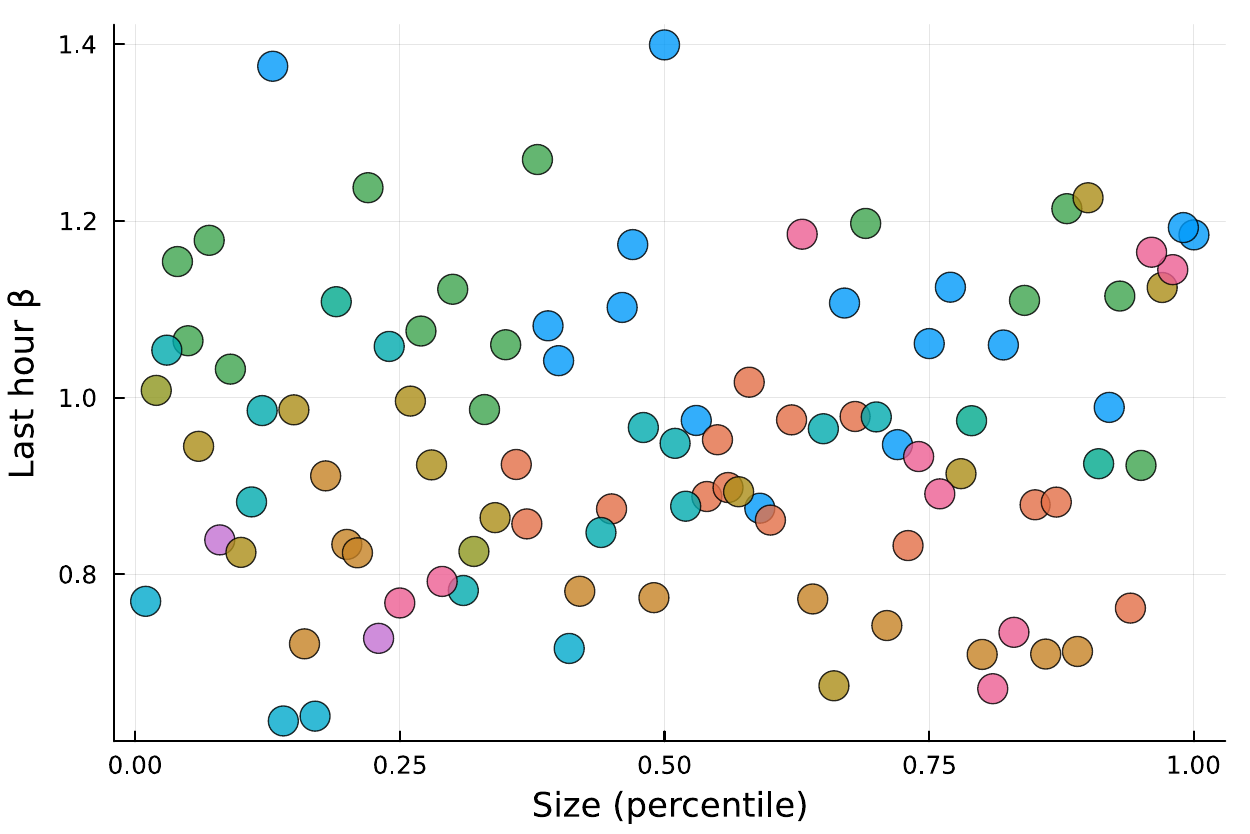}\includegraphics[width=0.32\textwidth]{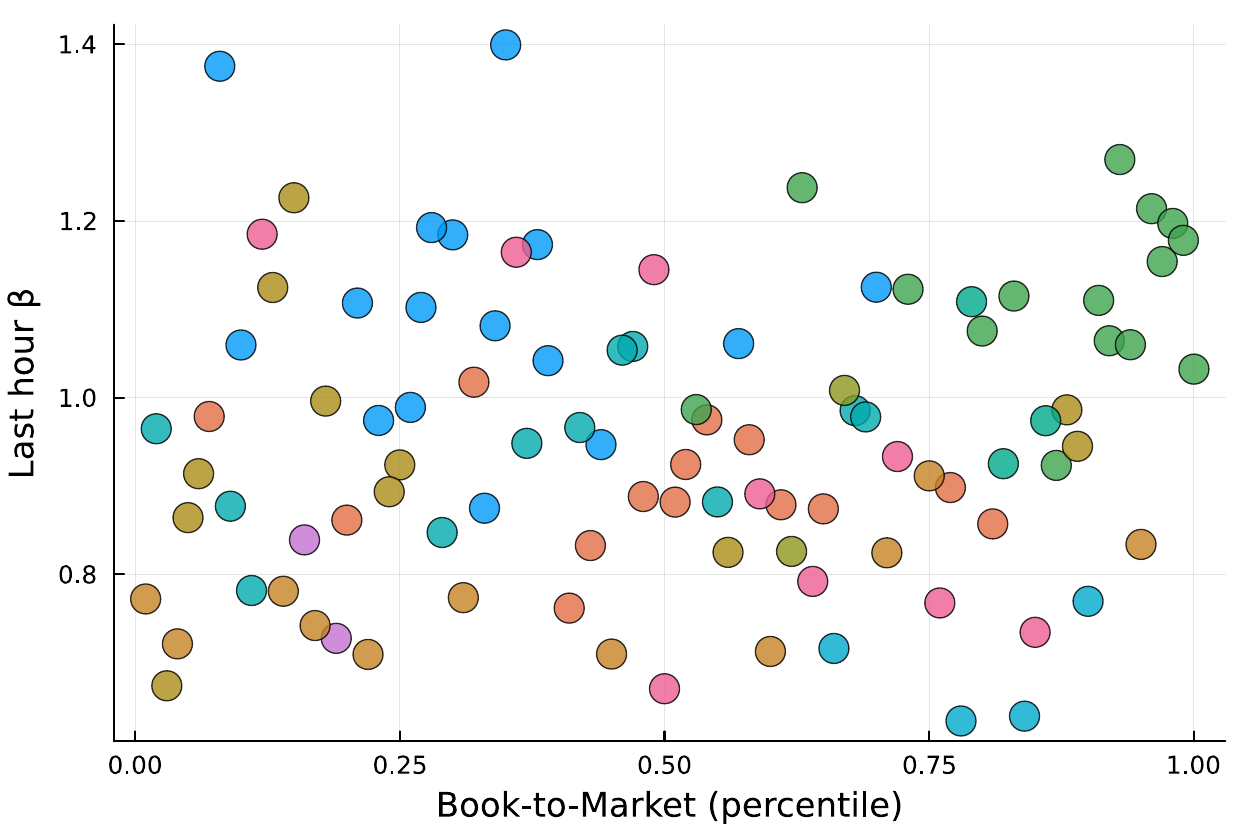}
\par\end{centering}
\caption{The average market correlations and market beta for the first and
last hours of the trading day are plotted against three variables.
In the left panels the quantities are plotted against the (low frequency)
market beta, which is computed from daily returns. In the middle panels
they are plotted against (percentiles of) market capitalization size,
and in the right panels they are plotted against the (percentiles
of) book-to-market values.\label{fig:FirstLastRhoBetavsBetaSizeBTM}}
\end{figure}

\section{Supplementary Theoretical Results}

For high-frequency financial data, estimators are typically applied
to sparsely sampled prices, and we consider the influence functions
of the estimators in this context. We derive the influence functions
of these estimators for the sparse-sampled observations.
\begin{prop}
\label{thm:influence}The influence function of Pearson estimator
at $\Phi_{\rho}$ is given by
\[
\mathrm{IF}((x_{0},y_{0}),R_{P},\Phi_{\rho})=x_{0}y_{0}-(\frac{x_{0}^{2}+y_{0}^{2}}{2})\rho.
\]
The influence function of Kendall estimator at $\Phi_{\rho}$ is given
by
\[
\mathrm{IF}((x_{0},y_{0}),R_{K},\Phi_{\rho})=2\pi\sqrt{1-\rho^{2}}S\Big[2\Phi(\tfrac{x_{0}}{\sqrt{2S-1}},\tfrac{y_{0}}{\sqrt{2S-1}})-\Phi(\tfrac{x_{0}}{\sqrt{2S-1}})-\Phi(\tfrac{y_{0}}{\sqrt{2S-1}})+1-q\Big].
\]
The influence function of (subsampled) Quadrant estimator at $\Phi_{\rho}$
is given by
\[
\mathrm{IF}((x_{0},y_{0}),R_{Q},\Phi_{\rho})=\pi\sqrt{1-\rho^{2}}S\Big[2\Phi(\tfrac{x_{0}}{\sqrt{2S-1}},\tfrac{y_{0}}{\sqrt{2S-1}})-\Phi(\tfrac{x_{0}}{\sqrt{2S-1}})-\Phi(\tfrac{y_{0}}{\sqrt{2S-1}})+1-q\Big]
\]
where $\Phi(\bullet,\bullet)$ and $\Phi(\bullet)$ are the joint
cumulative density function and marginal cumulative density function
of $\Phi_{\rho}$.
\end{prop}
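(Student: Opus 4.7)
The plan is to compute each influence function from its definition: perturb the reference distribution to $F_\eta=(1-\eta)\Phi_\rho+\eta\mathbf{\Delta}_{(x_0,y_0)}$, evaluate the relevant functional at $F_\eta$, and retain the coefficient of $\eta$ at $\eta=0$. Each of the three estimators admits a factorization $R=g\circ\theta$, where $\theta$ is a simpler population functional---the moments $\mu,\sigma^2,\sigma_{XY}$ for Pearson, the Kendall parameter $\tau_K$ for $K$, and the quadrant probability $q=\Pr(\Delta X\,\Delta Y>0)$ for $Q$ and $Q_S$---and $g$ is either the identity or Greiner's link $\sin(\tfrac{\pi}{2}\cdot)$. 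The chain rule reduces the task to computing $\mathrm{IF}(\cdot,\theta,\Phi_\rho)$, after which the link derivative evaluates at $\Phi_\rho$ to a constant multiple of $\sqrt{1-\rho^2}$ through $\cos(\arcsin\rho)=\sqrt{1-\rho^2}$, which is the source of the common prefactor in all three displays.

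The Pearson computation is the routine one: every building block $\mu_X,\mu_Y,\sigma_X^2,\sigma_Y^2,\sigma_{XY}$ is linear in the distribution, so its Dirac-contamination IF is immediate, and substituting into $R_P=\sigma_{XY}/\sqrt{\sigma_X^2\sigma_Y^2}$ at the standardized $\Phi_\rho$ yields $x_0y_0-\tfrac{1}{2}(x_0^2+y_0^2)\rho$. For Kendall I treat $\tau_K$ as a two-sample expectation and expand
\[
\tau_K(F_\eta)=(1-\eta)^2\tau_K(\Phi_\rho)+2\eta(1-\eta)\,\psi(x_0,y_0)+O(\eta^2),
\]
with $\psi(x_0,y_0)=\mathbb{E}_{\Phi_\rho}[\mathrm{sgn}((x_0-X)(y_0-Y))]$. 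Splitting $\psi$ across the four sign events and rewriting each probability through the joint and marginal CDFs of $\Phi_\rho$ gives $\psi=4\Phi(x_0,y_0)-2\Phi(x_0)-2\Phi(y_0)+1$; then $\tfrac{d}{d\eta}\tau_K(F_\eta)|_{\eta=0}=2[\psi-\tau_K]$, and multiplication by the link derivative $\tfrac{\pi}{2}\sqrt{1-\rho^2}$ delivers the stated Kendall formula at $S=1$ and---combined with the subsampling rescaling below---at general $S$.

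The subsampled Quadrant case is the main obstacle, because the estimator is built from overlapping sparse returns, so a contamination in the distribution of the underlying high-frequency data enters the distribution of several sparse returns simultaneously. The plan is to obtain the expansion
\[
q(F_\eta)=q+\eta\,S\bigl[\Pr\{(x_0+Z_X)(y_0+Z_Y)>0\}-q\bigr]+O(\eta^2),
\]
where $(Z_X,Z_Y)$ is the bivariate normal residual of an affected sparse return, with correlation $\rho$ and with a variance inherited from the overlapping subsample structure. The multiplier $S$ in front comes from the number of overlapping sparse returns into which a single contamination enters, and the normalization that produces $x_0/\sqrt{2S-1}$ and $y_0/\sqrt{2S-1}$ inside the CDFs comes from standardizing $(Z_X,Z_Y)$ by the clean sparse-return standard deviation. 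Splitting the right-hand probability into its two quadrant events and expressing each through $\Phi$ then collapses the bracket into the form stated in the proposition, after which multiplying by the link derivative $\pi\sqrt{1-\rho^2}$ completes the Quadrant display; the Kendall version at the sparse-sampling scale follows from the same rescaling together with the two-sample factor of two already present at $S=1$. The delicate step---and the one on which I expect the proof to spend the most effort---is the combinatorial accounting behind the multiplier $S$ and the variance $2S-1$: one must count how many overlapping sparse returns contain a given contaminated observation and compute the effective variance of the non-contaminated components of those returns before standardization.
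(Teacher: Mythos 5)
Your overall strategy -- contaminate, expand to first order in $\eta$, and push the derivative through Greiner's link via $\cos(\arcsin\rho)=\sqrt{1-\rho^{2}}$ -- is exactly the paper's, and your Pearson and $S=1$ Kendall computations are correct and standard. But the heart of this proposition is the sparse-sampling combinatorics, and that is precisely the step you defer and then get wrong in the one place you commit to a number. The paper's mechanism is to place the contamination on the \emph{high-frequency increments}: each increment pair equals $(x_{0},y_{0})$ with probability $\varepsilon$, a sparse return is a sum of $S$ such increments, and the relevant functional is the plain quadrant probability of a single sparse return (the overlap of the subsampling windows plays no role in the influence function -- it affects the asymptotic variance, not the probability limit). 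Writing $G(\varepsilon,g)=\sum_{h=0}^{S}(1-\varepsilon)^{S-h}\varepsilon^{h}\binom{S}{h}g(h)$ and differentiating at $\varepsilon=0$ gives $S[g(1)-g(0)]$, so the factor $S$ is the binomial coefficient $\binom{S}{1}$, not a count of overlapping windows containing the contaminated tick (the two accountings happen to agree numerically, but the paper never needs the overlap argument). The term $g(1)$ is the quadrant probability of $\bigl(x_{0}+\sum_{i=1}^{S-1}X_{i}\bigr)\bigl(y_{0}+\sum_{i=1}^{S-1}Y_{i}\bigr)>0$, whose clean residual has variance $S-1$; standardizing therefore yields arguments $x_{0}/\sqrt{S-1}$, \emph{not} $x_{0}/\sqrt{2S-1}$. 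The scaling $\sqrt{2S-1}$ belongs only to the Kendall case, where the pairwise difference $\tilde{X}_{1}-\tilde{X}_{2}$ aggregates $2S$ increments, the double binomial sum contributes the prefactor $2S$ from the events $(h_{1},h_{2})\in\{(1,0),(0,1)\}$, and the clean residual has variance $2S-1$.

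So the concrete gap is this: your asserted expansion for the subsampled quadrant, with multiplier $S$ and standardization $\sqrt{2S-1}$, is internally inconsistent -- the "effective variance of the non-contaminated components" that you correctly identify as the delicate quantity equals $S-1$ for the quadrant, and carrying out your own plan would produce $\sqrt{S-1}$. (A sanity check: at $S=1$ the quadrant formula must collapse to $\tfrac{\pi}{2}\sqrt{1-\rho^{2}}(\mathrm{sgn}(x_{0}y_{0})-\tau)$ from the main text; with $\sqrt{S-1}\to 0^{+}$ the CDFs degenerate to indicators and it does, whereas with $\sqrt{2S-1}=1$ it does not.) You happen to match the proposition as printed, but the paper's own derivation arrives at $\sqrt{S-1}$ for the quadrant, so the $\sqrt{2S-1}$ in the stated quadrant display appears to be a typo imported from the Kendall line; either way, your justification does not establish the formula you wrote down, and the Kendall-at-general-$S$ claim ("the same rescaling plus the factor of two") inherits the same confusion between the one-sample and two-sample increment counts.
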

\noindent\textbf{Proof of Proposition \ref{thm:influence}.} For
$\tilde{X}_{i}=\sum_{j=i}^{i+S-1}X_{j}$ and $\tilde{Y}_{i}=\sum_{j=i}^{i+S-1}Y_{j}$,
they follows a bivariate normal distribution with variances $S$ and
covariance $S\rho$ since $(X_{i},Y_{i})\sim\Phi_{\rho}$. For each
pair $(X_{i},Y_{i})$, there is probability $\varepsilon$ of they
are exactly at point $(x_{0},y_{0})$. Define a function $G$ in terms
of $\varepsilon$ and another function $g$
\[
G(\varepsilon,g)=\sum_{h=0}^{S}(1-\varepsilon)^{S-h}\varepsilon^{h}\binom{S}{h}g(h)\quad\text{and}\quad G_{\varepsilon}(0,g)=\frac{\partial G}{\partial\varepsilon}\bigg|_{\varepsilon=0}=s[g(1)-g(0)].
\]
 
\begin{enumerate}
\item The Pearson estimator converges to
\[
R_{p}=\frac{\mathbb{E}[\tilde{X}\tilde{Y}]-\mathbb{E}[\tilde{X}]\mathbb{E}[\tilde{Y}]}{\sqrt{(\mathbb{E}[\tilde{X}^{2}]-\mathbb{E}[\tilde{X}]^{2})(\mathbb{E}[\tilde{Y}^{2}]-\mathbb{E}[\tilde{Y}]^{2})}}
\]
in probability. Under the contaminated distribution of $(\tilde{X},\tilde{Y})$,
\[
\mathbb{E}[\tilde{X}\tilde{Y}]=G(\varepsilon,U),\;\mathbb{E}[\tilde{X}]=G(\varepsilon,V_{x}),\;\text{and}\;\mathbb{E}[\tilde{X}^{2}]=G(\varepsilon,W_{x})
\]
with
\[
\begin{aligned}U(h) & =(S-h)\rho+hx_{0}y_{0}\\
V_{x}(h) & =hx_{0}\\
W_{x}(h) & =S-h+hx_{0}^{2}.
\end{aligned}
\]
Then
\[
\begin{array}{ccc}
G(0,U)=S\rho & G(0,V_{x})=0 & G(0,W_{x})=S\\
G_{\varepsilon}(0,U)=S(x_{0}y_{0}-\rho) & G_{\varepsilon}(0,V_{\varepsilon})=Sx_{0} & G_{\varepsilon}(0,W_{x})=S(x_{0}^{2}-1).
\end{array}
\]
The influence function of the Pearson under the sparse sampling method
is 
\[
\mathrm{IF}((x_{0},y_{0}),R_{P},\Phi_{\rho})=\frac{S(x_{0}y_{0}-\rho)S-S\rho\frac{1}{2}S(x_{0}^{2}+y_{0}^{2}-2)}{S^{2}}=x_{0}y_{0}-\rho\big(\frac{x_{0}^{2}+y_{0}^{2}}{2}\big).
\]
\item The Kendall probability estimator is associated with the functional
\[
\tilde{R}_{K}=\mathbb{E}[(\tilde{X}_{1}-\tilde{X}_{2})(\tilde{Y}_{1}-\tilde{Y}_{2})>0]=\tilde{G}(\varepsilon,F_{K})
\]
with 
\[
\tilde{G}(\varepsilon,F_{K})=\sum_{h_{1}=0}^{S-h_{1}}\sum_{h_{2}=0}^{S-h_{2}}(1-\varepsilon)^{2S-h_{1}-h_{2}}\varepsilon^{h_{1}+h_{2}}\binom{S}{h_{1}}\binom{S}{h_{2}}F_{K}(h_{1},h_{2})
\]
and
\[
\begin{aligned}F_{K}(h_{1},h_{2}) & =\mathrm{Prob}[(\sum_{i=1}^{S-h_{1}}X_{i}-h_{1}x_{0}-\sum_{i=1}^{S-h_{2}}X_{S+i}+h_{2}x_{0})(\sum_{i=1}^{S-h_{1}}Y_{i}-h_{1}y_{0}-\sum_{i=1}^{S-h_{2}}Y_{S+i}+h_{2}y_{0})>0]\\
 & =\mathrm{Prob}[(\sum_{i=1}^{S-h_{1}}X_{i}-\sum_{i=1}^{S-h_{2}}X_{S+i}-(h_{1}-h_{2})x_{0})(\sum_{i=1}^{S-h_{1}}Y_{i}-\sum_{i=1}^{S-h_{2}}Y_{S+i}-(h_{1}-h_{2})y_{0})>0]\\
 & =2\Phi(\frac{(h_{1}-h_{2})x_{0}}{\sqrt{2S-h_{1}-h_{2}}},\frac{(h_{1}-h_{2})y_{0}}{\sqrt{2S-h_{1}-h_{2}}})-\Phi(\frac{(h_{1}-h_{2})x_{0}}{\sqrt{2S-h_{1}-h_{2}}})-\Phi(\frac{(h_{1}-h_{2})y_{0}}{\sqrt{2S-h_{1}-h_{2}}})+1
\end{aligned}
\]
when $h_{1}+h_{2}\geq1$ and $F_{K}(0,0)=q$. Thus, the Kendall estimator's
influence function is
\[
\mathrm{IF}((x,y),R_{K},\Phi_{\rho})=\pi\sqrt{1-\rho^{2}}2S\Big[2\Phi(\frac{x_{0}}{\sqrt{2S-1}},\frac{y_{0}}{\sqrt{2S-1}})-\Phi(-\frac{x_{0}}{\sqrt{2S-1}})-\Phi(-\frac{y_{0}}{\sqrt{2S-1}})+1-q\Big]
\]
\item Note that the statistical functional of Quadrant probability estimator
at the contaminated distribution is
\[
\tilde{R}_{S}=G(\varepsilon,F_{Q})
\]
with 
\[
\begin{aligned}F_{Q}(h) & =\mathrm{Prob}[(\sum_{i=1}^{S-h}X_{i}+hx_{0})(\sum_{i=1}^{S-h}Y_{i}+hy_{0})]\\
 & =2\Phi(\frac{hx_{0}}{\sqrt{S-h}},\frac{hy_{0}}{\sqrt{S-h}})-\Phi(\frac{hx_{0}}{\sqrt{S-h}})-\Phi(\frac{hy_{0}}{\sqrt{S-h}})+1
\end{aligned}
\]
when $h\geq1$ and $F_{Q}(0)=q$. Then the influence function of the
Quadrant estimator with length $S$ is
\[
\mathrm{IF}((x,y),R_{S},\Phi_{\rho})=\pi\sqrt{1-\rho^{2}}S\Big[2\Phi(\frac{x_{0}}{\sqrt{S-1}},\frac{y_{0}}{\sqrt{S-1}})-\Phi(\frac{x_{0}}{\sqrt{S-1}})-\Phi(\frac{y_{0}}{\sqrt{S-1}})+1-q\Big].
\]
\end{enumerate}
\hfill{}$\square$

The Pearson estimator's influence function is invariant to the sparse
sampling, while the sampling frequency determines the (subsampled)
Quadrant and Kendall estimators' influence functions. Again, in contrast
with Pearson, the bounded influence functions of non-parametric estimators
guarantee robustness under data contamination.

\subsection{Decomposition}

\begin{eqnarray*}
\bar{\beta}(\tfrac{i}{N}) & = & \bar{\rho}(\tfrac{i}{N})\bar{\lambda}(\tfrac{i}{N})+\mathrm{cov}(\rho(\tfrac{i}{N}),\lambda(\tfrac{i}{N}))\\
 & = & \bar{\rho}(\tfrac{i}{N})\bar{\lambda}(\tfrac{i}{N})\left[1+\frac{\mathrm{cov}(\rho(\tfrac{i}{N}),\lambda(\tfrac{i}{N}))}{\bar{\rho}(\tfrac{i}{N})\bar{\lambda}(\tfrac{i}{N})}\right]=\bar{\rho}(\tfrac{i}{N})\bar{\lambda}(\tfrac{i}{N})\bar{\gamma}(\tfrac{i}{N}).
\end{eqnarray*}
\[
\log\bar{\beta}(\tfrac{i}{N})=\log\bar{\rho}(\tfrac{i}{N})+\log\bar{\lambda}(\tfrac{i}{N})+\log\bar{\gamma}(\tfrac{i}{N})
\]

First and last hour betas for asset $j$
\[
\Delta\beta_{j}=\log\frac{\bar{\beta}_{j}(close)}{\bar{\beta}_{j}(open)}=\log\bar{\beta}_{j}(close)-\log\bar{\beta}_{j}(open).
\]
Similarly
\[
\Delta\rho_{j}=\log\bar{\rho}_{j}(close)-\log\bar{\rho}_{j}(open)
\]

Scatterplots of $\Delta\beta_{j}$ against $\Delta\rho_{j}$, $\Delta\lambda_{j}$,
and $\Delta\gamma_{j}$. 

\section{Additional Simulation Results}

Figure \ref{fig:Plot-the-averages1-1} presents the results for the
Heston model without noise for two different levels of the correlation:
$\rho=0.5$ and $\rho=0.75$. The analogous results for levels $0.25$
and $0.66$ are in Figure \ref{fig:Plot-the-averages1}.

\begin{figure}[H]
\begin{centering}
\includegraphics[width=0.5\textwidth]{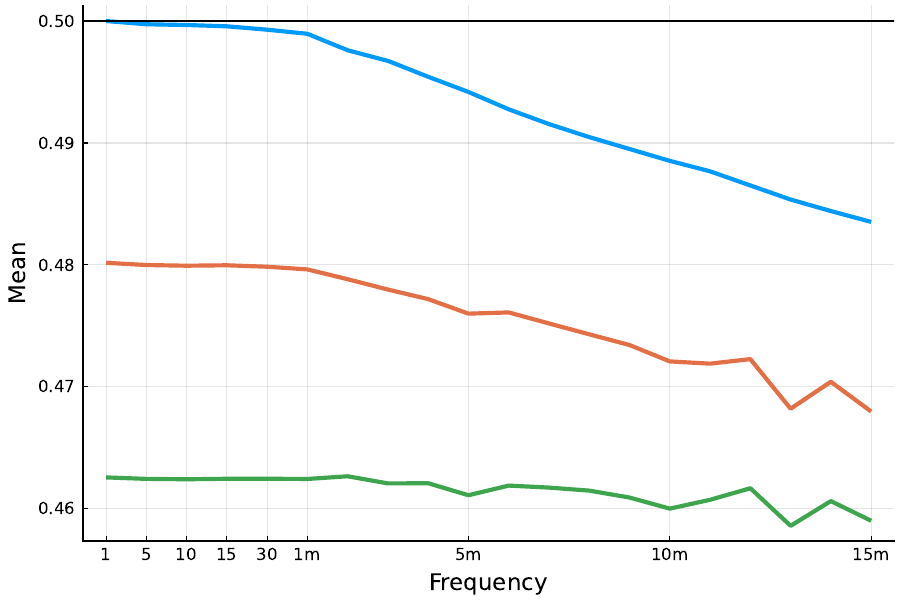}\includegraphics[width=0.5\textwidth]{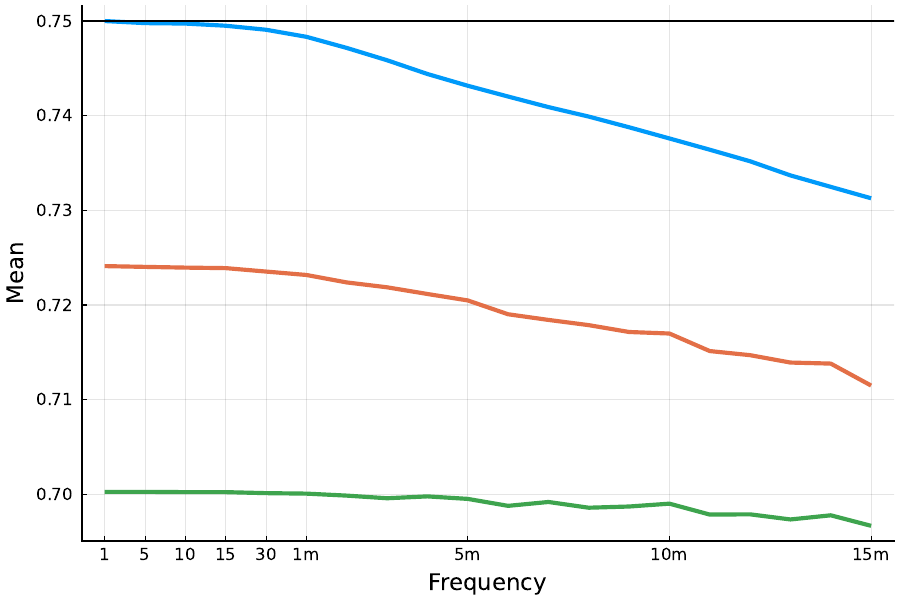}
\par\end{centering}
\centering{}\includegraphics[width=0.5\textwidth]{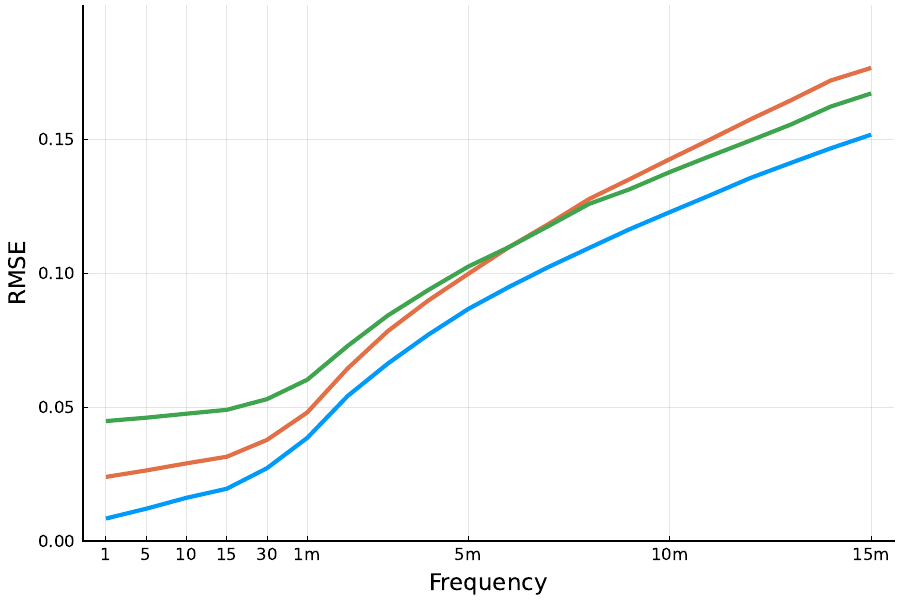}\includegraphics[width=0.5\textwidth]{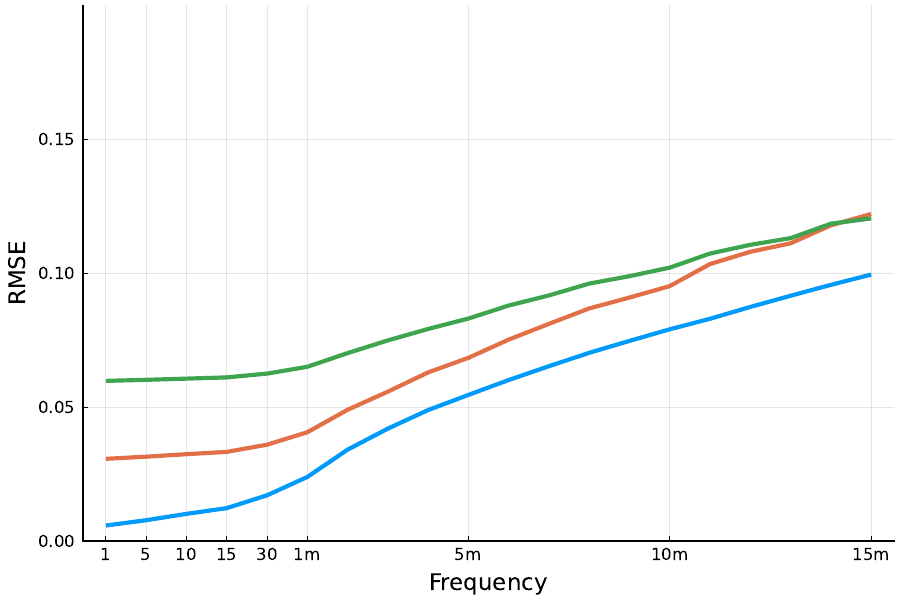}\caption{Means and RMSEs for the estimators as a function of sampling frequencies.
Prices are generated by the Heston model with the true correlation
being $\rho=0.50$ (left panels) and $\rho=0.75$ (right panels).\label{fig:Plot-the-averages1-1}}
\end{figure}

\subsection{Jumps}

Jumps are prevalent in high-frequency prices. We consider the following
jump processes,

\[
X_{t}=X_{t}^{\ast}+\sum_{s\leq t}J_{s}^{x}\qquad Y_{t}=Y_{t}^{\ast}+\sum_{s\leq t}J_{s}^{y},
\]
where $J_{t}^{x}$ and $J_{t}^{y}$ are Poisson jump processes with
intensities $\lambda_{x}$ and $\lambda_{y}$, respectively, and a
jump size that is uniformly distributed on $[-\tfrac{2}{\sqrt{2\lambda}},-\tfrac{1}{\sqrt{2\lambda}}]\bigcup[\tfrac{1}{\sqrt{2\lambda}},\tfrac{2}{\sqrt{2\lambda}}]$. 

We focus on two types of jumps: Independent jumps, $\sum J_{s}^{x}\bot\negthickspace\bot\sum J_{s}^{y}$,
and co-jumps, $\sum J_{s}^{x}=\sum J_{s}^{y}$. In all cases we use
the intensities $\lambda_{x}=\lambda_{y}=1$.

The bias properties of the estimators with independent jumps are in
the upper panels of Figure \ref{fig:HestonJumps} and the analogous
results for the case with co-jumps are in the lower panels. We present
results for $\rho=0.25$ in the left panels and $\rho=2/3$ in the
right panels. Both the $Q_{S}$ and $K$ estimators are largely unaffected
by jumps. This is to be expected, given their construction and influence
functions. The situation is quite different for the Pearson estimator.
Independent jumps induce a strong bias towards zero in $P$, while
co-jumps induce a strong positive bias towards one. That $P$ is very
sensitive to jumps is an implication of the influence function for
$P$. This sensitivity to jumps motivates the commonly used truncation
methods for computing realized variances, see \citet[2009]{Mancini:2001}
and \citet{AndersenDobrevSchaumburg2012},\nocite{Mancini:2009} and
truncation methods are clearly also very useful if $P$ is to be use.

\begin{figure}[H]
\begin{centering}
\subfloat[Heston with independent jumps]{\begin{centering}
\includegraphics[width=0.5\textwidth]{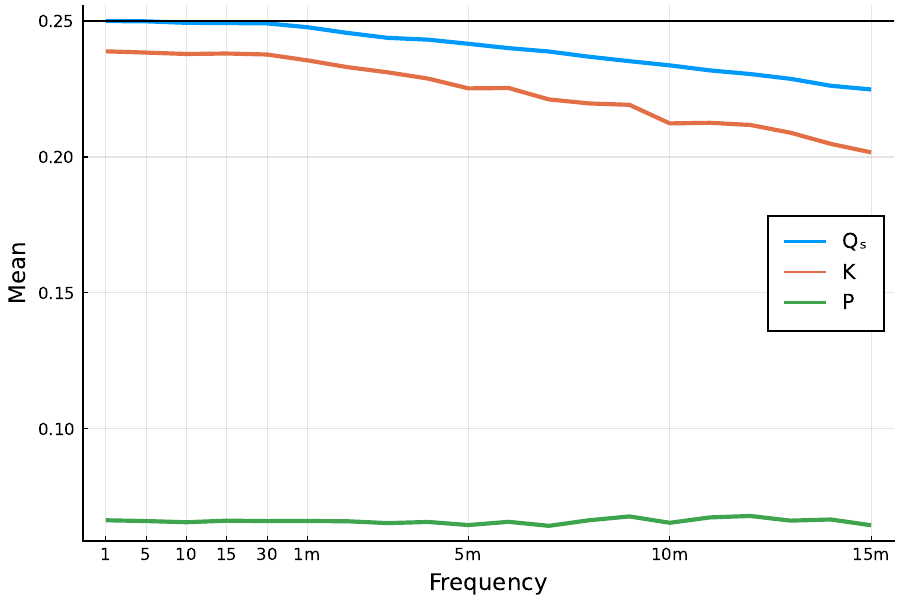}\includegraphics[width=0.5\textwidth]{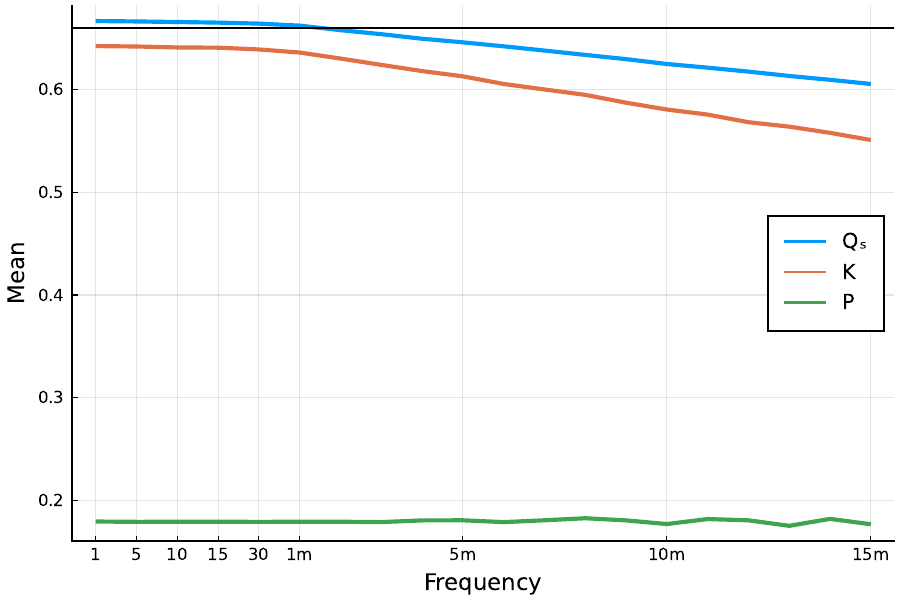}
\par\end{centering}
}
\par\end{centering}
\begin{centering}
\subfloat[Heston with co-jumps]{\begin{centering}
\includegraphics[width=0.5\textwidth]{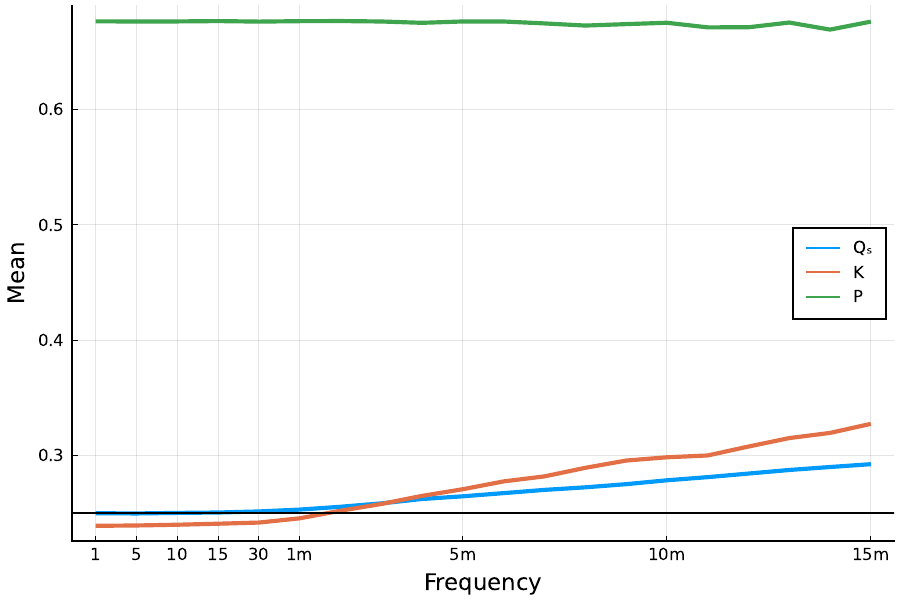}\includegraphics[width=0.5\textwidth]{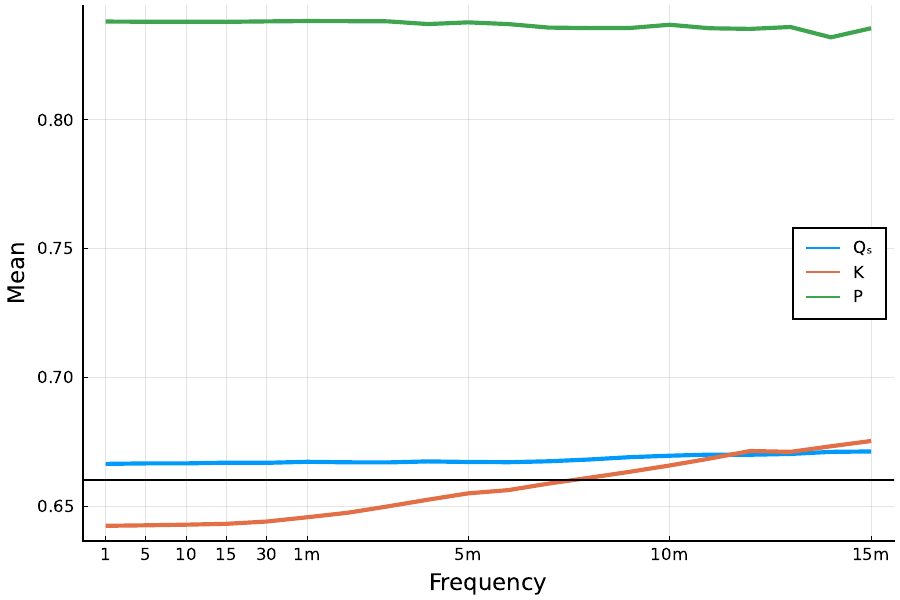}
\par\end{centering}
}
\par\end{centering}
\caption{The average correlation estimates of the correlation estimators, $P$,
$K$, and $Q_{S}$ plotted against sampling frequency. The underlying
model is the Heston model with independent jumps in the upper panels
and co-jumps in the lower panels. The correlation coefficient is $\rho=0.25$
in the left panels and and $\rho=2/3$ in the right panels.\label{fig:HestonJumps}}
\end{figure}

\end{document}